\newif\ifxetexorluatex
\theoremstyle{definition}
\newtheorem{definition}{Definition}[section]
\newtheorem{theorem}{Theorem}[section]
\newtheorem{corollary}{Corollary}[section]
\newtheorem{lemma}{Lemma}[section]
\newtheorem{property}{Property}
\newtheorem{observation}{Observation}[section]
\newtheorem{openquestion}{Open Question}[section]
\newtheorem{conjecture}{Conjecture}[section]
\newtheorem{proposition}{Proposition}[section]
\providecommand{\floor}[1]{\left \lfloor #1 \right \rfloor}
\def\scaleAmt{1.3}
\def\QuantumLift{{\cal Q}}
\def\filter{{\mbox{\rm filter}}}
\title{Quantum Combinatorial Games:\\ Structures and Computational Complexity}
\author{Kyle W. Burke
\\Plymouth State\and
Matthew T. Ferland
\\ USC \and Shang-Hua Teng\thanks{
Supported by the Simons Investigator Award for fundamental \& curiosity-driven research and NSF grant CCF-1815254.}\\ USC}
\date{}
\begin{document}
\maketitle
\begin{singlespace}

\begin{abstract}
In the past a few decades, a variety of fields---from  information theory to economic game theory to cryptography---have explored what happens when quantum decisions are allowed in classical frameworks or as a computational mechanism to improve classical algorithms. Recently, a standardized framework was proposed for introducing {\em quantum-inspired moves} in mathematical games with perfect information and no chance.

The beauty of quantum games---succinct in representation, rich in structures, explosive in complexity, dazzling for visualization, and sophisticated for strategical reasoning---has drawn us to play concrete games full of subtleties and to characterize abstract properties pertinent to complexity consequence.
Going beyond individual games, we
  explore the tractability of {\em quantum combinatorial games}
  as whole, and address fundamental questions including:

\begin{itemize}
\item {\bf Quantum Leap in Complexity}: {\em Are there polynomial-time solvable combinatorial games whose quantum extensions are intractable?}

\item {\bf Quantum Collapses in Complexity}: {\em Are there \cclass{PSPACE}-complete combinatorial games whose quantum extensions fall to the lower levels of the polynomial-time hierarchy?}

\item {\bf Quantumness Matters}: {\em How do outcome classes and strategies change under quantum moves? Under what conditions doesn't quantumness matter?}

\item{\bf PSPACE Barrier for Quantum Leap}: {\em Can quantum moves launch
\cclass{PSPACE} games into the {\em outer} polynomial space?}
\end{itemize}

We show that quantum moves not only enrich the structure of combinatorial games, but also impact their computational complexity.
In settling some of these basic questions, we characterize both the powers and limitations of quantum moves as well as the superposition of game configurations that they create.
Our constructive proofs---both on the leap of complexity in concrete \ruleset{Quantum Nim} and \ruleset{Quantum Undirected Geography} and on the continuous  collapses,  in the quantum setting, of complexity in
abstract \cclass{PSPACE}-complete games  to each level of the polynomial-time hierarchy---illustrate
  the striking computational landscape
  over quantum games and highlight
  surprising turns 
with unexpected 
quantum impact.
Particularly, our affirmative answer to the second question provides a complexity-theoretical refutation to the following seemingly intuitive conjecture:
{\em  Given that players have more options to choose strategically, combinatorial games is always as least as hard (computationally) in the quantum setting than in the classical setting}?
Our studies also enable us to identify several elegant open questions fundamental to quantum combinatorial game theory (QCGT).
\end{abstract}

\newpage

\clearpage
\pagenumbering{arabic}
\setcounter{page}{1}

\section{Introduction}
Quantum games have---in some fashion---existed for over a decade now.
Quantum \ruleset{Tic-Tac-Toe} was introduced by Allan Goff as an educational tool for learning quantum mechanics~\cite{goff2006quantum}.
This game's popularity made it subject to further study, including exploring the entire game tree~\cite{2011takumiquantum}.
Quantum \ruleset{Chess} is another well known exploration of introducing quantum elements to perfect information games \cite{akl2010importance}.
There have been other ways to introduce quantum elements to combinatorial
games.  Take, for example, a recent result for \ruleset{Cops and Robbers} \cite{glos2019role}. 
In economic game theory, researchers have also explored what happens when agents are allowed to make quantum decisions and how quantum decisions impact game and economic equilibria~\cite{eisert1999quantum}.


\subsection{Quantum Extension of Combinatorial Games}
In this paper, we study the complexity-theoretical impacts of {\em quantum-inspired moves} to combinatorial games, and properties of quantum game structures and dynamics.
Combinatorial games are mathematical games with perfect information and no chance. There is a rich area of mathematics and theoretical computer science devoted to studying these games and their elegant and inherent structural properties \cite{WinningWays:2001}.  As we will see in this paper, it is these structural properties that create a unique quantum game system, different than the systems mentioned above.

In a nutshell, {\em quantum moves}---unlike deterministic classical moves---can create superpositions of normal game positions.
Going beyond individual games,  our study focuses on a systematic quantum extension of general combinatorial games.
Therefore, several of our results
 apply to quantum extensions of both classical combinatorial games
 ---such as \ruleset{Nim} \cite{Bouton:1901}, \ruleset{Node-Kayles} \cite{DBLP:journals/jcss/Schaefer78}, \ruleset{Geography} \cite{PapadimitriouBook:1994}, \ruleset{Hex} \cite{NashHex,Gale:1979}, \ruleset{QSAT}/\ruleset{QBF} and \ruleset{Avoid-True} \cite{DBLP:journals/jcss/Schaefer78}---and games of our own creation, like \ruleset{Atropos} \cite{Burke2008,10.5555/1714301}.

Intuitively, games with superpositions of moves and superpositions of positions have richer structures than their classical counterparts, and consequently, may be more challenging to play optimally.
Our studies aim to characterize the impact of
  quantumness to the underlying computational problem of determining winnability and strategies.

Mathematically, the formalization of quantum moves and
 their induced superposition of game positions enjoys great subtlety.
We will follow the recent general framework introduced by Dorbec and Mhalla~\cite{dorbec2017toward} for turning combinatorial games
into quantum variants.\footnote{We would like to point out that by {\em quantum combinatorial games}, we don't mean games defined on {\em continuous} qubits and unitary transformations as in quantum computing.
Rather, quantum combinatorial games draw on quantum concepts of superposition and entanglement to enhance the classical combinatorial games with a ``quantum-inspired'' {\em discrete} framework of moves and game boards. So, ``{\em quantum-inspired combinatorial games}''
might be the more precise and literal name for these games.}
By applying this framework to some well-known classical games, we will introduce a few new, practically inspired quantum games, and demonstrate that they enjoy several desirable complexity-theoretical properties.

\vspace{0.1in}
\noindent {\sc Quantum Game Dynamics}:
Without getting into the subtleties,
in the quantum extension of a combinatorial game, each player can make a classic or a quantum move each turn.
A {\em classical move} is a move a player can make normally in the game, resulting in a new game position. (E.g., in \ruleset{Go}, placing a white stone on a given point.)
A {\em quantum move} is a {\em superposition} of a set of multiple classical moves.

Like a quantum move, a {\em quantum position} is a superposition of a set of classical game positions.
We call one possible position in a superposition
  a {\em realization}.
The dynamic of a quantum 
combinatorial game
  is then a sequence of quantum game positions---superposition of realizations---induced by players' moves.
Moves---either classical or quantum---can only be made when it is {\em legal in at least one realization} in the overall superposition.
Whenever a move is made which is illegal in a realization, that realization can no longer be factored in for making future moves, and we say that the realization {\em collapses}.
So, as a quantum game progresses, the number of realizations in the positions can go up and down.
When all of a player's moves would cause a realization to collapse, we call it a {\em terminal} realization for that player.
Under the normal-play\footnote{We do not apply quantumness to Mis\`{e}re play or Scoring games, the usual alternatives to Normal Play.  These seem like excellent areas for future work.} convention for combinatorial games, the game ends and the next player loses on a quantum game position when all its realizations are terminal for that player.

\vspace{0.1in}
\noindent {\sc An Example of Quantum Impact}:
We use the popular game \ruleset{Nim}\footnote{Also known in ancient China as Jian Shi Zi (picking pebbles).
The game starts with a collection of piles of pebbles.
Two (or multiple players)
takes turns picking at least one pebbles from one of the piles.
Under normal play, the player taking the last pebble wins the game.} \cite{Bouton:1901} to provide a concrete example.
In the classical setting, a well-known losing position for the current player is the configuration consisting of a pair of two-pebble piles.
We denote this game position by the two-dimensional vector $(2,2)$,
The current player cannot win at this position because regardless of whether they pick one or two from a pile,
the other player can simply match
the action in the other pile to win
either immediately in the case when the current player takes two pebbles, or in one more round, after each player picks one further pebble from the piles.

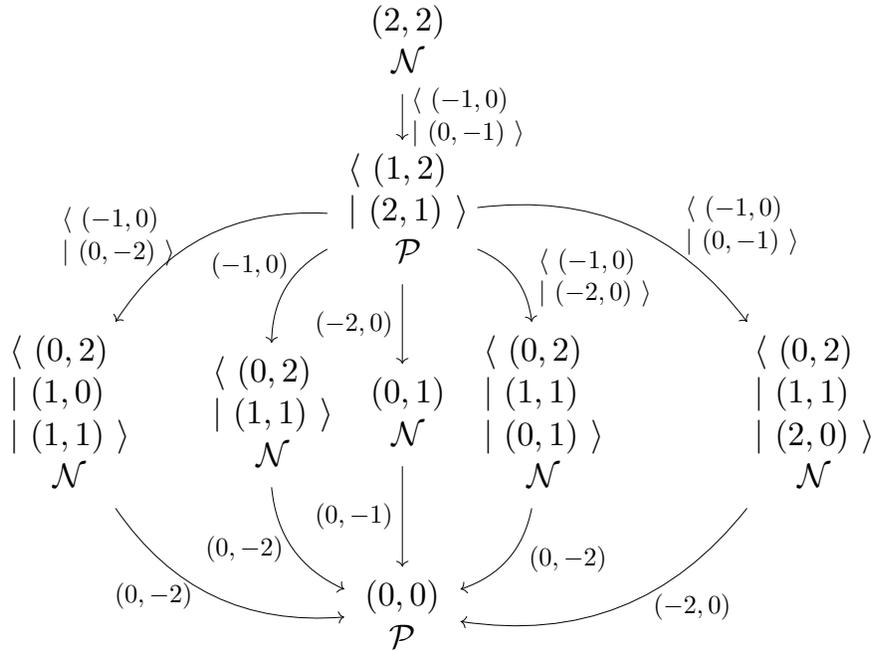
\begin{figure}[ht]
\begin{center}
  \def\scaleAmt{1.3}
  \scalebox{.9}{
  \begin{tikzpicture} [node distance = 1cm]
    \node (start) at (0, .5)  {\scalebox{\scaleAmt}{
        \begin{tabular}{@{}c@{}}
            $(2,2)$ \\
            \multicolumn{1}{c}{$\outcomeClass{N}$}
        \end{tabular}}};
    \node (first) at (0, -2) {\scalebox{\scaleAmt}{
        \begin{tabular}{@{}l@{}c@{}}
            $\langle\ (1,2)$\\ $|\ (2,1)\ \rangle$ \\
            \multicolumn{1}{c}{$\outcomeClass{P}$}
        \end{tabular}}};
    \node (secondA) at (-5, -5) {\scalebox{\scaleAmt}{
        \begin{tabular}{@{}l@{}l@{}c@{}}
            $\langle\ (0,2)$\\ $ |\ (1, 0)$\\ $|\ (1,1)\ \rangle$ \\
            \multicolumn{1}{c}{$\outcomeClass{N}$}
        \end{tabular}}};
    \node (secondB) at (-2, -5) {\scalebox{\scaleAmt}{
        \begin{tabular}{@{}l@{}c@{}}
            $\langle\ (0,2)$\\ $|\ (1,1)\ \rangle$\\
            \multicolumn{1}{c}{$\outcomeClass{N}$}
        \end{tabular}}};
    \node (secondC) at (0, -5) {\scalebox{\scaleAmt}{
        \begin{tabular}{@{}c@{}}
            $(0, 1)$ \\ $\outcomeClass{N}$
        \end{tabular}}};
    \node (secondD) at (2, -5) {\scalebox{\scaleAmt}{
        \begin{tabular}{@{}l@{}l@{}c@{}}
            $\langle\ (0,2)$\\ $|\ (1,1)$\\ $|\ (0,1)\ \rangle$ \\
            \multicolumn{1}{c}{$\outcomeClass{N}$}
        \end{tabular}}};
    \node (secondE) at (6, -5) {\scalebox{\scaleAmt}{
        \begin{tabular}{@{}l@{}l@{}c@{}}
            $\langle\ (0,2)$\\ $|\ (1,1)$\\ $|\ (2,0)\ \rangle$ \\
            \multicolumn{1}{c}{$\outcomeClass{N}$}
        \end{tabular}}};
    \node (zero) at (0, -8) {\scalebox{\scaleAmt}{
        \begin{tabular}{@{}c@{}}
            $(0,0)$\\ $\outcomeClass{P}$
        \end{tabular} } };

    \path[->]
        (start) edge [] node [right, text width=1cm] {$\langle\ (-1, 0)$\\ $|\ (0, -1)\ \rangle$} (first)
        (first) edge [bend right] node [left, text width = 1cm, xshift = -28pt] {$\langle\ (-1, 0)$\\ $|\ (0, -2)\ \rangle$} (secondA)
        (first) edge [bend right] node [above, xshift = -15pt] {$(-1, 0)$} (secondB)
        (first) edge [] node [left] {$(-2, 0)$} (secondC)
        (first) edge [bend left] node [right, text width = 1cm, xshift = 6pt] {$\langle\ (-1, 0)$\\ $|\ (-2, 0)\ \rangle$} (secondD)
        (first) edge [bend left] node [right, xshift = 20pt, text width = 1cm] {$\langle\ (-1,0)$\\ $|\ (0, -1)\ \rangle$} (secondE)
        (secondA) edge [bend right] node [left, xshift = -5pt] {$(0,-2)$} (zero)
        (secondB) edge [bend right] node [left] {$(0,-2)$} (zero)
        (secondC) edge [] node [left] {$(0, -1)$} (zero)
        (secondD) edge [bend left] node [right, xshift = 5pt] {$(0, -2)$} (zero)
        (secondE) edge [bend left] node [right, xshift = 10pt] {$(-2, 0)$} (zero)
        ;

  \end{tikzpicture}}
\end{center}
  \caption{Winning strategy for Next player in  \ruleset{Quantum Nim} $(2,2)$, showing that quantum moves impact the game's outcome.  (There are four additional move options from $\braket{(1,2)\ |\ (2,1)}$ that are not shown because they are symmetric to moves given.)}
  \label{fig:nim22-intro}
\end{figure}

From the \ruleset{Quantum Nim} position $(2,2)$, in addition to the classical moves of picking one or two pebbles from a pile, the current play can consider a quantum move formed by the {\em superposition} of taking one pebble from either pile.
We denote this superposition of classical moves by:
$$\langle (-1,0)\ |\  (0,-1) \rangle.$$
This {\em quantum move} creates a {\em quantum game position} consisting of a superposition of two \ruleset{Nim} positions
$$\langle (1,2)\ | \ (2,1)\rangle.$$
Subsequent quantum moves may further increase the number of realizations in the game's superposition state.
For example, if the other player makes the following
the quantum move:
$$\langle (-1,0)\ |\  (0,-2) \rangle $$
then the next game position will be a superposition of three classical positions:
$$ \langle (0,2)\ | \ (1,0)\ |\ (1,1)\rangle.$$
(The position $(2, -1)$ is not a legal \ruleset{Nim} position, so it is not included in the resulting superposition.)

Remarkably, the quantum move $\langle (-1,0)\ |\  (0,-1) \rangle$
 is in fact a winning move on \ruleset{Quantum Nim} at $(2,2)$.
For the complete game tree of \ruleset{Quantum Nim} after that move from $(2,2)$, see Figure \ref{fig:nim22-intro}.
Thus, quantum moves not only enrich players' strategical domains, but also can alter the outcome of combinatorial games.
The sophisticated interactions (between superpositions of moves and superpositions of
pure classical game positions) and the potential explosiveness (in the complexity of quantum configurations) make quantum games
a fascinating
for computational complexity studies.

\vspace{0.1in}
\noindent {\sc Quantum Flavors}:
Dorbec and Mhalla\cite{dorbec2017toward} postulate five quantum variants for extending classical rulesets,
regarding possible restrictions on
the interaction between classical moves and quantum positions.
We now list them from least to most restrictive.

\begin{itemize}
\item In variant D, there are no additional restrictions.
In other words, at each turn, a player can play
  either a classical or quantum move.

\item In variant C', a player can make a classical move only if it would cause no non-terminal realization to collapse.

\item In variant C, a player can make a classical move only if it would cause no realization to collapse.

\item In variant B, a player can make a classical move only if there are no valid quantum moves available
(i.e., superposition of moves in which each is feasible for some realization(s)).

\item In variant A, a player can never make classical moves.
\end{itemize}


In this paper, to emphasize their structural
 implication to quantum rulesets, we will refer to these variants interchangeably as {\em quantum flavors}, {\em flavors of quantum extension}, or simply {flavors}.
In the definitions above, quantum flavors D and A are straightforward
  and B at its core is an
  extension of the \ruleset{Tic Tac Toe} ruleset brought over to other combinatorial games.
Notice that
flavor C' is essentially combining the additional relaxations into restrictions from B and C.

We will primarily be focusing on quantum flavor D,
  in which classical moves can be viewed as special cases of superposition moves.
This variant is perhaps the most natural quantum extension of combinatorial games.
Most of our proofs can be applied or adapted to other flavors.
Our complexity-theoretical studies of quantum  combinatorial games also illustrate some subtle differences of these quantum flavors.

\subsection{Highlights and Organization of the Paper}



\vspace{0.1in}
\noindent Combinatorial games are fun to play.
They can  also be intellectually challenging to play optimally.
Part of the reason behind the challenge---from computational  perspectives---is that the basic underlying {\em decision problem} for determining the outcome of many games
is {\em intractable}.
For various games,  unless $\cclass{PSPACE} = \cclass{P}$---a remarkable refuting of the conventional wisdom in the theory of computing\footnote{\cclass{PSPACE} represents the set of all decision problems that can be solved in polynomial space; \cclass{P} represents the set of all decision problems solvable in polynomial time.}---no algorithm can always determine the game's outcome, from an arbitrary game position, in time polynomial in the descriptive size of these games \cite{DBLP:journals/jcss/Schaefer78,PapadimitriouBook:1994,LichtensteinSipser:1980,EvenTarjanHex,Reisch:1981,Burke2008,10.5555/1714301}.
The rapid exponential growth in decision time
makes search for winning strategies challenging
even when the game boards are with moderate-sizes, such as 19 by 19 for \ruleset{Go} or 14 by 14 for Nash's ``optimal'' \ruleset{Hex} size \cite{NashHex}.

Thus, the computational
  intractability of {\sc Decision of Outcome}
 and the relevant search problem of winning moves by querying game positions (representing possible intermediate game configurations in the processing of playing the games) provides a strong indicator of the richness and intricacy of combinatorial games.
{\em Elegant} combinatorial games with
  {\em simple rules} yet
  {\em intractable complexity} are
  the gold standard for combinatorial game design \cite{Eppstein}. 
 Particularly, in the digital age
  powered by AI,
  the lack of an efficient algorithm for optimally playing a combingatorial
  game gives human players a fighting
   chance to compete against computer programs,
   and, in the meantime,
   the deep challenge also gives
   computer programs
   a reason to learn and improve \cite{Burke2008,10.5555/1714301}.

While many popular combinatorial games---such as \ruleset{Hex}, \ruleset{Node-Kayles}, \ruleset{Geography}, and \ruleset{Go}---are \cclass{PSPACE}-intractable \cite{DBLP:journals/jcss/Schaefer78,PapadimitriouBook:1994,LichtensteinSipser:1980,EvenTarjanHex,Reisch:1981},
some seemingly nontrivial games are
 polynomial-time solvable (with elegant math and algorithms).
Two examples are
  are \ruleset{Nim} \cite{Bouton:1901} and
  \ruleset{Undirected Geography} \cite{DBLP:journals/tcs/FraenkelSU93}.\footnote{ \ruleset{Geography} is an impartial game defined by a directed graph and a starting vertex in the graph. During the game, two players take turns to advance a path in the graph from the starting vertex.
In the normal-play version, the player who cannot not make a further advancement---i.e., all out-neighbors of the current vertex are occupied---loses the game.
\ruleset{Undirected Geography} is a special family of \ruleset{Geography} where the underlying graphs are undirected.
Determining the outcome of \ruleset{Geography} on general directed graphs  is \cclass{PSPACE}-complete \cite{LichtensteinSipser:1980}.
However, cycles of directed edges play a crucial role in the gadgets for establishing the \cclass{PSPACE}-hardness.
Indeed, both \ruleset{Undirected Geography} and
\ruleset{Directed Acyclic Geography}---i.e., \ruleset{Geography} on DAGs---are polynomial-time solvable games.\cite{{DBLP:journals/tcs/FraenkelSU93}}}
For both combinatorial games, the underlying {\sc Decision of Outcome} and {\sc Search for Winning Moves} have polynomial-time algorithmic solutions in terms of their descriptive sizes (i.e., respectively, the number of bits specifying the starting piles in \ruleset{Nim} and the number of nodes of the
 graphs defining \ruleset{Undirected Geography}).

Our research program in this paper started with and expanded upon the following 
question. 

\begin{itemize}
\item {\bf Quantum Leap in Complexity} {\em Is there a polynomial-time solvable combinatorial game whose quantum extension is intractable?}
\end{itemize}

In our pursuit of this possible leap in the {\em quantumized  complexity} of combinatorial games, we have also obtained an affirmative answer to the following fundamental question:

\begin{itemize}
\item {\bf Quantum Collapses in Complexity}: {\em Are there \cclass{PSPACE}-complete combinatorial games whose quantum extensions fall to the lower levels of polynomial hierarchy?}
\end{itemize}

For the first question, we have focused on two concrete games for identifying potential leap in the quantumized complexity:
\begin{enumerate}
\item Is \ruleset{Quantum Nim} polynomial-time solvable?
\item Is \ruleset{Quantum Undirected Geography} polynomial-time solvable?
\end{enumerate}

Understanding these two important, classically polynomial-time solvable combinatorial games in the quantum setting has been instrumental to our research.
The process and findings have shed light on the expressive power of superposition of moves and quantum game positions.
The concrete structures of their game dynamics have provided us with much-needed 
insight into the otherwise-abstract mechanisms of deep alternation in quantum combinatorial games.

\ruleset{Nim} and \ruleset{Undirected Geography}  are polynomial-time solvable in the classical setting for different mathematical reasons.
The solution for \ruleset{Nim}
  uses a polynomial-time computable algebraic expression for the outcome and winning strategy \cite{Bouton:1901}.
On the other hand, the solution for \ruleset{Undirected Geography} relies on a matching theory for characterizing the outcome and winning strategy \cite{DBLP:journals/tcs/FraenkelSU93}.
Other than their apparent structural differences---one based on integer vectors and another based on graph structures---these two games also differ in one fundamental aspect:
The game tree of \ruleset{Undirected Geography} is {\em polynomially short}, while the game tree for \ruleset{Nim} is {\em exponentially tall} in their respective descriptive complexity.
We will prove later in Section  \ref{Sec:PSpaceUpperBound} that the height of game trees can have a crucial impact on the
quantumized complexity.

\vspace{0.1in}
\noindent{\sc A Taxonomy of Quantumized Complexity}:
We prove that both \ruleset{Nim} and \ruleset{Undirected Geography} make a {\em\bf complexity leap} over \cclass{NP} in the {\bf\em quantum world}, by showing that 
{\sc Decision of Outcome}
for both quantum games at a querying superposition becomes intractable.

We have been drawn further into these quantum regions because of their intriguing computational complexity landscape.
Due to fact that the number of realizations in the resulting quantum superposition could be
 exponential in number of quantum moves
 leading up to the superposition,
  one could face, in the {\sc Decision of Outcome} for playing quantum games, configurations far more complex than in the classical setting.
This potential explosiveness in game configurations---although encodable with polynomially succinct representations---contributes to the mathematical and computational challenges in quantum combinatorial game theory.
It also introduces significant difficulty
 in practical implementation and visualization of quantum games.
Thus, quantum games come with their own character beyond classical games, introducing naturally multifaceted complexity measures. 

 In Section \ref{Sec:Taxonomy},
  we introduce a taxonomy of complexity measures for quantum combinatorial games, which uses
  the ``degree of quantumness'' of the querying
  superposition as
  a natural parameter for
  the classification of quantumized complexity.
As a first step, we attempt to capture the behaviors of quantum games with several baseline quantumized complexity landmarks, measuring the worst-case complexity for determining the outcome when allowing quamtum moves from:
\begin{itemize}
    \item a classical position,
    \item a superposition with a polynomial number of realizations (shorthand: {\em poly-wide quantum position})
    \item a superposition obtained from a classical position after making a given sequence of polynomial number of quantum/classical moves
 (shorthand: \emph{poly-moves after classical start}),
    \item a poly-wide superposition reachable from a classical position after a sequence of quantum/classical moves (shorthand: \emph{reachable poly-wide quantum position from a classical start}),
 and
    \item a superposition obtained from a poly-wide
 quantum position after making a given sequence of poly-moves
 (shorthand: {\em poly-moves after poly-wide quantum start}).
 \end{itemize}


\vspace{0.1in}
\noindent{\sc Exploring Quantum Undirected Geography}:
Of the two quantum games, we have obtained a more detailed picture for \ruleset{Quantum Undirected Geography}.
Our contrasting yet complementing
 characterization 
of its solvability reveals a dramatic complexity
landscape.

First, we prove  that
  it is \cclass{PSPACE}-complete to determine the game outcome of
  \ruleset{Quantum Undirected Geography}
  at a poly-wide quantum position.
We also show that this \cclass{PSPACE}-completeness result continues to hold for
quantum positions created by poly-moves after a classical start.
Both of these proofs (in Section \ref{Sec:PSPACECompleteQUG}) reduce from \ruleset{Quantum (Directed) Geography}.  The proofs construct gadgets to simulate directed edges by utilizing superpositions of multiple realizations and their dynamic interaction with quantum/classical moves.
In Section \ref{Sec:PSPACEComplete},
we reduce the \cclass{PSPACE}-complete \ruleset{Geography}  to \ruleset{Quantum Geography} by exploiting graph localities to ``regulate'' the impact of quantum moves.

We then complement these complexity results
  with an algorithmic one, highlighting the delicacy of this quantum combinatorial game.
In Section \ref{Sec:PClassicalStartQUG},
we design a polynomial-time algorithm
  for determining whether \ruleset{Quantum Undirected Geography} with a classical starting position is winnable.
The mathematical statement of our result is in fact simple: In quantum flavor $D$, the classical winner (called them `Hero') remains the quantum winner,
 provided the Hero is vigilant.
In addition, the Hero can always find a winning classical move, in polynomial-time, whether or not the classical loser (called them the 'Villain') makes quantum moves.
Perhaps unexpectedly, the proof of correctness requires careful graph theory and algorithm design.
Our proof extends the matching theory for classical \ruleset{Undirected Geography} \cite{DBLP:journals/tcs/FraenkelSU93}, through
{\em realization-collapsing moves}
and {\em graph contraction}, to find the Hero's winning move at any intermediate position to fend off the Villain's quantum attacks.
As long as the Hero plays judiciously with consistency from the start, they can keep the
Villain in check. 
This algorithmic result provides a remarkable contrast to the earlier mentioned complexity result that
\ruleset{Quantum Undirected Geography} at quantum position reachable from a classical position after poly-moves is \cclass{PSPACE}-hard.\footnote{For the winner of a classical position in \ruleset{Quantum Undirected Geography}, it is literally, ``if they snooze, not only could they lose the game, but they may also not able to efficiently determine whether they would lose or could win after certain number of less-strategically steps have been made.''}


\vspace{0.1in}
\noindent{\sc Quantum Leap in Nim Complexity}:
In Section \ref{Sec:Nim},
we prove that,
deciding the outcome of
\ruleset{Quantum Nim} at a given {\em poly-wide} quantum position is at least $\Sigma_2$-hard.
In other words, determining whether a poly-wide superposition is winnable in \ruleset{Quantum Nim} is above the second level of the polynomial-time hierarchy.

Note that \ruleset{Quantum Nim} remains an impartial game.
Thus, by the celebrated Sprague-Grundy theorem \cite{Sprague:1936,Grundy:1939}, \ruleset{Quantum Nim} at poly-wide quantum position can be mathematically characterized by classical \ruleset{Nim} (i.e., using ``nimbers'', or ``Grundy values'').
Thus, our result demonstrates that
any such ``Sprague-Grundy-reduction'' from
\ruleset{Quantum Nim} to the classical \ruleset{Nim} is not likely to be  computationally efficient.

Our proof takes several turns, each of which provides us with new insight.
First, in Section
\ref{Sec:QATToQNim},
by exploiting the expressive power of
superpositions of \ruleset{Nim} positions,
we design an efficient encoding of \ruleset{Quantum Avoid True}---the quantum extension of Schaefer's impartial logic game,
\ruleset{Avoid True} \cite{DBLP:journals/jcss/Schaefer78}\footnote{Each instance of \ruleset{Avoid True} is defined by a positive CNF, in which two players take turns setting Boolean variables to \texttt{true}, provided that they can avoid making the overall CNF evaluate to \texttt{true}.}---by \ruleset{Quantum Nim}
at poly-wide superposition.
This encoding represents a significant step in using superpositions of \ruleset{Nim} positions---individually being polynomial-time solvable—to characterize intractable combinatorial game positions.
In the classical setting, because \ruleset{Avoid True} is \cclass{PSPACE}-complete  \cite{DBLP:journals/jcss/Schaefer78} and
\ruleset{Nim} is polynomial-time solvable, any reduction from \ruleset{Avoid True} to \ruleset{Nim}  must incur an exponential explosion (unless \cclass{PSPACE} = \cclass{P}). In contrast, our reduction efficiently characterizes each \ruleset{Quantum Avoid True} position by a poly-wide superposition of \ruleset{Nim} positions, each with a polynomial descriptive size.
These  \ruleset{Nim} positions are individually simple; each is an instance of \ruleset{Boolean Nim}, consisting of piles with only one or zero pebbles.

Therefore, our polynomial-time reduction from \ruleset{Quantum Avoid True} to \ruleset{Quantum Nim} sheds light sharply on the structural power of quantum moves and game positions. Our proof, in particular, illustrates the capacity of \ruleset{Quantum Nim} positions to succinctly characterize the complex logical alternations intrinsic to \ruleset{Quantum Avoid True}.

Second, because our reduction is not from \ruleset{Avoid True} to \ruleset{Quantum Nim},\footnote{The reduction can encode every \ruleset{Avoid True} position
 with a poly-wide
 \ruleset{Quantum Nim} positions, and
if the \ruleset{Quantum Nim} can somehow ``forbids''---figuratively and mathematically---
 quantum moves from this position onwards,
then the ``classical continuation'' of
  the \ruleset{Quantum Nim} position
  precisely captures the alternation
  in \ruleset{Avoid True}.
  However,  quantumness does matter in the resulting \ruleset{Quantum Nim}.
Hence, instead of preserving \ruleset{Avoid True}'s alternation structure,
  the reduction preserves
  \ruleset{Quantum Avoid True}'s game tree.
  } but rather, and critically, from \ruleset{Quantum Avoid True} to
\ruleset{Quantum Nim},
we have to establish intractability
of \ruleset{Quamtum Avoid True}.
The need to understand the computational complexity of \ruleset{Quantum Avoid True} in order to determine the intractability of \ruleset{Quantum Nim} also brings us to a family of fundamental questions, centered around
an intuitive conjecture:

\begin{quote}
{\em The  quantum lift of any combinatorial game is always at least as hard  (computationally) to play as the combinatorial game itself}.
\end{quote}

or more concretely:
\begin{quote}
{\em Any \cclass{PSPACE}-complete combinatorial
    game remains a \cclass{PSPACE}-hard game in the quamtum setting.}
\end{quote}

Had these conjectures been true, we would have proved that ``\ruleset{Quantum Nim} is a \cclass{PSPACE}-hard game at poly-wide superposition.''
In an exciting turn of events, these questions together with Schaefer's brilliant reduction from \ruleset{Partition-Free QBF}\footnote{In this two-player game defined by a QBF, the goal of the True player is to make the QBF \texttt{true} and the goal of False is to make the QBF \texttt{false}.
Boolean variables of the QBF are {\em partitioned} into two sets, one for each player. At their turn, players can ``freely'' set any of their unassigned variables.} to \ruleset{Avoid True} \cite{DBLP:journals/jcss/Schaefer78} has led us to realize that various variants of \cclass{PSPACE}-complete \ruleset{QBF} games experience {\em collapses in quantumized complexity} to the second level of polynomial-time hierarchy.
We will have more discussion on \cclass{PSPACE}-complete games with  ``quantum collapse'' shortly.

The reduction from partizan  \ruleset{Partition-Free QBF} to impartial \ruleset{Avoid True} \cite{DBLP:journals/jcss/Schaefer78}
illuminates promising pathways for us
after our \ruleset{Quantum Nim} encoding of
\ruleset{Quantum Avoid True}.
In Section \ref{Sec:QuantumSchaefer}, we lift Schaefer's reduction to show that \ruleset{Quantum Partition-Free QBF} can be efficiently encoded by \ruleset{Quantum Avoid True}.
This confidence-boosting proof---one of the two main technical results of Section \ref{Sec:Nim}---crucially uses the very mathematical properties that cause the ``quantum collapses'' in \ruleset{Quantum Partition-Free QBF}:
\begin{itemize}
\item {\bf Player True's Winning Freedom}: If \ruleset{Quantum Partition-Free QBF} is winnable by the True player then they have "complete freedom", meaning they can ignore their False's choices and win by making their own classical moves.
\item {\bf Player False's Winning Freedom}: If \ruleset{Quantum Partition-Free QBF} is winnable by the False player, then they have the ``liberty'' 
to just make quantum assignments of $\langle \texttt{true}\ |\  \texttt{false}) \rangle$ for their variables, in any order they like, regardless of what True does.
\end{itemize}
However, while quantumness introduces more freedom to the winning design for \ruleset{Quantum Partition-Free QBF} due to the partizan nature of the game,
it also provides the losing player
wide latitude and power to interfere
the winning player's action in the impartial
\ruleset{Quantum Avoid True}.
Thus, the proof of reduction requires a delicate argument to counter this ``unwelcome'' quantum impact.

The intractability of \ruleset{Quantum Nim} then follows from the result in Section \ref{Sec:QuantumCollapses}, showing that although quantum moves collapse the \cclass{PSPACE}-hardness of \ruleset{Partition-Free QBF}, the
\ruleset{Quantum Partition-Free QBF} retains a certain degree of intractability at the second level of the polynomial-time hierarchy.

 In our reduction,
our \ruleset{Quantum Nim} game
uses poly-wide superposition of perhaps the simplest \ruleset{Nim} positions:
All of the \ruleset{Nim} positions in our encoding are from the family of \ruleset{Boolean Nim}.
So, our $\Sigma_2$-hard intractability proof
of \ruleset{Quantum Nim} holds for \ruleset{Quantum Boolean Nim}
and can be extended broadly to
the quantum extension of
\ruleset{Subtraction} games, a family of
arithmetic strategy games widely used to teach young children about mathematical thinking.\footnote{For example, Los Angeles Math Circle summer school assigns subtraction games with subtraction set $\{1,2\}$ to first-grade students.}
In Section \ref{Sec:NimEncoding}, we also formulate a general ``Robust Binary-Nim Encoding" properties sufficient for supporting
a reduction from \ruleset{Quantum Nim} superposition, implying that evaluating
poly-wide superposition in
several natural games
including  \ruleset{Brussel Sprouts}, \ruleset{Go} and \ruleset{Domineering} is $\Sigma_2$-hard in the quantum setting.



\vspace{0.1in}
\noindent{\sc Quantumness Matters}:
The quantum leap in the computational complexity of \ruleset{Nim} and \ruleset{Undirected Geography} as well as quantum collapse in the intractability of \ruleset{QBF} further 
 illuminate the rich impact of
 quantum moves to the structure
 of combinatorial games.
In Section \ref{Sec:QuantumMatter},
  we formalize two notions of
  ``{\bf\em quantumness matters}''
  to capture whether or not having quantum
  options alters optimal strategies
  or game outcomes.
We illustrate these concepts with concrete examples, 
some of which are later used as gadgets 
for complexity-theoretical analysis.
These examples also serve as illustrations for the formal notation of the {\bf\em quantum lift of combinatorial games} introduced in Section \ref{Sec:Notations}.
On the other hand, our proof establishing the polynomial-time solvability of  \ruleset{Quantum Undirected Geography} from a classical start
illustrates some fundamental aspects of when ``quantumness doesn't matter.''
One of the conditions that we have identified in Section \ref{Sec:QuantumMatter}
also contributes instrumentally to our proof extending Schaefer's  \ruleset{Partition-Free QBF} to  \ruleset{Avoid True} reduction to  the quantum setting.

\vspace{0.1in}
\noindent{\sc Quantum Transformation of \cclass{PSPACE}-Complete Games: Complexity Preservation and Collapse}:
In Section \ref{sec:Reductions},
we study the quantumized complexity of various classical \cclass{PSPACE}-complete combinatorial games.

In Section \ref{Sec:QuantumCollapses},
 focusing on variants of combinatorial games built on the complexity-theoretical
 Quantified Boolean formula (QBF) problem, we identify a family of classical \cclass{PSPACE}-complete games subject to
  {\em quantum collapse}, namely, the complexity of their quantum lift falls
  from \cclass{PSPACE}-complete to
  the second level in the polynomial-time
  hierarchy.
While the proof is extremely simple, this result illustrates the subtle impact of quantum moves to the structure and complexity of combinatorial games.
The result provides a complexity-theoretical refutation to conjecture that the quantum extension of combinatorial games is always at least as hard (computationally) to play as the original ones.

In fact, these QBF-based games have richer stories in the quantum world, revealing the sensitive
interactions between quantum moves and logic positions.
Combinatorial games commonly use ``normal play'', meaning that a player loses {\em if and only if} they are unable to make a move on one of their turns.
There are several ways to transform QBF-games into normal play games by introducing new game moves for the ``end-of-QBF'' condition.
All of these are classically equivalent:
They are polynomial-time reducible to each other, and
\cclass{PSPACE}-complete.

However, our analysis exhibits the fundamental difference among various
\ruleset{QBF} variants in the quantum setting, illustrating
they are dependent
  on the explicit move definitions
  in the ruleset.
We also show that these differences have complexity consequences.
In Section \ref{Sec:QuantumCollapses}, we discuss
  various \ruleset{QBF} variants.
While we give intractability results
  for all these natural \ruleset {Quantum QBF} variants, some  ``end-of-QBF'' transformations
  contribute to the quantum collapse in complexity.  This is critical to our intractability analysis of \ruleset{Quantum Nim}, and provides the needed technical reduction to complete the proof of the leap in \ruleset{Nim}'s quantumized complexity from \cclass{P} to the second level of the polynomial-time hierarchy.

In Section \ref{Sec:Depth},
 motivated by celebrated complexity characterizations of Lander's  \cclass{NP}-intermediate problems \cite{Lander} and Ko's intricate, meticulous separation of levels of
the polynomial-time hierarchy \cite{KoPHSeparation},\footnote{In his famous result in computational complexity,
Ko proved the following:
For any integer $k > 0$, there exists an oracle $O_k$ such that
{\em relative to} $O_k$, the polynomial-time hierarchy collapses to exactly
to level $k$, that is,
{\em relative to} $O_k$, \cclass{PSPACE} is equal to level $k$ of the polynomial-time hierarchy, while level  $(k-1)$  of the polynomial-time hierarchy
is strictly contained in level $k$ of the polynomial-time hierarchy.}
we further extend our proof to show that for any integer $k>0$, there exists a \cclass{PSPACE}-complete game whose quantum extension is complete exactly for level $k$ in the polynomial-time hierarchy.

We then show, in Section \ref{Sec:PSPACEComplete}, that several natural \cclass{PSPACE}-complete combinatorial games---including
\ruleset{Geograhy}, \ruleset{Node-Kayles}, and \ruleset{Snort}---preserve their \cclass{PSPACE}-hardness in the quantum setting.
Unlike \ruleset{Avoid True}, these games enjoy a great deal of ``locality'' instrumental to
our gadgets  to limit the impact of quantum moves to set up reduction from the classical versions to their quantum counterparts.

\vspace{0.1in}
\noindent{\sc A Quantum Leap Complexity Barrier}:
Our analysis of quantumness matters and discoveries of quantum leaps, collapses, and preservation in the complexity of combinatorial games also lead us to the following basic question:

\begin{itemize}
\item {\bf Quantum Leap over PSPACE?}:
{\em Is there a \cclass{PSPACE}-solvable combinatorial game whose
    quantumized complexity jumps  to above \cclass{PSPACE}}?
\end{itemize}

In order to identify such games, we have initially taken
  an intuitive path by considering, as potential candidates, some
  of our favorite \cclass{PSPACE}-complete games such as \ruleset{QSAT},
  \ruleset{Avoid true}, \ruleset{Hex}, \ruleset{Go},  \ruleset{Geography}, \ruleset{Node Kayles}, \ruleset{Atropos},
{\em etc}, {\em etc}.
Upon examining this diverse set of games built on logic, geometric, graph-theoretical, and topological formulations, we use a common feature:
unlike \ruleset{Nim}, these games are all {\em polynomially-short}; the height of their game trees is polynomially bounded by their descriptive complexity.
We also observe that
  while quantum moves potentially increase the complexity of combinatorial games---particularly,  they can introduce game positions with an exponential number of realizations---quantum lift does not increase the height of game trees.

Preservation of game-tree height turns out to be a key limitation of quantum moves with crucial computational consequences.
Through a careful algorithmic treatment of superpositions with a potentially exponential number of realizations, we show
in Section \ref{Sec:PSpaceUpperBound}
that there is a \cclass{PSPACE} barrier limiting quantum leaps of  all polynomially-short games.
In other words, for a large family of combinatorial games, including all \cclass{PSPACE}-complete games mentioned above, the strengthened quantum power in moves---although capable of affecting the game outcomes and creating quantum game boards with superposition of exponentially many realizations---does not elevate the complexity of the combinatorial games beyond \cclass{PSPACE}.

\vspace{0.1in}
\noindent{\sc Extensions, Conjectures, and Open Questions}:
During our exploration of quantum combinatorial games, we have encountered   far more questions than we can answer.
Even seemingly simple quantum games, such as \ruleset{Quantum Trilean Nim} with classical start, in which each pile has at most two pebbles (i.e., three values: 0, 1, 2) remain unresolved.
Even the basic question of
whether there is a
polynomial-time algorithm to
solve \ruleset{Quamtum Undirected Geography} at a superposition with two realizations has escaped us.

Thus, we identify several elegant open questions fundamental to quantum combinatorial games.
The most outstanding question, perhaps, is the
complexity of \ruleset{Quantum Nim}.
As one of the very first quantum games that we have studied, the complexity of Quantum Nim remains elusive for precise characterization.
While \ruleset{Nim} is a polynomial-time solvable game, it is not a polynomially-short game.
Thus, the  \cclass{PSPACE}-barrier that we establish for quantumized polynomially-short games, does not directly apply to \ruleset{Quantum Nim}, even with a classical start.
On the other hand, while we are able to show that there is a quantum leap of \ruleset{Nim}'s complexity over \cclass{NP} to level two of polynomial-time hierarchy,
the magnitude of quantum leap, measured either by  levels of polynomial-time hierarchy or
by its degree above \cclass{PSPACE}, and the dependency on the ``degree of quantumness'' in game's starting position,
remains unknown.
Thus, for its degree of intractability,
  we still don't know if it is an \cclass{PSPACE}-hard game.
On the other hand, for its upper bound, we still don't know if \ruleset{Quantum Nim}  is a \cclass{PSPACE}-solvable game, although---by generating its game tree---we can show that it is an \cclass{EXPSPACE}-solvable game.
To put it simply, the complexity of \ruleset{Quantum Nim} is wide open.

Our characterization of quantumized complexity of polynomially-short games now places \ruleset{Quantum Nim} as a potentially viable candidate for quantum lift over \cclass{PSPACE}.
Other candiates are various combinatorial games
 based on structures based on polynomial-local search (PLS).
In general, we will benefit from a better understanding of
the connection between quantumized complexity and the height of game trees, and, in particular, the impact of the degree-of-succinctness of a game’s representation—which is often the source of superpolynomial game-tree height—to both classical and quantumized complexity.

Another concrete open question is whether a stronger
 dichotomy characterization holds for
 \ruleset{Undirected Geography}, that is,
 whether or not
 it is intractable to determine the
 game outcome of
 \ruleset{Quantum Undirected Geography}
  at a given superposition
  with a {\em constant number} of realizations
  as well as at poly-wide superpositions reachable from classical positions.
In general, it is a wonderful research question to establish a sharper mathematical characterization on
the dependency of a game’s structure and complexity on the degree-of-quantumness in starting positions and reachability from classical positions.

In Section \ref{Sec:FinalRemarks},
 in addition to sharing some of our conjectures,
 we will also discuss  several potential extensions of quantum combinatorial games.  These extensions may be more manageable for human play.
It is our desire to
 set up some elegant quantum games for us and more people to explore.

\section{Combinatorial Games and Their Quantum Lift}\label{Sec:Notations}

In this section, we will first review some mathematical backgrounds of combinatorial game theory (CGT).
Then, building on these classical formulations, we introduce basic notation and concepts in quantum combinatorial game theory (QCGT) essential
to the presentation of this paper.

\def\cp{h}

\subsection{Classical Combinatorial Games: Abstract Rulesets and Concrete Instances}

Combinatorial games are {\em succinct abstract rulesets} for {\em deep interactions}, which define the domains of game positions and transitional moves from position to position
\cite{WinningWays:2001,SiegelCGT:2013}.
Some games, such as \ruleset{Go} \cite{LichtensteinSipser:1980}, \ruleset{Chess} \cite{DBLP:journals/jct/FraenkelL81}, and \ruleset{Checkers}, are well played in practice; some games, such as \ruleset{Nim} and \ruleset{Mancala}, are taught in Math Circles/summer camps to primary school students; while many others, including theorized practical games and abstract formulations based on logic, geometry, and graph theory,
 such as \ruleset{QSAT}, \ruleset{Avoid True}, \ruleset{Geography}, \ruleset{Nodel-Kayles},  \ruleset{Hex}, \ruleset{Atropos} {\em etc, etc}, are studied in mathematics and computer science.

In practice,  board games such as \ruleset{Go} and \ruleset{Hex} may often have a fixed scale and starting position.
Mathematically (in CGT), each game  name---including well-known ones---usually 
represents an infinite family of concrete game instances, each of which is {\em instantiated} by a {\em starting game position} from which players can take turns making transitional decisions according to the same general {\em ruleset} of the game.
 For structural and computational studies, the game instances can be further classified
into subfamilies,  parameterized either by  a natural dimension characterizing games' {\em scales}\footnote{i.e., size of the board in \ruleset{Hex} and \ruleset{Go} or number of piles in \ruleset{Nim}.}  or by a natural ``game-board frame'' formulating the sub-domains of positions/moves.\footnote{i.e. the graph on which \ruleset{Node-Kayles} and \ruleset{Geography} is played.}
Each fixed parameterization defines a sub-domain of game positions, alphabet of moves, and transitional rules for its subfamily of game instances (for a focused analysis).
In this regard, practically popular games, such as the (19 by 19) \ruleset{Go} and (Nash's 14 by 14) \ruleset{Hex}, are game instances with a fixed starting position (i.e.,  empty boards in these cases) and from a particular subfamily (i.e., respectively, of dimension 19 for the former and 14 for the latter) under the umbrella of the mathematical
rulesets \ruleset{Go} and \ruleset{Hex}, each of which represents infinity many instances definable over any dimension and starting  position (i.e., partially filled boards).

Formally, the ruleset of a combinatorial game can be defined by four components,
denoted by
${\cal R} = ({\cal B}, {\cal S}, \rho_L, \rho_R)$, where:
\begin{itemize}
 \item ${\cal B}$ is a set of all possible game positions.
    \item ${\cal S}$ is an alphabet of possible moves for players over positions in ${\cal B}$, \footnote{Combinatorial Game Theory experts may wonder why a separate alphabet is needed to describe moves, instead of just using states from ${\cal B}$ that are the results of those moves.
The authors posit that the moves are more natural in the context of quantum moves.  Indeed, the choice of moves used in a ruleset determines how each move interacts with components in a quantum superposition.  Additionally, if we're using the resulting state as the move in quantum flavor $D$,
then based on the current formulation, any player can collapse a superposition on their turn whenever they could make any move, because they can choose the desired position directly.  This means that choosing a superposition gives the opponent more options.  Thus, it's never strategically better to choose a superposition and the game values of all classical positions are the same.  The use of move descriptors allows players to make classical moves that are legal on multiple realizations.  For further discussion on this topic, see Section \ref{Sec:QuantumLift}.
}
    \item $\rho_{L} : {\cal B} \times {\cal S} \rightarrow {\cal B} \cup \{NULL\}$ is the move function for (the player) Left,
    \item $\rho_{R} : {\cal B} \times {\cal S} \rightarrow {\cal B} \cup \{NULL\}$ is the move function for Right.
\end{itemize}

If $\rho_L = \rho_R$, then both players always have the same options.
We refer to these rulesets as \emph{impartial}.
Games that aren't impartial are known as \emph{partizan}.
When looking at an impartial game, we will  ignore the $L$ and $R$ identifiers and just refer to the function with the singular $\rho$. 

Given a ruleset ${\cal R}$,
 each game position $b_0\in {\cal B}$
 defines a (concrete) {\em game instance} (or a {\em game} for short) ${\cal R}[b_0]$, with
 $b_0$ as the starting position.
To play 
  ${\cal R}[b_0]$,
  two players $L$ and $R$ take turns
  in selecting their moves.
Each turn, the current player,
$\cp \in \{L, R\}$, chooses to make a move from the current position,
  $b$ (initially set to $b_0$).
The move they choose, $\sigma \in \Sigma$, must be \emph{feasible}, meaning that it must result in a non-$NULL$ value: $\rho_\cp(b, \sigma) \neq NULL$.
The results of these feasible moves for $\cp$ from $b$,
$$\{b'\ |\ \exists\ \sigma \in \Sigma \text{ and } \rho_\cp(b, \sigma) \neq NULL\  \}$$
are the \emph{options} of $\cp$ from $b$.
We will focus on the {\em normal-playing convention} in this paper.
For normal-playing games, the current player $\cp$ loses the game, if they have no feasible option at the current position.
We refer to such a position as a \emph{terminal} position for $\cp$.
Throughout the paper, we will use $\bar{\cp}$ to denote the other player, that is, if $\cp = L$ then $\bar{\cp} = R$ and if $\cp = R$ then $\bar{\cp} = L$.

Thus, we can apply the move function $\rho_\cp$ to
  determine the position
  after a sequence of moves from a given position $b\in B$.
We use the default, $\forall b\in B$, $\rho_\cp(b,\emptyset) := b$; and
inductively,  for any sequence of moves,
$s = \sigma_T \circ ... \circ \sigma_1$:
$$ \rho_\cp(b,s) := \rho_\cp(\rho_{\bar{\cp}}(b,\sigma_{T-1}\circ ... \circ \sigma_1),\sigma_T). $$

For each $b\in B$ and $h\in \{L,R\}$, define:
\begin{eqnarray}
\mbox{\rm Reachable}_h(b) := \{b'\ |\  \exists s \in \Sigma^*, b' = \rho_h(b,s)\}
\end{eqnarray}
$\mbox{\rm Reachable}_h(b)$ is the set of all positions reachable, according to the ruleset, when the game is started at position $b$ with $h$ as the current player.
Let
\begin{eqnarray}
\mbox{\rm Reachable}(b) := \mbox{\rm Reachable}_L(b)\cup \mbox{\rm Reachable}_L(b)
\end{eqnarray}

Note that game ${\cal R}[b_0]$ is only played on positions reachable from $b_0$; players need not make decision on any position not reachable from $b_0$.
Thus, while the sets ${\cal B}$ and ${\cal S}$ for abstract rulesets are generally infinite,
game instances are usually played on finite subset of ${\cal B}$, and hence only involves finite subset of moves from ${\cal S}$.

\begin{definition}[Self-Contained Domain of Positions]
We say $B\subseteq {\cal B}$ is {\em closed under move functions} $\{\rho_L,\rho_R\}$ if
$\rho_h(b,\sigma)\in B\cup \{NULL\}$, $\forall b\in B, \sigma\in {\cal S}, h\in \{L,R\}$.
\end{definition}

\begin{proposition}
In any ruleset ${\cal R} =({\cal B},{\cal S},\rho_L,\rho_R)$, for any $b\in {\cal B}$,
$\mbox{\rm Reachable}(b)$ is
 closed under
move functions $\{\rho_L,\rho_R\}$.
\end{proposition}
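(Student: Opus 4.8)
The plan is to prove closure directly from the definition of $\mbox{\rm Reachable}(b)$ as the set of positions obtainable by applying some sequence of moves. The core observation is that if $b'$ is reachable from $b$ via a sequence $s$, and we then apply one more move $\sigma$ to get $\rho_h(b', \sigma)$, the result is reachable via the \emph{extended} sequence $\sigma \circ s$. So closure under $\{\rho_L, \rho_R\}$ is essentially a statement that appending a move to a reachable sequence yields another reachable sequence, together with a careful bookkeeping of \emph{which} player is to move.

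Concretely, fix $b \in {\cal B}$ and suppose $b' \in \mbox{\rm Reachable}(b)$. By definition of $\mbox{\rm Reachable}(b) = \mbox{\rm Reachable}_L(b) \cup \mbox{\rm Reachable}_R(b)$, there is some $h_0 \in \{L, R\}$ and some $s \in \Sigma^*$ with $b' = \rho_{h_0}(b, s)$. Now take any $\sigma \in {\cal S}$ and any $h \in \{L, R\}$; I want to show $\rho_h(b', \sigma) \in \mbox{\rm Reachable}(b) \cup \{NULL\}$. If $\rho_h(b', \sigma) = NULL$ we are done, so assume it is a genuine position. The idea is that $\rho_h(b', \sigma) = \rho_h(\rho_{h_0}(b, s), \sigma)$, and by the inductive definition of $\rho$ on move sequences, this equals $\rho_{h'}(b, \sigma \circ s)$ for the appropriate current-player label $h'$ associated with the extended sequence. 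Hence $\rho_h(b', \sigma) \in \mbox{\rm Reachable}_{h'}(b) \subseteq \mbox{\rm Reachable}(b)$, which is exactly what closure requires.

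The one subtlety — and the step I expect to require the most care — is reconciling the player index $h$ of the single move with the player index $h_0$ of the sequence $s$, because the inductive definition of $\rho_\cp(b, s)$ alternates $\cp$ and $\bar{\cp}$ as it unwinds the sequence. The definition $\rho_\cp(b, \sigma_T \circ \cdots \circ \sigma_1) = \rho_\cp(\rho_{\bar\cp}(b, \sigma_{T-1} \circ \cdots \circ \sigma_1), \sigma_T)$ fixes the turn parity: the player making the \emph{last} move $\sigma_T$ is $\cp$, and turns alternate backward from there. So to express $\rho_h(\rho_{h_0}(b,s), \sigma)$ as $\rho_{h'}(b, \sigma \circ s)$, I need the parity of $s$ (its length $|s| = T$) to be compatible, i.e.\ the player who would move after sequence $s$ must be $h$ when the sequence $s$ was played by starting player $h_0$. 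Since we are free to choose the starting player of the extended sequence, we set $h'$ to be whichever starting label makes the last move (namely $\sigma$) be played by $h$ and the preceding block $s$ be played consistently — that is, $h' = h$ if $|s|$ is even and $h' = \bar h$ if $|s|$ is odd. With $h'$ chosen this way, the recursive unwinding of $\rho_{h'}(b, \sigma \circ s)$ reproduces exactly the two-stage computation $\rho_h(\rho_{h_0}(b,s),\sigma)$, provided $h_0$ also matches the parity-induced label; because $\mbox{\rm Reachable}(b)$ takes the union over both starting players, such a consistent $h_0$ is always available.

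An alternative, cleaner route that sidesteps the parity bookkeeping entirely is to prove the stronger statement $\rho_h(\mbox{\rm Reachable}_{h'}(b), {\cal S}) \subseteq \mbox{\rm Reachable}(b) \cup \{NULL\}$ for each pairing and simply observe that every single move applied to a reachable position lengthens the generating sequence by one, so the result lies in one of the two $\mbox{\rm Reachable}_\bullet(b)$ sets. I would present the argument in this latter form if the parity alternation threatens to clutter the proof, since the essential content — \emph{appending a legal move to a reachable sequence keeps you reachable} — is entirely elementary and the player-label alternation is just notational overhead that the union over $\{L, R\}$ absorbs.
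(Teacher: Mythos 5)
The paper offers no proof of this proposition at all---it is stated as immediate from the definitions---so there is nothing to match your argument against; the append-a-move argument you give is exactly the justification the authors are implicitly relying on. In substance it is fine: if $b' = \rho_{h_0}(b,s)$ and the next move $\sigma$ is made by $h = \bar{h_0}$, then $\rho_h(b',\sigma) = \rho_h(\rho_{\bar{h}}(b,s),\sigma) = \rho_h(b,\sigma\circ s) \in \mbox{\rm Reachable}_h(b)$, and that one line is the whole content. The place where your write-up outruns what you have actually established is the sentence ``such a consistent $h_0$ is always available.'' The union $\mbox{\rm Reachable}_L(b)\cup\mbox{\rm Reachable}_R(b)$ only guarantees that $b'$ is reachable with \emph{some} last-mover label, not with the particular label $\bar{h}$ that the recursion $\rho_h(b,\sigma\circ s)=\rho_h(\rho_{\bar{h}}(b,s),\sigma)$ forces on the block $s$. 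If $b'$ happens to be reachable only as $\rho_h(b,s)$, i.e., only with $h$ making the last move, and you then apply $\rho_h$ again, the extended sequence no longer corresponds to alternating play, and for a partizan ruleset $\rho_h(\rho_h(b,s),\sigma)$ need not equal $\rho_{h'}(b,s')$ for any $h'$ and $s'$. This is really a looseness in the proposition as literally quantified (the closure definition demands $\rho_h(b',\sigma)\in\mbox{\rm Reachable}(b)\cup\{NULL\}$ for \emph{both} choices of $h$) rather than a flaw in your idea; for impartial rulesets, or under the natural reading in which one only applies the move function of the player whose turn it actually is, your argument is complete. Your proposed ``cleaner route'' at the end is the same argument with the parity bookkeeping suppressed, which is evidently how the authors themselves read the statement.
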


In our computational study of combinatorial games,
  we use the following formal description
 of game instances.
\begin{definition}[Description of Game Instances]
For a ruleset ${\cal R} =({\cal B},{\cal S},\rho_L,\rho_R)$ and a position $b_0\in {\cal B}$,
 the description of game instance ${\cal R}[b_0]$
 is given by five components
 denoted here by, $Z=(B,\Sigma,\rho_{L}, \rho_{R},b_0)$, where:

\begin{itemize}
 \item $b_0$ is the starting position of the game. 

\item
$B\ni b_0$
is a set,  closed under operations
$\{\rho_L,\rho_R\}$,
of possible positions containing all potential game positions {\em reachable} when applying the ruleset to play the game from the starting position $b_0$.\footnote{It is not uncommon that the computational problem of determining whether a given position can be reached from another one according to the ruleset is \cclass{NP}-hard. Thus, while ideally we want to define $B$ as all reachable positions from $b_0$, it is often set it to be the subdomain of positions of the subfamily of focus that this game instance belongs to, as the larger set may actually have more efficient description. }
    \item $\Sigma\subseteq {\cal S}$ is an alphabet of possible moves of ${\cal R}$ for players over positions in $B$.
\end{itemize}
\end{definition}
As will become clear later, we define
 game instances using subsets $B\subseteq {\cal B}$ and $\Sigma\subseteq {\cal S}$ in order to formalize the notion of games' {\em
 descriptive size} for our algorithmic and complexity-theoretical studies.
As mathematical objects, we can simply say $Z :=
{\cal R}[b_0]$.
In the mathematical formulation of game instances, we assume the following property of {\em complete information}.

\begin{property}[Complete-Information]
\label{prop:CompleteInformation}
Each description $Z=(B,\Sigma,\rho_{L}, \rho_{R},b_0)$ of  the game instance
${\cal R}[b_0]$ satisfies the property that $B$ is closed under move functions $\{\rho_L,\rho_R\}$, and
$\mbox{\rm Reachable}(b_0) \subseteq B.$
\end{property}

\subsection{Examples}
We highlight the notation with three example games.

\begin{itemize}
\item \ruleset{Nim} is a classic game that is the origin of Combinatorial Game Theory \cite{Bouton:1901,ONAG:2001,WinningWays:2001,Sprague:1936,Grundy:1939}.
In this mathematical game of numbers, there are a set of piles that each have a certain value.
Players alternate turns reducing the value of any pile to any smaller natural number.
A player loses when there are no piles with value greater than 0 remaining.
Suppose the game has $m$ piles of values $n_1,...,n_m$, respectively.
Then, by labeling the piles by integers $1$ to $m$, we can represent this game position by a vector as $[n_1,...,n_m]$.
So, as an abstract CGT ruleset,
the domain of game positions, ${\cal B}_{\ruleset{Nim}}$ can be expressed by the set of all non-negative vectors.
The set of moves ${\cal S}_{\ruleset{Nim}}$
can be expressed by the set of all vectors in which all but one entry has value zero and the non-zero entry has a negative integer value,
specifying the amount of intended reduction to the corresponding pile.
For $b\in {\cal B}_{\ruleset{Nim}}$ and $\sigma\in {\cal S}_{\ruleset{Nim}}$, the move function is essentially vector addition:
\begin{itemize}
\item If $b$ and $\sigma$ has different dimensions, then return NULL;
\item otherwise return $b+\sigma$ if it is non-negative, and NULL if $b+\sigma$ has a negative entry.
\end{itemize}

To illustrate our notation of concrete game instance, we will look at the simple position of when there are 2 piles of 2.
In this game, the reachable set has nine possible positions: 
$$[2,2],[2,1][1,2],[2,0][0,2],[1,1],[1,0],[0,1],[0,0].$$

It may seem like there should be less, since, for example, $[1, 2]$ is \textit{symmetric} to $[2, 1]$, and $[2, 0]$ is symmetric to
$[0, 2]$, and so on.
Symmetric in this context means that the game trees (see below) are exactly the same, other than using a different label in the $\rho$ function.
This is perfectly acceptable (and used by researchers in CGT) for analysis of the classical game.
As we shall see later, the quantum setting, however, is sensitive to this labeling of piles, so it is critical that we don't apply any "obvious" pruning.
We will show this when discussing quantum extension to \ruleset{Nim}.
\end{itemize}

\begin{itemize}
\item An  \ruleset{Geography} instance is typically defined by a directed graph $G = (V, E)$ and a starting vertex $v \in V$ (on which a toke is placed).
Players alternate moving the token to a vertex reachable through an out-degree edge.
Players are unable to enter any vertex that the token previously traversed.
So mathematically, in a \ruleset{Geography} position, vertices in the graph also have a Boolean state, indicating whether the vertex is available to the token.
When a player is unable to move, they lose.
{
}

\end{itemize}

\begin{itemize}
\item In \ruleset{Hex}, we have a source node $s$ and a sink node $t$.
Players alternate turns labeling vertices in the graph $L$ and $R$, respectively, by the two players.
Once a vertex is labeled, it can't be relabeled.
A player wins when there is a path from $s$ to $t$ using only vertices labeled with their respective label.
In \cref{fig:hex}, we have a sample board:

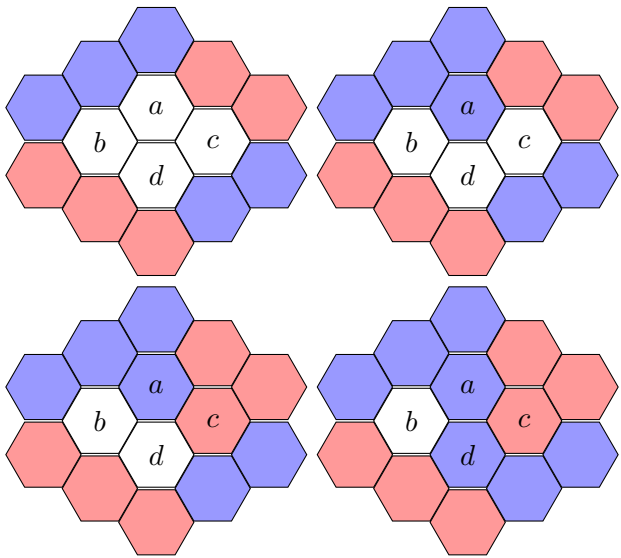
\begin{figure}[h]
  \begin{center}
    \begin{tikzpicture} [node distance = .9cm]
        \tikzstyle{hex} = [shape=regular polygon,regular polygon sides=6,minimum size=1cm, draw,inner sep=0,anchor=south]
        \tikzstyle{blue} = [fill=white!60!blue]
        \tikzstyle{red} = [fill=white!60!red]

        \node[hex, blue] (top) {};
        \node[hex] (a) [below of = top] {$a$};
        \node[hex] (d) [below of = a] {$d$};
        \node[hex, blue] (leftBlue) at (-.75, -.45) {};
        \node[hex] (b) [below of = leftBlue] {$b$};
        \node[hex, blue] (leftBlue2) at (-1.5, -.9) {};
        \node[hex, red] (leftRed1) [below of = leftBlue2] {};
        \node[hex, red] (leftRed2) [below of = b] {};
        \node[hex, red] (leftRed3) [below of = d] {};
        \node[hex, red] (rightRed1) at (.75, -.45) {};
        \node[hex] (c) [below of = rightRed1] {$c$};
        \node[hex, blue] (rightBlue2) [below of = c] {};
        \node[hex, red] (rightRed2) at (1.5, -.9) {};
        \node[hex, blue] (rightBlue2) [below of = rightRed2] {};

    \end{tikzpicture}
    \begin{tikzpicture} [node distance = .9cm]
        \tikzstyle{hex} = [shape=regular polygon,regular polygon sides=6,minimum size=1cm, draw,inner sep=0,anchor=south]
        \tikzstyle{blue} = [fill=white!60!blue]
        \tikzstyle{red} = [fill=white!60!red]

        \node[hex, blue] (top) {};
        \node[hex, blue] (a) [below of = top] {$a$};
        \node[hex] (d) [below of = a] {$d$};
        \node[hex, blue] (leftBlue) at (-.75, -.45) {};
        \node[hex] (b) [below of = leftBlue] {$b$};
        \node[hex, blue] (leftBlue2) at (-1.5, -.9) {};
        \node[hex, red] (leftRed1) [below of = leftBlue2] {};
        \node[hex, red] (leftRed2) [below of = b] {};
        \node[hex, red] (leftRed3) [below of = d] {};
        \node[hex, red] (rightRed1) at (.75, -.45) {};
        \node[hex] (c) [below of = rightRed1] {$c$};
        \node[hex, blue] (rightBlue2) [below of = c] {};
        \node[hex, red] (rightRed2) at (1.5, -.9) {};
        \node[hex, blue] (rightBlue2) [below of = rightRed2] {};

    \end{tikzpicture}

    \vspace{.1cm}

    \begin{tikzpicture} [node distance = .9cm]
        \tikzstyle{hex} = [shape=regular polygon,regular polygon sides=6,minimum size=1cm, draw,inner sep=0,anchor=south]
        \tikzstyle{blue} = [fill=white!60!blue]
        \tikzstyle{red} = [fill=white!60!red]

        \node[hex, blue] (top) {};
        \node[hex, blue] (a) [below of = top] {$a$};
        \node[hex] (d) [below of = a] {$d$};
        \node[hex, blue] (leftBlue) at (-.75, -.45) {};
        \node[hex] (b) [below of = leftBlue] {$b$};
        \node[hex, blue] (leftBlue2) at (-1.5, -.9) {};
        \node[hex, red] (leftRed1) [below of = leftBlue2] {};
        \node[hex, red] (leftRed2) [below of = b] {};
        \node[hex, red] (leftRed3) [below of = d] {};
        \node[hex, red] (rightRed1) at (.75, -.45) {};
        \node[hex, red] (c) [below of = rightRed1] {$c$};
        \node[hex, blue] (rightBlue2) [below of = c] {};
        \node[hex, red] (rightRed2) at (1.5, -.9) {};
        \node[hex, blue] (rightBlue2) [below of = rightRed2] {};

    \end{tikzpicture}
    \begin{tikzpicture} [node distance = .9cm]
        \tikzstyle{hex} = [shape=regular polygon,regular polygon sides=6,minimum size=1cm, draw,inner sep=0,anchor=south]
        \tikzstyle{blue} = [fill=white!60!blue]
        \tikzstyle{red} = [fill=white!60!red]

        \node[hex, blue] (top) {};
        \node[hex, blue] (a) [below of = top] {$a$};
        \node[hex, blue] (d) [below of = a] {$d$};
        \node[hex, blue] (leftBlue) at (-.75, -.45) {};
        \node[hex] (b) [below of = leftBlue] {$b$};
        \node[hex, blue] (leftBlue2) at (-1.5, -.9) {};
        \node[hex, red] (leftRed1) [below of = leftBlue2] {};
        \node[hex, red] (leftRed2) [below of = b] {};
        \node[hex, red] (leftRed3) [below of = d] {};
        \node[hex, red] (rightRed1) at (.75, -.45) {};
        \node[hex, red] (c) [below of = rightRed1] {$c$};
        \node[hex, blue] (rightBlue2) [below of = c] {};
        \node[hex, red] (rightRed2) at (1.5, -.9) {};
        \node[hex, blue] (rightBlue2) [below of = rightRed2] {};

    \end{tikzpicture}
  \end{center}
\caption{An example small \ruleset{Hex} game, played from the 2 $\times$ 2 starting position to a win by Blue.}
\label{fig:hex}
\end{figure}

In this game, $B$ is the set of possible positions the game can end up in. $\lvert B \rvert = 3^5$, as each of the 5 active game nodes can be either unlabeled or be labeled $L$ or $R$. Every one of these positions is reachable. 
If the current board is $b_0$, than there are 5 values of $\alpha$ that result in a non-null value for $\rho_L(\alpha, b_0)$ and 5 values of $\beta$ which result in a non-null value for $\rho_R(\beta, b_0)$. These moves will in total map to 10 distinct boards. Any board where $s$ and $t$ are linked through a path of vertices with the same labeling, it will be a terminal board, and thus the opponent will be unable to play.
\end{itemize}

\subsection{Outcome Classes of Positions}
In CGT, game positions fall into one of four \emph{outcome classes}, based on which player has a winning strategy given who will go first:

\begin{itemize}
    \item \outcomeClass{N} ("Fuzzy"): In this class, the \emph{next}  (or first) player always has a winning strategy.  (It doesn't matter whether that player is Left or Right.)

    \item \outcomeClass{P} ("Zero"): For positions in this class, the \emph{previous} (second) player always has a winning strategy.

    \item \outcomeClass{L} ("Positive"): In this class, the \emph{Left} player has a winning strategy, independent of who goes first.

    \item \outcomeClass{R} ("Negative"): In this class, the \emph{Right} player has a winning strategy.
\end{itemize}

These outcome classes form a partition on all game positions.

For any game position, $b$,  $o(b)$ is the the outcome decision function  that returns the outcome class of $b$.  In other words, the codomain of $o$ is $\{\outcomeClass{N}, \outcomeClass{P}, \outcomeClass{L}, \outcomeClass{R}\}$ and $b \in o(b)$.

In an impartial ruleset,  $b$ must be in $\outcomeClass{P} \cup \outcomeClass{N}$,
for any position, $b$.
For impartial rulesets, this means that $b$ is in $\outcomeClass{P}$ if \emph{all} options of $b$ are in $\outcomeClass{N}$.
Alternatively, $b$ is in $\outcomeClass{N}$ if \emph{at least one} option is in $\outcomeClass{P}$.  Thus, we can prove that an (impartial) position is in $\outcomeClass{N}$ by showing one option in $\outcomeClass{P}$ instead of finding the outcome class of all of them.

Thus, the outcome class, in the case of impartial games, of a game position can be logically viewed as the outcome in an evaluation process of an AND-OR tree when we equal $\outcomeClass{N}$ as  ``\texttt{true} - the current player for having a winning move''
and $\outcomeClass{P}$ as  ``\texttt{false} - the current player has no winning move.''

\subsection{Game Trees}

A game instance $Z=(B,\Sigma,\rho_{L}, \rho_{R},b_0)$ defines a natural game tree, in
  which each node is defined by a sequence of ``feasible moves'' and has an associated game position from $B$.
For illustration, we will index these tree nodes by the sequence of moves leading up to them.
Thus, we denote the root of the game tree by $v_{\emptyset}$, and its
  associated position by $p_{\emptyset}$.
Clearly, $p_{\emptyset} = b_0$, the starting position of the game instance.
We can grow the game tree by a natural iterative procedure starting from $(v_{\emptyset},p_{\emptyset})$:
In the general iterative steps, suppose $v_{s}$ is a node in the game tree defined by a sequence $s$ of moves, with associated position $p_{s}$.
Suppose there are $T(s)$ feasible moves $\sigma_1,...,\sigma_{T(s)}$ for position
$p_{s}$.
Then, in this game tree, $v_s$ has $T(s)$
  {\em children},
respectively, $v_{\sigma_1\circ s},...,v_{\sigma_{T(s)}\circ s}$,
one for each feasible moves.
For $t\in [T(s)]$, the position $p_{\sigma_t\circ s}$ associated with node $v_{\sigma_t\circ s}$ is then $p_{\sigma_t\circ s} = \rho(p_{s},\sigma_t)$.
If $v_s$ is terminal, then it is a leaf of the game tree.

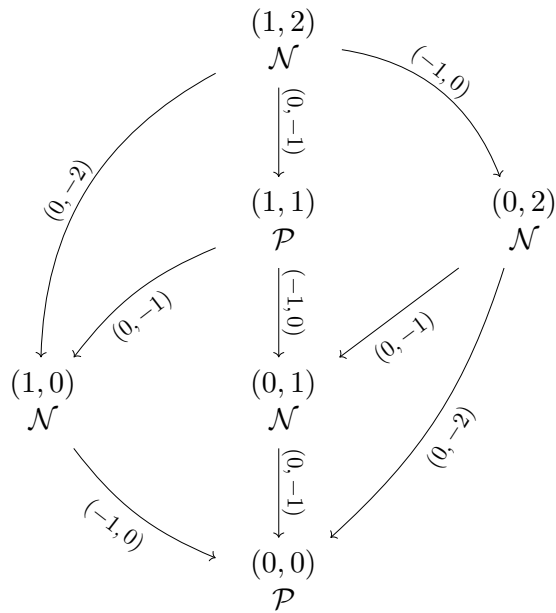
\begin{figure}[h]
  \begin{center}
  \scalebox{.8}{
  \begin{tikzpicture} [node distance = 1cm]
    \node (start) at (0, 6)  {\scalebox{\scaleAmt}{
        \begin{tabular}{c}
            $(1, 2)$ \\
            $\outcomeClass{N}$
        \end{tabular}}};

    \node (firstB) at (0, 3)  {\scalebox{\scaleAmt}{
        \begin{tabular}{c}
            $(1, 1)$ \\
            $\outcomeClass{P}$
        \end{tabular}}};

    \node (firstC) at (4, 3)  {\scalebox{\scaleAmt}{
        \begin{tabular}{c}
            $(0, 2)$ \\
            $\outcomeClass{N}$
        \end{tabular}}};

    \node (secondA) at (-4, 0) {\scalebox{\scaleAmt}{
        \begin{tabular}{c}
            $(1, 0)$ \\
            $\outcomeClass{N}$
        \end{tabular}}};

    \node (secondB) at (0, 0) {\scalebox{\scaleAmt}{
        \begin{tabular}{c}
            $(0, 1)$ \\
            $\outcomeClass{N}$
        \end{tabular}}};

    \node (end) at (0, -3) {\scalebox{\scaleAmt}{
        \begin{tabular}{c}
            $(0, 0)$ \\
            $\outcomeClass{P}$
        \end{tabular}}};

    \path[->]
        (start) edge [bend right] node [sloped, text width = 1.5cm, yshift = 7] {$(0, -2)$} (secondA)

        (start) edge [] node [sloped, text width = 1.5cm, yshift = 7] {$(0, -1)$} (firstB)

        (start) edge [bend left] node [sloped, text width = 1.5cm, yshift = 7] {$(-1, 0)$} (firstC)

        (firstB) edge [bend right=15] node [sloped, text width = 1.5cm, yshift = -11] {$(0, -1)$} (secondA)

        (firstB) edge [] node [sloped, text width = 1.5cm, yshift = 7] {$(-1, 0)$} (secondB)

        (firstC) edge [] node [sloped, text width = 1.5cm, yshift = -11] {$(0, -1)$} (secondB)

        (firstC) edge [bend left = 15] node [sloped, text width = 1.5cm, yshift = -11] {$(0, -2)$} (end)

        (secondA) edge [bend right = 15] node [sloped, text width = 1.5cm, yshift = -11] {$(-1, 0)$} (end)

        (secondB) edge [] node [sloped, text width = 1.5cm, yshift = 7] {$(0, -1)$} (end)
        ;

  \end{tikzpicture}}
  \end{center}
  \caption{Game tree for \ruleset{Nim} $(1,2)$, drawn as a DAG (Directed Acyclic Graph) to save space.  The outcome classes are drawn underneath each position.}
  \label{fig:nim12}
\end{figure}

See \cref{fig:nim12} for an example of a game tree showing that \ruleset{Nim} position $(1,2)$ is in $\outcomeClass{N}$.
Note that as it is defined above, a game tree  could have two different nodes with the same associated position, i.e., two difference sequences of moves lead to the same position. When illustrating game trees---as in Figure \ref{fig:nim12}---we often use a more compressed DAG representation for the game trees.

The following proposition --- which can be established by induction --- summarizes the above indexing convention for game trees.

\begin{proposition}
With the indexing convention above, if $v_s$ denotes a node in the game tree with associated game position $p_s\in B$, then,
$$p_s = \rho_\cp(b_0,s)$$
\end{proposition}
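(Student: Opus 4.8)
The plan is to prove the identity $p_s = \rho_\cp(b_0, s)$ by induction on the length $T = |s|$ of the move sequence $s$, mirroring exactly the two defining recurrences at play: the iterative tree-growth rule (which sets $p_{\sigma\circ s} = \rho(p_s,\sigma)$ for each feasible move $\sigma$) and the inductive definition of the composed move function, $\rho_\cp(b,\sigma_T\circ\cdots\circ\sigma_1) := \rho_\cp(\rho_{\bar{\cp}}(b,\sigma_{T-1}\circ\cdots\circ\sigma_1),\sigma_T)$. Both recurrences peel off the most recently played move $\sigma_T$ and flip the acting player, so the induction should align them step for step.

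For the base case $T = 0$, the empty sequence $s = \emptyset$ indexes the root $v_\emptyset$, whose associated position is $p_\emptyset = b_0$ by construction; on the other side, the default convention gives $\rho_\cp(b_0,\emptyset) = b_0$, so the two sides agree for either choice of player $\cp \in \{L,R\}$. For the inductive step, I would write $s = \sigma_T\circ s'$ with $s' = \sigma_{T-1}\circ\cdots\circ\sigma_1$ of length $T-1$, and take as the induction hypothesis that $p_{s'} = \rho_{\bar{\cp}}(b_0,s')$, where $\bar{\cp}$ is the player to move at $p_{s'}$, one turn before $\cp$. By the tree-growth rule, $v_s$ is the child of $v_{s'}$ obtained via the feasible move $\sigma_T$, so $p_s = \rho_\cp(p_{s'},\sigma_T)$; substituting the hypothesis gives $p_s = \rho_\cp(\rho_{\bar{\cp}}(b_0,s'),\sigma_T)$, which is precisely the right-hand side of the defining recurrence for $\rho_\cp(b_0,s)$. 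This closes the induction.

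The one place demanding care---and the main obstacle in an otherwise routine argument---is keeping the player subscripts consistent across the three ingredients: the tree construction, the recurrence for $\rho_\cp$, and the induction hypothesis. I would state explicitly that $\cp$ denotes the player who plays the final move $\sigma_T$ and that turns alternate strictly, so the player acting at $p_{s'}$ is $\bar{\cp}$; this is exactly the alternation baked into the recurrence, where $\rho_{\bar{\cp}}$ is applied to the length-$(T-1)$ prefix. For impartial rulesets, where $\rho_L = \rho_R = \rho$, this bookkeeping is vacuous and the unsubscripted $\rho$ in the growth rule is literally correct. For partizan rulesets I would emphasize that the unsubscripted $\rho$ appearing in the growth rule must be read as the move function of the current player at $p_{s'}$, namely $\rho_\cp$, and verify that this reading matches the alternation in the recurrence so that the substitution step is legitimate. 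Once this alignment is pinned down, the substitution is immediate and the proposition follows.
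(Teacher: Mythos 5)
Your proof is correct and is exactly the induction the paper has in mind: the paper states only that the proposition ``can be established by induction'' and gives no further details, and your argument supplies precisely that induction, aligning the tree-growth rule $p_{\sigma_T\circ s'}=\rho_\cp(p_{s'},\sigma_T)$ with the defining recurrence for $\rho_\cp(b_0,s)$ and handling the base case via $\rho_\cp(b_0,\emptyset)=b_0$. Your attention to the player-alternation bookkeeping is a welcome addition rather than a deviation.
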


Notice that the {\em height} of a game tree represents the longest possible sequence of plays with the game.
The total number of nodes in the game tree is usually exponential in the height of the game tree.

A game instance $Z=(B,\Sigma,\rho_{L}, \rho_{R},b_0)$ in fact defines a family of game instances: For any $b\in B$, let
$Z[b] :=(B,\Sigma,\rho_{L}, \rho_{R},b)$
denote the game instance starting at position $b$ with the same ruleset.
The game tree for $Z$ can be viewed as a structure connecting all game trees for game instances in $\{Z[b]\ | \ b\in B \}$:
If $b$ is reachable from $b_0$ according to the ruleset, then the game tree for $Z[b]$ is a subtree in the game tree for $Z$.
Thus, in principle, to made intelligent decisions on $Z$, the players need to study not
just the initial game position $b_0$, but also at least some game positions reachable from $b_0$.
For this reason, we refer to $Z[b]$ (for $b$ reachable from $b_0$ according to the ruleset) a {\em partially-played} game instance of $Z$.


\subsection{Decision and Search Problems for Playing Games}

Making strategically optimal plays in combinatorial games requires one to solve some basic decision and search problems.
They include:

\begin{itemize}
\item {\sc Decision of Outcome}: Given a game position $b\in B$, return the outcome class of the game position. Particularly,
{\sc Decision of Outcome} addresses whether the game position is winnable by the current player: \texttt{true} if the current player has a winning move; otherwise  \texttt{false}.

\item {\sc Search of Winning moves}: Given a game board $b\in B$, return a feasible move $\sigma$ such that position $\cp$ $\rho_\cp(b,\sigma)$ has no winning move for its current player; if such a move does not exist, then return "not winnable."
\end{itemize}

Clearly, solving the second problem also solves the first, since the function should return false if and only if there doesn't exist a winning move. The following useful proposition is straightforward.

\begin{proposition}
If at a game position $b$ there are $n_{b}$ possible moves, $\sigma_1,...,\sigma_{n_b}$ then solving the {\sc Decision of Outcome} on
$n_{b}$ positions, $\rho_\cp(b,\sigma_1)$,..., $\rho_\cp(b,\sigma_{n_{b}})$ (some of which could be NULL), is sufficient to solve {\sc Search for Winning Moves} at position $b$.
\end{proposition}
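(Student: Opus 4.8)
The plan is to exhibit a direct reduction: make one {\sc Decision of Outcome} query per feasible move and read off a winning move (or the verdict ``not winnable'') from the answers, then argue correctness from the normal-play definition of outcome classes. First I would discard the infeasible options: any $\sigma_i$ with $\rho_\cp(b,\sigma_i)=NULL$ cannot be played and is simply ignored, leaving the feasible options $\rho_\cp(b,\sigma_i)\in B$. For each such option I would invoke {\sc Decision of Outcome}, which by definition reports whether the player to move at that position has a winning move. The key observation is that after $\cp$ plays $\sigma_i$ the player to move becomes $\bar{\cp}$, so the query at $\rho_\cp(b,\sigma_i)$ answers precisely ``does $\bar{\cp}$ have a winning continuation from $\rho_\cp(b,\sigma_i)$?''

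Next I would translate the search requirement into these answers. By the definition of {\sc Search of Winning Moves}, a move $\sigma_i$ is winning for $\cp$ exactly when the resulting position $\rho_\cp(b,\sigma_i)$ has no winning move for its current player $\bar{\cp}$, i.e.\ exactly when that query returns \texttt{false}. Hence the algorithm scans the at most $n_b$ answers, returns any $\sigma_i$ whose query returned \texttt{false} as a winning move, and returns ``not winnable'' if every feasible option returned \texttt{true}. Correctness of the ``not winnable'' case follows because if all feasible options leave $\bar{\cp}$ with a winning move, then $\cp$ has no move into a position that is losing for the opponent, which is exactly the normal-play characterization of $b$ being non-winnable for $\cp$ (the \outcomeClass{P}-reading in the impartial case).

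There is essentially no hard step here: the proposition is a restatement of the recursive AND-OR definition of winnability already spelled out in the outcome-class discussion, specialized to depth one. The only point that demands care---rather than a genuine obstacle---is the turn-switching bookkeeping: one must consistently interpret each {\sc Decision of Outcome} query from the perspective of $\bar{\cp}$, the player to move after $\cp$'s move, so that ``opponent has no winning move'' (\texttt{false}) is correctly matched with ``$\cp$'s move is winning.'' Once this interpretation is fixed, the reduction uses exactly $n_b$ outcome queries followed by a single linear scan, which establishes the claimed sufficiency and in particular shows that {\sc Search for Winning Moves} is no harder than $n_b$ instances of {\sc Decision of Outcome}.
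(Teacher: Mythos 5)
Your proposal is correct and is exactly the one-step unfolding of the definitions that the paper has in mind; the paper states this proposition without proof, calling it straightforward, and your argument (discard the $NULL$ options, query {\sc Decision of Outcome} on each feasible $\rho_\cp(b,\sigma_i)$, return any $\sigma_i$ whose query answers \texttt{false} for the player $\bar{\cp}$ to move there, and report ``not winnable'' otherwise) is precisely the intended justification. Nothing is missing.
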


Other interesting computational problems regarding combinatorial games have been considered.
Here, we just give one simple example:

\begin{itemize}
\item {\sc Decision of Realizability}:
Given a game board $b\in B$ in a game $Z=(B,\Sigma,\rho_L, \rho_R, b_0)$, and $\cp \in \{L, R\}$
deciding whether or not there exists a sequence of moves
$s \in \Sigma^*$ such that
$$ b= \rho_\cp(b_0,s)$$
\end{itemize}

{\sc Decision of Outcome} and {\sc Search for Winning Moves} are fundamental computational problems associated with combinatorial games.
Understanding their complexity in the quantum setting will be the main focus of our paper.


\subsection{Input Size -- Games' Descriptive Complexity }

To characterize the computational complexity of combinatorial games---in both classical and quantum settings---we need to first characterize their input size.
Although some games are {\em explicitly} represented\footnote{that is, all game positions and rules of interactions between moves and  positions are fully enumerated.} (e.g., \ruleset{Tic-Tac-Toe}),
many common combinatorial games (e.g., \ruleset{Nim}, \ruleset{Hex}, \ruleset{Go}) are {\em succinctly represented}.
A major reason for succinct representation is that the number of possible game positions are exponential in what
is usually considered as the natural size  (e.g., the succinct representation of a normal size \ruleset{Go} board is size $19\times 19$, while the entire game tree is doubly exponential to that) of the games.
In such cases, all three components, particularly, game positions and rules, can be "functionally"
represented in more succinct manners.
Another valid reason for succinct representation of combinatorial games comes from the fact that human are drawn to "complex games with elegant game boards and simple rules."

In complexity-theoretical studies of combinatorial games, the size of a game representation
---rather than explicit enumerations of game positions, moves, and transitional rules---is usually considered as the {\em input size} of a given game.
For example, while in many classical games, such as \ruleset{Node-Kalyes} and \ruleset{Geography},
the size of alphabet of possible moves is polynomial in the game's descriptive complexity, in others, such as \ruleset{Nim}, the number of moves could be exponential in the game's descriptive complexity.
In the latter cases, we say the alphabet of moves is {\em succinctly defined}.
For many succinctly-defined games, their {\em input size} is measured essentially by their {\em descriptive complexity}---usually in bits---of the start position.
For example, the input size of a \ruleset{Nim} position is the total number bits representing the starting piles.
Note that in \ruleset{Nim}, both the number of possible positions and the alphabet of possible moves for the position could be exponential in this descriptive complexity. Yet, remarkably, {\sc Decision for Outcome}
and {\sc Search for Winning Moves} can be solved in time polynomial in this succinctly-descriptive complexity.

Mathematically, we will, in this paper, use the following basic property of the {\em descriptive complexity} for succinctly-formulated combinatorial games:

\begin{property}[Descriptive Complexity]
\label{prop:DescriptiveComplexity}
Suppose $n_Z$ measures the {\em descriptive complexity of} a game instance $Z= (B,\Sigma,\rho_{L},\rho_{R},b_0)$.
Then:
\begin{enumerate}
\item Each position $b\in B$ and each move $\sigma \in \Sigma$ can be defined by a binary strings with
length polynomial in $n_Z$.
\item The move functions $\rho_L$ and $\rho_R$ are polynomial-time computable (in terms of $n_Z$).
\item Games can be further classified based on whether $|Z|$ is polynomial or super-polynomial (e.g. exponential) in $n_Z$:
\begin{itemize}
\item [(3.P)] In the case when $|Z|$ is polynomial in $n_Z$,
we say that players {\em explicitly} know their moves, because $Z$ can be explicitly enumerated (in polynomial time).
We will refer to this type of games, for example, \ruleset{Geography} and \ruleset{Node-Kayles}, as {\em polynomially-dense} games.
\item [(3.SP)] In the case when $|Z|$ is super-polynomial (e.g. exponential) in $n_Z$,  we say that players succinctly know their moves (think of \ruleset{Nim}), because $Z$ can not be explicitly enumerated with polynomial complexity.
We will refer to this type of games as {\em super-polynomailly-dense} games.

In this case, we take the following {\em generative framework} to express this succinct knowledge with two additional polynomial-time computable generative functions: $\mu_L: B\times |\Sigma|\rightarrow \Sigma$ and $\mu_R: B\times |\Sigma| \rightarrow \Sigma$, one for each players.
In essence, for any position $b\in B$ and a positive integer $t\in |\Sigma|$,
$\mu_L(b,t)$ ($\mu_R(b,t)$) denotes the $t^{th}$ potential move in the mind of the player Left (Right) at position  $b$.
\end{itemize}
\end{enumerate}
\end{property}

Both in theory and in practice,
  some natural parameters\footnote{such as, the side length of the \ruleset{Hex} board, the side length of the Sperner's triangle in \ruleset{Atropos}, $19 \times 19$ in classical \ruleset{Go}, the number of nodes in the graphs for \ruleset{Geography} or \ruleset{Node-Kayles}.}---usually polynomially related with the descriptive complexity of the game representations---are used to denote the games' input sizes.
Notice that in the traditional polynomial-time characterization of efficient computing, any polynomially-related input-size measure leads to equivalent characterization.

\subsection{Quantum Lift of Combinatorial Games}
\label{Sec:QuantumLift}
We now return the main subject of this paper, namely, quantum extensions of combinatorial games.
Throughout the paper, we will use the following notations.
In this definition, we fix a combinatorial game
$Z=(B,\Sigma,\rho_L, \rho_R,b_0)$.
In the discussions below, we also fix $h\in \{L,R\}$.

\begin{definition}[Quantum Moves and Superposition Width]
Let $w$ be a positive integer $w > 1$,
and $\sigma_1,...,\sigma_w\in \Sigma$ be $w$
distinct moves for game $Z$.
We will use $\langle \sigma_1\ |\ \sigma_2\ |\ ... \ |\ \sigma_w \rangle$
to denote the quantum move defined by
  the superposition of these classical moves.
We will refer $w$ as the {\em superposition width} of this quantum move.
We will use $\Sigma^{\QuantumLift(w)}$ to denote the set of all quantum moves with superposition width
at least $2$ and {\em at most} $w$.
\end{definition}

\begin{definition}[Quantum Positions (Quantum Game Boards)]
Let $s$ be a positive integer $s>1$,
and $b_1,...,b_s\in B$ be $s$
{\em distinct} game positions  in $Z$.
We will use $\langle b_1\ |\ b_2\ |\ ...\  | \ b_s\rangle$
to denote the quantum position (quantum game board) defined by the superposition of these classical game positions (boards).
In this case, we say $\langle b_1\ |\ b_2\ |\ ...\  | \ b_s\rangle$ is a $s$-wide quantum position, or $s$-wide superposition.
We will use $B^{\QuantumLift}$ to denote the set of all possible quantum positions.
\end{definition}

Below we will use the following operator to remove NULL and duplicates from a superposition.
\begin{definition}[Purifying]
Consider $b_1,...,b_s\in B \cup \{NULL\}$.
If all $b_i = NULL, \forall i\in [s]$,
 then $\filter(\langle b_1\ |\ ...\ | \ b_s\rangle ):= NULL$, otherwise,  $\filter(\langle b_1\ |\ ...\ | \ b_s\rangle)$ denotes the quantum position in $B^{\QuantumLift}$ corresponding to the superposition of the distinct non-NULL
game positions in $\{b_1,...,b_s\}$.
\end{definition}

We now introduce the notations for
  formalizing the transitional rules in quantum games,
  specifying the interactions between classical/quantum moves with
  classical/quantum positions.
We first introduce two operators:
\begin{definition}[Elementry Operators for Quantum Positions and Moves]
The first operator concerns the flavor-independent interaction between classical moves and quantum game boards.
\begin{itemize}
\item [$\otimes$]: For $\sigma \in \Sigma$ and
$\mathbb{B} = \left\langle b_1\ |\  b_2\  |\  ... \ |\ b_s\right\rangle \in B^{\QuantumLift}$,
then
\begin{eqnarray}
\sigma \otimes \mathbb{B} = \filter(\left\langle \rho_h(b_1,\sigma)\ |\ ...\ |\ \rho_h(b_s,\sigma)\right\rangle)
\end{eqnarray}
For $\mathbb{M}=\left\langle \sigma_1\ |\ ...\ |\  \sigma_{w'}\right\rangle \in \Sigma^{\QuantumLift(w)}$ ($2\leq w'\leq w$)
  and  $b \in B$, then
\begin{eqnarray}
\mathbb{M}\otimes b = \filter(\left\langle\rho_h(b,\sigma_1)\ |\ ...\ |\ \rho_h(b,\sigma_{w'})\right\rangle)
\end{eqnarray}
\end{itemize}

The second operator takes the basic superposition of two quantum boards.
\begin{itemize}
\item [$\oplus$]: Consider two quantum game boards
$\mathbb{B} = \left\langle b_1\ |\ b_2\ |\ ... \ |\ b_s\right\rangle \in B^{\QuantumLift}$ and
$\mathbb{B'} = \left\langle b'_1 \ |\  b'_2\  |\  ... \ |\ b'_{s'}\right\rangle \in B^{\QuantumLift}$,
\begin{eqnarray}
\mathbb{B}\oplus \mathbb{B'} = \filter(\left\langle b_1\ |\ b_2 \ |\  ... \ |\ b_s\ | \ b'_1\ |\ b'_2\ | \ ...\  |\ b'_{s'}\right\rangle).
\end{eqnarray}
\end{itemize}
\end{definition}

The following classifications will further help us to explain the nuance among differ quantum flavors.

\begin{definition}[Eligible Superpositions for Quantum Moves]
We say $\mathbb{M} = \left\langle \sigma_1\ |\  ...\ |\ \sigma_{w'} \right\rangle \in
\Sigma^{\QuantumLift(D,w)}$ ($2\leq w'\leq w$) is an
{\em eligible superposition} as a quantum move for a classical position $b\in B$ if for all $i\in [w']$,
$\rho_h(\sigma_i,b)  \neq NULL$,
i.e., meaning the superposition does not use infeasible moves, and hence the number of
realizations in $\mathbb{M}\otimes b$ is the same the width of $\mathbb{M}$.

We say
 $\mathbb{M} =
 \left\langle\sigma_1\ |\ ...\ |\ \sigma_{w'} \right\rangle \in
\Sigma^{\QuantumLift(D,w)}$ ($2\leq w'\leq w$) is an
{\em eligible superposition} as a quantum move for a quantum position
$\mathbb{B} = \left\langle b_1\ |\ ...\ |\ b_s\right\rangle \in B^{\QuantumLift(D,w)}$ ($s>1$) if for all $i\in [w']$,
$\sigma_i \otimes \mathbb{B} \neq NULL$.
In other words, each move $\sigma_i$ is feasible for some realization in $\mathbb{B}$.
\end{definition}

\begin{definition}[Safe and Respectful Classic Moves]
\label{Defi:SafeRespectful}
Fix a quantum position $\mathbb{B} = \left\langle b_1\ |\ ...\ |\ b_s\right\rangle \in B^{\QuantumLift(D,w)}$ ($s>1$).
We say a classical move $\sigma\in \Sigma$ is {\em safe}
for $\mathbb{B}$ if  $\forall j\in [s]$, $\rho_h(b_j,\sigma)\neq NULL$.
We say  $\sigma\in \Sigma$ is {\em respectful} to $\mathbb{B}$
if  $\forall j\in [s]$, $\rho_h(b_j,\sigma)\neq NULL$
whenever $b_j$ is not a terminal position in $Z$.
\end{definition}

\begin{proposition}
With respect to a quantum position $\mathbb{B} = \left\langle b_1\ |\ ...\ | \ b_s\right\rangle \in B^{\QuantumLift(D,w)}$ ($s>1$),
every safe classical move is also respectful.
\end{proposition}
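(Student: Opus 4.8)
The plan is to unwind both definitions directly and show that the defining condition of \emph{safe} logically implies the defining condition of \emph{respectful}. This is a completely elementary logical implication, so the ``proof'' is really just a matter of quoting \Cref{Defi:SafeRespectful} and observing that a universally quantified statement without side conditions implies the same statement guarded by a side condition.

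Concretely, fix a quantum position $\mathbb{B} = \left\langle b_1\ |\ ...\ |\ b_s\right\rangle$ with $s>1$, and suppose $\sigma \in \Sigma$ is safe for $\mathbb{B}$. By the definition of safe, we have $\rho_h(b_j,\sigma) \neq NULL$ for \emph{every} index $j \in [s]$, with no restriction whatsoever on the positions $b_j$. To verify that $\sigma$ is respectful, I must check the (weaker) requirement: for every $j \in [s]$, if $b_j$ is not a terminal position in $Z$, then $\rho_h(b_j,\sigma) \neq NULL$. But the conclusion of this implication holds for \emph{all} $j$ by the safety hypothesis, so in particular it holds for every $j$ whose position $b_j$ happens to be non-terminal. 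Hence $\sigma$ is respectful to $\mathbb{B}$.

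The key observation driving the argument is that ``safe'' demands $\rho_h(b_j,\sigma)\neq NULL$ unconditionally across all realizations, whereas ``respectful'' only demands the same conclusion on the subset of non-terminal realizations. Since the respectful condition is obtained from the safe condition by \emph{restricting} the range of the universal quantifier (from all $j$ to only those $j$ with $b_j$ non-terminal), any move satisfying the stronger unrestricted statement automatically satisfies the restricted one. There are no edge cases to worry about: the case $s>1$ is inherited from the hypothesis, and the argument is vacuously fine even if every $b_j$ is terminal (in which case the respectful condition is trivially satisfied) or if none is.

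There is essentially no main obstacle here — the statement is a structural sanity check confirming that the two notions of \Cref{Defi:SafeRespectful} are nested as intended, with safety being the stronger guarantee. The only thing worth being careful about is making sure the quantifier scoping is read correctly, namely that both conditions quantify over the same realizations $b_1,\dots,b_s$ of the \emph{fixed} superposition $\mathbb{B}$, so that the implication compares like with like. Once that is noted, the proof is immediate.
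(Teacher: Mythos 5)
Your proof is correct and is exactly the intended argument: the paper states this proposition without proof precisely because, as you observe, the respectful condition is the safe condition with the universal quantifier restricted to non-terminal realizations, so the implication is immediate. Nothing further is needed.
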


We now introduce notations for formalizing
  extending combinatorial games with quantum moves.

\begin{definition}[Quantum Lift of Combinatorial Games]
For a quantum flavor $\phi\in \{A,B,C,C',D\}$, and
positive integer $w > 1$, we will use $$Z^{\QuantumLift(D,w)} = \{ B^{\QuantumLift(\phi,w)},\Sigma^{\QuantumLift(\phi,w)},\rho_L^{\QuantumLift(\phi,w)},
\rho_R^{\QuantumLift(\phi,w)}
b_0\}$$
to denote the game in quantum flavor $\phi$ with quantum moves of superposition width up to $w$.
In the quantum lift:
\begin{eqnarray}
B^{\QuantumLift(A,w)}  =
B^{\QuantumLift(B,w)} = B^{\QuantumLift(C,w)}=B^{\QuantumLift(C',w)}=B^{\QuantumLift(D,w)} = B \cup B^{\QuantumLift}\\
\Sigma^{\QuantumLift(A,w)} =  \Sigma^{\QuantumLift(w)}; \quad
\Sigma^{\QuantumLift(B,w)}  =  \Sigma^{\QuantumLift(C,w)}= \Sigma^{\QuantumLift(C',w)}= \Sigma^{\QuantumLift(D,w)}= \Sigma \cup \Sigma^{\QuantumLift(w)}
\end{eqnarray}
For the rule of the lifted quantum game, fix any
classical position $b\in B$, and move
$\sigma \in \Sigma$, as well as
any quantum position
$\mathbb{B} = \left\langle b_1\ |\ ...\ |\ b_s\right\rangle \in B^{\QuantumLift(D,w)}$ ($s>1$) and quantum moves $\mathbb{M} = \left\langle \sigma_1\ |\ ...\ |\ \Sigma_{w'} \right\rangle  \in
\Sigma^{\QuantumLift(D,w)}$ ($2\leq w'\leq w$).
All quantum flavors share the same interaction between quantum moves and classical/quantum positions.
For any feasible pair of position and move:
\begin{itemize}
\item  $\rho_h^{\QuantumLift(\phi,w)}(b,\mathbb{M}) = \filter( \mathbb{M}\otimes b)$.
\item $\rho_h^{\QuantumLift(\phi,w)}(\mathbb{B},\mathbb{M})
= \filter((\mathbb{B}\otimes\sigma_1)\oplus ... \oplus(\mathbb{B}\otimes\sigma_{w'})).$
\end{itemize}
Quantum flavors differ in their handling of classical moves.
The least restrictive one is quantum flavor $D$, whose game rule $\rho_h^{\QuantumLift(D,w)}$ is what most would expect:
\begin{itemize}
\item $\rho_h^{\QuantumLift(D,w)}(b,\sigma) = \rho_h(b,\sigma).$
\item $\rho_h^{\QuantumLift(D,w)}(\mathbb{B},\sigma)
= \filter(\sigma\otimes \mathbb{B})$,
\end{itemize}
The most restrictive one is flavor $A$,
   which disallows any classical move.

Other  quantum flavors, namely $B,C,C'$, capture various nuances.
With flavor $B$, classical moves can only be made if there is no eligible superposition for quantum move for the current game position.
With flavor $C$, only safe classical moves are allowed. With flavor $C'$, only respectful classical moves are allowed.

For $\phi\in \{B,C,C'\}$, whenever a classical move $\sigma\in \Sigma$ is allowed, the following game rule
applies:

\begin{itemize}
\item $\rho_h^{\QuantumLift(\phi,w)}(b,\sigma) = \rho_h(b,\sigma).$
\item $\rho_h^{\QuantumLift(\phi,w)}(\mathbb{B},\sigma)
= \filter(\sigma\otimes \mathbb{B})$,
\end{itemize}
\end{definition}

For impartial games, the following basic property of
quantum lift are useful for its outcome analysis.

\begin{proposition}[Quantum Lift of Impartial Games]\label{prop:Impartial}
For any impartial game $Z$,  quantum flavor
$\phi\in \{A,B,C,C',D\}$, and positive integer $w>1$,
the lifted quantum game $Z^{\QuantumLift(\phi,w)}$ remains a impartial game.
\end{proposition}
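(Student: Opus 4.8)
The plan is to show that the quantum lift preserves impartiality by verifying that the two players' move functions remain identical after lifting. Since $Z$ is impartial, we have $\rho_L = \rho_R$ by definition, and we may write both as the common function $\rho$. The goal is to establish that $\rho_L^{\QuantumLift(\phi,w)} = \rho_R^{\QuantumLift(\phi,w)}$ for each flavor $\phi \in \{A,B,C,C',D\}$. The key observation is that every elementary operation used to define the lifted move functions---the operators $\otimes$ and $\oplus$, the $\filter$ purification, and the notions of eligible, safe, and respectful moves---is defined purely in terms of the single player-parameter $h$ through $\rho_h$, with no other asymmetry between $L$ and $R$.

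First I would unwind the definitions flavor by flavor, but organized around the shared structure. The position and move alphabets are flavor-dependent but player-independent by construction, so $B^{\QuantumLift(\phi,w)}$ and $\Sigma^{\QuantumLift(\phi,w)}$ are the same for both players automatically. For the transitional rules, I would first handle the interaction with quantum moves $\mathbb{M}$, which is shared across all flavors: both $\rho_h^{\QuantumLift(\phi,w)}(b,\mathbb{M}) = \filter(\mathbb{M}\otimes b)$ and the quantum-position case depend on $h$ only through the operator $\otimes$, and since $\sigma \otimes \mathbb{B}$ and $\mathbb{M} \otimes b$ are defined by applying $\rho_h$ componentwise, substituting $\rho_L = \rho_R = \rho$ makes these identical for $h = L$ and $h = R$. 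Next I would treat the classical-move cases: for flavor $D$ they reduce directly to $\rho_h(b,\sigma) = \rho(b,\sigma)$ and $\filter(\sigma \otimes \mathbb{B})$, again symmetric once $\rho_L = \rho_R$.

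The slightly more delicate part is flavors $B$, $C$, and $C'$, where whether a classical move is \emph{allowed} is gated by an auxiliary condition (existence of an eligible superposition, or safety, or respectfulness). I would point out that each gating predicate is itself defined solely through $\rho_h$ and the notion of terminal position, all of which coincide for the two players when $\rho_L = \rho_R$: the set of eligible superpositions for a position, the safe moves, and the respectful moves are therefore identical for $L$ and $R$. Hence the domain on which the classical rule fires and the value it produces both agree, giving $\rho_L^{\QuantumLift(\phi,w)} = \rho_R^{\QuantumLift(\phi,w)}$ in every flavor. I expect the main (though still routine) obstacle to be the bookkeeping of confirming that \emph{every} clause in the lift definition---including the terminal/legal-in-some-realization conditions that govern when moves may be played---factors through $\rho_h$ alone, with no hidden dependence on the identity of the player; once that is checked uniformly, impartiality of the lift follows immediately from impartiality of $Z$.
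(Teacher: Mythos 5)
Your proposal is correct: since impartiality of $Z$ means $\rho_L = \rho_R$, and every clause of the lift definition (the operators, the $\filter$ purification, and the eligibility/safety/respectfulness gates for flavors $B$, $C$, $C'$) depends on the player only through $\rho_h$, the lifted move functions coincide for both players. The paper states this proposition without proof, treating it as immediate from the definitions, and your argument is exactly the routine verification it implicitly relies on.
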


For partizan games, there are some subtleties in formalizing the outcome of the quantum extension.
We will provide more detailed discussion in Section \ref{sec:Reductions}.

We note that the concept of \textit{symmetry} can also apply to the quantum versions of combinatorial games, but we must be careful to alter the definition. Classically, two positions are symmetric if the game trees are symmetric.
However, this is insufficient for quantum games, since for quantum games the labeling of the moves can effect what games collapse, and thus who wins.
As such, we define \textit{quantum symmetry} when the game tree formed by two superpositions of quantum boards are symmetric. 

\section{Exploring \mbox{\ruleset{Quantum Undirected Geography}}}
\label{Sec:UndirectedGeography}

In this section, we will
   focus on \ruleset{Quantum Undirected Geography}.
Our two contrasting and complementing     algorithmic/complexity characterizations
  reveal the changing and majestic computational landscape of \ruleset{Quantum Undirected Geography}, with \cclass{PSPACE}-hard peaks over quantum positions and
  \cclass{Polynomial}-\cclass{Time}
  terrains across classical positions.
Furthermore, some \cclass{PSPACE}-hard positions are reachable from classical positions via polynomial-length quantum paths, making them realizable in \ruleset{Quantum Undirected Geography} with classical starts.
Quantum combinatorial games are wonderful succinctly-represented mathematical structures.
With this concrete quantum game as our example,
  we also illustrate a family of refined
  complexity measures for quantum combinatorial games.

\subsection{Quantumized Complexity of Combinatorial Games}

\label{Sec:Taxonomy}

Like in the traditional analysis of classical games, the most fundamental decision problem for
the quantum combinatorial games is also the decision of the outcome classes at a given (quantum) game position:

\begin{definition}[\sc{Decision of Outcome} of Quantum Games]
Consider a combinatorial game ruleset
${\cal R} =({\cal B},{\cal S},\rho_L,\rho_R)$.
Given a game position $\mathbb{B}\in {\cal B}^{\cal Q}\cup {\cal B}$ in its quantum extension, flavor $\phi\in \{A,B,C,C',D\}$, and superposition width $w$, determine the outcome class  of position $\mathbb{B}$ in the quantum extension of ${\cal R}$ with flavor $\phi$ and superposition with $w$.
\end{definition}

However, as potentially explosive extension of classical games,
quantum games have their own characters, introducing natural complexity measures of their solvability.
In particular, the ``degree of quantumness'' in $\mathbb{B}$ offers a natural parameter for classification of quantumized complexity.
We consider the following quantumness parameterization.
In the definition below, we say $\mathbb{B}$ is
{\em $s$-wide} if $\mathbb{B}$ contains $s$ realizations.
As such, when $s = 1$, $\mathbb{B}$ is a classical position and when $s>1$, $\mathbb{B}$ is a quantum position.

\begin{definition}[Bounded Forward Transform]
For a positive integer $T$ and a (starting) position $\mathbb{B}_0\in {\cal B}^{\cal Q}\cup {\cal B}$, we say $\mathbb{B}\in {\cal B}^{\cal Q}\cup {\cal B}$ is a $T$-{\em bounded forward transform} (or $T$-BFT for short) of $\mathbb{B}_0$ if
$\mathbb{B}$ can be obtained
by applying a sequence
$(m_1\circ m_2\circ\cdots\circ m_T)$ of quantum moves to position $\mathbb{B}_0$.
We denote it by:
$$\mathbb{B} := BFT(\mathbb{B}_0,m_1\circ m_2\circ\cdots\circ m_T).$$
\end{definition}

The next definition classifies the quantumized complexity of  combinatorial game rulesets according to $s$ and $T$.
\begin{definition}[Taxonomy of Quantumized Complexity]
The first group below is parameterized based on how wide  the querying position $\mathbb{B}$ is:
For a positive integer $s$, the {\em quantumized complexity of ${\cal R}$ with $s$-wide quantum start}
measures the worst-case complexity of algorithms for {\sc Decision of Outcome} when $\mathbb{B}$ contains $s$ realizations. Possibly interesting measures in this group include {\em Quantumized Complexity} with:

\begin{itemize}
\item {\bf Classical Start}: $\mathbb{B}\in {\cal B}$.
\item {\bf Constant-Wide Quantum Start}: $s$ is bounded by a constant.
\item {\bf Poly-Wide Quantum Start}: $s$ is bounded by a polynomial in the descriptive complexity of the underlying classical game instance.
\end{itemize}
The next group extends the group above by considering the case when
$\mathbb{B}$ can be specified as:
$\mathbb{B} = BFT(\mathbb{B}_0,m_1\circ m_2\circ\cdots\circ m_T)$:
For integers $s>0$ and $T\geq 0$, the {\em quantumized complexity of ${\cal R}$ with
$T$ moves after $s$-wide start}
measures the worst-case complexity of algorithms for {\sc Decision of Outcome} for $\mathbb{B}$
when given an $s$-wide $\mathbb{B}_0$ and
a sequence of $T$ quantum moves, $m_1\circ m_2\circ\cdots\circ m_T$ that defines $\mathbb{B}$.
Possibly interesting measures in this group include {\em Quantumized Complexity} with:
\begin{itemize}
\item {\bf Constant-Moves after Classical Start}: $\mathbb{B}_0\in {\cal B}$ and $T$ is a constant.
\item {\bf Poly-Moves After Classical Start}:
$\mathbb{B}_0\in {\cal B}$ and $T$ is bounded by a polynomial in the descriptive complexity of the underlying classical game instance.
\end{itemize}
We refer to such a (quantum) position $\mathbb{B}$ as a {\em reachable} quantum position from a classical start.
\begin{itemize}
\item {\bf Poly-Moves After Poly-Wide Quantum Start}: both $s$ and $T$ is bounded by a polynomial in the descriptive complexity of the underlying classical game instance.
\end{itemize}
\end{definition}

By the {\em quantumized complexity} of ruleset ${\cal R}$, we mean its quantumized complexity with poly-moves after poly-wide quantum start.
Note that any poly-moves after poly-wide quantum start position, $\mathbb{B} = BFT(\mathbb{B}_0,m_1\circ m_2\circ\cdots\circ m_T)$,  has a poly-sized {\em succinct representation},
although $\mathbb{B}$ may have exponentially number of realizations.

\begin{proposition}[Quantum Positions with Polynomial Description Size]
Suppose $\mathbb{B} = BFT(\mathbb{B}_0,m_1\circ m_2\circ\cdots\circ m_T)$, and $\mathbb{B}_0$ is $s$-wide.
Then, $\mathbb{B}$ has polynomial description size if both $s$ and $T$ are bounded by a polynomial in the descriptive complexity of the underlying classical game instance.
\end{proposition}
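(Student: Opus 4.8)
The plan is to exhibit an explicit polynomial-size \emph{succinct} representation of $\mathbb{B}$ and bound its length, rather than attempting to store $\mathbb{B}$ as an explicit list of its realizations. This distinction is the whole point: since each quantum move of superposition width at most $w$ can (before purifying) multiply the number of realizations by a factor up to $w$, the position $\mathbb{B}$ may carry as many as $s\cdot w^{T}$ realizations, which is exponential in $T$. Hence explicit enumeration is hopeless, and I would instead take the generative description $(\mathbb{B}_0,\, m_1\circ m_2\circ\cdots\circ m_T)$ itself as the succinct representation of $\mathbb{B}$, and simply verify it is short.

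First I would bound the starting position. Since $\mathbb{B}_0$ is $s$-wide, it is a superposition of $s$ classical positions $b_1,\ldots,b_s\in B$, and by Property~\ref{prop:DescriptiveComplexity}(1) each $b_i$ is encodable by a binary string of length $\mathrm{poly}(n_Z)$; thus $\mathbb{B}_0$ is encodable in $O\!\big(s\cdot \mathrm{poly}(n_Z)\big)$ bits. Next I would bound the moves: each $m_i$ is a quantum move of superposition width at most $w$, i.e.\ a superposition $\langle \sigma_1\ |\ \cdots\ |\ \sigma_{w'}\rangle$ of at most $w$ classical moves $\sigma_j\in\Sigma$, each again encodable in $\mathrm{poly}(n_Z)$ bits by the same property. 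Hence each $m_i$ occupies $O\!\big(w\cdot \mathrm{poly}(n_Z)\big)$ bits, and the whole sequence $m_1\circ\cdots\circ m_T$ occupies $O\!\big(T\cdot w\cdot \mathrm{poly}(n_Z)\big)$ bits.

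Summing the two contributions, the representation $(\mathbb{B}_0,\, m_1\circ\cdots\circ m_T)$ has total length $O\!\big((s+Tw)\cdot \mathrm{poly}(n_Z)\big)$, which is polynomial in $n_Z$ whenever $s$, $T$, and the width bound $w$ are polynomially bounded; the hypotheses supply this for $s$ and $T$, and $w$ is the (polynomially bounded) superposition-width parameter of the quantum lift $Z^{\QuantumLift(\phi,w)}$. I would then note that this representation is \emph{faithful}: by definition $\mathbb{B}=BFT(\mathbb{B}_0,\, m_1\circ\cdots\circ m_T)$ is obtained by applying, in sequence, the deterministic and polynomial-time-computable operators $\otimes$, $\oplus$, and $\filter$ (which rest on the polynomial-time move functions $\rho_L,\rho_R$ of Property~\ref{prop:DescriptiveComplexity}(2)), so nothing beyond the tuple is needed to recover $\mathbb{B}$.

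The only real subtlety, and the step I would take most care with, is keeping the two notions of ``size'' cleanly separated: the statement is false for the explicit realization list (genuinely of size up to $s\cdot w^{T}$) and holds precisely because description size is measured by the generative encoding. A secondary point worth making explicit is that $w$ must itself be polynomially bounded for the final bound to go through; under the standard poly-wide-move convention this is immediate, so the hypotheses on $s$ and $T$ suffice. There is no deeper obstruction here, it is a counting argument; the value of the proposition is conceptual, justifying that the taxonomy's ``poly-moves after poly-wide quantum start'' positions are legitimate polynomial-size inputs despite their exponential realization count.
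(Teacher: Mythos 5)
Your proof is correct and is exactly the argument the paper intends: the paper states this proposition without proof, remarking only that such a position ``has a poly-sized succinct representation, although $\mathbb{B}$ may have exponentially [many] realizations,'' and your encoding of $\mathbb{B}$ by the tuple $(\mathbb{B}_0,\, m_1\circ\cdots\circ m_T)$ together with the per-component bounds from the Descriptive Complexity property is precisely the intended justification. Your emphasis on separating the generative description from the (possibly exponential) explicit realization list, and your note that the width bound $w$ must also be polynomial, are both accurate and consistent with the paper's conventions.
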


\subsection{Intractability of Quantum Geography Positions}

\label{Sec:PSPACECompleteQUG}

\begin{theorem}[Intractability of Poly-Wide Quantum Position]
\ruleset{Quantum Undirected Geography} is \cclass{PSPACE}-complete measured by quantumized complexity with poly-wide quantum start.
\end{theorem}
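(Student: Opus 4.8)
The plan is to prove the two inclusions separately: membership in \cclass{PSPACE} for queries at a poly-wide start, and \cclass{PSPACE}-hardness by a reduction from \ruleset{Quantum (Directed) Geography}. The key structural fact driving the upper bound is that the quantum lift of \ruleset{Undirected Geography} is \emph{polynomially short}. Every legal move, classical or quantum, advances the token by exactly one edge in each surviving realization and collapses the realizations in which it is infeasible; hence after each move all surviving realizations have had their token-path grow by one. Since each realization is a classical \ruleset{Undirected Geography} position whose token traces a simple path of length at most $|V|-1$, any play from any starting superposition $\mathbb{B}_0$ has length at most $|V|$. By \cref{prop:Impartial} the lifted game is impartial, so outcome evaluation is a depth-first pass over an \outcomeClass{N}/\outcomeClass{P} AND--OR tree of depth at most $|V|$.

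To run this pass in polynomial space I would represent each visited quantum position succinctly by the sequence of moves from $\mathbb{B}_0$ that produced it (a $T$-BFT representation of length at most $|V|$). At each node one must enumerate candidate moves and test eligibility and terminality. Classical moves number at most $|V|$, and a quantum move is a superposition of at most $w$ of them, which can be enumerated one at a time with only a poly-size counter even when their total count is superpolynomial. Eligibility and terminality reduce to deciding whether a move is feasible in some surviving realization; each surviving realization is specified by a choice of an initial realization of $\mathbb{B}_0$ together with a branch at each quantum move in the sequence, and these can be enumerated one at a time and traced through the move sequence in polynomial space, discarding any choice whose path ever becomes infeasible. Thus the evaluation uses $O(|V|)$ recursion levels, each holding poly-size state, which places the problem in \cclass{PSPACE}.

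For \cclass{PSPACE}-hardness I would reduce from \ruleset{Quantum (Directed) Geography}, whose \cclass{PSPACE}-hardness is itself inherited from classical \ruleset{Geography} via a separate, locality-preserving reduction. The only feature separating the intractable directed game from the polynomial-time undirected one is the one-way nature of its edges, so the entire content of the reduction is to simulate a \emph{directed} edge $u\to v$ inside an \emph{undirected} graph, using superposition and selective collapse to forbid the reverse traversal $v\to u$. Concretely, I would replace each directed edge by a small undirected gadget and fix a poly-wide starting superposition $\mathbb{B}_0$ whose realizations pre-record, per gadget, a ``visited'' marker that renders an attempted backward crossing infeasible in precisely the realization(s) the crossing player must keep alive---so that crossing backward collapses that player's surviving line. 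The constructed undirected graph has size polynomial in the directed instance and $\mathbb{B}_0$ has polynomially many realizations, so this is a legitimate poly-wide-quantum-start instance, and one then argues that optimal flavor-$D$ play on it must mirror play on the directed instance move-for-move, preserving the \outcomeClass{N}/\outcomeClass{P} outcome.

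I expect the main obstacle to be the gadget design together with its correctness proof. Because both players may issue arbitrary quantum moves, the gadget must be \emph{robust}: it must enforce one-way traversal not only against the intended classical crossings but against every superposition a player could route through it, and it must not let a player manufacture or destroy realizations in a way that leaks power absent from the directed game. Proving this requires maintaining an invariant that tracks, across all (possibly exponentially many) realizations simultaneously, exactly which realizations remain alive after each move, and showing that any deviation from the simulated directed play collapses the deviating player's winning realizations. Verifying that this invariant is preserved under both classical and quantum moves---and that the gadgets compose without unintended cross-interactions---is the delicate part of the argument.
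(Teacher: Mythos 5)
Your overall strategy coincides with the paper's: membership follows because the lifted game is polynomially short (the paper simply invokes \cref{Theo:Upperbound}, and your more detailed DFS-with-succinct-positions argument is exactly what that theorem's proof does), and hardness is obtained by reducing from \ruleset{Quantum Geography} on directed graphs, using a poly-wide superposition to simulate one-way edges. The upper-bound half of your proposal is complete and correct.

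The gap is in the hardness half: you never construct the gadget, and the mechanism you sketch for it is not the one that works. You propose that the starting superposition ``pre-record a visited marker that renders an attempted backward crossing \emph{infeasible} in precisely the realization(s) the crossing player must keep alive.'' In \ruleset{Undirected Geography} a move's feasibility in a realization is determined solely by whether the target vertex is unvisited and adjacent there, so you cannot make the backward traversal of an edge infeasible in some realizations without also disturbing forward traversal or the rest of the graph; moreover ``the realizations the crossing player must keep alive'' is not a well-defined set you can target at construction time. The paper's gadget instead makes backward traversal \emph{feasible but immediately losing}: each arc $(a,b)$ becomes an undirected path $a - ab_1 - ab_2 - b$ in a \emph{main-board}, and for each arc there is one additional realization (the $(a,b)$-board) identical to the main-board except that the edge $(ab_1,ab_2)$ is replaced by $(\textsc{stop}, ab_2)$ for a fresh dead-end vertex \textsc{stop}. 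Forward traversal $ab_1 \to ab_2$ uses an edge absent from the $(a,b)$-board and thus collapses it, so \textsc{stop} becomes unreachable; backward traversal $b \to ab_2$ keeps the $(a,b)$-board alive, letting the opponent answer $ab_2 \to \textsc{stop}$, which collapses every other realization and strands the mover at a dead end. This gives only $m+1$ realizations and a clean move-for-move correspondence with the directed game. Since this construction and its case analysis are the entire technical content of the hardness direction, deferring them leaves the proof incomplete as written.
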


\begin{figure}[h]
  \begin{center}
  \begin{tikzpicture}[node distance = 2.5cm]
      \tikzstyle{vertex}=[circle,thick,draw = black, fill = white, minimum size=12mm]
      \tikzstyle{gray-vertex}=[circle,thick,draw = black, fill = lightgray, minimum size=6mm]

      \node[vertex] (a) at (0,0) {$a$};
      \node[vertex] (ab1) [right of = a] {$ab_1$};
      \node[vertex] (ab2) [right of = ab1] {$ab_2$};
      \node[vertex] (b) [right of = ab2] {$b$};
      \path[-]
         (a) edge [] node [] {} (ab1)
         (ab1) edge [] node [] {} (ab2)
         (ab2) edge [] node [] {} (b)
         ;
  \end{tikzpicture}

  \vspace{1cm}

  \begin{tikzpicture}[node distance = 2.5cm]
      \tikzstyle{vertex}=[circle,thick,draw = black, fill = white, minimum size=12mm]
      \tikzstyle{gray-vertex}=[circle,thick,draw = black, fill = lightgray, minimum size=6mm]

      \node[vertex] (a) at (0,0) {$a$};
      \node[vertex] (ab1) [right of = a] {$ab_1$};
      \node[vertex] (ab2) [right of = ab1] {$ab_2$};
      \node[vertex] (stop) [above right of = ab1] {\textsc{stop}};
      \node[vertex] (b) [right of = ab2] {$b$};
      \path[-]
         (a) edge [] node [] {} (ab1)
         (stop) edge [] node [] {} (ab2)
         (ab2) edge [] node [] {} (b)
         ;
  \end{tikzpicture}
  \end{center}
  \caption{Gadgets for reducing from a  \ruleset{Geography} edge $(a, b)$.  The top is the gadget as it will appear in the \textit{main-board} and all other boards except from $(a,b)$\textit{-board}.  The bottom is the edge as it appears in $(a,b)$\textit{-board}.  All other parts of the two boards will be the same.}
 \label{fig:QGeographyPolyWideHardGadget}
\end{figure}
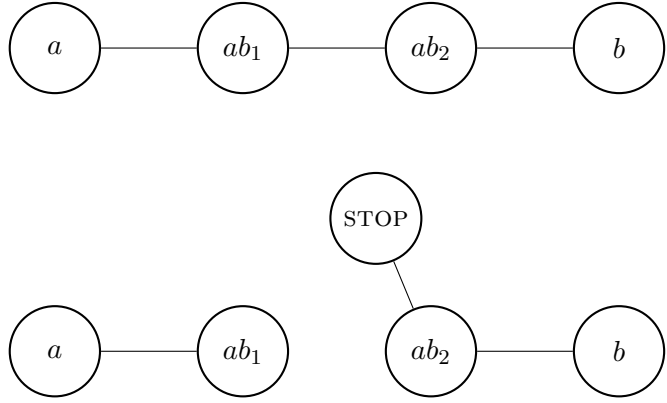

\begin{proof}
Because the game tree for \ruleset{Undirected Geography} has  height at most the number of nodes  in the geography graph,
by Theorem \ref{Theo:Upperbound},
\ruleset{Quantum Undirected Geography} with poly-wide quantum start is in \cclass{PSPACE}.

For hardness, we reduce from Quantum \ruleset{Geography} (which we prove to be PSPACE-hard for all flavors in
Theorem \ref{thm:geography} and Lemma \ref{lem:DAGGepgraphy}). This just a variant of Geography where we allow directed edges as well.

Consider a \ruleset{Geography} instance where
the underlying graph has $n$ nodes and $m$ edges.
We will create a superposition that consists $m + 1$ entangled realizations.
\begin{itemize}
    \item One realization, \textit{main-board}, has a very similar structure, but each arc $(a, b)$ is replaced by a three-part path with two new vertices, $ab_1$ and $ab_2$: $(a, ab_1)$, $(ab_1, ab_2)$, and $(ab_2, b)$.
    \item For each edge $(a,b)$, we include $(a,b)$\textit{-board}, which is exactly the same as \textit{main-board} except there is an extra vertex, \textsc{stop}, and the edge $(ab_1, ab_2)$ is replaced with $(\textsc{stop}, ab_2)$.
\end{itemize}
We show the two relevant edge gadgets in \cref{fig:QGeographyPolyWideHardGadget}.


The \textit{main-board} can never collapse out unless a player moves to \textsc{stop}, since all other edges are also in the other realizations. If a player does move to \textsc{stop}, then the game immediately ends, as there are no other edges leaving \textsc{stop} in that board. Thus, the game either ends when there are no moves left in the main game board, or \textsc{stop} is entered. Additionally, a player can move to \textsc{stop} if and only if their opponent reached the $ab_2$ vertex by moving from $b$, corresponding to going backwards on arc $(a, b)$ in the Quantum \ruleset{Directed Geography} game.  This cannot happen when moving from $ab_1$ since traversing the $(ab_1, ab_2)$ edge collapses out the realization with an edge to \textsc{stop}. Thus, traversing an edge backwards is always a losing move. Finally, when a game is lost in the corresponding Quantum \ruleset{Directed Geography} game, this Poly-Wide game will result in a loss within two moves, since no matter what move is chosen, either a move to \textsc{stop} or a move to $ab_2$ will win the game.

So, we can just follow a winning strategy for the instance of Quantum Directed Geography and win the game, and if we have a winning strategy for this game, then we can employ it in the instance of Quantum \ruleset{Directed Geography}
\end{proof}

Note that this only works for flavor D, since it relies upon a classical move to \textsc{stop}. However, we can simply replace this vertex with two vertices connected to $ab_2$, and then the same reduction works, by replacing moving to \textsc{stop} with making a quantum move to both of those vertices.

Note that the querying quantum position in our proof for the above theorem may not be reachable from a classical start.
The following theorem demonstrated a hardness result for determine the outcome class for quantum positions reachable from classical starts in \ruleset{Quantum Undirected Geography}.

\begin{theorem}[Intractability of Reachable Quantum Positions]
\ruleset{Quantum Undirected Geography} is \cclass{PSPACE}-complete measured by quantumized complexity with poly-moves after a classical start.
\end{theorem}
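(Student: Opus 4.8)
The plan is to prove membership and hardness separately, reducing from \ruleset{Quantum (Directed) Geography} (shown \cclass{PSPACE}-hard for all flavors in \cref{thm:geography} and \cref{lem:DAGGepgraphy}). Membership follows exactly as in the preceding theorem: a position reachable from a classical start by poly-moves has a polynomial-size succinct representation, and since \ruleset{Undirected Geography} is polynomially-short (its game-tree height is at most the number of vertices), \cref{Theo:Upperbound} places the whole instance in \cclass{PSPACE}. The substance is the hardness direction, where I must show that the hard $(m+1)$-wide entangled superposition built in the preceding theorem---or an outcome-equivalent one---is actually reachable from a single classical \ruleset{Undirected Geography} position through a polynomial-length sequence of quantum moves, rather than merely being handed to us as a querying superposition.

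The key observation is that the decision problem here takes as input a classical start $\mathbb{B}_0$ together with a fixed move sequence $m_1\circ\cdots\circ m_T$ and asks for the outcome of $\mathbb{B}=BFT(\mathbb{B}_0, m_1\circ\cdots\circ m_T)$; so I do not need the setup phase to be strategically forced, only legal and productive of the right superposition. I would build a single \emph{master graph} that contains (i) the edge gadgets of the preceding reduction---with both the direct link $(ab_1,ab_2)$ and the \textsc{stop}-route present as vertices of one graph, their activation controlled purely by occupancy---and (ii) a disjoint \emph{setup region} consisting of a selector vertex $r$ joined to $m+1$ private entry corridors, one for the \textit{main-board} realization and one per edge. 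The first setup move is a fan-out quantum move $\langle (r\to e_0)\ |\ \cdots\ |\ (r\to e_m)\rangle$, which by the definition of $\rho_h^{\QuantumLift(D,w)}$ creates exactly $m+1$ realizations, realization $j$ having its token at $e_j$.

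Each subsequent setup move is a single ``synchronized advance'' quantum move $\langle \sigma_0\ |\ \cdots\ |\ \sigma_m\rangle$ in which branch $\sigma_j$ is the next step of realization $j$'s private corridor. Because the corridors are vertex-disjoint, $\sigma_j$ is feasible only in realization $j$; using the $\otimes/\oplus$ semantics, $\mathbb{B}\otimes\sigma_j$ keeps just realization $j$ (the others map to NULL and are filtered), and the $\oplus$ recombination advances every realization by exactly its own step with no cross-talk and no unwanted collapse. Routing realization $j$'s token so that it visits precisely the vertex deactivating gadget $j$'s direct link (while activating its \textsc{stop}-route), and routing the \textit{main-board} token through a neutral corridor, makes each realization's \emph{effective} (occupancy-induced) subgraph equal to the corresponding board of the preceding reduction once all tokens arrive at the common simulation-start vertex. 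Padding the shorter corridors with dummy steps keeps the rounds synchronized, $T$ stays $O(nm)$, and I would pad $T$ by a constant if needed to fix the turn parity so that the intended player is to move at $\mathbb{B}$. With $\mathbb{B}$ so constructed, the analysis of the preceding theorem applies verbatim---the \textit{main-board} never collapses, and traversing a simulated arc backward lets the opponent answer with the winning \textsc{stop} move and is therefore a losing move---so the outcome of $\mathbb{B}$ decides winnability of the \ruleset{Quantum Directed Geography} instance.

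The hardest part will be the bookkeeping that guarantees the carving is clean: I must verify that every advance branch $\sigma_j$ is feasible in its own realization yet infeasible in all others at every round (so the synchronized move never collapses a realization it should preserve), that visiting a shared gadget vertex in one realization leaves the same vertex unvisited in the realizations that must keep it, and that no realization becomes prematurely terminal during setup. Establishing this ``benign cross-talk'' invariant---that the disjoint-corridor design makes the uniform, globally-applied move descriptors act locally on exactly one realization at a time---is the crux; once it holds, reachability and correctness follow, and the same construction adapts to flavors $A,B,C,C'$ by replacing the single classical \textsc{stop} move with the two-vertex quantum substitute noted after the preceding theorem.
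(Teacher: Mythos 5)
You have the right target (reach an outcome-equivalent version of the poly-wide superposition via a polynomial move sequence), but your construction leaves the actual hard step unproved, and that step is where the natural implementation breaks. Your plan requires that at every round each branch $\sigma_j$ of the synchronized quantum move be feasible in realization $j$ and in no other realization, so that the superposition stays exactly $(m+1)$-wide. But the vertices that control each gadget's state---$ab_1$ and $\textsc{stop}_{ab}$---are shared vertices of the one game graph, and realization $j$ must visit $\textsc{stop}_{a_ib_i}$ for \emph{every} $i\neq j$ while realization $i$ must leave it unvisited. The corridors therefore cannot be vertex-disjoint where it matters, and any attempt to route distinct realizations into and out of the same $\textsc{stop}_{ab}$ via private approach/departure vertices attaches new edges to $\textsc{stop}_{ab}$ that remain live in the realization where $\textsc{stop}_{ab}$ is supposed to be a dead end, destroying the property that moving to $\textsc{stop}$ immediately ends the game. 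You flag this ``benign cross-talk'' invariant as the crux and defer it; it is not a bookkeeping detail but the obstruction the proof must overcome, so as written the argument is incomplete. (Separately, your fan-out move has superposition width $m+1$, whereas the paper's construction---like most of its results---works already at width $2$; a width that grows with the instance weakens the statement.)

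The paper resolves the obstruction by \emph{not} insisting on an $(m+1)$-wide superposition. It augments each edge gadget with two connector vertices $ab_{-1},ab_{-2}$ that are pre-visited in all realizations, chains the gadgets by identifying $ab_{-2}$ with the next gadget's entry, and prescribes a single shared setup walk that makes one width-$2$ quantum move $\langle ab_1 \mid \textsc{stop}_{ab}\rangle$ per gadget before arriving at the start vertex. This creates $2^m$ realizations, of which only $m+1$ are the ``core'' boards of the poly-wide reduction; the remaining exponentially many are then shown to be \emph{redundant}: each one collapses whenever some core realization collapses, and at every point some core realization contains all of its available moves, so they never influence play. That redundancy/domination lemma is the key idea your proposal is missing, and it is what lets the reachability argument go through without any delicate disjoint-routing design.
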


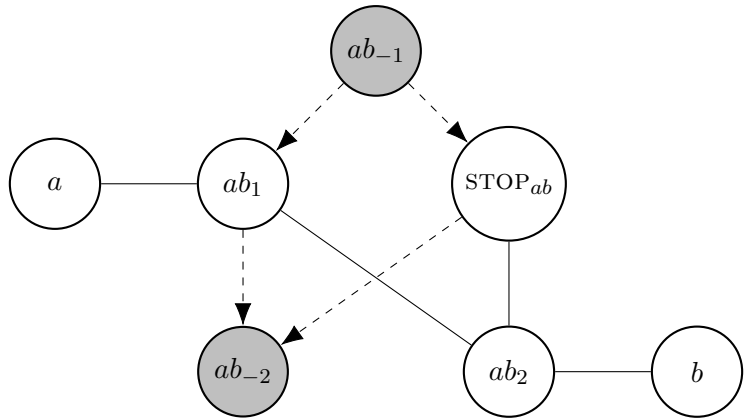
\begin{figure}[h]
  \begin{center}
  \begin{tikzpicture}[node distance = 2.5cm]
      \tikzstyle{vertex}=[circle,thick,draw = black, fill = white, minimum size=12mm]
      \tikzstyle{gray-vertex}=[circle,thick,draw = black, fill = lightgray, minimum size=6mm]

      \node[gray-vertex] (ab-1) at (0,0) {$ab_{-1}$};
      \node[vertex] (stopab) [below right of = ab-1] {\textsc{stop}$_{ab}$};
      \node[vertex] (ab1) [below left of = ab-1] {$ab_1$};
      \node[vertex] (a) [left of = ab1] {$a$};
      \node[vertex] (ab2) [below of = stopab] {$ab_2$};
      \node[vertex] (b) [right of = ab2] {$b$};
      \node[gray-vertex] (ab-2) [below of = ab1] {$ab_{-2}$};
      \path[dashed][-{Latex[length=3mm]}]
          (ab-1) edge [] node [] {} (ab1)
          (ab-1) edge [] node [] {} (stopab)
          (ab1) edge [] node [] {} (ab-2)
          (stopab) edge [] node [] {} (ab-2)
          ;
      \path[-]
         (a) edge [] node [] {} (ab1)
         (ab1) edge [] node [] {} (ab2)
         (stopab) edge [] node [] {} (ab2)
         (ab2) edge [] node [] {} (b)
         ;
  \end{tikzpicture}
  \end{center}
  \caption{Gadget for a Directed Geography edge $(a,b)$ in Undirected Geography.  Prior to the current position, there were moves $ab_{-1} \rightarrow \langle ab_1\ |\ \textsc{stop}_{ab} \rangle \rightarrow ab_{-2}$.}
  \label{fig:QGeographyRealizeableHardGadget}
\end{figure}

\begin{proof}
Here, we will be following the same idea as the original proof, only we will relax our assumption that their are only a polynomial number of starting realizations, but show that it can be reachable. We will refer to the realizations in our previous proof as \textit{core realizations}. We will call the remaining (exponential number of) realizations \textit{redundant realizations}.  For each redundant realization, $R$:

\begin{itemize}
\item There is a core realization such that if it collapses, so does $R$, and
\item At any given point, there is a core realization that contains all available moves in $R$.
\end{itemize}

By the first property, no redundant realization can ever be the only one remaining. By the second property, no core realization can be collapsed using moves only available due to the redundant realizations. Thus, the redundant realizations never provide any additional moves to either player.

To complete the proof, it remains to be shown that we can reach the point where all core boards exist, and the other boards are redundant.

For our new starting position, we will take all vertices and edges in our main board from the Poly-Wide reduction, and add vertices and all new edges. As shown in \cref{fig:QGeographyRealizeableHardGadget}, for each arc $(a, b)$ in QDG we can insert vertices $ab_1$, $ab_2$, and $\textsc{stop}_{ab}$, along with edges $(a, ab_1)$, $(ab_1, ab_2)$, $(\textsc{stop}_{ab}, ab_2)$, and $(ab_2, b)$.  In addition, we include two vertices, $ab_{-1}$ and $ab_{-2}$, which will have already been previously visited in all realizations by the time we reach the start.  These vertices are connected by edges $(ab_{-1}, ab_1)$, $(ab_{-1}, \textsc{stop}_{ab})$, $(ab_1, ab_{-2})$, and $(\textsc{stop}_{ab}, ab_{-2})$. For some other edge, $(c, d)$, we can connect the gadgets by setting $ab_{-2} = cd_{-1}$.

Now we prescribe a series of prior moves across all edges $\{(a_i, b_i)\ \mid\ i \in \{1,\ldots, m\}\}$: $(a_1b_1)_{-1} \rightarrow \langle ((a_1b_1)_1\ \mid \ \textsc{stop}_{a_1b_1} \rangle \rightarrow (a_1b_1)_{-2} = (a_2b_2)_{-1} \rightarrow \dots \rightarrow (a_{m-1}b_{m-1})_{-2} = (a_mb_m)_{-1} \rightarrow \langle (a_mb_m)_1 \mid\ \textsc{stop}_{a_mb_m} \rangle \rangle (a_mb_m)_{-2} \rightarrow x$  
were made, where $x$ is the starting vertex of QDG.

By construction, no realizations were ever collapsed in these prior moves. Additionally, there is a realization for each of the core realizations, by just following the branch that moved to \textsc{stop} for each edge gives us the main-board. And following the branch that moved to \textsc{stop} for each edge gadget other than some edge $e = (a, b)$ will give us the $(a, b)$-board.


To complete the proof, we only need to show that the rest of the realizations are redundant.  All realizations have only the edges in either the main-board or an $(a, b)$-board fulfilling the first property for redundancy. 
For the second property, we note that each of the other realizations must contain at least two \textsc{stop} vertices, so they collapse conditioned on the core realization on them. 



\end{proof}

\subsection{Solvability of Quantum Games With Classical Starts}

\label{Sec:PClassicalStartQUG}

\begin{theorem}[Quantum \ruleset{Undirected Geography} with Classical Starts in \cclass{P}]
For any classical \ruleset{Undirected Geography} position, the outcome class is the same as the outcome class of Quantum \ruleset{Undirected Geography} position on that same state (graph and vertex), in the quantum setting with flavor $D$ and superposition width $2$.
\end{theorem}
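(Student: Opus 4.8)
The plan is to reduce the claim to the classical matching theory for \ruleset{Undirected Geography} and then argue that the classical winner can enforce the same outcome even against quantum moves. Recall the Fraenkel--Scheinerman--Ullman characterization \cite{DBLP:journals/tcs/FraenkelSU93}: a classical position $(G,v)$ is in $\outcomeClass{N}$ iff $v$ is covered by every maximum matching of $G$, and the winning (first) player wins by fixing a maximum matching $M$ and always moving the token to the $M$-partner of the current vertex. I would let the \emph{Hero} be the classical winner (the Next player when $v$ is essential, the Previous player otherwise) and the \emph{Villain} be the classical loser, and prove that the Hero wins the quantum game; since the lift of an impartial game is impartial (Proposition \ref{prop:Impartial}), this forces the two outcome classes to coincide.

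The core of the argument is an invariant the Hero maintains immediately after each of the Hero's moves, when it is the Villain's turn: \emph{every realization in the current superposition is a classical $\outcomeClass{P}$-position} (the mover loses classically), witnessed by a maximum matching in which the current token vertex is matched to an already-visited vertex. In the $\outcomeClass{N}$ case this invariant is established by the Hero's opening classical move $v \to M(v)$; in the $\outcomeClass{P}$ case the Villain starts, and the Hero establishes it in reply to the Villain's first move. Granting the invariant, the termination argument is clean: the game tree is finite (bounded by the number of vertices, as used in Theorem \ref{Theo:Upperbound}), and the Hero wins exactly when the Villain faces a superposition all of whose realizations are terminal, hence $\outcomeClass{P}$ with no legal move in any realization.

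The work is in the inductive step. After the Villain plays a classical move ``to $z$'' or a width-$2$ quantum move $\langle \text{to } z_1 \mid \text{to } z_2\rangle$, each surviving child realization is a classical $\outcomeClass{N}$-position and therefore admits \emph{some} classical Hero reply to a $\outcomeClass{P}$-position; but these replies can differ across realizations, whereas the Hero may play only one move. The heart of the proof is a lemma guaranteeing a \emph{single} Hero move that simultaneously restores the invariant in every realization, either by collapsing the realizations in which the chosen move is illegal (a \emph{realization-collapsing move}) or by recombining the branches so that all surviving realizations again share a matched current vertex. The key matching facts I would use are: (i) an augmenting-path argument, relative to the exposed start vertex, showing that the Villain can never move in any realization to an $M$-unmatched vertex, so every branch of a Villain move lands on a matched vertex; and (ii) when a width-$2$ quantum move produces two distinct matched current vertices $w_1,w_2$, the Hero's move to $M(w_1)$ either collapses the $w_2$-branch or recombines both branches at the common vertex $M(w_1)$, after which each realization can be re-analyzed by contracting its already-matched pairs.

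The main obstacle, and the step requiring the delicate graph theory, is precisely this unification: proving that one classical (or width-$2$ quantum) move always restores ``all realizations are $\outcomeClass{P}$'', even though distinct live realizations may carry distinct witnessing matchings on distinct visited sets. I expect the width-$2$ restriction to be used here to bound, per Villain move, the number of new current vertices the Hero must neutralize. I would discharge the general case by a Gallai--Edmonds and graph-contraction analysis: contract each realization's matched pairs, show the contracted positions remain in the Hero-winning class, and read off the Hero's unifying move from the contracted structure. Verifying that this contraction is simultaneously consistent across all live realizations is the crux the remainder of the section must establish.
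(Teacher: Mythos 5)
Your plan follows the same broad strategy as the paper's proof---the Fraenkel--Scheinerman--Ullman matching characterization, realization-collapsing moves, and graph contraction---but it stops exactly at the step that carries all the weight, and you say so yourself (``the crux the remainder of the section must establish''). The missing idea is the concrete bookkeeping device that makes the unification possible. Your invariant is stated \emph{per realization} (``every realization is a classical $\outcomeClass{P}$-position, witnessed by a maximum matching''), and you propose to contract each realization's matched pairs separately; but the number of live realizations can grow with each quantum move, and nothing in your sketch prevents the witnessing matchings from diverging irreconcilably across realizations. The paper avoids this by maintaining a \emph{single} overlay graph $G'$ for all realizations at once: each quantum fork $\langle a \mid b\rangle$ that the Hero cannot collapse merges $c(a)$ and $c(b)$ into one super-vertex, with the crucial edge rule that $(X,Y)\in E'$ iff \emph{every} pair $x\in X$, $y\in Y$ is an edge of $G$, and the invariant is one maximum matching on $G'$ missing the current super-vertex. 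Two translation lemmas (an edge of a maximum matching on $G'$ projects to a maximum-matching edge on each realization's residual graph, and conversely a maximum-matching edge of $G$ survives as an edge of $E'$ between its contractions) are what let a single move on $G'$ restore the invariant in all realizations simultaneously. Without something playing the role of $G'$ and its edge rule, your appeal to ``a Gallai--Edmonds and graph-contraction analysis'' is a name for the problem, not a solution to it.

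Second, your description of the Hero's response to a quantum move is not yet an argument. The paper's case analysis is: (1) if some maximum matching has a partner $x$ of $c(a)$ with $(a,x)\in E$ but $(b,x)\notin E$, move to $x$ and collapse the $b$-branch; (2) symmetrically for $c(b)$; (3) if neither exists, then every matching partner of $c(a)$ is adjacent to $b$ and vice versa, which is precisely what guarantees that after merging $c(a)\cup c(b)$ the new super-vertex keeps the adjacencies needed for the matching invariant (this is where the matching-swap lemma is used). Your fact (ii)---that moving to $M(w_1)$ ``either collapses the $w_2$-branch or recombines both branches''---asserts this dichotomy without the adjacency analysis that justifies it, and the ``recombination'' is not at a common vertex in any literal sense: the token sits at the same vertex in both surviving realizations, but their visited sets differ, and it is exactly that difference the contraction must encode. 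To complete the proof you would need to supply the overlay structure, its invariant, and the three-case response, at which point you would have reconstructed the paper's argument.
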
\label{thm:quantumUGeographyClassicalStart}

In order to prove this, we will first show the algorithm that the winner in the classical game (\emph{hero}) will use to win under quantum play. After we describe the algorithm, we state and prove some useful lemmas, then we cover the proof of the theorem.

The hero will always make classical moves, but we need to show how they respond to quantum moves by their opponent, the \emph{villain}.  If the villain makes a quantum move, the hero will try to make a winning collapsing move (described further below).  If they cannot, they will keep track of the quantum superposition by contracting the two quantumly-chosen vertices into one combined vertex.

The hero keeps this record using an overlaid graph $G' = (V', E')$:
\begin{itemize}
    \item Initially, $V' = \{ \{v\}\ |\ v \in V\}$.
    We refer to $c(v)$ (the contraction with $v$) as the element  of $V'$ that contains $v$.
    \item $E'$ will be updated so that $(X, Y) \in E' \Leftrightarrow X, Y \in V'$ and $\forall x \in X, y \in Y: (x, y) \in E$.
    \item Whenever a player makes a classical move from $a \rightarrow b$ (meaning the current player makes a classical move after the previous player makes a classical move) the hero will remove $c(a)$ from $V'$ and all incident edges from $E'$.
    \item Whenever the villain makes a quantum move $a \rightarrow \langle v_1\ \mid\ v_2 \rangle$, the hero will again remove $c(a)$ from $V'$ and all incident edges from $E'$.
    \item Whenever the hero follows a quantum move with a classical move, $\langle v_1\ \mid\ v_2 \rangle \rightarrow b$, the hero will update $G'$ based on whether they make a collapsing move:
    \begin{itemize}
      \item If the hero collapses that quantum move, then they can remove the remaining $v_i$ as though it had been a classical move by the villain.
      \item If the hero does not collapse, but $c(v_1) = c(v_2)$, then all of those vertices have all been visited in all realizations.  The hero can remove $c(v_1)$ from $V'$ and all incident edges from $E'$.
      \item If the hero does not collapse and $c(v_1) \neq c(v_2)$, then the hero remove both $c(v_1)$ and $c(v_2)$ from $V'$ and replaces them with $c(v_1) \cup c(v_2)$.  Then the hero will reset $E'$ to match the definition given above.

    \end{itemize}
\end{itemize}

Next we describe how the hero chooses their move, using maximum matchings on $G'$.  (In our notation, we consider a matching, $M$, as both a set of pairs and a function.  So, $(a,b) \in M \Leftrightarrow M(a) = b \Leftrightarrow M(b) = a$.). In the classical version, the position is in \outcomeClass{N} iff the current vertex is in all maximum matchings on the graph \cite{DBLP:journals/tcs/FraenkelSU93}.  This means that after the classical winner's turn, the loser must start from a vertex not contained in some maximum matching.  Our hero will maintain a similar invariant on $G'$: there is a maximum matching on $G'$ such that the villain will be starting their turn on a vertex not in that matching.


Our algorithm is as follows:

\begin{itemize}
    \item If the villain makes a classical move to $v$, the hero considers any maximum matching, $M$ on $G'$, then moves to $x \in M(c(v))$ such that $(v,x) \in E$.  This leaves us with a maximum matching, $M \setminus \{(c(v), c(x))\}$ on the remaining graph that does not include $c(x)$, thus upholding the invariant.  The hero removes $c(v)$ and $c(x)$ from $V'$.
    \item If the villain makes a quantum move to $\langle\ a\ |\ b\ \rangle$, and $c(a) = c(b)$, then the hero acts as though the villain moved classically to only $a$, finding a maximum matching, $M$ on $G'$, then moving to $x \in M(c(a))$ such that $(a,x) \in E$.  Subtracting $(c(a), c(x))$ from $M$ results in a maximum matching without $c(x)$, upholding the invariant.  The hero removes $c(a) = c(b)$ and $c(x)$ from the partition $V'$ they are keeping track of.
    \item If the villain moves to $\langle\ a\ |\ b\ \rangle$, and $c(a) \neq c(b)$, then the hero has some additional work to do.  Notably, they have to find a matching, $M$, such that $\exists x \in M(c(a))$ where $(a, x) \in E$, but $(b, x) \notin E$; or $\exists x \in M(c(b))$ where $(b, x) \in E$, but $(a, x) \notin E$, if one exists.  There are three cases:
    \begin{itemize}
        \item If $\exists M$ with $x \in M(c(a))$ where $(a, x) \in E$, but $(b, x) \notin E$, then the hero moves to $x$.  Since $(b, x) \notin E$, the sets realizations where the villain moved to $b$ collapses out.  Subtracting $(c(a), c(x))$ from $M$ yields a maximum matching that does not include $c(x)$, upholding the invariant.  The hero removes $c(a)$ and $c(x)$ from $V'$.
        \item If $\exists M$ with $x \in M(c(b))$ where $(b, x) \in E$, but $(a, x) \notin E$, then the hero moves to $x$ as above, with the sets of realizations with a previous move to $a$ collapsing out.  Subtracting $(c(b), c(x))$ from $M$ yields a maximum matching that does not include $c(x)$, upholding the invariant.  The hero removes $c(b) \in E$ and $c(x)$ from $V'$.
        \item If no maximum matching exists with those requirements, then the hero can just use any maximum matching, $M$, to make their move.  The following must be true of $M$:
        \begin{itemize}
            \item $\forall x \in M(c(a)): (b,x) \in E$.
            \item $\forall y \in M(c(b)): (a,y) \in E$.
        \end{itemize}
        The hero can now move to any $x$ and update $V'$ by removing $c(x)$ and contracting $c(a)$ and $c(b)$ into one element $c(a) \cup c(b)$. In order to continue safely, the hero needs this new contracted vertex to be adjacent to $c(y)$, for any $y \in M(c(b))$. Thus, the hero needs both that $(c(a), c(y) \in E'$ and $(c(b), c(y)) \in E'$. The latter is already true because $M(c(b)) = c(y)$. We show the former using some extra lemmas. By \cref{lem:matching2matching}, $(a, x)$ and $(b, y)$ are in a maximum matching on $G$. Thus, there is another maximum matching with the swapped edges $(a, y)$ and $(b, x)$. Then, by \cref{lem:matching2contractedEdges}, $(c(a), c(y)) \in E'$.
    \end{itemize}
\end{itemize}

We use the following two lemmas to prove the efficacy of the hero's algorithm.


\begin{lemma}
  \label{lem:matchingSwapper}
  \label{lem:matching2matching}
    If $(c(a), c(b))$ is in a maximum matching of $G'$, then in any realization where $a$ and $b$ haven't been used, $(a,b)$ is part of a maximum matching on the unvisited graph.
\end{lemma}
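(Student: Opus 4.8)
The plan is to compare the overlaid graph $G'$ with the ordinary unvisited graph $H_R$ of the fixed realization $R$, and to transport a maximum matching of $G'$ onto one of $H_R$ through the live representatives. Fix a realization $R$ in which neither $a$ nor $b$ has been used. By the invariant maintained by the hero's bookkeeping, every contraction set $X \in V'$ has exactly one unvisited vertex in $R$; call it $\mathrm{live}_R(X)$ and set $\phi(X) = \mathrm{live}_R(X)$. Since distinct contraction sets are disjoint and every unvisited vertex of $R$ is the live representative of its own set, $\phi$ is a bijection from $V'$ onto the vertex set of $H_R$. Moreover $\phi$ carries $E'$ into $E(H_R)$: if $(X,Y)\in E'$ then, by the universal-adjacency definition of $E'$, the representatives $\phi(X),\phi(Y)$ are adjacent in $G$, hence (being unvisited) adjacent in $H_R$. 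In particular $(c(a),c(b))\in E'$ forces $(a,b)\in E$, and since $a,b$ are unvisited with $\phi(c(a))=a$, $\phi(c(b))=b$, the pair $(a,b)$ is an edge of $H_R$.

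Next I would push the matching across. Let $M'$ be a maximum matching of $G'$ containing $(c(a),c(b))$, and let $\nu(\cdot)$ denote maximum-matching size. Because every set of $V'$ is live in $R$, applying $\phi$ edge-by-edge turns $M'$ into a matching $N:=\phi(M')$ of $H_R$ with $|N|=|M'|=\nu(G')$ and $(a,b)\in N$. Thus $(a,b)$ already lies in \emph{some} matching of $H_R$, and what remains is to certify that $N$ is \emph{maximum}. Since $\phi$ embeds all of $G'$ into $H_R$ we trivially have $\nu(H_R)\ge \nu(G')=|N|$, so it suffices to prove the reverse inequality $\nu(H_R)\le \nu(G')$; then $N$ is a maximum matching of $H_R$ containing $(a,b)$ and the lemma follows.

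To establish $\nu(H_R)\le\nu(G')$ I would show $N$ admits no augmenting path in $H_R$. Given a shortest $N$-augmenting path $P$, pulling it back through $\phi^{-1}$ yields an alternating walk on contraction sets whose matched edges are exactly $M'$-edges. If every unmatched edge of $P$ also pulled back to an edge of $E'$, then $\phi^{-1}(P)$ would be an $M'$-augmenting path in $G'$, contradicting the maximality of $M'$. Hence $P$ must traverse at least one \emph{extra} edge: an $H_R$-edge $(\phi(X),\phi(Y))$ whose contraction sets satisfy $(X,Y)\notin E'$, meaning the live representatives happen to be adjacent even though some non-live representative pair is a non-edge. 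Ruling such extra edges out of a shortest augmenting path is the step I expect to be the main obstacle, since the universal-adjacency rule for $E'$ genuinely discards edges that a single realization may possess.

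To clear this obstacle I would exploit the structural origin of contractions. A set is enlarged only in the third case of the hero's response, where $c(a)$ and $c(b)$ are merged precisely when they are \emph{matching-symmetric}: every partner of $c(a)$ under the chosen matching is also adjacent to $b$, and every partner of $c(b)$ is also adjacent to $a$. This is exactly the condition under which no maximum matching can distinguish the two representatives. I would formalize it as an invariant, proved by induction on the number of merges, asserting that in every realization $\nu(H_R)=\nu(G')$ and that maximum matchings of $H_R$ and of $G'$ correspond under $\phi$; the matching-symmetry condition is what makes a single merge preserve this invariant, so that the extra edges can always be rerouted out of any candidate augmenting path. Granting the invariant, the reverse inequality $\nu(H_R)\le\nu(G')$ holds and the lemma is established, furnishing the maximum matching on the unvisited graph needed later (and dovetailing with its converse, \cref{lem:matching2contractedEdges}).
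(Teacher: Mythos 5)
Your first half coincides with the paper's own argument: both proofs observe that each contraction set $X \in V'$ has exactly one unvisited vertex in the given realization, use these live representatives to define a map from $V'$ onto the vertices of the unvisited graph $H$, note that the universal-adjacency definition of $E'$ guarantees each matched pair of $M$ maps to a pair of neighbors in $H$, and thereby transport $M$ to a matching of $H$ containing $(a,b)$. Where you diverge is in what happens next. The paper simply asserts that this transported matching ``creates a maximum matching on that graph'' and stops; you correctly observe that maximality is not automatic, because $H$ may contain edges between live representatives of contraction sets that are \emph{not} adjacent in $E'$ (the universal-adjacency rule discards edges that a particular realization retains), so one must separately establish $\nu(H)\le\nu(G')$. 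Your diagnosis of that obstacle is accurate, and your proposed remedy --- an induction over merges exploiting the matching-symmetry condition under which the hero contracts $c(a)$ and $c(b)$, so that no maximum matching can distinguish the merged representatives --- is the right shape of argument and dovetails with \cref{lem:matching2contractedEdges}. However, you leave that induction as a plan rather than carrying it out, so what you have written is not yet a complete proof of the maximality claim. In fairness, it is no less complete than the paper's own proof, which elides exactly the same step without acknowledging it; if you carry out the stated invariant (after every merge, the maximum matching numbers of $G'$ and of every surviving realization's unvisited graph agree, with maximum matchings corresponding under the live-representative map), your argument closes the gap that the paper leaves open.
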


\begin{proof}
  Let $M$ be the maximum matching on $G'$ containing $(c(a), c(b))$.  Also let $H$ be the remaining subgraph of $G$ that hasn't been visited in the given realization.  Then, for each element $X \in V'$, there is exactly one vertex in $H$ remaining.  Due to the definition of $E'$, that vertex must be adjacent to the vertex of $H$ inside the contraction $M(X)$.  Thus, each matched pair in $M$ corresponds to exactly one unique pair of neighbors in $H$, which creates a maximum matching on that graph.  Thus, $a$ and $b$ must be neighbors and $(a, b)$ is in the maximum matching on $H$.


\end{proof}

\begin{lemma}
  \label{lem:matching2contractedEdges}
  If $(a, b)$ is in a maximum matching on $G$ and $c(a) \neq c(b)$, then at any point in the game where $a$ and $b$ are still included in contractions, $(c(a), c(b)) \in E'$.
\end{lemma}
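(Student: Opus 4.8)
The first thing to notice is that, by the definition of $E'$, the statement ``$(c(a),c(b)) \in E'$'' is \emph{literally} the assertion that every cross pair is an edge: $\forall a' \in c(a),\, b' \in c(b)\colon (a',b') \in E$. So the lemma is really a claim about how the partition $V'$ sits inside $G$, and I would prove it by isolating and maintaining the following structural invariant throughout the game: \emph{for every pair of distinct blocks $X,Y \in V'$, the $G$-adjacency between $X$ and $Y$ is all-or-nothing}, i.e.\ either every $x \in X$ is joined to every $y \in Y$ (so $(X,Y)\in E'$), or there is no $G$-edge between $X$ and $Y$ at all. Granting the invariant, the lemma is immediate: a maximum matching containing $(a,b)$ witnesses $(a,b) \in E$, hence there is at least one edge between the distinct blocks $c(a)$ and $c(b)$, and the all-or-nothing invariant upgrades this single edge to all cross edges, which is exactly $(c(a),c(b)) \in E'$. (Here I read the ``$G$'' of the hypothesis as the unvisited graph of the realization under consideration, consistent with \cref{lem:matching2matching}; only the edge $(a,b)$, not the full matching, is needed for this direction.)

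I would establish the invariant by induction over the sequence of moves that modify $(V',E')$. The base case is the initial partition into singletons, where all-or-nothing is vacuous. Three of the four bookkeeping operations only \emph{delete} a block and its incident edges---a classical move, a collapsing response, and the ``$c(v_1)=c(v_2)$'' response---and deleting blocks trivially preserves all-or-nothing among the survivors. Thus the entire content of the induction lives in the one remaining operation: the contraction $Z := c(a)\cup c(b)$ performed in the third case of the hero's algorithm, where the villain plays $\langle a \mid b\rangle$ with $c(a)\neq c(b)$ and no collapsing move exists.

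For the merge I must show that $Z$ is all-or-nothing against every surviving block $Y$. By the inductive hypothesis both $c(a)$ and $c(b)$ are already all-or-nothing against $Y$, so the only way homogeneity can break is the ``mixed'' configuration where, say, $(c(a),Y)\in E'$ but there is no edge between $c(b)$ and $Y$. The plan is to rule this out using exactly the precondition of the third case: \emph{no} maximum matching of $G'$ admits a collapsing move. Concretely, if $c(a)$ were fully adjacent to a matched-partner block $Y = M(c(a))$ while $c(b)$ had no edge to $Y$, then moving $a$ onto the live representative of $Y$ would be legal for the $a$-realizations but illegal for the $b$-realizations, i.e.\ a collapsing move, contradicting the precondition. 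To make this rigorous across the exponentially many realizations I would use \cref{lem:matching2matching} to pass between maximum matchings of $G'$ and maximum matchings of each realization's unvisited graph, via its representative bijection, together with the local swap argument that if $(a,x)$ and $(b,y)$ lie in one maximum matching and $(a,y),(b,x)\in E$ then exchanging them yields another maximum matching. These two tools convert the per-matching conditions $\forall x\in M(c(a))\colon (b,x)\in E$ and $\forall y\in M(c(b))\colon (a,y)\in E$ into the global statement that $c(a)$ and $c(b)$ agree on adjacency to every block that can occur as a matched partner, which is precisely what the merge needs. I would finally discharge the asymmetry that $a,b$ are only chosen representatives---$c(a)$ and $c(b)$ may themselves be previously merged blocks---by invoking the inductive all-or-nothing property for all of their members and the semantic fact that in each realization exactly one member of a block is live.

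The main obstacle is this merge case, and specifically the bridge from the ``no collapsing move'' hypothesis, which is naturally phrased in terms of matched partners and live representatives, to the global all-or-nothing conclusion about \emph{every} cross pair of $Z$ and $Y$. Getting the quantifiers right---``for all maximum matchings'' versus ``there exists a matched partner''---and ensuring that the relevant partner block actually appears in \emph{some} maximum matching, so that the hypothetical collapsing move is genuinely available to the hero, is exactly where the delicate combination of \cref{lem:matching2matching} and the swap argument is required.
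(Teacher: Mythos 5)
Your reduction of the lemma to a case analysis over the bookkeeping operations is sound, and you have correctly located the crux in the contraction step of the hero's algorithm. But the invariant you propose to carry through the induction --- that \emph{every} pair of distinct blocks of $V'$ is all-or-nothing in $G$ --- is strictly stronger than the lemma and is not justified by the argument you sketch. The precondition that licenses the merge (``no maximum matching of $G'$ admits a collapsing move'') only constrains the adjacency of $c(a)$ and $c(b)$ to blocks that occur as \emph{matched partners} of $c(a)$ or $c(b)$ in some maximum matching of $G'$. For a block $Y$ that is fully adjacent to $c(a)$, adjacent to nothing in $c(b)$, and never matched to either in any maximum matching, no collapsing move is created, the precondition is satisfied, the hero merges anyway, and your invariant fails for the pair $\left(c(a)\cup c(b),\, Y\right)$. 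You flag this quantifier problem yourself in your final sentence (``ensuring that the relevant partner block actually appears in some maximum matching''), but flagging it is not resolving it: as written, the inductive step does not close. The symptom is visible earlier in your plan, where you observe that only the single edge $(a,b)$, and not its membership in a maximum matching, is used --- that hypothesis is precisely what ties $(c(a),c(b))$ to a matched pair of $G'$ (via a converse-direction analogue of \cref{lem:matching2matching}, which you would also need to state and prove) and hence brings the merge precondition to bear. The invariant must be weakened to quantify only over block pairs realized by edges of maximum matchings, and that hypothesis must be threaded through the induction rather than consumed once and discarded.

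For comparison, the paper's own argument keeps exactly that weaker invariant (``all prior contraction-graphs $G'$ contained all edges from all maximum matchings in $G$'') and reduces to the most recent contraction by a WLOG, though its written proof also stops before deriving the contradiction; so neither version currently completes the argument, but yours additionally takes on the (avoidable, and probably false) all-pairs claim.
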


\begin{proof}
  We prove this by contradiction.  Assume $(c(a), c(b)) \notin E'$.  Thus, $\exists\ (a', b') \notin E$, where $a' \in c(a)$ and $b' \in c(b)$.
  Without losing generality, we assume that the most recent contraction breaks the statement of the lemma; all prior contraction-graphs $G'$ contained all edges from all maximum matchings in $G$. Assume that the villain's last quantum move was to $\langle x\ |\ y\rangle$ and the hero had to respond to a non-collapsing move at vertex $z$.  Thus:
  \begin{itemize}
    \item In the prior contraction, $c'$, $c'(x) \neq c'(y)$
    \item $c'(x) \cup c'(y) = c(a)$
    \item $\exists\ M$, a matching on $G'$ that the hero used to choose $z$.
    \item WLOG, $a' \in c'(x)$ and $b' \in M(c'(y))$
  \end{itemize}
\end{proof}

\begin{proof} (of Theorem \ref{thm:quantumUGeographyClassicalStart})




The invariant the hero maintains is: at the end of the hero's turn, after having moved to $x$, there is a maximum matching on $G'$ that does not contain the contracted vertex $c(x)$.  Thus, either
\begin{itemize}
    \item There are no more edges leaving $x$ and the villain loses immediately, or
    \item If there is a move and the villain moves from $x$ to $y$, then $y$ must be contained in one of the other matched pairs, meaning that the hero will be able to move to $y$'s match.  (Lemma \ref{lem:matching2matching} requires that $y$ is part of one of those matches, because if it wasn't, then $(x, y)$ would be part of a maximum matching on $G$ and that edge will be represented as an edge in a maximum matching in $G'$, which won't work with the invariant.)
\end{itemize}

When there are no moves left in $G'$, there are no moves left in $G$, since if the edge $(x, y)$ has $c(x) = c(y)$ then only one vertex is remaining in each realization, and if $c(x) \neq c(y)$, then $(c(x), c(y))$ must be in $G'$.

The invariant will be maintained because each turn the hero will start on a vertex in all maximum matchings of $G'$ and will traverse the edge from one of them.  Since the hero will always have a move to make on their turn, they will never lose the game.

\end{proof}

\section{Quantum Leap in \mbox{\ruleset{Nim}} Complexity}
\label{Sec:Nim}

In this section, we will prove that quantum moves profoundly impact the complexity of \ruleset{Nim}.
In the world of combinatorial games, \ruleset{Nim}
 has a very special---even magical---status.
Fundamental to the Sprague-Grundy theorem \cite{Sprague:1936,Grundy:1939}, closed-form expressions have been uncovered for \ruleset{Nim}'s outcome and winning strategy, providing a complete mathematical characterization of \ruleset{Nim} \cite{Bouton:1901}
as well as of all impartial games.
These expressions are also polynomial-time computable,
leading to efficient program
 to optimally play \ruleset{Nim}
 despite the fact that the depth of \ruleset{Nim}-game tree could be exponential in its descriptive size.\footnote{By contrast, most \cclass{PSPACE}-complete combinatorial games such as \ruleset{Hex}, \ruleset{Geography}, and \ruleset{Atropos} have game trees with height bounded linearly in their descriptive size.}
The striking contrast of \ruleset{Nim}'s polynomial-time solvability and many natural impartial games' {\cclass{PSPACE}}-complete intractability also illuminates the fundamental difference between mathematical characterization and computational characterization in combinatorial game theory.

By Proposition \ref{prop:Impartial},
\ruleset{Quantum Nim} remains an impartial game.
Thus, by the Sprague-Grundy theorem, \ruleset{Quantum Nim} can be mathematically characterized by classical \ruleset{Nim} (i.e.,  ``nimbers'').
However, as we shall prove below,  any such ``Sprague-Grundy-reduction'' from \ruleset{Quantum Nim} to the classical \ruleset{Nim} is unlikely computationally efficient.
In particular, we will show that its quantumized complexity with poly-wide quantum start  is above \cclass{NP}.
Formally, we prove the following theorem:\footnote{Unlike the theorems their proofs in other sections, this theorem and its proof below
only holds for quantum flavor $D$. 
}

\begin{theorem}[\ruleset{Quantum Nim} Leap Over \cclass{NP}]\label{theo:QNim}
The problem of determining outcome classes for
\ruleset{Quantum Nim} at a given poly-wide quantum position
is $\Sigma_2$-hard.\footnote{$\Sigma_2$ is a complexity class of decision problems on the second level of the {\em Polynomial-Time Hierarchy}---often denoted
by \cclass{PH}---which is a hierarchy of complexity classes for decision problems connecting \cclass{P} to \cclass{PSPACE}.
The base level of \cclass{PH} is \cclass{P}, and the first level of \cclass{PH} contains \cclass{NP} and \cclass{co-NP},
respectively, denoted by $\Sigma_1 = \cclass{NP} $ and $\Pi_1 = \cclass{co-NP}$.
Moving up the hierarchy,
for integer $k> 1$, $\Sigma_k$
 is the class of decision problems solvable
 by a \cclass{NP} Turing machine with an oracle access to a (complete)
   decision problem in $\Pi_{k-1}$.
Similarly, $\Pi_k$ is the
 is the class of decision problems solvable
 by a \cclass{co-NP} Turing machine with an oracle for a (complete) decision problem
   in $\Sigma_{k-1}$.
As $k$ increases to a polynomial magnitude, polynomial hierarchy $\Sigma_k$ and $\Pi_k$
   reach at \cclass{PSPACE}, by the fact that the general quantified Boolean formula
   (QBF) problem is complete for \cclass{PSPACE}.
In fact, one can precisely characterize polynomial-time hierarchy classes
  $\Sigma_k$ and $\Pi_k$ by quantified Boolean formula (QBF).
Generalizing the fact that \cclass{NP} and \cclass{co-NP} can be characterized,
  respectively, by logic decision of forms
$\exists\vec{x}\ F(\vec{x})$ and  $\forall \vec{x}\ F(\vec{x})$,
$\Sigma_k$ and $\Pi_k$ are precisely the languages characterized by logic decision of forms,
$\exists \vec{x_k}\forall \vec{x}_{k-1}\ ...\  F(\vec{x}_1,...,\vec{x}_k)$ and  $\forall \vec{x_k}\exists \vec{x}_{k-1}\ ...  \ F(\vec{x}_1,...,\vec{x}_k)$, respectively.
So, polynomial-time hierarchy captures the degree of {\em alternation structures} in complex decision problems.
There are other complexity classes, such as $\Delta_k$, associated with PH.
}
\end{theorem}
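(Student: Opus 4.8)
The plan is to prove $\Sigma_2$-hardness by composing three reductions, all carried out in quantum flavor $D$: from a $\Sigma_2$ quantified Boolean formula to \ruleset{Quantum Partition-Free QBF}, then from \ruleset{Quantum Partition-Free QBF} to \ruleset{Quantum Avoid True}, and finally from \ruleset{Quantum Avoid True} to \ruleset{Quantum Nim}. Working through these in reverse order of the chain, the innermost and most game-specific step is an efficient encoding of an arbitrary \ruleset{Quantum Avoid True} position as a poly-wide superposition of \ruleset{Boolean Nim} positions (piles of value $0$ or $1$). Since \ruleset{Quantum Nim} is impartial (Proposition \ref{prop:Impartial}), it suffices to preserve the $\outcomeClass{N}/\outcomeClass{P}$ outcome, which I would obtain by preserving the entire quantum game tree up to quantum-symmetry.

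For the encoding itself, I would associate realizations with the clauses of the positive CNF together with bookkeeping realizations, and represent ``variable $x_i$ set to \texttt{true}'' as removing the lone pebble from a pile dedicated to $x_i$ across the realizations in which $x_i$ occurs. The defining subtlety of \ruleset{Avoid True}---that a move is illegal exactly when it completes the CNF---would be enforced through the collapse mechanic: a move that would satisfy the last outstanding clause collapses away every realization that still admits a continuation, so the resulting superposition is terminal precisely when the corresponding \ruleset{Avoid True} move is forbidden. I would then verify that the construction runs in polynomial time, produces only Boolean-Nim piles, and places the legal moves of the two games in tree-isomorphic correspondence, so that $\outcomeClass{N}/\outcomeClass{P}$ classes transfer.

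Next I would lift Schaefer's classical \ruleset{Partition-Free QBF}-to-\ruleset{Avoid True} reduction to the quantum setting, relying on the two ``winning freedom'' properties: if True wins they may ignore False and win with purely classical assignments, and if False wins they may play the canonical quantum move $\langle \texttt{true}\ |\ \texttt{false}\rangle$ on each of their variables in any order. These properties let one read off the impartial \ruleset{Quantum Avoid True} outcome directly from the partizan \ruleset{Quantum Partition-Free QBF} winner. Finally, to seed the chain I would show \ruleset{Quantum Partition-Free QBF} is itself $\Sigma_2$-hard: starting from an $\exists\vec{x}\,\forall\vec{y}\,F(\vec{x},\vec{y})$ sentence, the variable partition supplies exactly two alternation blocks, and the winning-freedom properties pin the quantum game value to the truth of the sentence---collapsing the deeper classical \cclass{PSPACE} alternation down to, but not below, the second level of the polynomial-time hierarchy. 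Composing the three reductions then yields $\Sigma_2$-hardness of \ruleset{Quantum Nim} at a poly-wide quantum position.

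I expect the main obstacle to be the middle reduction. Schaefer's classical argument exploits the rigid turn-by-turn alternation of \ruleset{Avoid True}, but in the impartial quantum image the \emph{losing} player inherits the power to play quantum moves and split the superposition in ways with no classical analogue, potentially disrupting the clause bookkeeping on which the winner's strategy depends. The gadget and its correctness proof must therefore be strengthened to show that every such quantum interference either collapses harmlessly or is dominated by the winner's guaranteed reply; a careful case analysis using the safe/respectful classification of Definition \ref{Defi:SafeRespectful} is what I would use to rule out any quantum rescue of a losing position.
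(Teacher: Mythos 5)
Your proposal follows essentially the same route as the paper: the paper proves Theorem~\ref{theo:QNim} by exactly this three-step chain ($\Sigma_2$-SAT to \ruleset{Quantum Phantom-Move QSAT} via the two winning-freedom observations, then a quantum lift of Schaefer's reduction to \ruleset{Quantum Avoid True}, then the clause-per-realization \ruleset{Boolean Nim} encoding), and you correctly identify the lifted Schaefer step as the delicate part. One small caution: in the Nim encoding the pile for $x_i$ carries a pebble precisely in the realizations of clauses \emph{not} containing $x_i$ (it is $0$ where $x_i$ occurs, so that the move collapses exactly those realizations), which is the reverse of your phrasing, though your description of the collapse mechanism is otherwise the paper's.
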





\subsection{Encoding \mbox{\rm \ruleset{Quantum Avoid True}} with \mbox{\rm \ruleset{Quantum Nim}}}

\label{Sec:QATToQNim}
Our proof of this theorem will go through the quantum lift of another combinatorial game, known in the literature as  \ruleset{Avoid True}.

As a key combinatorial game characterized in the seminal work of Schaefer \cite{DBLP:journals/jcss/Schaefer78}, \ruleset{Avoid True} is played on a (positive) conjunctive normal form (CNF).
In the initial game position, all ground-set variables are {\em free}, i.e., unassigned.
During the game, players take turns setting free variables to \texttt{true}.
A move
is feasible if setting the selected variable \texttt{true}
  will not make the CNF evaluate as a whole evaluate to \texttt{true}.
The next player loses if the position has no feasible move.

To establish Theorem \ref{theo:QNim},
  our first step is to efficiently encode
  each game position in \ruleset{Quantum Avoid True}
by a game position in  \ruleset{Quantum Nim}.
This encoding represents a significant step in using
superposition of \ruleset{Nim} positions---which individually is polynomial-time solvable---to characterize intractable combinatorial game positions.\footnote{See more discussions after the proof.}
The following theorem captures the details of the encoding.


\begin{theorem}[From \ruleset{Quantum Avoid True} to \ruleset{Quantum Nim}]
\label{QATQNIM}
For any superposition width $w$, any \ruleset{Avoid True} instance $Z$,
   and any position $\mathbb{B}$ in its quantum lift,
   $Z^{\QuantumLift(D,w)}$,
we can construct a poly-wide
\ruleset{Quantum Nim} position  $\mathbb{B'}$
such that
{\sc Decision of Outcome} for $\mathbb{B'}$ in \ruleset{Quantum Nim} (with flavor $D$)
yields the same result as {\sc Decision of Outcome} for $\mathbb{B}$ in $Z^{\QuantumLift(D,w)}$.
Moreover, this reduction can be computed in time polynomial in the
  descriptive size of $Z$ and $w$.
\end{theorem}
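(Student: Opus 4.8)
The plan is to give a game-tree-preserving reduction that sends each \ruleset{Quantum Avoid True} position to a superposition of \ruleset{Boolean Nim} positions, using one \emph{guardian realization} per clause. Fix the positive CNF underlying $Z$ with variables $x_1,\dots,x_n$ and clauses $C_1,\dots,C_m$, and recall that in \ruleset{Avoid True} variables are only ever set to \texttt{true}, so a clause is unsatisfied exactly when all of its variables are still free. I would use $n$ \ruleset{Nim} piles, one per variable, each valued in $\{0,1\}$. For a single \ruleset{Avoid True} assignment $a$, define its encoding $\mathrm{enc}(a)$ to be the superposition containing, for every clause $C_j$ unsatisfied in $a$, the Boolean pile vector whose $i$-th pile equals $1$ iff $x_i$ is free in $a$ and $x_i\notin C_j$ (and $0$ otherwise). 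For a \ruleset{Quantum Avoid True} position $\mathbb{B}=\langle a_1\mid\cdots\mid a_s\rangle$ I set $\mathbb{B}'=\Phi(\mathbb{B}):=\filter(\mathrm{enc}(a_1)\oplus\cdots\oplus\mathrm{enc}(a_s))$. Since each $\mathrm{enc}(a_k)$ has at most $m$ realizations and $s$ is polynomial, $\mathbb{B}'$ is poly-wide and computable in time polynomial in the descriptive size of $Z$ and in $w$.

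Next I would fix the move dictionary ``set $x_i$'' $\leftrightarrow$ ``subtract one from pile $i$'', extended componentwise to superpositions; this is a bijection between the \ruleset{Quantum Avoid True} move alphabet and the \ruleset{Boolean Nim} move alphabet. The heart of the argument is a commuting lemma: for every move $\mathbb{M}$ and its image $\mathbb{M}'$, the move $\mathbb{M}$ is eligible at $\mathbb{B}$ iff $\mathbb{M}'$ is eligible at $\Phi(\mathbb{B})$, and $\Phi\big(\rho^{\QuantumLift}(\mathbb{B},\mathbb{M})\big)=\rho^{\QuantumLift}\big(\Phi(\mathbb{B}),\mathbb{M}'\big)$. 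For a classical move I would check directly that reducing pile $i$ collapses exactly the guardians $(a_k,C_j)$ with pile $i=0$, i.e. those where $x_i$ is already set or $x_i\in C_j$; the surviving guardians are precisely the clauses still unsatisfied after setting $x_i$, now with pile $i$ forced to $0$, which is exactly $\mathrm{enc}(a_k\cup\{x_i\})$. In particular reducing pile $i$ is legal (some guardian survives) iff some assignment has an unsatisfied clause avoiding $x_i$, which is exactly the \ruleset{Avoid True} feasibility condition for setting $x_i$. The quantum case follows because both rulesets compute a width-$w'$ move by applying each component classically and taking $\oplus$ (the \filter/$\oplus$ definition of $\rho^{\QuantumLift}$ on superpositions), and eligibility of a superposition is componentwise.

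With the commuting lemma in hand, outcome equivalence follows by induction on game-tree height, which is bounded by $n$ since every move permanently zeroes at least one pile in every surviving realization (equivalently, sets one more variable). The base case matches the two terminal conditions: no pile is reducible in any realization of $\Phi(\mathbb{B})$ iff no variable can be legally set in any realization of $\mathbb{B}$. The inductive step uses the move bijection together with the commuting lemma: $\mathbb{B}\in\outcomeClass{N}$ iff some legal move reaches a $\outcomeClass{P}$ position, and the corresponding \ruleset{Nim} move reaches the $\Phi$-image of that position, which has the same outcome by the inductive hypothesis.

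The main obstacle I anticipate is not the algebra but two structural points. First, I must confirm that the collapse/filter dynamics retire a clause's guardian at the exact moment the clause becomes satisfied and never resurrect it, so that the surviving guardians always track the currently-unsatisfied clauses; the observation that an unsatisfied clause has all-free variables is what keeps this bookkeeping consistent across both classical and quantum branches. Second, the map $\Phi$ need not be injective: two distinct assignments, or an assignment-clause pair and another, can yield the same Boolean vector, so distinct \ruleset{Avoid True} realizations may merge in $\mathbb{B}'$. I would argue this merging is harmless, since \ruleset{Quantum Nim} outcomes depend only on the \emph{set} of realizations through move-eligibility and resulting positions---both preserved by $\Phi$---so the induction survives $\Phi$ collapsing duplicates. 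Showing that this merging can alter only multiplicities, and never the eligible moves or the reachable positions, is the delicate step of the proof.
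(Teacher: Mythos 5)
Your proposal is correct and follows essentially the same route as the paper: one Boolean-\ruleset{Nim} realization per unsatisfied (active) clause of each \ruleset{Avoid True} realization, with pile $i$ equal to $1$ exactly when $x_i$ is free and not in that clause, the move dictionary ``set $x_i$'' $\leftrightarrow$ ``reduce pile $i$,'' and the same three-way case analysis of which realizations collapse. Your explicit treatment of the non-injectivity of the encoding (duplicate realizations merged by \filter) is a point the paper's invariant glosses over, but it does not change the argument.
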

\begin{proof}
Below, we will focus on superposition width $w = 2$,
  as the proof can be naturally extended to other superposition width.

We start with our encoding of
  classical \ruleset{Avoid True} positions
  and then extend the encoding to
   all \ruleset{Quantum Avoid True} positions.
Recall that a classical \ruleset{Avoid True} position
 of instance $Z$ is defined by $Z$'s CNF $F$ and
 a subset $S$ of free variables.
The CNF is the AND of a set of OR-clauses
   (of only positive variables for \ruleset{Avoid True}).
Selecting any variable in a clause
  makes the evaluation of the clause \cclass{true},
  regardless of the situation with other variables.
Let $F_S$ be the {\em reduced} CNF,
  obtained from CNF $F$ by keeping
  only clauses with all free variables.
We will refer to clauses in $F_S$ as the
{\em active clauses} for position $(F,S)$.

To encode the \ruleset{Avoid True} position
  associated with $(F,S)$ with a
  superposition of \ruleset{Nim} positions,
  we use each active clause in $F_S$ to
  define a realization in the superposition
  by making each variable into a pile for
  that realization:
If a variable is in the clause or is not free,
  then set its corresponding pile to be of size 0
   in that realization;
  otherwise, set the corresponding pile to be of size 1.
For example, suppose the clause is
$X_1\vee X_2\vee X_3$,
  and free variables are
  $\{X_1,X_2,X_3,X_4,X_5\}$
  with the ground set variables
  $\{X_1,...,X_{10}\}$.
Then, the encoding defines the \ruleset{Nim}
  realization $[0, 0, 0, 1, 1, 0, 0, 0, 0, 0]$.

In general,  a quantum position
  $\mathbb{B} = \langle(F,S_1)\ |\ (F,S_2)\ |\ ... \ |\ (F,S_k)\rangle$
  in $Z^{\QuantumLift(D,2)}$ will be
  encoded by a wide superposition, $\mathbb{B'}$
  consisting of all \ruleset{Nim} realizations as
   defined above for each $F_{S_1}$,...,$F_{S_k}$, respectively.
Note that the encoding is polynomial-time constructible.

To prove the correctness of the reduction encoding,
  we  prove that the next player
  has a winning strategy at $\mathbb{B}$
  in \ruleset{Quantum Avoid True}
  if and only if the next player has a winning strategy at its encoding $\mathbb{B'}$
  in \ruleset{Quantum Nim}.
We will prove inductively
  that the following invariant always holds if we play
  our \ruleset{Quantum Nim} encoding 
  and its original \ruleset{Quantum Avoid True}
  position
in tandem:
For every active clause in each realization of
$Z^{\QuantumLift(D,2)}$
 there is exactly one realization in
 the \ruleset{Quantum Nim} encoding
 with the corresponding realization having piles of 0 for each variable in the clause and variables that is no longer free for the position.

As the basis case of the induction,
  the stated invariant is true at the start.
There is only one realization of
  \ruleset{Quantum Avoid True}.
In its \ruleset{Quantum Nim} encoding constructed above,
  each realization has piles of 0 for the variables
  in the corresponding clause, and all variables are free, so the rest of the piles are at 1.
Now, we will show that the invariant still
  holds after any move in \ruleset{Quantum Nim}
  by examining each effect of its
 ``coupled move'' on any possible clause.

\begin{itemize}
\item Any classical or quantum move targeting a pile of size 1 in the \ruleset{Quantum Nim} position
  does not collapse the \ruleset{Nim} realization,
consistent with the fact that selecting
  the corresponding variable results
   in the corresponding clause remaining active
in the \ruleset{Quantum Avoid True} instance.
After the move, the pile will now be at 0 in the
  \ruleset{Quantum Nim} realization, consistent with the fact that after selecting the corresponding
  variable in the realization
   of \ruleset{Quantum Avoid True},
  the variable is no long free.

\item Any classical or quantum move
   targeting a pile of 0 in the
  \ruleset{Quantum Nim} instance,
  that is not in the corresponding
  \ruleset{Quantum Avoid True} clause\footnote{in other words, the variable was taken previously in play}, causes this realization to collapse, just like it does in \ruleset{Quantum Avoid True} when the corresponding variable is selected.

\item If a classical or quantum move is made in the
 \ruleset{Quantum Nim} position on a pile of 0 that corresponds to a variable in the associated \ruleset{Quantum Avoid True} clause, then the clause is satisfied, and the corresponding realization collapses.
This matches the \ruleset{Quantum Avoid True} case where the clause is no longer active.
\end{itemize}
This invariant establishes the desired structural morphism between playing \ruleset{Quantum Avoid True} and playing its encoding \ruleset{Quantum Nim.}
\end{proof}

Note that this reduction doesn't hold for quantum flavors $C$ or $C'$, since it can make classical moves satisfying clauses at the start for \ruleset{Avoid True} in both, but they may collapse realizations in our instance of \ruleset{Quantum Nim}.


\subsection{Subtlety in Quantum Moves and Complexity Impact}

Our reduction from \ruleset{Quantum Avoid True} to  \ruleset{Quantum Nim} highlights some important
 aspects of quantum combinatorial games.
We now discuss them before
  proceeding to our next step aiming to
  characterize the complexity of
  \ruleset{Quantum Avoid True}.

\vspace{0.05in}
\noindent{\bf Expressiveness of \ruleset{Nim} Superposition}: In the classical setting,
 because \ruleset{Avoid True} is a
\cclass{PSPACE}-complete game and \ruleset{Nim} is a polynomial-time solvable  game, any reduction from  \ruleset{Avoid True}
  game positions to \ruleset{Nim} games must incur an {\bf\em exponential explosion}
    (unless $\cclass{PSPACE} = \cclass{P}$).
In contrast, our reduction efficiently characterizes each \ruleset{Quantum Avoid Truth} position by a superposition of a polynomial number of \ruleset{Nim}
  positions, each with a polynomial descriptive size.
Therefore, our polynomial-time reduction above from \ruleset{Quantum Avoid True} to \ruleset{Quantum Nim} sheds light sharply on the structural power of quantum
 superposition.
Our proof, in particular, illustrates
  the capacity of
  \ruleset{Quantum Nim} positions
  in succinct characterization of
  complex logical alternations intrinsic to \ruleset{Quantum Avoid True}.

\vspace{0.05in}
\noindent{\bf Game Position vs Game Alternation}:
Note that our reduction is not from \ruleset{Avoid True} to \ruleset{Quantum Nim}, but rather,
  and critically, is from \ruleset{Quantum Avoid True}
  to \ruleset{Quantum Nim}.
The reduction can encode every \ruleset{Avoid True} position
 with a poly-wide
 \ruleset{Quantum Nim} positions, and
if the \ruleset{Quantum Nim} can somehow ``forbids''---figuratively and mathematically---
 quantum moves from this position onwards,\footnote{More on this, see Section \ref{Sec:FinalRemarks}} then the ``classical continuation'' of
  the \ruleset{Quantum Nim} position
  precisely captures the alternation
  in \ruleset{Avoid True}.
However,  quantumness does matter in the resulting \ruleset{Quantum Nim}.
Hence, instead of preserving \ruleset{Avoid True}'s alternation structure,
  the reduction preserves
  \ruleset{Quantum Avoid True}'s game tree.

\vspace{0.05in}
\noindent{\bf A Seemingly Intuitive Conjecture}:
The need to understand the computational complexity of \ruleset{Quantum Avoid True} in order to determine the intractability of \ruleset{Quantum Nim} also brings us to a family of fundamental questions, centered around
a seemingly intuitive conjecture:
\begin{quote}
{\em Given more options to play strategically, combinatorial games is always as least as hard (computationally) in the quantum setting than in the classical setting?}
\end{quote}
or its more concrete analog:
\begin{quote}
{\em The quantum lift of any \cclass{PSPACE}-complete combinatorial
    game remains a \cclass{PSPACE}-hard game.}
\end{quote}

Had these conjectures being true, we would have proved---via Theorem \ref{QATQNIM}---that ``\ruleset{Quantum Nim} at a poly-wide superposition is a \cclass{PSPACE}-hard to evaluate.''

In Section \ref{Sec:QuantumCollapses},
   we will refute these
  intuitive conjectures by providing strong
  complexity-theoretical evidences
  that some combinatorial games
  might be strictly harder to play optimally than
  their quantum counterparts.
This counterintuitive complexity-theoretical result
  highlights the subtle impact of quantum moves to
  alternation structures of combinatorial games,
  as well as
  the mathematical/computational challenges
  in establishing hardness results for quantum games.

In this context, although \ruleset{Avoid True} is \cclass{PSPACE}-complete---and we will prove in Section \ref{Sec:PSpaceUpperBound} that
 \ruleset{Quantum Avoid True}
 is still \cclass{PSPACE}-solvable---whether or not \ruleset{Quantum Avoid True} is \cclass{PSPACE}-complete
remains an outstanding open question.

\subsection{Baseline of \mbox{\rm \ruleset{Quantum Nim}} in Polynomial-Time Hierarchy}

\label{Sec:QuantumSchaefer}

We now prove Theorem \ref{theo:QNim}, showing that
{\sc Decision of Outcome} of poly-wide \ruleset{Quantum Nim} positions
 is intractable, with complexity
 between \cclass{NP} and \cclass{PSPACE}.
  We will reduce to it---via \ruleset{Quantum Avoid True}---from the quantum lift of a game called \ruleset{Phantom-Move QSAT},
  a ``partition-free'' combinatorial game based on
 the quantified Boolean formula problem.
   \footnote{The nomenclature will be justified later on in Section 
\ref{Sec:QuantumCollapses}.}

\ruleset{Phantom-Move QSAT} is a partizan game played on a CNF, whose clauses may contain both positive and negative literals.
In this game, one player---the {\em True Player}---aims
  to satisfy the CNF formulae,
  while the other player---the {\em False Player}---wants the formulae to be unsatisfied.
The ground set variables are partitioned into two sets, True-Player's variables and False-Player's variables (either with the same size or one more in the first set) for the two players, respectively, to assign values to. Starting with the True player,
  the two players alternate turns setting
  one of their variables to \texttt{true} or \texttt{false}.
If the current player has no more free variable to assign but satisfied the player's win condition,
then the player are still may take
  an extra feasible move---the {\em phantom move}---at the end.

In other words, if the current player is the True Player\footnote{i.e., in the case when they both have the same number of variables.} \& the existing assignments already set the formulae \texttt{true} or the current player is the False Player\footnote{i.e., in the case when the True player have one more variable.} \& the existing assignments already set the formulae \texttt{false}, then the player ``earns'' the right of a phantom move; otherwise, without any feasible move at the position,
the current player loses the game.

Equivalently one can define the "phantom move" to be included with the final variable selection,
 such that the player that would assign the final variable may only do so if they will have reached their win condition upon playing that variable.
One can reduce from the other Phantom move variant by introducing a variable $v_{n+1}$ which only appears in a clause as $(v_{n+1} \vee \neg v_{n+1})$ given to the phantom move player.
So the hardness remains the same.
We will be reducing from this variant of the game.

Classically, this game, as proved by Schaefer \cite{DBLP:journals/jcss/Schaefer78},
 is PSPACE-complete.
We will prove in Section \ref{Sec:QuantumCollapses}
  that its quantum extension is
   $\Sigma_2$-complete or $\Pi_2$-complete, depending on which player is making the final variable assignment. When restricted to the cases where the final variable is assigned by the False Player, it is $\Sigma_2$-complete. When assigned by the True Player, it is $\Pi_2$-complete.
Below, we will use this fact to complete the proof of Theorem \ref{theo:QNim} by
  establishing the following theorem.
\begin{theorem}[Baseline Complexity of \ruleset{Quantum Avoid True}] \label{thm:avoidTrue}
    \ruleset{Quantum Avoid True} is
      $\Sigma_2$-hard in quantum flavor $D$ with any fixed quantum width $w \geq 2$
\end{theorem}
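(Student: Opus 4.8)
The plan is to reduce from \ruleset{Quantum Phantom-Move QSAT}, restricted to instances in which the False Player makes the final variable assignment, since that case is $\Sigma_2$-complete by the result announced for Section~\ref{Sec:QuantumCollapses}. Given such an instance $\Phi$, I would apply Schaefer's classical reduction \cite{DBLP:journals/jcss/Schaefer78} from the partition-free QBF game to \ruleset{Avoid True} essentially verbatim, obtaining a positive-CNF instance $A$ in which two variables $t_i, f_i$ encode the \texttt{true}/\texttt{false} assignment of each quantified variable $x_i$, and order-enforcing clauses make any deviation from the intended round structure an immediately losing move. The goal is then to show that the quantum lift is outcome-preserving: the Next player wins $A^{\QuantumLift(D,w)}$ if and only if the True Player wins $\Phi^{\QuantumLift(D,w)}$. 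Since Schaefer's construction already matches each classical play of $\Phi$ with a forced classical play of $A$, the correspondence holds at the root; all of the difficulty lies in showing that superpositions created by quantum moves do not disturb it. As only width-two quantum moves are needed below, the argument will apply to every fixed $w \geq 2$.

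For the two directions I would invoke the freedom properties established for the partition-free game. If the True Player wins $\Phi^{\QuantumLift(D,w)}$, then by \textbf{Player True's Winning Freedom} they win using only classical assignments, ignoring whatever (possibly quantum) moves False makes; I translate this into a purely classical strategy for the Next player in $A^{\QuantumLift(D,w)}$, selecting at each round the $t_i$ or $f_i$ variable prescribed by True's QSAT strategy. Because these are True-player variables, the move is safe across realizations and adds no new realizations, so the superposition width is controlled entirely by False. Conversely, if the False Player wins, then by \textbf{Player False's Winning Freedom} they win by playing the quantum assignment $\langle \texttt{true}\ |\ \texttt{false}\rangle$ on each of their variables in any order; I translate each such assignment into the width-two quantum move $\langle\, \text{set } t_i \mid \text{set } f_i \,\rangle$ in $A$, which is exactly the superposition of the two classical moves encoding that variable. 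The invariant I would maintain, proved by induction on rounds in the same spirit as the tandem-play invariant in the proof of Theorem~\ref{QATQNIM}, is that the superposition of partial assignments in $\Phi$ is in one-to-one correspondence with the superposition of reduced \ruleset{Avoid True} positions in $A$, so that a realization collapses on one side precisely when its counterpart collapses on the other.

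The main obstacle, and the place where real care is required, is that \ruleset{Avoid True} is impartial: in $A^{\QuantumLift(D,w)}$ the losing player has access to the very same quantum moves as the winner, including disruptive superpositions with no counterpart in the partizan game $\Phi$, where each player controls only their own variables. I must show these extra moves cannot rescue the losing player. The key tool is a ``quantumness-doesn't-matter'' condition from Section~\ref{Sec:QuantumMatter}: the player in the winner's role never needs a quantum move, since the freedom properties already supply a classical-move winning strategy, whereas any quantum move by the player in the loser's role only adds realizations, each of which the winner can either collapse with a single classical response or continue to dominate through Schaefer's order-enforcing clauses. Hence a deviating or mixed quantum move either collapses back into a realization already handled by the winner's strategy or is itself losing. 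Verifying that the order-enforcing gadgets remain \emph{forcing} under arbitrary width-$w$ superpositions---so that the losing player cannot use a quantum move to touch two rounds' variables simultaneously and escape the intended alternation---is the delicate technical heart of the argument, and it is exactly the point at which the mathematical properties responsible for the quantum collapse of $\Phi$ are reused to counter the unwelcome quantum impact in the impartial setting.
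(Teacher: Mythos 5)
Your proposal follows the paper's route almost exactly: reduce from \ruleset{Quantum Phantom-Move QSAT} in the case where False assigns last, push Schaefer's partition-free-QBF-to-\ruleset{Avoid True} reduction through the quantum lift, give True a purely classical strategy justified by True's winning freedom (Observation~\ref{TrueStrategy}) and give False the all-quantum strategy $\langle \texttt{true}\ |\ \texttt{false}\rangle$ justified by False's winning freedom (Observation~\ref{FalseStrategy}), and then neutralize the loser's extra impartial-game options via a quantumness-doesn't-matter argument. That is the paper's proof in outline, and your identification of impartiality as the main obstacle is exactly right.

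The one substantive issue is your description of the mechanism inside Schaefer's reduction. You locate the technical heart in ``order-enforcing clauses'' that punish deviation from a round structure, and you propose to verify that these gadgets ``remain forcing'' under width-$w$ superpositions. But the source game is partition-free: there is no prescribed order to enforce, and Schaefer's construction (as the paper presents it) contains no such gadgets. Its engine is clause \emph{parity}: duplicate variables are added so that TV/QBF clauses have even width and FV clauses odd width, and the Avoid-True Destiny lemma shows that once only clauses of one parity remain unsatisfied, the winner is determined under \emph{arbitrary} play --- which is precisely the quantumness-doesn't-matter condition (\cref{mainDoesn'tMatter}) you need. Consequently the verification your plan calls for does not exist, and the actual work is different: one must check that True can legally complete a two-phase classical strategy (first the TV variables per the QSAT strategy, then all the $F_{i2}'$ duplicates, which appear in no FV clause), and that False's quantum moves $\langle F_{i_1}\ |\ F_{i_2}\rangle$ plus a legal final FV assignment are always available because some realization leaves a QBF clause unsatisfied. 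Your high-level architecture survives this correction intact, but as written the ``delicate technical heart'' of your argument is aimed at a gadget that is not there, and the parity bookkeeping that actually closes the proof is absent.
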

\begin{proof}
Consider a \ruleset{Phantom-Move QSAT} instance $Z$ with CNF.
In the proof, we will focus on the case when both players
  have the same number of variables, which is an even number. This is acceptable since the hardness results remain even when fixed to any arbitrary parity.
Recall that the player that goes first will be called the \textit{True player} and the player that goes second the \textit{False player}.
Below in both $Z$ and its quantum lift $Z^{\QuantumLift(D,w)}$, we will denote the True Player's variables
by $\{T_1,...,T_m\}$ and
  the False Player's variables by $\{F_1,...,F_m\}.$

In our proof, we will show that Schaefer's remarkable
 reduction from \ruleset{Phantom-Move QSAT} to
 \ruleset{Avoid True} in the classical setting can be quantum lifted into a \ruleset{Quantum Avoid True} instance to encode $Z^{\QuantumLift(D,w)}$:

Schaefer's reduction (in our notation) is the following:
\begin{itemize}
  \item For each True Player's variable $T_i$, we introduce two new variables $T_{i_1}$ and $T_{i_2}$, and create a positive clause $(T_{i_1} \vee T_{i_2})$. We will collectively call these clauses \textit{TV clauses}. $T_{i_1}$ will represent assigning a true literal, and $T_{i_2}$ will represent assigning a false literal.

  \item For each False-Player's variable $F_i$, we introduce three new variables $F_{i_1}$, $F_{i_2}$, and $F_{G_i}$, and create a clause  $(F_{i_1} \vee F_{i_2} \vee F_{iG})$.  We will collectively call these clauses \textit{FV clauses}. As before, $F_{i_1}$ will represent assigning a true literal, and $F_{i_2}$ will represent assigning a false literal. The $F_{iG}$ variable is simply an arbitrary variable for the purposes of ensuring a certain clause parity (which we will give the motivation for later). It will function as an alternate truth literal assignment, as it appears in every clause $F_{i1}$ does.

\item For each of the clauses in $Z$, we replace each instance of positive literal $T_i$ and $F_i$ with $T_{i_1}$ and $F_{i_1}$, respectively; we replace each instance of negated variable $\neg T_i$ and $\neg F_i$ with $T_{i_2}$ and $F_{i_2}$, respectively. We will collectively call these clauses \textit{QBF clauses}.

  \item For each instance of a variable $T_{i_1}$, $F_{i_1}$, $T_{i_2}$, and $F_{i_2}$ in the QBF clauses, we add, to same clause, variables $T_{i_1}'$, $F_{iG}$ $T_{i_2}'$, or $F_{i_2}'$, respectively. We will collectively call these new variables \textit{duplicate variables}.
  These serve both as a way to ensure the QBF clauses all have even parity, and for player strategy, as we will cover later.
\end{itemize}

Through our proof, we will let $X$ denote the \ruleset{Avoid True} instance obtained from Schaefer's reduction of \ruleset{Phantom-Move QSAT} instance $Z$. We will call the first player in $X$ Player True and the second player Player False. Schaefer proved that True can win (the impartial) $X$ if and only if True can win (the partizan) $Z$. Below, we will extend Schaefer's proof to show that their quantum lifts $Z^{\QuantumLift(D, w)}$ and $X^{\QuantumLift(D, w)}$ has the same winner (when optimally played).

Before going on with our proof, we first recall one of the key properties, formulated by Schaefer, and extend it to the quantum setting.

\begin{lemma}[Avoid-True Destiny]
For any classical position created by Schaefer's reduction, if all unsatisfied clauses have an even number of variables, then Player True will win, under arbitrary play by both players for the remainder of the game, and if all unsatisfied clauses have an odd number of variables, then Player False will win, under arbitrary play by both players for the remainder of the game.
\end{lemma}

\begin{proof}\footnote{We include a proof here both in order to provide readers more intuition about Schaefer's reduction and to add that it holds under arbitrary play (which wasn't explicitly stated by Schaefer).}\label{AvoidTrueArbitrary}
Let's first note that in this game, there are always an even number of total unsatisfied variables remaining on True's turn and an odd number of unsatisfied variables remaining on the False's turn.
This is because the total number of variables in the game is even at the start, and True goes first.
Thus, since exactly one variable is assigned each turn, the parity of remaining variables is maintained.

The proof then is simple. If there are even number of unassigned variables remaining, and all the unsatisfied clauses have an odd number of variables in them, then there must exist an unassigned variable not in at least one of the clauses.
Similarly, if there are an odd number of variables remaining, with all remaining clauses having an even number of variables, than there must exist an unassigned variable no in at least one of the clauses.
Since this property holds under induction for arbitrary play, one can't lose if there is a move remaining, the game length is finite, and as previously stated, the parity of number of variables remaining is equivalent to the player, this completes the proof.
\end{proof}

The above lemma shaped the ``parity'' design used in Schaefer's reduction.
It also defines a scenario satisfying the condition of ``quantumness doesn’t matter.''
Combining this with \cref{mainDoesn'tMatter}, we get:

\begin{corollary}[Quantum-QBF Destiny]\label{AvoidTrueTermination}
For any position reached in Schaefer's reduction,
if in all realizations, there are only TV clauses and/or QBF clauses unsatisfied, then False wins;  if in all realizations, there are only FV clauses unsatisfied, than True wins.
\end{corollary}

To prove the correctness of this reduction in the quantum setting, we simply need to prove that if True has a winning strategy in the instance of $Z^{Q(D, w)}$ that we are reducing from, then True has a winning strategy in $X^{Q(D, w)}$, and that if False has a winning strategy in $Z^{Q(D, w)}$, then False has a winning strategy in $X^{Q(D, w)}$.

In the proof below, we will crucially use the following fact that we will establish in 
Observation \ref{TrueStrategy}: True has a winning strategy in \ruleset{Quantum  Phantom-Move QSAT} if and only if True has a single assignment (of only classical moves) that can always win, regardless of what moves False makes.

So, if True is the winner in $Z^{Q(D, w)}$, we perscribe the following strategy for True in $X^{Q(D, w)}$:

\begin{enumerate}
\item Assign variables in the TV clauses according the winning strategy for $Z^{Q(D, w)}$
\item Assign all of the $F_{i2}'$ variables.
\end{enumerate}

Our proof is based on the following key observation is: If True is able to complete this strategy to completion, then all TV and QBF clauses must be satisfied, resulting in a True win by \cref{AvoidTrueTermination}.

To proceed, we just need to show that True's strategy is always a legal set of moves.
First, True must be able to satisfy all $m$ TV clauses, because False can't assign all $m$ FV clauses before True, thus whatever TV clause true assigns last can't be the last.
Then, because that variable assignment in $Z$ must have satisfied all clauses regardless of how the FV variables were assigned, that means in no realization is it legal to make a move that would satisfy all FV clauses, as all QBF clauses must be satisfied at the same time. Since all realizations still have the FV clauses unsatisfied, True can then assign all $F_{i2}'$ variables, since none of them appear in an FV clause.
So, that strategy is always achievable.

Now, for the second case.
Suppose that False wins the instance $Z^{Q(D,w)}$.
Then, False applies the following strategy:

\begin{enumerate}
\item For False's move $i \leq m - 1$, if $F_{i1}$ and $F_{i2}$ are not yet classically assigned, make a quantum move of $[F_{i_1} \mid F_{i_2}]$.
Otherwise, play arbitrarily on the FV clauses not yet assigned in all realizations.
\item For the final move, assign the variable in final FV to either \texttt{true} or \texttt{false}, whichever is a legal move. If both are, choose arbitrarily.
\end{enumerate}

Note that if False is able to carry out this strategy, then by
\cref{AvoidTrueTermination},
they will win the game, as they will have satisfied all FV clauses leaving only TV and/or QBF clauses remaining.

For the proof of the correctness of this strategy, first note that, as we will prove in 
Observation \ref{FalseStrategy}, if False has a winning strategy in  \ruleset{Quantum  Phantom-Move QSAT}, then repeatedly choosing, for an arbitrary variable, a quantum assignment of $\langle\texttt{true}\ \mid \ \texttt{false}\rangle$ is also a winning strategy.

So, if True has only played consistently, then False will be able to make a move on the final unsatisfied FV clause, as there exists a realization where not all QBF clauses are satisfied. If True ever makes a move with at least one realization that isn't on a variable in a TV clause, then when False moves to the final FV clause, there exists a realization where True hasn't assigned all of the TV clauses, so then False can make their final move arbitrarily on the clause.
\end{proof}

We can extend this result to each of the quantum flavors rather simply. To do so, note that if we modify the reduction so that every variable is duplicated 3 times (so that each instance variable $v_j$ appears also has $v_j^{\ast}$ and $v_j^{\ast \ast}$ appear in that clause), then the exact same proof holds for D, as each parity argument remains, and neither player's winning strategies invoke these at all until quantumness doesn't matter.

So, by \cref{StrongFlavor}, since these games have properly shaped game trees, the other flavors are also $\Sigma_2$-hard.

\subsection{Nim Encoding and Structural Witness of $\Sigma_2$-Hardness in Quantum Games}

\label{Sec:NimEncoding}

In our reduction from $\Sigma_2$-hard \ruleset{Quantum Avoid True},
our \ruleset{Quantum Nim} game
uses a poly-wide superposition of perhaps the simplest \ruleset{Nim} positions:
all \ruleset{Nim} positions that we used in our encoding are from the family of \ruleset{Boolean Nim}.
So, our $\Sigma_2$-hard intractability proof
of \ruleset{Quantum Nim} holds for \ruleset{Quantum Boolean Nim}
and can be extended broadly to
the quantum extension of
\ruleset{Subtraction} games, which are widely used to teach young children about mathematical thinking.

In this section, we present a more systematic theory to extend these hardness results.
Particularly, we can use the $\Sigma_2$-hard result of \ruleset{Quantum Nim} to get results for several other games in the quantum setting at poly-wide quantum positions. We show that if a game is able to classically properly embed a binary game with properties that we will describe, then its quantumized complexity at a poly-wide superposition is at least $\Sigma_2$-hard.

\begin{definition}%
[Robust Binary-Nim Encoding]
A ruleset ${\cal R}$ has a {\em robust binary-Nim encoding} if 
${\cal R}$ has a position $b$  with the following properties for any $n$.

\begin{itemize}
\item $b$ has a set $M$ of $\lvert M \rvert = n$ distinct feasible moves. 
\item
For each $\sigma\in M$,
selecting $\sigma$ moves the game from $b$ to
 a position $b_{\sigma} \in P$, such that position $b_{\sigma}$ has $(n-1)$ feasible moves given by $M \setminus \{\sigma\}$.
In addition,
$b_{\sigma}$ induces a robust binary-Nim encoding for $(n-1)$.
\item When $n=0$, the game will have no available moves.
\end{itemize}
\end{definition}

\begin{lemma}
If a game ${\cal R}$ has a robust binary-Nim encoding, then it is $\Sigma_2$-hard to determine
the winability  of poly-wide quantum positions
in its quantum setting.
\end{lemma}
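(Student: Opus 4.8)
The plan is to reduce from \ruleset{Quantum Boolean Nim} at a poly-wide quantum start, which is already $\Sigma_2$-hard: by Theorem \ref{theo:QNim} together with the encoding of Theorem \ref{QATQNIM}, every \ruleset{Nim} realization produced in that hardness proof is a \ruleset{Boolean Nim} position, so the $\Sigma_2$-hardness holds for \ruleset{Quantum Boolean Nim}. A \ruleset{Boolean Nim} position over $N$ piles is determined by its \emph{active set} $A\subseteq[N]$ of piles currently equal to $1$, and its only moves are ``empty pile $j$'' for $j\in A$, each of which deletes $j$ from the active set. The key observation is that a robust binary-Nim encoding of $\mathcal{R}$ for parameter $n=N$ supplies, inside $\mathcal{R}$, a faithful copy of exactly this structure: the ``master'' position $b$ has precisely the $N$ labeled feasible moves $M=\{\sigma_1,\dots,\sigma_N\}$, and, by the recursive clause of the definition, for every $A\subseteq[N]$ there is a position $b_A$ of $\mathcal{R}$ obtained from $b$ by playing the moves $\{\sigma_j : j\notin A\}$ whose feasible-move set is \emph{exactly} $\{\sigma_j : j\in A\}$. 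Thus $b_A$ mirrors the \ruleset{Boolean Nim} position with active set $A$, the label $\sigma_j$ playing the role of pile $j$.

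Given this, the reduction is immediate. Write the input poly-wide position as $\langle r_1\ |\ \cdots\ |\ r_k\rangle$ with each $r_i$ a \ruleset{Boolean Nim} realization of active set $A_i\subseteq[N]$, and output the quantum position $\langle b_{A_1}\ |\ \cdots\ |\ b_{A_k}\rangle$ in the quantum lift of $\mathcal{R}$. This is polynomial-time computable: each $b_{A_i}$ is reached from $b$ by applying at most $N$ moves, each poly-time computable by \cref{prop:DescriptiveComplexity}; moreover the realizations are genuinely distinct, since $A_i\neq A_{i'}$ forces $b_{A_i}$ and $b_{A_{i'}}$ to have different feasible-move sets and hence to be different positions, so the $\filter$ operator does not silently merge them.

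It remains to argue that the two quantum games have identical game trees, and therefore identical outcome classes. The core per-move fact is that, in flavor $D$, a classical move $\sigma_j$ acts on each realization $b_{A_i}$ exactly as the \ruleset{Boolean Nim} move ``empty pile $j$'' acts on $r_i$: by exactness of the feasible-move set, $\sigma_j$ is feasible at $b_{A_i}$ iff $j\in A_i$ (otherwise the realization collapses, just as emptying an already-empty pile collapses $r_i$), and when feasible it produces $b_{A_i\setminus\{j\}}$. Because the superposition rules build quantum positions from the per-realization operator $\otimes$ together with $\oplus$ and $\filter$, and each $\sigma_j$ is applied uniformly across the whole superposition, the quantum move $\langle\sigma_{j_1}\ |\ \cdots\ |\ \sigma_{j_w}\rangle$ on $\langle b_{A_1}\ |\ \cdots\rangle$ produces precisely the image under $A\mapsto b_A$ of the corresponding quantum move on $\langle r_1\ |\ \cdots\rangle$. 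Distinct \ruleset{Boolean Nim} realizations never merge---applying a common move sequence either preserves the difference of their active sets or collapses one of them---and the same holds for their images, so $\filter$ behaves identically on both sides. An induction on game-tree height (bounded by $N$ on each side) then shows the two positions share an outcome class, and the $\Sigma_2$-hardness of \ruleset{Quantum Boolean Nim} transfers to $\mathcal{R}$.

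The main obstacle is faithfully mirroring the quantum bookkeeping rather than merely the single-realization game tree: one must certify that collapses and the $\filter$/merge behavior of superpositions are preserved in lockstep. The crux is that the feasible-move set is a \emph{complete invariant} distinguishing the relevant positions, so distinctness is preserved and no spurious merging occurs; a minor related point---whether $b_A$ depends on the order in which the inactive piles are emptied---is harmless, since we fix one canonical order and since outcome preservation needs only game-tree isomorphism, not literal equality of positions within $\mathcal{R}$.
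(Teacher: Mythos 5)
Your proposal is correct and follows essentially the same route as the paper's proof: a reduction from poly-wide \ruleset{Quantum Boolean Nim} in which each pile $i$ is relabeled as the move $\sigma_i$ and each realization with active set $A$ is sent to the position of $\mathcal{R}$ whose feasible-move set is $\{\sigma_i : i \in A\}$, yielding a direct game-tree morphism. You simply spell out the bookkeeping (distinctness under $\filter$, per-move collapse behavior, induction on height) that the paper leaves implicit.
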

\begin{proof}
We reduce from \ruleset{Quantum Boolean Nim}
 at a poly-wide superposition.
 Suppose there are $n$ piles and a superposition with $m$ realizations
 $\langle b_1 \ | \cdots \ | b_m\rangle$.

 Now we focus on the position $b$ in
 ${\cal R}$ that induces the robust binary-Nim encoding for $n$, with set of moves
 $M =\{\sigma_1, \ldots ,\sigma_n\}.$
 For each pile $i$, we associate it with move $\sigma_i$.
Each realization, $b\in \{b_1,\ldots ,b_m\}$, in \ruleset{Quantum Boolean Nim} defines
a set  $S\subseteq [n]$, corresponding to piles with value 1.
In the reduce quantum position for ${\cal R}$,
we make a realization with the position whose feasible moves are 
$\{\sigma_i | s\in S\}$.

This is a direct encoding, where we are just relabeling move $i$ to $\sigma_i$,
setting up  the
desired morphism between playing
\ruleset{Quantum Boolean Nim} and playing its encoding in the quantum setting of ${\cal R}$.
\end{proof}

We can easily embed \ruleset{Boolean Nim} in several games, even games with fixed outcome, like \ruleset{Brussel Sprouts}. For "interesting" games, such as \ruleset{Go} and \ruleset{Domineering}, this process is often simple, as one needs only to be able to make a board state a sum of nimbers with only one move in each of them.

\section{Quantumness Matters}\label{Sec:QuantumMatter}

When we 
allow quantum moves using one of the flavors, it may or may not effect the outcome class of the game.
We define terms related to whether outcome classes and strategies change under quantum moves.  Each term can relate to either a ruleset or a position; we define both.
\theoremstyle{definition}

\begin{definition}{(Quantumness Strongly Matters)}
Given  
a position $b$ in a classical combinatorial game ruleset ${\cal R}$,
if the outcome classes at $b$ are different in the classical and quantum setting,
(i.e., $o_Z(b)\neq o_{Z^{\QuantumLift}}(b)$ where $Z$ denotes the game instance with starting position $b$ and $Z^{\QuantumLift}$
be the quantum extension of $Z$),
then we say that \emph{quantumness strongly matters} for position $b$ in ${\cal R}$ (or
simply for  $Z^{\QuantumLift}$ and $Z$).

We say that \emph{quantumness strongly matters} for a ruleset, ${\cal R}$, if there exists a position in ${\cal R}$ where quantumness strongly matters.
\end{definition}

\begin{figure}[h]
\begin{center}
  \def\scaleAmt{1.3}
  \scalebox{.9}{
  \begin{tikzpicture} [node distance = 1cm]
    \node (start) at (0, .5)  {\scalebox{\scaleAmt}{
        \begin{tabular}{@{}c@{}}
            $(2,2)$ \\
            \multicolumn{1}{c}{$\outcomeClass{N}$}
        \end{tabular}}};
    \node (first) at (0, -2) {\scalebox{\scaleAmt}{
        \begin{tabular}{@{}l@{}c@{}}
            $\langle\ (1,2)$\\ $|\ (2,1)\ \rangle$ \\
            \multicolumn{1}{c}{$\outcomeClass{P}$}
        \end{tabular}}};
    \node (secondA) at (-5, -5) {\scalebox{\scaleAmt}{
        \begin{tabular}{@{}l@{}l@{}c@{}}
            $\langle\ (0,2)$\\ $ |\ (1, 0)$\\ $|\ (1,1)\ \rangle$ \\
            \multicolumn{1}{c}{$\outcomeClass{N}$}
        \end{tabular}}};
    \node (secondB) at (-2, -5) {\scalebox{\scaleAmt}{
        \begin{tabular}{@{}l@{}c@{}}
            $\langle\ (0,2)$\\ $|\ (1,1)\ \rangle$\\
            \multicolumn{1}{c}{$\outcomeClass{N}$}
        \end{tabular}}};
    \node (secondC) at (0, -5) {\scalebox{\scaleAmt}{
        \begin{tabular}{@{}c@{}}
            $(0, 1)$ \\ $\outcomeClass{N}$
        \end{tabular}}};
    \node (secondD) at (2, -5) {\scalebox{\scaleAmt}{
        \begin{tabular}{@{}l@{}l@{}c@{}}
            $\langle\ (0,2)$\\ $|\ (1,1)$\\ $|\ (0,1)\ \rangle$ \\
            \multicolumn{1}{c}{$\outcomeClass{N}$}
        \end{tabular}}};
    \node (secondE) at (6, -5) {\scalebox{\scaleAmt}{
        \begin{tabular}{@{}l@{}l@{}c@{}}
            $\langle\ (0,2)$\\ $|\ (1,1)$\\ $|\ (2,0)\ \rangle$ \\
            \multicolumn{1}{c}{$\outcomeClass{N}$}
        \end{tabular}}};
    \node (zero) at (0, -8) {\scalebox{\scaleAmt}{
        \begin{tabular}{@{}c@{}}
            $(0,0)$\\ $\outcomeClass{P}$
        \end{tabular} } };

    \path[->]
        (start) edge [] node [right, text width=1cm] {$\langle\ (-1, 0)$\\ $|\ (0, -1)\ \rangle$} (first)
        (first) edge [bend right] node [left, text width = 1cm, xshift = -28pt] {$\langle\ (-1, 0)$\\ $|\ (0, -2)\ \rangle$} (secondA)
        (first) edge [bend right] node [above, xshift = -15pt] {$(-1, 0)$} (secondB)
        (first) edge [] node [left] {$(-2, 0)$} (secondC)
        (first) edge [bend left] node [right, text width = 1cm, xshift = 6pt] {$\langle\ (-1, 0)$\\ $|\ (-2, 0)\ \rangle$} (secondD)
        (first) edge [bend left] node [right, xshift = 20pt, text width = 1cm] {$\langle\ (-1,0)$\\ $|\ (0, -1)\ \rangle$} (secondE)
        (secondA) edge [bend right] node [left, xshift = -5pt] {$(0,-2)$} (zero)
        (secondB) edge [bend right] node [left] {$(0,-2)$} (zero)
        (secondC) edge [] node [left] {$(0, -1)$} (zero)
        (secondD) edge [bend left] node [right, xshift = 5pt] {$(0, -2)$} (zero)
        (secondE) edge [bend left] node [right, xshift = 10pt] {$(-2, 0)$} (zero)
        ;

  \end{tikzpicture}}
 \end{center}
  \caption{Winning strategy for Next player in quantum \ruleset{Nim} $(2,2)$, showing that Quantumness Strongly Matters.  (There are four additional move options from $\braket{(1,2)\ |\ (2,1)}$ that are not shown because they are symmetric to moves given.)}
  \label{fig:nim22}
\end{figure}
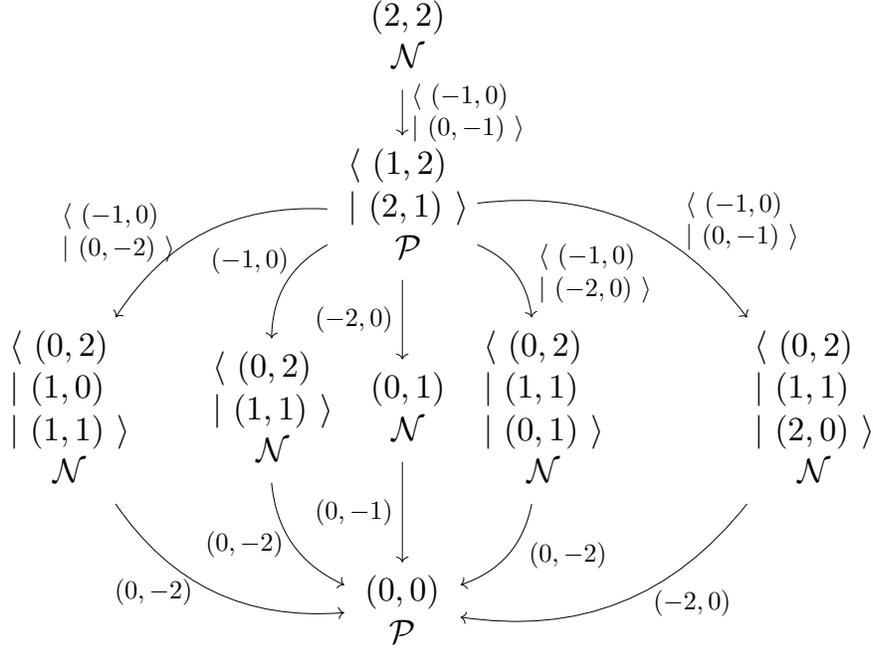

As an example of this, consider the \ruleset{Nim} position $(2,2)$ (2 sticks in each of two piles).  Under normal (non-quantum) play, this game is in $\outcomeClass{P}$, because the next player's moves can be mimicked by the previous player's until all piles are zero.  Under quantum play, the first player can win.  We show this winning strategy in \cref{fig:nim22}.



\begin{definition}{(Quantumness Weakly Matters)}
Given a position $b$ in a classical ruleset ${\cal R}$
we say that \emph{quantumness weakly matters} for position $b$
if:
\begin{itemize}
    \item The outcome classes at $b$ are the same in the classical and quantum setting
    (i.e., $o_Z(b) = o_{Z^{\QuantumLift}}(b)$
    where $Z$ denotes the game instance with starting position $b$ and $Z^{\QuantumLift}$ denotes the quantum extension of $Z$), and
    \item The winning strategy from $b$ will not suffice for the same player to win in  $Z^{\QuantumLift}$.
    In other words, from either
    $Z^{\QuantumLift}$ or some classical follower,
   $\bar{Z}^{\QuantumLift}$ of $Z^{\QuantumLift}$
    the winning player can only win by moving to an option
    $X$, that was not a winning move in the classical version.  It can be the case that
    either
    $X$ is a quantum superposition, or quantum strongly matters for  position $X$.

We say that \emph{quantumness weakly matters} for a ruleset, ${\cal R}$, if there exists a position of ${\cal R}$ where quantumness weakly matters.
\end{itemize}

In other words, we say that for a classical position, $b$, \emph{quantumness weakly matters} if the outcome class for that position 
is the same in classical and quantum settings,
but the winning strategy for the winning player changes in the quantum setting.
\end{definition}


An example of \emph{Quantum Weakly Matters} is \ruleset{Nim} position $(4,2)$.  This is a \outcomeClass{N}-position both classically and quantumly.  The classical winning first move is $(-2, 0)$ to $(2, 2)$.  With quantum moves, however, this is not a winning move, as $(2, 2)$ is an \outcomeClass{N}-position (see \cref{fig:nim22}).  Instead, the move of $(-1, 0)$ to $(3, 2)$ is the correct winning move.  We can show that  \outcomeClass{P}-position by showing that all options are \outcomeClass{N}-positions:

\begin{itemize}
    \item The positions resulting from classical moves are $(0, 2)$, $(1, 2)$, $(2, 2)$, $(3, 1)$, and $(3, 0)$.  $(2,2)$ has a winning move as shown in \cref{fig:nim22}.  The others all have winning moves to either $(0,0)$ or $(1,1)$, both of which are \outcomeClass{P}-positions.
    \item The positions using quantum moves are shown in \cref{fig:nim42} to be \outcomeClass{N}-positions.
\end{itemize}

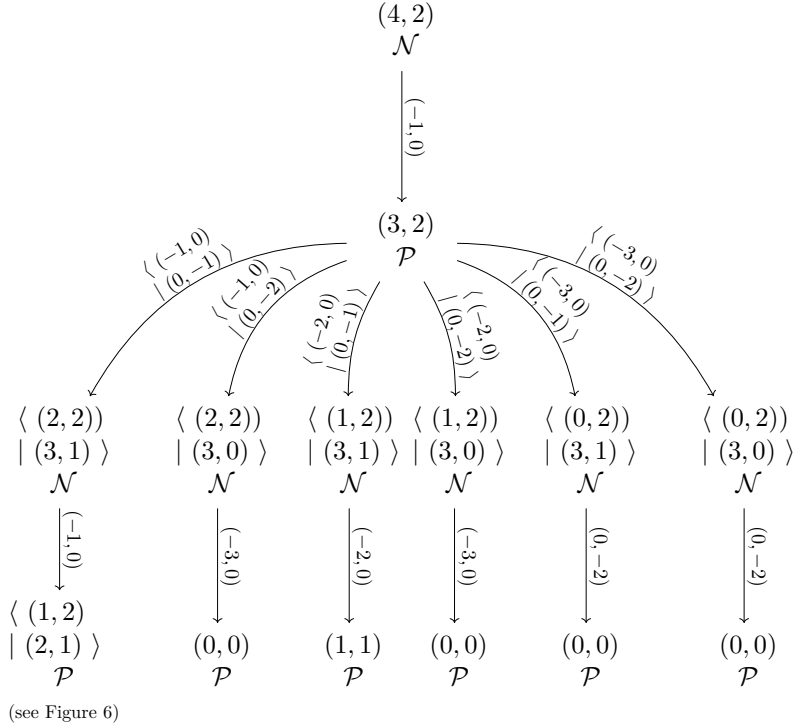
\begin{figure}[h]
\begin{center}
\scalebox{.7}{
  \begin{tikzpicture} [node distance = 1cm]
    \node (preStart) at (0, 8)  {\scalebox{\scaleAmt}{
        \begin{tabular}{c}
            $(4, 2)$ \\
            $\outcomeClass{N}$
        \end{tabular}}};

    \node (start) at (0, 4)  {\scalebox{\scaleAmt}{
        \begin{tabular}{c}
            $(3, 2)$ \\
            $\outcomeClass{P}$
        \end{tabular}}};

    \node (firstA) at (-6.5, 0) {\scalebox{\scaleAmt}{
        \begin{tabular}{l}
            $\langle\ (2,2))$ \\
            $|\ (3,1)\ \rangle$\\
            \multicolumn{1}{c}{$\outcomeClass{N}$}
        \end{tabular}}};

    \node (firstB) at (-3.5, 0) {\scalebox{\scaleAmt}{
        \begin{tabular}{l}
            $\langle\ (2,2))$ \\
            $|\ (3,0)\ \rangle$\\
            \multicolumn{1}{c}{$\outcomeClass{N}$}
        \end{tabular}}};

    \node (firstC) at (-1, 0) {\scalebox{\scaleAmt}{
        \begin{tabular}{l}
            $\langle\ (1,2))$ \\
            $|\ (3,1)\ \rangle$\\
            \multicolumn{1}{c}{$\outcomeClass{N}$}
        \end{tabular}}};

    \node (firstD) at (1, 0) {\scalebox{\scaleAmt}{
        \begin{tabular}{l}
            $\langle\ (1,2))$ \\
            $|\ (3,0)\ \rangle$\\
            \multicolumn{1}{c}{$\outcomeClass{N}$}
        \end{tabular}}};

    \node (firstE) at (3.5, 0) {\scalebox{\scaleAmt}{
        \begin{tabular}{l}
            $\langle\ (0,2))$ \\
            $|\ (3,1)\ \rangle$\\
            \multicolumn{1}{c}{$\outcomeClass{N}$}
        \end{tabular}}};

    \node (firstF) at (6.5, 0) {\scalebox{\scaleAmt}{
        \begin{tabular}{l}
            $\langle\ (0,2))$ \\
            $|\ (3,0)\ \rangle$\\
            \multicolumn{1}{c}{$\outcomeClass{N}$}
        \end{tabular}}};

    \node (secondA) at (-6.5, -4) {\scalebox{\scaleAmt}{
        \begin{tabular}{l}
            $\langle\ (1, 2)$ \\
            $|\ (2, 1)\ \rangle$ \\
            \multicolumn{1}{c}{$\outcomeClass{P}$} \\
            \multicolumn{1}{c}{\scalebox{.7}{(see \cref{fig:nim22})}}
        \end{tabular}}};

    \node (secondB) at (-3.5, -4) {\scalebox{\scaleAmt}{
        \begin{tabular}{l}
            $ (0, 0)$ \\
            \multicolumn{1}{c}{$\outcomeClass{P}$}
        \end{tabular}}};

    \node (secondC) at (-1, -4) {\scalebox{\scaleAmt}{
        \begin{tabular}{l}
            $ (1, 1)$ \\
            \multicolumn{1}{c}{$\outcomeClass{P}$}
        \end{tabular}}};

    \node (secondD) at (1, -4) {\scalebox{\scaleAmt}{
        \begin{tabular}{l}
            $ (0, 0)$ \\
            \multicolumn{1}{c}{$\outcomeClass{P}$}
        \end{tabular}}};

    \node (secondE) at (3.5, -4) {\scalebox{\scaleAmt}{
        \begin{tabular}{l}
            $ (0, 0)$ \\
            \multicolumn{1}{c}{$\outcomeClass{P}$}
        \end{tabular}}};

    \node (secondF) at (6.5, -4) {\scalebox{\scaleAmt}{
        \begin{tabular}{l}
            $ (0, 0)$ \\
            \multicolumn{1}{c}{$\outcomeClass{P}$}
        \end{tabular}}};

    \path[->]
        (preStart) edge [] node [sloped, text width = 1.5cm, yshift = 7] {$(-1, 0)$} (start)

        (start) edge [bend right] node [sloped, text width=1.5cm, yshift = 12] {$\langle\ (-1, 0)$\\ $|\ (0, -1)\ \rangle$} (firstA)
        (firstA) edge [] node [sloped, yshift = 7, text width = 1.5cm] {$(-1, 0)$} (secondA)

        (start) edge [bend right] node [sloped, text width=1.5cm, yshift = 12] {$\langle\ (-1, 0)$\\ $|\ (0, -2)\ \rangle$} (firstB)
        (firstB) edge [] node [sloped, yshift = 7, text width = 1.5cm] {$(-3, 0)$} (secondB)

        (start) edge [bend right=15] node [sloped, text width=1.5cm, yshift = 12] {$\langle\ (-2, 0)$\\ $|\ (0, -1)\ \rangle$} (firstC)
        (firstC) edge [] node [sloped, yshift = 7, text width = 1.5cm] {$(-2, 0)$} (secondC)

        (start) edge [bend left=15] node [sloped, text width = 1.5cm, yshift = 12] {$\langle\ (-2, 0)$\\ $|\ (0, -2)\ \rangle$} (firstD)
        (firstD) edge [] node [sloped, yshift = 7, text width = 1.5cm] {$(-3, 0)$} (secondD)

        (start) edge [bend left] node [sloped, text width = 1.5cm, yshift = 12] {$\langle\ (-3, 0)$\\ $|\ (0, -1)\ \rangle$} (firstE)
        (firstE) edge [] node [sloped, yshift = 7, text width = 1.5cm] {$(0, -2)$} (secondE)

        (start) edge [bend left] node [sloped, text width = 1.5cm, yshift = 12] {$\langle\ (-3, 0)$\\ $|\ (0, -2)\ \rangle$} (firstF)
        (firstF) edge [] node [sloped, yshift = 7, text width = 1.5cm] {$(0, -2)$} (secondF)
        ;

  \end{tikzpicture}}
\end{center}
  \caption{Game tree showing the quantum moves from \ruleset{Nim} position $(3, 2)$.}
  \label{fig:nim42}
\end{figure}

If neither quantumness strongly matters nor quantumness weakly matters, we say \emph{quantumness doesn't matter}.

\begin{definition}{(Quantumness Doesn't Matter)}
We say that for a classical position,
$b$ \emph{quantumness doesn't matter} if the outcome class for that position, 
is the same when played both classically and with quantum moves available, and the classical winning strategy still works for the winning player when responding to classical moves by the losing player.

For any ruleset, ${\cal R}$, we say that \emph{quantumness doesn't matter} if, for all positions, $b$, of ${\cal R}$, quantumness doesn't matter for $b$.
\end{definition}

The \ruleset{Nim} position $(1,2)$ is an example where quantumness doesn't matter.  It is an \outcomeClass{N}-position both classically (shown in \cref{fig:nim12}) and with quantum moves allowed (shown in \cref{fig:qnim12}).

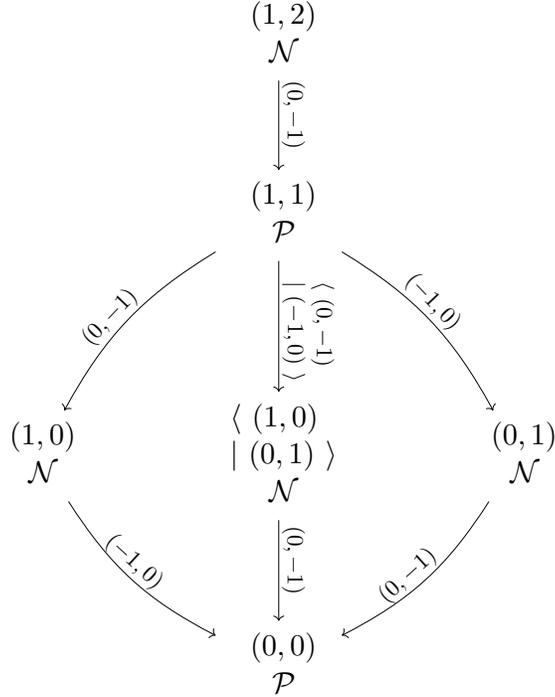
\begin{figure}[h]
  \begin{center}
  \scalebox{.8}{
  \begin{tikzpicture} [node distance = 1cm]
    \node (start) at (0, 7)  {\scalebox{\scaleAmt}{
        \begin{tabular}{c}
            $(1, 2)$ \\
            $\outcomeClass{N}$
        \end{tabular}}};

    \node (firstB) at (0, 4)  {\scalebox{\scaleAmt}{
        \begin{tabular}{c}
            $(1, 1)$ \\
            $\outcomeClass{P}$
        \end{tabular}}};

    \node (secondA) at (-4, 0) {\scalebox{\scaleAmt}{
        \begin{tabular}{c}
            $(1, 0)$ \\
            $\outcomeClass{N}$
        \end{tabular}}};

    \node (secondB) at (0, 0) {\scalebox{\scaleAmt}{
        \begin{tabular}{l}
            $\langle\ (1, 0)$ \\
            $|\ (0, 1)\ \rangle$ \\
            \multicolumn{1}{c}{$\outcomeClass{N}$}
        \end{tabular}}};

    \node (secondC) at (4, 0) {\scalebox{\scaleAmt}{
        \begin{tabular}{c}
            $(0, 1)$ \\
            $\outcomeClass{N}$
        \end{tabular}}};

    \node (end) at (0, -3.5) {\scalebox{\scaleAmt}{
        \begin{tabular}{c}
            $(0, 0)$ \\
            $\outcomeClass{P}$
        \end{tabular}}};

    \path[->]
        (start) edge [] node [sloped, text width = 1.5cm, yshift = 7] {$(0, -1)$} (firstB)

        (firstB) edge [bend right=15] node [sloped, text width = 1.5cm, yshift = 7] {$(0, -1)$} (secondA)

        (firstB) edge [] node [sloped, text width = 1.5cm, yshift = 14] {$\langle\ (0, -1)$\\ $|\ (-1, 0)\ \rangle$} (secondB)

        (firstB) edge [bend left=15] node [sloped, text width = 1.5cm, yshift = 7] {$(-1, 0)$} (secondC)

        (secondA) edge [bend right = 15] node [sloped, text width = 1.5cm, yshift = 7] {$(-1, 0)$} (end)

        (secondB) edge [] node [sloped, text width = 1.5cm, yshift = 7] {$(0, -1)$} (end)

        (secondC) edge [bend left=15] node [sloped, text width = 1.5cm, yshift = 7] {$(0, -1)$} (end)
        ;

  \end{tikzpicture}}
  \end{center}

  \caption{Partial game tree showing that \ruleset{Nim} position $(1,2)$ is in $\outcomeClass{N}$ when played with quantum moves.}
  \label{fig:qnim12}
\end{figure}

In Section \ref{Sec:PClassicalStartQUG}, we proved that \ruleset{Undirected Geography} is an example of an entire ruleset where quantumness doesn't matter with flavor $D$ and superposition width $2$.

\begin{theorem}
In the quantum setting with flavor $D$ and superposition width $2$,
    Quantumness doesn't matter for \ruleset{Undirected Geography}.
\end{theorem}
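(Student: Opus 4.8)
The plan is to derive this theorem as a direct consequence of Theorem~\ref{thm:quantumUGeographyClassicalStart} together with a closer reading of the hero's algorithm used in its proof. Unwinding the definition, ``quantumness doesn't matter for the ruleset'' requires, for every classical \ruleset{Undirected Geography} position $b$, two things: (i) the classical and quantum outcome classes at $b$ agree, and (ii) the classical winning strategy continues to win for the classical winner when the losing player restricts to classical moves. Condition (i) is exactly the statement of Theorem~\ref{thm:quantumUGeographyClassicalStart}, so no additional argument is needed there; the entire content of the present theorem is really the upgrade from ``same outcome class'' to ``same outcome class \emph{and} classical strategy robustness.''

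For condition (ii), I would observe that the hero's algorithm from the proof of Theorem~\ref{thm:quantumUGeographyClassicalStart} degenerates to the classical maximum-matching strategy \cite{DBLP:journals/tcs/FraenkelSU93} precisely on the branches where the villain plays classically. Concretely, contractions in the overlaid graph $G'$ are triggered \emph{only} in response to the villain's quantum moves; so long as the villain makes only classical moves, every contraction class $c(v)$ stays a singleton $\{v\}$ and $G'$ remains equal to the unvisited subgraph of $G$. Hence the hero's prescribed move to $x \in M(c(v))$ with $(v,x)\in E$ is literally the classical winning move to the matching partner $M(v)$, and the invariant the hero maintains (that after the hero's move there is a maximum matching on $G'$ avoiding $c(x)$) specializes to the classical invariant of Fraenkel, Simonson, and Ueno. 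Thus, against classical play, the hero's moves coincide exactly with the classical winning strategy, establishing (ii).

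Combining (i) and (ii) for every classical position shows that quantumness doesn't matter for every position of \ruleset{Undirected Geography}, and therefore for the ruleset as a whole. Since the contrasting notions of ``strongly/weakly matters'' hinge on either an outcome-class change or a forced deviation from the classical winning move, and we have ruled out both, this is the precise negation required by the definition.

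The main obstacle I anticipate is not a deep argument but rather pinning down the exact definitional requirement and verifying the degeneracy claim carefully. In particular, I must confirm that on a purely-classical play sequence no contraction has ever occurred, so that the maximum matching the hero computes on $G'$ genuinely agrees with a maximum matching on the classical unvisited subgraph, and that the vertex the hero selects is a legal classical \ruleset{Geography} move leading to a classically winning follower. This is the kind of bookkeeping that is conceptually routine but easy to state loosely; once it is confirmed, the theorem follows immediately from the already-established outcome equivalence.
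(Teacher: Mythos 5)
Your proposal is correct and follows essentially the same route as the paper, which states this theorem as a direct consequence of the classical-start result of Section~3.3: outcome-class agreement is exactly that theorem, and the strategy-robustness condition holds because the hero's algorithm makes only classical moves and, against a classically-playing villain, never performs a contraction, so $G'$ stays equal to the unvisited subgraph and the hero's matching-based move coincides with the classical winning move of Fraenkel--Scheinerman--Ullman. Your extra care in verifying that the contraction bookkeeping degenerates on classical-only branches is the right detail to check, and it goes through.
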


The next lemma shows while
quantumness may weakly matters but not strongly matters
for a position, as whole,
these two notions are equivalent for any ruleset.
\begin{lemma}
    For any ruleset, ${\cal R}$, if quantumness weakly matters, then quantumness strongly matters as well.
\end{lemma}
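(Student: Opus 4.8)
The plan is to prove the statement by an extraction argument: starting from a single position at which quantumness weakly matters, I will locate a \emph{classical} position whose classical and quantum outcome classes genuinely disagree, which is exactly what strong mattering demands. No case analysis on the nature of the deviating option will actually be needed.

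First I would unpack the hypothesis. Assume quantumness weakly matters for ${\cal R}$, so there is a witnessing position $b$ with $o_Z(b) = o_{Z^{\QuantumLift}}(b)$ at which the classical winning strategy fails quantumly. By the definition of weak mattering there is a point of play --- either $b$ itself or a classical follower $p$ of it --- at which the winning player $W$ can win only by moving to an option $X$ that was \emph{not} a winning move classically. Since such a $p$ is reached by classical play from the classical start $b$, the position $p$ is itself classical; and since $W$ wins classically at $p$, there is at least one classical winning move $Y$ taking $p$ to $q := \rho_W(p,Y)$, where $q$ is a classical position at which the opponent $\bar W$ (now to move) loses.

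The crux is then a one-line observation. Because \emph{every} quantum winning option at $p$ was not a classical winning move, the classical winning move $Y$ cannot itself be a quantum winning move --- otherwise $Y$ would be a quantum winning option that \emph{was} classically winning, contradicting the definition. Hence, after $W$ plays the classical move $Y$ and reaches the classical position $q$, the opponent $\bar W$ --- who loses at $q$ classically --- must instead be able to win at $q$ quantumly. In terms of outcome classes, the mover $\bar W$ loses at $q$ classically but wins at $q$ quantumly, so $o_Z(q) \neq o_{Z^{\QuantumLift}}(q)$. (For impartial rulesets this is literally $q \in \outcomeClass{P}$ classically and $q \in \outcomeClass{N}$ quantumly, precisely as in the $(4,2) \to (2,2)$ example.) Thus quantumness strongly matters at the classical position $q$, and therefore strongly matters for the ruleset ${\cal R}$, which is what we wanted.

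The main obstacle I anticipate is definitional rather than combinatorial: I must pin down exactly where the forced deviation occurs and verify that the witnessing position $q$ is a genuine classical position, so that its two outcome classes are comparable in the first place. The delicate step is arguing that a classical winning move $Y$ which survives into the quantum game necessarily points to a position whose outcome class has \emph{flipped}; this is precisely where the weak-mattering clause (that $W$ can win only through \emph{new} options $X$) is invoked, and it is the one place where a loose reading of the quantifier over winning moves could derail the argument. Notably, the stated alternative ``$X$ is a quantum superposition'' requires no separate treatment, since the failing classical move $Y$ already supplies the strongly-mattering position regardless of what $X$ is.
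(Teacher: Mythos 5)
Your proposal is correct and follows essentially the same route as the paper's own proof: both observe that under weak mattering every classically winning move from the deviation point fails quantumly, so the (classical) position it reaches must have a flipped outcome class, witnessing strong mattering. Your write-up is simply a more careful rendering of the same argument, usefully disambiguating the option $X$ from the definition of weak mattering from the classically winning option ($q$ in your notation, $X$ in the paper's proof) that serves as the witness.
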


\begin{proof}
    Consider a ruleset, ${\cal R}$, where quantumness weakly matters.  Then, for some position, $b$, the classical strategy for the winning player won't suffice.  That means that either from $b$ or for some follower, $H$ of $b$, the classical winning options for the winning player is not a winning move in the quantum version of ${\cal R}$.  That means that, for any one of those options, $X$,
    the outcome classes differs in the two settings.
   Thus, quantumness strongly matters for ${\cal R}$.
\end{proof}

The inverse is not necessarily true, at least for some very restrictive cases.  If we consider \ruleset{Nim-2-2} as a ruleset with only two piles, each with zero, one, or two sticks, then quantum strongly matters for the position $(2,2)$, and it doesn't matter for the other positions $(1, 2)$, $(2, 1)$,  $(1, 1)$, $(0, 1)$, $(1, 0)$, and $(0, 0)$.  We leave this as an Open Problem to determine whether this is true for any ruleset not restricted by a fabricated ceiling like this.

Now, we will attempt to characterize quantumness doesn't matter in the various flavors.


\begin{lemma}
\label{mainDoesn'tMatter}
If all realizations of a game classically have only one possible winner, no matter what sequence of moves either player makes, then quantumness doesn't matter for any quantum flavor other than A.
\end{lemma}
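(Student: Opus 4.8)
The plan is to prove something slightly stronger than asked: under the hypothesis, the destined classical winner $W$ wins the quantum game under \emph{arbitrary} play by both players, in every flavor $\phi\in\{B,C,C',D\}$. This immediately yields the two requirements of ``quantumness doesn't matter'': the outcome class is preserved, and any strategy for $W$—in particular the classical winning strategy against classical moves by the loser—still wins. The engine is a parity invariant. Under normal play the winner is whoever makes the last move, so the hypothesis that every realization is won by the \emph{same} player $W$ regardless of how play proceeds forces every maximal play sequence, in every realization, to have the same length parity $p_W$, namely the parity for which $W$ is the last mover given who moves first ($P_0$).

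First I would record that this fixed-parity (``destiny'') property propagates: every position reachable from a destiny realization is again a destiny position, since any maximal continuation from it extends—by prepending the moves used to reach it—to a maximal sequence of the original realization, whose length parity is pinned to $p_W$. Next, the key observation is that turns are synchronized across the superposition. Applying a classical or quantum move advances every surviving realization by exactly one move: a realization in which several components of a quantum move are legal sends each resulting child one move deeper, while a realization in which no component is legal simply collapses out. Hence after $t$ turns every surviving realization is a destiny position whose remaining maximal play length has parity $p_W-t \pmod 2$, the \emph{same} for all of them.

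From this invariant the conclusion is short. A realization is terminal only when its remaining play length is $0$, i.e. of even parity, so all realizations can be terminal only at turns $t\equiv p_W\pmod 2$; at every other turn \emph{no} surviving realization is terminal. Since the quantum game ends (the current player loses) exactly when all realizations are terminal, the game can end only on a turn of parity $p_W$; a short count of who is to move then shows the losing player is $\bar W$. Combined with termination—the maximum remaining play length strictly decreases each turn, so the game is finite—this gives that $W$ wins no matter how either player plays.

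Finally I would verify the flavor dependence, which is exactly where flavor $A$ is excluded and where I expect the main care to be needed. The argument requires that at every non-terminal turn the current player actually \emph{has} a move allowed by the flavor, so that the game cannot stall before the all-terminal turn. Under the invariant, at off-parity turns every surviving realization has odd remaining parity and hence is non-terminal with at least one legal move; therefore either all realizations share a common legal move (a \emph{safe}, hence \emph{respectful}, classical move) or two distinct legal moves exist (an eligible quantum move). Either way a move is available in flavors $D$, $C'$, $C$, and $B$, and crucially the destiny invariant prevents the flavor-$C$ pathology of a terminal realization coexisting with non-terminal ones. Flavor $A$, by contrast, forbids classical moves altogether, so a position offering a single legal move admits no width-$\ge 2$ superposition and leaves the player stuck even though not all realizations are terminal—precisely the situation the statement rules out. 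The hard part is getting this synchronized-parity bookkeeping right across realization-spawning quantum moves and collapses, and confirming that no flavor other than $A$ can produce a ``stuck-without-all-terminal'' position under the destiny hypothesis.
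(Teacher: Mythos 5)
Your proof is correct and follows essentially the same route as the paper's: a parity argument showing that every surviving realization sits at the same game-tree depth, that all leaves occur at depths of a single parity, and a flavor-by-flavor check that at off-parity turns a move is always available (a safe, hence respectful, classical move when only one distinct legal move survives, or an eligible quantum move otherwise), so the game can only end with the destined classical loser on the move. One side remark is wrong, however: the destiny invariant does \emph{not} prevent a terminal realization from coexisting with non-terminal ones under flavor $C$ --- at a turn of parity $p_W$ one realization can have remaining play length $0$ while another has remaining length $2$, so flavor-$C$ play can indeed end before all realizations are terminal. This does not damage your argument, since all you actually need is that the game cannot end at an off-parity turn, which you establish independently; the paper handles the same point by arguing directly that under flavor $C$ the game can end only once some realization is already a leaf, which again forces the ending turn to have the leaf parity.
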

\begin{proof}
In order for a game to classically have only a single possible winner, all paths to a leaf node in a game tree must be of the same parity. And under quantum flavors B, C', and D, for a game to end, all realizations must have no moves remaining. Otherwise, the rule sets would allow for some kind of move to be made. As such, all games end with all remaining realizations at leaf nodes. Since the number of moves to get here is the same in all realization, all paths to the game's end must have the same parity as the classical game's terminal nodes. Then, that same player wins no matter what moves either player makes. Therefore, quantumness doesn't matter.

Under ruleset C, at least one realization must be at a leaf node in order for a game to end. This is because if there are at least 2 possible moves remaining, they can take a quantum move, and if there is only 1 move remaining and there is no realization that is a leaf node, that move must be valid in all realizations. And as with the above, if the number of moves is equal to the parity of a leaf node, then it must be the same number of moves as a classical leaf node, and thus the parity argument still holds true.
\end{proof}

Its worth noting that this explicitly does not hold for ruleset A, since one can easily construct a game where there is only one move at some moment, ending the game instantly.

In rulesets C, C', and D we can get a more general characterization:

\begin{lemma}
If the winning player's strategy in the classic game involved making the same set of moves regardless of the set of moves the opponent has made in the game, then quantumness doesn't matter
\end{lemma}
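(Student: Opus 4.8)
The plan is to let the \emph{Hero} denote the classical winning player and let $S=\{\mu_1,\dots,\mu_k\}$ be the fixed set of moves that, by hypothesis, the Hero can play (say in the order $\mu_1,\dots,\mu_k$) so as to win the classical game against every sequence of Villain responses. I want to verify both requirements in the definition of \emph{quantumness doesn't matter}: that the outcome class is preserved, and that the classical winning strategy still works when the losing player (the Villain) makes only classical moves. The second requirement is immediate: if the Villain plays only classical moves and the Hero plays only the classical moves of $S$, then in flavors $C$, $C'$, and $D$ no superposition is ever created, so the play is identical to a classical game and the Hero's classical strategy wins verbatim.

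The substance is the first requirement: the Hero must still win even when the Villain is allowed quantum moves. I would have the Hero simply play $\mu_1,\mu_2,\dots$ in order, applied to the entire superposition. The key invariant is that every surviving realization is a \emph{legitimate classical line of play} in which the Hero has made exactly a prefix $\mu_1,\dots,\mu_{i-1}$ and the Villain has made some sequence of classical choices. This holds because moves are applied globally, so all realizations share the same Hero-prefix and the same move-depth and differ only in the Villain's (branching) choices. Since $S$ wins the classical game regardless of the Villain's play, in each such realization the Hero's next move $\mu_i$ is a legal move; hence $\mu_i$ is in fact legal in \emph{every} surviving realization at once. This is precisely the statement that $\mu_i$ is a \emph{safe} classical move for the current quantum position, and since safe moves are always respectful (the proposition following \cref{Defi:SafeRespectful}), $\mu_i$ is permitted in flavors $C$, $C'$, and $D$ alike.

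With the Hero's moves always available and never collapsing a realization, I would close the argument by tracking termination. A realization in which the Villain is already stuck cannot survive the Villain's turn: every move the Villain makes is illegal there (being legal only in some other realization), so that realization collapses. Hence each realization that survives to the Hero's turn to play $\mu_i$ is one in which the Villain has just moved, so the line is valid and $\mu_i$ is legal. After the Hero exhausts all of $S$, every surviving realization is a classical line in which the Hero has played all of $S$, in which, by hypothesis, the Villain is stuck; as no classical move is legal in any realization, no quantum move is either, so the Villain has no move and loses. Therefore the Hero wins $Z^{\QuantumLift(\phi,w)}$ for every $\phi\in\{C,C',D\}$, giving equal outcome classes and establishing that quantumness doesn't matter; this strictly generalizes \cref{mainDoesn'tMatter}, where \emph{both} players are deprived of choice. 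I would also note why the hypothesis fails for flavors $A$ and $B$: flavor $A$ forbids classical moves outright, while flavor $B$ can force the Hero into a quantum move whenever one is available, so in neither can the Hero commit to the fixed classical set $S$.

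The main obstacle I anticipate is making precise the claim that each surviving realization is a bona fide classical play-line in which the Hero's fixed moves remain legal — in particular, justifying that $S$ can be played in one fixed global order across all realizations simultaneously (the order-independence, or ``complete freedom,'' implicit in the hypothesis) and carefully handling the global-turn bookkeeping so that stuck realizations are pruned on the Villain's turns rather than forcing a premature end on the Hero's turn.
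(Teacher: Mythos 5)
Your proposal is correct and follows essentially the same route as the paper's proof: an induction maintaining the invariant that every surviving realization is a classically winning line for the Hero, so that the Hero's next fixed classical move is legal (indeed safe) in all realizations at once, with the game unable to end until the Villain is classically stuck everywhere. Your write-up is more careful than the paper's about flavor-specific legality of the Hero's classical moves and about how stuck realizations get pruned, but the underlying argument is the same.
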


\begin{proof}
This is easy to see with trivial induction. We assume that all realizations are winning realizations if both players played a classical game on them. Then, after the classical loser makes their move, the winner can make their classical next move, which must be a winning one in all realizations. Since we maintain this invariant that we are winning in all realizations, and the game can never terminate unless a player is losing classically in a realization, this classical strategy must win.
\end{proof}

Finally, in D, we can get a slightly more general characterization:

\begin{lemma}
If the winning player's strategy in the classic game involved making the same set of moves regardless of the \textit{last} move the opponent has made in the game, then quantumness doesn't matter in ruleset D.
\end{lemma}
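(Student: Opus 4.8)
The plan is to prove the two requirements of ``quantumness doesn't matter'' separately and to observe that the first is immediate. If the villain (the classical loser) only ever makes classical moves, then, starting from the classical position $b$, every reached position stays a single realization: the hero plays classical moves and the villain plays classical moves, so no superposition is ever created. Hence the play coincides with the classical game and the hero's classical winning strategy wins verbatim; this gives the ``works against classical moves'' requirement for free. All the content is in the outcome-class requirement, namely that the hero still wins in $Z^{\QuantumLift(D,w)}$ even when the villain makes quantum moves.

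For that, I would run the same induction as in the preceding lemma, maintaining the invariant that at the start of the hero's turn every surviving realization is a hero-winning classical position. The new ingredient is the following reformulation of the hypothesis: if the hero's classical move never depends on the villain's most recent move, then the hero's reply is already determined by the position reached after the hero's \emph{own} previous move, before the villain answers it. Consequently, when the villain's last move is a quantum move $\langle \sigma_1\ |\ \cdots\ |\ \sigma_k\rangle$ applied to a position on which the hero's reply is already fixed, the hero can respond with a single classical move $a$ that is simultaneously the classical winning reply against each $\sigma_i$. Because $a$ is, in each realization, the winning response prescribed by the classical strategy, $a$ is feasible and leads back to a hero-winning position in every realization created by that quantum move; in particular no such realization collapses and the invariant is restored for exactly the branches just produced.

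The main obstacle --- and the reason the statement is restricted to flavor $D$ --- is controlling the accumulation of realizations across several rounds. Unrolling the hypothesis shows that the hero's move at step $2k+1$ may legitimately depend on \emph{all} of the villain's earlier moves $m_2,\ldots,m_{2k-2}$, only not on the last one $m_{2k}$; so if the villain makes a quantum move on every turn, the surviving realizations come to differ in several ``stale'' villain choices on which the prescribed reply genuinely diverges, and no single classical move need be the winning reply in all of them. The crux is therefore to show the hero can prune these stale branches, and here I would use the defining power of flavor $D$, which is unavailable in flavors $C$ and $C'$: the hero is permitted to play a classical move that collapses non-terminal realizations. Committing to a reference branch (a fixed choice of the stale villain moves) and playing the prescribed reply for that branch, the realizations in which this move is infeasible collapse out, leaving precisely the realizations that agree with the reference on every villain move except the most recent one --- which is exactly the invariant needed to repeat the argument.

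The delicate point to nail down in a full write-up is that playing the reference reply never leaves behind a \emph{villain}-winning realization: one must argue that every stale-branch realization in which the hero's chosen move happens to remain feasible is still hero-winning (or re-merges with the reference), so that the villain gains no exploitable realization and, by the parity and termination bookkeeping underlying \cref{mainDoesn'tMatter}, is eventually the player facing no move in any realization. I would establish this by combining the ``all realizations hero-winning'' invariant with the one-move-ahead determinism above --- precisely the place where flavor $D$'s freedom to make realization-collapsing classical moves is indispensable.
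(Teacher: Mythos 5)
Your overall skeleton matches the paper's: the paper proves this lemma by pointing back to the induction of the preceding lemma (invariant: every surviving realization is a classically hero-winning position), with the hypothesis now supplying, at each of the hero's turns, a single classical reply that is the prescribed winning move against every choice inside the villain's most recent, possibly quantum, move. Your first paragraphs reproduce exactly that, and your observation that the ``responding to classical moves'' clause of \emph{quantumness doesn't matter} is immediate is correct.

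The gap is in your pruning step, and you have in fact located it yourself without closing it. In flavor $D$ a classical move eliminates only those realizations in which it is \emph{infeasible}; it does not eliminate realizations in which it is feasible but is not the reply prescribed by the classical strategy for that history. So your claim that committing to a reference branch leaves ``precisely the realizations that agree with the reference on every villain move except the most recent one'' is false: a stale-branch realization in which the reference reply happens to be legal survives, now carrying a hero move that the classical strategy does not endorse there. Nothing in your invariant rules out that such a realization becomes villain-winning, and the villain can later play a classical move feasible only in that realization, collapsing the superposition onto it and escaping the hero's strategy entirely. Your final paragraph concedes exactly this point but supplies no argument, so the proof is incomplete as written. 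For comparison, the paper's own proof is a single sentence deferring to the previous lemma; it simply does not engage with the accumulation of realizations that differ in the villain's \emph{earlier} quantum choices, and implicitly treats the prescribed reply as uniform over all surviving realizations at every hero turn --- which the stated hypothesis guarantees only for realizations differing in the last move. To finish, you would either need to read the hypothesis strongly enough that the reply is uniform across all surviving realizations (collapsing this lemma back into the one-step induction of the previous one), or actually prove that feasible-but-unprescribed continuations in stale realizations cannot be exploited; as it stands, neither you nor the paper does the latter.
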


\begin{proof}
The proof is the same as the previous proof, except we invoke making the response to the previous move instead.
\end{proof}

From this lemma, we have the corollary:

\begin{corollary}
For any game that can be represented as a polynomial sized DAG, with move labels being unique identifiers to the nodes they move to, quantumness doesn't matter in flavor D.
\end{corollary}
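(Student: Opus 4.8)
The plan is to recognize this corollary as a direct instance of the immediately preceding lemma, so the work is to verify that lemma's hypothesis for the stated class of games: I will exhibit, for any game represented as a polynomial-sized DAG whose move labels are unique identifiers of their target nodes, a classical winning strategy for the winner (call them the \emph{hero}) whose response depends only on the reached position, and then show that unique-labeling forces every such response to re-collapse the board to a single classical node, so the strategy transfers verbatim to flavor $D$.

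First I would set up outcome classes on the DAG. Since the DAG is acyclic and each node \emph{is} the full position (no hidden history), every node has a well-defined outcome class computable by reverse-topological induction. Fix a positional winning strategy $\pi$ for the hero: from every node at which the player to move wins, $\pi$ names a fixed successor lying in the loser-to-move class, and from every node losing for the player to move, all successors are winning for the mover.

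The main structural step is an induction establishing the invariant: after each of the hero's moves it is the villain's turn and the quantum board is a \emph{single} classical node $v$ lying in the class losing for the player to move. The base case is immediate (if the hero moves first, they play $\pi$ to such a node; otherwise the classical start already is one). For the inductive step, the villain either (a) plays a classical move, producing a single classical node $u$ winning for the mover, or (b) plays a quantum move, producing a superposition $\langle u_1 \mid \cdots \mid u_k \rangle$ of successor nodes, each winning for the mover. In case (a) the hero replies $\pi(u)$. In case (b) the hero plays the single classical move $\pi(u_1) = w$. Because a move's label uniquely identifies its target node, in \emph{every} realization where $w$ is feasible the board lands on the \emph{same} node $w$, and the remaining realizations collapse; since $w$ is a fixed node, its outcome class is path-independent and equals the losing-for-the-mover class. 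Hence after the hero's move the board is again the single classical node $w$, restoring the invariant.

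Finally I would close with termination and correctness: acyclicity forces the DAG-depth to strictly increase each round, so after finitely many rounds the villain faces a terminal node and loses, while the invariant guarantees the hero is never stuck. Since the hero only ever issues classical moves determined by the current node---equivalently, the same response regardless of which superposed ``last move'' actually occurred---this is precisely the hypothesis of the preceding lemma, so quantumness does not matter in flavor $D$. The main obstacle, and the only place the hypotheses are used essentially, is the case-(b) collapse: one must check that unique target-labeling makes all surviving realizations coincide at $w$ and that the path-independence of $w$'s outcome makes $\pi(u_1)$ simultaneously winning in every surviving realization. I expect polynomial size of the DAG to be irrelevant to \emph{correctness} (it only matters for computing $\pi$ efficiently), so I would flag that the structural claim holds for any finite acyclic representation with this labeling property.
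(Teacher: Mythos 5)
Your proposal is correct and follows the paper's intended route: the paper derives this corollary directly from the preceding lemma (the winner's response need not depend on the opponent's last move in flavor $D$), and your invariant argument---that the unique target-labeling forces the hero's classical reply to re-collapse every surviving realization onto the same DAG node, whose outcome class is path-independent---is exactly the verification of that lemma's hypothesis which the paper leaves implicit. Your observation that polynomial size matters only for computing the strategy efficiently, not for the structural claim, is also consistent with how the paper uses the corollary.
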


This puts \ruleset{Quantum DAG Geography} in
\cclass{P} for flavor $D$.

\begin{corollary}
In the quantum setting with flavor $D$ and any superposition width $w$,
quantumness doesn't matter to \ruleset{Quantum DAG Geography}, and thus, remains polynomial-time solvable.
\end{corollary}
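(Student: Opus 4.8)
The plan is to derive this corollary directly from the immediately preceding corollary, by checking that \ruleset{Quantum DAG Geography} meets its hypothesis: that the underlying classical game admits a polynomial-sized DAG representation in which every move label uniquely identifies the node it leads to. The crucial structural observation is that acyclicity renders \ruleset{Geography}'s ``no-revisit'' rule vacuous: when the token sits on a vertex $v$ of a DAG $G = (V,E)$, every vertex reachable by future play is a strict descendant of $v$ in the topological order, and no descendant can already have been visited (all previously visited vertices are ancestors of $v$). Consequently the ``visited set'' never constrains any future move, and a \ruleset{DAG Geography} position is fully determined by the current vertex alone. First, then, I would argue that the game graph of \ruleset{DAG Geography} is (isomorphic to) $G$ itself, which is polynomial in the descriptive size of the instance.

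Next I would fix the move alphabet so as to satisfy the labeling requirement. In this representation a legal move from $u$ is exactly the traversal of some out-edge $(u,v) \in E$, and I label that move by its target vertex $v$. Since $G$ is a simple digraph, each ordered pair carries at most one edge, so these labels are genuine unique identifiers of the target nodes. Hence \ruleset{DAG Geography}---for any superposition width $w$, since widening only enlarges the space of available superpositions without altering which classical moves lead where---falls precisely within the class of games covered by the preceding corollary. Applying that corollary yields that quantumness does not matter for \ruleset{Quantum DAG Geography} in flavor $D$: for every classical starting position $b$ we have $o_Z(b) = o_{Z^{\QuantumLift}}(b)$, and the classical winning strategy (move from an \outcomeClass{N}-vertex to a \outcomeClass{P}-vertex, as determined by retrograde analysis on $G$) continues to win against classical responses.

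Finally, to conclude polynomial-time solvability I would invoke the fact---recorded earlier in the paper---that classical \ruleset{Directed Acyclic Geography} is polynomial-time solvable, its outcome classes being computable by backward induction from the sinks of $G$. Combining this with the equality of outcome classes just established, the outcome of any \ruleset{Quantum DAG Geography} position equals that of its classical counterpart and is therefore computable in time polynomial in the descriptive size, for flavor $D$ and arbitrary width $w$.

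The main obstacle I anticipate is not the application of the corollary but the careful justification that the DAG structure collapses a \ruleset{Geography} position to its current vertex while preserving exactly the game-tree shape demanded by the corollary---namely that the unique-identifier labeling survives and that widening $w$ introduces no new collapsing behavior beyond what the preceding lemma's induction (maintaining ``winning in all realizations'') already handles. I would verify the width-independence explicitly, since the surrounding \ruleset{Quantum Nim} results were width-sensitive, whereas here it is precisely the unique-labeling property that buys width-independence.
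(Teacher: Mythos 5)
Your proposal is correct and follows essentially the same route the paper intends: the paper states this corollary without further argument, as an immediate instance of the preceding corollary on games representable as polynomial-sized DAGs with move labels that uniquely identify their target nodes, combined with the classical polynomial-time solvability of \ruleset{Directed Acyclic Geography}. Your observation that acyclicity makes the visited set irrelevant---so the game graph collapses to $G$ itself and edge-target labels are genuine unique identifiers---is exactly the (unstated) hypothesis check the paper relies on.
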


So far, we haven't been able to extend   polynomial-time solvability of \ruleset{Quantum DAG Geography} to other flavors.

Another question that one may consider is when the quantum flavors do or do not matter. Then, we extend hardness results from D to any other flavor by using the following properties:

\begin{lemma}\label{WeakFlavor}
For a set of entangled realizations $G$, if for every position $p \in G$ and every classical move $x$, there exists another classical move $y$ such that the game tree is perfectly symmetrical (which is to say, the game trees save the same nodes, withe the same children, and the same labels), we can reduce from D to any other flavor.
\end{lemma}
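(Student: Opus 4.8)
The plan is to show that the game $G$ has the \emph{same} winner under flavor $D$ and under any target flavor $\phi\in\{A,B,C,C'\}$; the identity map on instances then reduces \textsc{Decision of Outcome} for flavor $D$ to flavor $\phi$. I would first record the elementary but useful observation that flavor $D$ is the least restrictive flavor, so every move legal in flavor $\phi$ is also legal in flavor $D$. Consequently, in the coupled play below only the winning player needs to deviate from a flavor-$D$ line; every move the opponent can make in flavor $\phi$ is already a legal flavor-$D$ move. Since normal-play games on finite trees are determined, it suffices to prove a single direction: the flavor-$D$ winner $P$ also wins in flavor $\phi$. Fix $P$ together with a flavor-$D$ winning strategy.

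The player $P$'s flavor-$\phi$ strategy \emph{simulates} the flavor-$D$ one. Whenever the $D$-strategy prescribes a classical move $x$ that $\phi$ forbids (any classical move for $A$; an unsafe move for $C$; a disrespectful move for $C'$; a classical move while a quantum move exists for $B$), $P$ instead plays the quantum move $\langle x \mid y\rangle$, where $y$ is the symmetric partner supplied by the hypothesis. This is always legal: quantum moves are permitted in every flavor, including $A$, and $\langle x \mid y\rangle$ is an eligible superposition because $x$ is feasible on some realization and, by the symmetry, so is $y$. For flavor $B$ one notes additionally that the hypothesis furnishes a partner $y$ for \emph{every} classical move $x$, so an eligible quantum move exists whenever a classical one does; thus $B$ simply forbids the classical move and $P$ makes the substitution exactly as in $A$.

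The core of the argument is an invariant coupling the flavor-$\phi$ position to a flavor-$D$ position. Writing $\pi$ for the game-tree isomorphism witnessing the symmetry of the $x$- and $y$-subtrees, I would maintain that the current $\phi$-superposition is the \emph{symmetric doubling} of the coupled $D$-superposition: its realizations are closed under $\pi$, and the $\pi$-orbits correspond bijectively to the $D$-realizations. The hypothesis that the symmetric subtrees carry the same move labels is exactly what makes two $\pi$-paired realizations share their feasible-move labels and become terminal simultaneously. Hence applying $\otimes$, $\oplus$ and $\filter$ to a symmetric doubling again yields a symmetric doubling, whether the move is $P$'s replacement $\langle x \mid y\rangle$ or any opponent move; and since every opponent $\phi$-move is a legal $D$-move, it corresponds to the identical move on the coupled $D$-position, which $P$'s $D$-strategy already answers. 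Feasibility, collapse, and terminality are therefore mirrored across each $\pi$-pair, so the $\phi$-position is all-terminal exactly when the coupled $D$-position is, after the same number of plays. The same player is thus first stuck, giving $P$ the win in flavor $\phi$; the outcomes coincide and hardness for flavor $D$ transfers to $\phi$.

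The main obstacle is the opponent's apparent extra latitude in flavor $\phi$: they might make an asymmetric quantum move that seems to advance one half of a $\pi$-pair while collapsing the other, or exploit the doubled realizations to expose an option absent from the $D$-game. The resolution is precisely the label-preserving strength of the symmetry hypothesis: because $\pi$-paired realizations share their feasible-move labels, no single move can keep one member of a pair while killing the other, so the doubling never desynchronizes and never surfaces a move unavailable on the coupled $D$-realization. Making this synchronization fully precise---and verifying it remains stable under the iterated doublings created when $P$ substitutes several times, so that $\pi$-orbits have size a power of two while still corresponding bijectively to single $D$-realizations---is where the real care lies. Once that is in hand, outcome-equivalence, and hence the reduction from $D$ to every other flavor, follows.
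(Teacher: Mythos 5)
Your proposal is correct and follows essentially the same route as the paper: replace each (forbidden) classical move $x$ by the quantum move $\langle x\mid y\rangle$ with $y$ the symmetric partner, observe that the two resulting branches are exact copies so play can be tracked on a single representative, note that quantum moves are legal in every flavor while every opponent move in flavor $\phi$ is already a legal flavor-$D$ move, and conclude by induction. Your write-up is considerably more careful than the paper's one-line argument --- in particular the explicit ``symmetric doubling'' coupling invariant and the remark that terminality coincides across flavors because the hypothesis guarantees an eligible quantum move whenever a classical one exists --- but the underlying idea is identical.
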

\begin{proof}
This holds trivially, as if D chooses a quantum move at any given point, then it is fine, and if $D$ chooses classical move $x$, then it can also choose a quantum move $[p \mid y]$ instead, and those realizations are exact copies, so we may continue play under just one, and as this property is inductive, we are done.
\end{proof}

While this lemma may appear fairly weak, it is still usable for a future hardness result \cref{NodeKaylesAllFlavors}. However, we would like a stronger general statement. Thus, we improve this.

We define a \textit{reduced game tree} as a game tree where we remove all positions where arbitrary play can is sufficient to win the game. If it ends on the "wrong" winner, then we add a pair of leaves below it. This doesn't change the outcome class.

\begin{lemma}\label{StrongFlavor}
For a set of entangled realizations $G$, if the following property holds recursively in D, following at a subset of winning strategies $W$ of sets of winning strategies $X$ (for the reduced game tree) that are sufficient to cover any winning strategy, then we can reduce to flavors B, C, and C' from D if the following property holds:

For any winning move $x$ that, after played, leaves a set of winning strategies $W'$ that coves each possible winning strategy, there is another classical move $y$ such that each strategy in $W'$ is still a winning strategy
\end{lemma}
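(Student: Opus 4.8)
The plan is to lift a flavor-$D$ winning strategy for the winning player into a legal winning strategy under the more restrictive flavors $B$, $C$, and $C'$, thereby showing that the outcome class is preserved across these flavors and that any $\Sigma_2$-hardness (or other hardness) for $D$ transfers. The device is the same substitution used in \cref{WeakFlavor}, but driven by the weaker hypothesis: whenever the flavor-$D$ strategy drawn from the covering subset $W$ directs the winning player to make a \emph{classical} move $x$ from the current superposition, we instead have them play the quantum move $\langle x \mid y\rangle$, where $y$ is the partner classical move supplied by the hypothesis; prescribed quantum moves are kept verbatim. Since $\langle x \mid y\rangle$ is an eligible superposition, it is legal in $C$, $C'$, and $D$; for flavor $B$, the very availability of this quantum move is what forbids the bare classical move, so the replacement is the canonical legal choice (and if \emph{no} quantum move is available at all, flavor $B$ permits $x$ directly). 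Hence legality in each target flavor is immediate.

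The heart of the argument is an induction maintaining the invariant that, after the winning player's move, the current superposition admits a set $W'$ of winning strategies covering every continuation available to the opponent. The hypothesis is exactly what preserves this invariant: the partner $y$ is chosen so that adjoining the realization(s) produced by $y$ leaves each strategy in $W'$ still winning, i.e. the $y$-realizations only ever appear as \emph{extra winning} realizations and never open a new winning line for the opponent. I would verify the inductive step by cases on the losing player's reply---classical or quantum, collapsing or non-collapsing. In each case the covering property of $W'$ furnishes a response $x'$ for which the hypothesis again yields a partner $y'$, so that replacing $x'$ by $\langle x' \mid y'\rangle$ re-establishes the invariant. Because $W$ covers all winning strategies of the reduced game tree, the substitution can always be applied recursively.

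Termination is controlled through the reduced game tree. As soon as play enters a region excised from the reduced tree---a position where arbitrary play already determines the winner---\cref{mainDoesn'tMatter} guarantees that every remaining realization, including each spurious realization contributed by a partner move, terminates with the intended winner regardless of subsequent play in any of $B$, $C$, and $C'$. Thus the game cannot end on the wrong parity, and the lifted strategy wins. Since each invocation of the hypothesis only adds winning realizations and the recursion descends the (finite-height) reduced tree, the construction halts with a bona fide winning strategy in the restricted flavor, completing the reduction.

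The main obstacle will be controlling the extra realizations that the partner moves inject: they enlarge the superposition and so grant the losing player additional legal moves---moves feasible in a $y$-realization but not in the $x$-realizations. I would argue such moves are never harmful via a trichotomy: either the move collapses the spurious realization, or it keeps the position inside the covered set $W'$, or it drives play into the reduced tree's ``doesn't-matter'' region where \cref{mainDoesn'tMatter} forces the win. Making this trichotomy airtight while simultaneously respecting the distinct restrictions that define $C$ (safe moves only), $C'$ (respectful moves only), and $B$ (classical only when no quantum move exists) is the delicate part, since the same substitution must remain both legal and outcome-preserving under all three simultaneously.
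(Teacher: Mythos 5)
Your proposal follows essentially the same route as the paper's proof: replace each classical move $x$ prescribed by the flavor-$D$ strategy with the quantum move $\langle x \mid y\rangle$ using the hypothesized partner $y$, argue inductively that the covering set $W'$ of winning strategies is preserved, and invoke \cref{mainDoesn'tMatter} once play reaches the region excised from the reduced game tree. Your treatment is somewhat more careful than the paper's (which asserts the preservation of $W'$ ``by definition'') in that you explicitly flag and address the extra realizations injected by the partner moves and the per-flavor legality of the substitution, but the underlying argument is the same.
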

\begin{proof}
Just as before, this is mostly trivial. We will construct a winning strategy for usingly only quantum moves. Whenever D selects a quantum move, we don't change anything. At any junction where flavor D chooses move $x$ classically, we can instead choose $[x, \mid y]$ instead, where $y$ is the paired winning move. Then, by definition, both game trees have $W'$ has sets of possible winning strategies, which then inductively becomes the set $W$ for the next move.

Since this strategy only uses quantum moves, the other variants can all use this, regardless of tree structure.

Once we hit the end of the set of moves, then quantumness doesn't matter, so by \cref{mainDoesn'tMatter}, it is a win in in all realizations other than A.
\end{proof}

Note that we only can't get A in general. If one can show that the quantumness doesn't matter portion of the game can be won by A, like we did for Avoid True, then it can be extended as well.


\section{Quantum Transformation of \cclass{PSPACE}-Complete Games}\label{sec:Reductions}

So far, we've seen that when we introduce quantumness to classically tractable combinatorial games, sometimes they remain tractable (e.g., \ruleset{Undirected Geography} with classical starts) and sometimes they become intractable (e.g., \ruleset{Quantum Nim} and \ruleset{Quantum Undirected Geography} with poly-wide quantum starts).
The next logical direction to explore then is the complexity landscapes of quantum
transformation of
 classically intractable games.
Is their intractability 
always preserved?
 Is their complexity ever lost?
 In this section, we hope to
 shed some light on
 these questions with various concrete games.


\subsection{Quantum Preservation and Quantum Collapse}

\label{Sec:QuantumCollapses}

It is well known that the Quantified Boolean formula problem (QBF)---of determining whether a quantified Boolean formulae is true or false---can be viewed as a classical combinatorial game.
Technically, combinatorial games must also fulfill the ``normal play'' requirement,  meaning that a player must lose {\em if and only if} when they are unable to make a move.
There are several equivalent ways to do this (to transform logic QBF into combinatorial games), and are all classically polynomial-time reducible to each other.
Complexity-theoretically, QBF is the canonical complete problem for \cclass{PSPACE}, and thus all these QBF variants are \cclass{PSPACE} combinatorial games;
they form the bedrock of \cclass{PSPACE} reductions.

However, because quantum games
  are more subtly dependent
  on the explicit move definitions
  in the ruleset,
  these known reductions from
   the ``classical world''
  don't necessarily continue to hold.
In this subsection, we discuss
  various ``end-of-QBF'' game variants.
We will give intractability results
  for all ``natural'' \ruleset {Quantum QBF} variants,
 some of which is critical to our intractability analysis of \ruleset{Quantum Nim} in the previous section.
We will also highlight the subtle differences among these \ruleset {Quantum QBF} variants.

In the literature, the Quantified Boolean formula problem is also known as the Quantified Satisfiability Problem (QSAT).
So, QBF and QSAT are used interchangeably.
However, in this paper,  we will---for clarity of presentation---use the following ruleset naming convention for \ruleset{QBF} and \ruleset{QSET},
in order to denote two different ruleset families of combinatorial games rooted in the Quantified Boolean formula problem.

\begin{itemize}
\item {\bf The \ruleset{QBF}  Family}: We will use \ruleset{QBF} to denote the family of
the combinatorial games that {\em textually} implements the Quantified Boolean formula problem: An instance of \ruleset{QBF} is given by a CNF $f$, whose clauses may contain both positive and negative literals, over two ordered lists of Boolean variables.
We have two cases depending on whether or not one of the list has one more variables than the other one:
\begin{itemize}
\item {\bf Case Even}:
$(T_1,...,T_n)$ and $(F_1,...,F_n)$
\item {\bf Case Odd}
$(T_1,...,T_n, T_{n+1})$ and $(F_1,...,F_n)$
\end{itemize}
So, the quantified boolean formulae is, respectively:
\begin{eqnarray*}
\mbox{\bf Case Even:} & \exists T_1\forall F_1\cdots \exists T_n\forall F_n & f(T_1,...,T_n,F_1,...,F_n) \\
\mbox{\bf Case Odd:} &\exists T_1\forall F_1\cdots \exists T_n\forall F_n\exists T_{n-1} & f(T_1,...,T_n, T_{n+1}, F_1,...,F_n)
\end{eqnarray*}
In this game, one player---{\em Player True}---aims
  to satisfy the CNF formulae,
  while the other player---{\em Player False}---wants the formulae to be unsatisfied.
Starting with True, the two players alternate turns setting their next variables
to \texttt{True} or \texttt{False}.
In the other words, in their respective $i^{th}$ turn, Player True sets variable $T_i$ and Player False sets variable $F_i$.
The variants in this family, as we shall define later, differ in the details of the termination condition in transforming logic QBF into a combinatorial game.

\item {\bf The \ruleset{QSAT} Family}: We will use \ruleset{QSET} to denote the family of
{\em partition free assignment} \ruleset{QBF} games, which
relaxes on the variable that they can set at each turn.
Like in \ruleset{QBF}, in the game instance of \ruleset{QSET} defined by a CNF formulae $f$,
two players---Player True and Player False---take turns to set their own variables;
Player True can only set variables in $\{\cdots T_i\cdots\}$ and Player False can only set variables in $\{\cdots F_i \cdots \}$.
Again, we have two cases---Case Even and Case Odd---depending on whether Player True has one more variables than Player False.
For the game, Player True aims to satisfy the CNF formulae,  while Player False wants the formulae to be unsatisfied.
Unlike in \ruleset{QBF}, however, the players can ``freely" set any of their unassigned variables to
\texttt{True} or \texttt{False}, as opposed to setting variables according to the prescribed order.
The variants in this family also differ in the
details of termination condition in transforming logic QBF into combinatorial games.
\end{itemize}



We now summarize the variants details the termination of QBF/QSAT games.

\begin{itemize}
\item {\bf Termination Rules}:
In the first three variants, players continues until to play the game until all variables are assigned.
Then the following rules governs how games terminate.

\begin{itemize}
\item In \ruleset{Phantom-Move QBF}/\ruleset{Phantom-Move QSAT}, after
all variables have been assigned,
the final player have a feasible move---called the phantom move---to make
if and only if they have a winning assignment.
\end{itemize}
This variant of the game is implicitly invoked in several classic reductions,
  including Schaefer’s Avoid True reduction \cite{DBLP:journals/jcss/Schaefer78} as seen in the previous section.
In the next two variants,
after setting all of the variables,
False selects the clause they believe True failed to satisfy.
\begin{itemize}
\item In \ruleset{ClauseSelector QBF}/\ruleset{ClauseSelector QSAT},
this move is only available if one exists, and if False plays here they win.
\item In \ruleset{LiteralSelector QBF}/\ruleset{LiteralSelector QSAT},
False can always choose any clause, then True attempts select the variable they did satisfy.
If they can, False has no moves so True wins.
If they can't, then True has no moves and loses.
\end{itemize}
Both are common interpretation used in reductions.
For example, \ruleset{ClauseSelector QBF} is used in the original \ruleset{Node-Kayles} reduction \cite{DBLP:journals/jcss/Schaefer78}, and \ruleset{LiteralSelector QBF} is used in the \ruleset{Geography} reductions \cite{LichtensteinSipser:1980}.

\item {\bf The Mercy Rules}: 

The last three variants define when the game terminates early after a player has achieved a victory condition (and to stop the losing player to continue playing in a completely hopeless position).
The victory condition for Player True is having a true literal in every clause, and for Player False is having a clause with all false literals in it.
The only difference between the following three is how the mechanic for ending the game early works.
\begin{itemize}
\item 
In \ruleset{TKO QBF}/\ruleset{TKO QBF},
the game ends automatically as soon as a victory condition is reached.
\item
In \ruleset{Haymaker QBF}/\ruleset{Haymaker QSAT},
a player may opt to, instead of a normal variable selection, make a 
``game ending blow"which is only available after the player's win condition has been met.
After the move is made, the opponent can no longer move.

\item In \ruleset{KO QBF}/\ruleset{KO QSAT},
while a player makes a move, if they have a achieved a winning state, they can also choose to end the game,
forcing the other player to be unable to make a move.
\end{itemize}
\end{itemize}

For complexity-theoretical analysis,
 we first focus on the more restrictive
\ruleset{QBF} family, in which players must set their variable according to the prescribed order.

\begin{theorem}[Quantum Collapses of \ruleset{QBF}: Classical Starts]
\label{thm:PhantomMove}
For quantum flavors $C'$ and $D$, and superposition width $2$,
determining the outcome class of \ruleset{Quantum Phantom-Move QBF} with a classical start is $\Sigma_2$-complete in Case Even, and  is $\Pi_2$-complete in Case Odd.
\end{theorem}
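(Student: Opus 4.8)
The plan is to show that quantum play collapses the $2n$ quantifier alternations of \ruleset{Phantom-Move QBF} down to two, by proving that in flavor $D$ the value of the game equals a fixed two-alternation predicate over the two players' variable blocks. Write $\vec{T}$ for Player True's variables and $\vec{F}$ for Player False's, and let $f$ be the CNF. I claim that in Case Even (final variable assigned by False) the first player wins iff $\exists \vec{T}\,\forall \vec{F}:\ f(\vec{T},\vec{F})$, and in Case Odd (final variable assigned by True) the first player wins iff $\forall \vec{F}\,\exists \vec{T}:\ f(\vec{T},\vec{F})$. These are, respectively, $\Sigma_2$ and $\Pi_2$ predicates, which immediately yields membership; and since an arbitrary $\Sigma_2$-QBF (resp.\ $\Pi_2$-QBF) instance can be encoded directly as such a game by letting True own the outer existential block and False the universal block---padding with dummy variables appearing only in a tautological clause $(v\vee\neg v)$ to fix the parity and the required ownership---it also yields hardness.

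The characterization rests on two structural facts about quantum play, one per player. First, the \emph{blocking} player (True in Case Even, False in Case Odd), whose only route to victory is to deny the opponent their conditional final move, gains nothing from superposition: to win they must force their target truth value of $f$ in \emph{every} surviving realization, and a quantum move only multiplies the realizations they must control. Moreover, because the opponent will be playing superpositions that the blocker cannot distinguish realization-by-realization, the blocker's classical choices are applied uniformly by $\filter$ across all realizations and therefore reduce to committing to a single fixed assignment, independent of the opponent. Second, the \emph{finishing} player (False in Case Even, True in Case Odd) can play the uniform superposition $\langle v=\texttt{true}\ |\ v=\texttt{false}\rangle$ on each of their variables in any order; since non-final assignments in \ruleset{Phantom-Move QBF} are always legal, no realization collapses before the end, so this keeps all $2^{k}$ assignments to their $k$ variables alive simultaneously.

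Combining the two facts gives the predicate. For Case Even: if $\exists\vec{T}\,\forall\vec{F}:f$, True commits to such a $\vec{T}$ and every realization ends with $f$ true, so False can never legally take the conditional last move and loses; conversely, if True wins then in particular True wins against False's uniform-superposition strategy, against which True's play reduces (by the first fact) to a fixed $\vec{T}$ that must satisfy $f(\vec{T},\vec{F})$ for \emph{all} $\vec{F}$, i.e.\ all those present. Case Odd is the mirror image, with False committing and True superposing, so that False wins iff $\exists\vec{F}\,\forall\vec{T}:\neg f$ and hence True wins iff $\forall\vec{F}\,\exists\vec{T}:f$. To cover flavor $C'$ as well as $D$, I will observe that the strategies above respect the game-tree shape required by \cref{StrongFlavor}: the finishing player's win uses only quantum moves, and once all variables are committed \cref{mainDoesn'tMatter} shows quantumness no longer matters, so the $C'$ value agrees with the $D$ value.

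The main obstacle is the first structural fact, together with the care needed around the conditional final move in superposition. I must argue rigorously that the blocker can never profit from a quantum move, nor from attempting to adaptively exploit the finishing player's choices. This is exactly where the asymmetry ``the finishing player needs one good realization, the blocker needs all realizations to cooperate'' must be pinned against the precise $\filter$/collapse semantics of flavor $D$, and where I must confirm that the interleaved move order of \ruleset{QBF} (as opposed to the free order of \ruleset{QSAT}) gives the blocker no additional leverage once the opponent commits to superposing.
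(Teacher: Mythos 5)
Your proposal is correct and follows essentially the same route as the paper's proof: the finishing (phantom-move) player optimally plays only uniform quantum assignments, which never collapse before the final conditional move and so keep all of their assignments alive, while the blocking player gains nothing from superposition and reduces to committing a fixed classical assignment, collapsing the game value to the stated $\exists\forall$ (resp.\ $\forall\exists$) predicate with trivial two-way reductions to $\Sigma_2$-SAT and $\Pi_2$-SAT. The only difference is presentational: you flag more explicitly the step that needs rigor (why the blocker cannot profit from quantum or adaptive play) and you spell out the flavor-$C'$ case via \cref{StrongFlavor} and \cref{mainDoesn'tMatter}, which the paper's proof leaves implicit.
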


\begin{proof}
Notice that the player who makes the final move (the "phantom move") will always have a collapse so that the variables are assigned in a way that they will win (if it is possible).
And there is no point at which quantum moves can collapse before that point.
As such, the optimal move for this collapsing player will be to make exclusively quantum moves, and the optimal move for other player will be to never make a quantum move.

So, one player makes a variable assignment of half of the variables, and then only wins if every other assignment of the other variables are winning positions for them. Otherwise, the phantom move player wins.
If the phantom move player is False, then this is exactly a $\Sigma _2$ SAT problem.
Otherwise, it is exactly a $\Pi _2$ SAT problem.

For formal completeness, we note the trivial two way reduction.
In Case Even,
we can create a $\Sigma_2$ SAT problem, giving True's variables as the "exists" variables, and False's variables as the "for all" variables.
In Case Odd,
we can use a $\Pi_2$ SAT solver by similarly giving False's variables as the "for all" variables, and True's variables as the "exists" variables.
As previously described, both of these will output which player wins correctly.

For the other direction, we can do the same with both, only in the other direction. In other words, make all "exist" variables into True's variables and all "for all" variables into False's variables. Then, we have a corresponding \ruleset{Quantum Phantom Move QBF} instance.
\end{proof}

Note that in flavor $A$, \ruleset{Quantum Phantom-Move QBF} is trivially in linear time, since no player can ever take the phantom move (because it is a classical move and classical moves are not allowed in flavor $A$), so the game is just based on parity.
Under flavor $B$, it is pretty obviously NP-complete for the true player going last (Case ODD) and co-NP-complete if the false player goes last (Case Even).
Since all players always must take the only quantum move available to them, and then whoever collapses will win so long as there is a single assignment which evaluates to true or false (depending on who makes that move).
This is just a SAT problem when True goes last, and otherwise is \cclass{co-NP} complete \ruleset{Tautology} when False goes last.

We can also extend the result to the quantumized complexity of \ruleset{Quantum Phantom-Move QBF} with poly-wide quantum starts.

\begin{corollary}[Quantum Collapses of \ruleset{QBF}: Poly-Wide Quantum Starts] \label{cor:Poly-WidePhantomMove}
For quantum flavors $C'$ and $D$, and superposition width $2$,
determining the out come class of \ruleset{Quantum Phantom-Move QBF} with a poly-wide quantum start is $\sum _2$-complete in Case Even , and is $\Pi _2$-complete in Case Odd.
\end{corollary}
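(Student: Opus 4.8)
The plan is to obtain the corollary directly from Theorem \ref{thm:PhantomMove} by arguing that a poly-wide start leaves the strategic picture established there intact, enlarging only the bookkeeping at the bottom of the game by a polynomial factor. Hardness is immediate: a classical start is a $1$-wide quantum position and hence a special case of a poly-wide quantum start, so the $\Sigma_2$-hardness (Case Even) and $\Pi_2$-hardness (Case Odd) of Theorem \ref{thm:PhantomMove} transfer with no new construction.

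For membership I would reuse the two structural facts driving the proof of Theorem \ref{thm:PhantomMove}. First, in \ruleset{Phantom-Move QBF} no realization can collapse before the phantom move, since every variable assignment is feasible during the assignment phase; consequently the non-phantom player never gains from a quantum move (it only enlarges the final superposition, handing the phantom player more candidate realizations to collapse to), while the phantom player optimally keeps all assignments of their own variables alive by always playing $\langle v:=\texttt{true}\ |\ v:=\texttt{false}\rangle$. These facts hold verbatim when starting from a superposition of $k$ realizations, and (for flavor $C'$) the phantom move collapses only realizations in which the player has not yet won, i.e.\ only terminal realizations, so it stays respectful and therefore legal.

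Granting this, the outcome from a poly-wide start whose realizations are indexed by $b\in[k]$ unfolds as follows in Case Even. Player True (non-phantom) applies a single classical assignment to all their still-free variables, uniformly across the superposition, while Player False (phantom) fans out over every assignment of their free variables; because the same uniform classical moves and the same superposing moves act on all branches, the final superposition is exactly the disjoint union of the per-branch superpositions. Player True wins precisely when no realization of this union is a win for False, i.e.\ when the chosen True-assignment makes the CNF \texttt{true} in every branch and for every completion of False's variables:
\[
\exists\, T_{\mathrm{free}}\ \forall\, b\in[k]\ \forall\, F_{\mathrm{free}}\ :\ f(\mathrm{prefix}_b,\ T_{\mathrm{free}},\ F_{\mathrm{free}}).
\]
Since $k$ is polynomial, the clause $\forall\,b\in[k]$ is a polynomial conjunction (equivalently $\lceil\log k\rceil$ extra universal Boolean variables) that folds into the single universal block and adds no alternation, placing Case Even in $\Sigma_2$. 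The symmetric argument for Case Odd—where True is the phantom player, the outer quantifier for ``True wins'' becomes universal, and the branch index is absorbed into the inner existential disjunction $\exists\,b\in[k]$—places it in $\Pi_2$. Combined with the inherited hardness, this gives completeness.

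The one delicate step I would write out most carefully is precisely this non-interaction claim: that the $k$ starting branches genuinely collapse into a single conjunction (Case Even) or disjunction (Case Odd) rather than interacting to create deeper alternation. This rests on the observation that all branches evolve under identical uniform classical assignments by the non-phantom player and identical superposing moves by the phantom player, so they never exchange information, the final state is the disjoint union of per-branch states, and the phantom player's terminal collapse is computed branchwise. Verifying this is where the argument must be stated precisely, but it introduces no quantifier beyond those already present in Theorem \ref{thm:PhantomMove}.
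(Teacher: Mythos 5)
Your proposal is correct and follows the same route as the paper: hardness is inherited because a classical start is a special case of a poly-wide quantum start, and membership follows because the poly-wide start only attaches a polynomial-size Boolean combination over the branch index to the predicate already extracted in the proof of Theorem~\ref{thm:PhantomMove}, adding no quantifier alternation. The paper compresses this into a two-line argument, asserting that the players may simply play as if the start were classical with formula $\bigvee_i f_i$; you instead write out the resulting $\Sigma_2$/$\Pi_2$ predicates explicitly, and this is where your version is actually sharper. In Case Even the phantom player is False, so True must keep \emph{every} realization of the final superposition satisfied, and the branch index folds into the universal block as a conjunction $\bigwedge_b f_b$; only in Case Odd, where True is the phantom player and needs just \emph{one} surviving winning realization to collapse to, does the disjunction $\bigvee_b f_b$ appear. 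The paper's uniform use of $\bigvee_i f_i$ is thus imprecise for Case Even (harmless for the complexity classification, since either connective absorbs into the adjacent quantifier block without adding alternation), and your branchwise non-interaction argument together with the check that the phantom move collapses only terminal realizations, and hence remains respectful in flavor $C'$, supplies exactly the details the paper leaves implicit.
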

\begin{proof}
Given a poly-wide quantum position $\mathbb{B}$ in which the $i^{th}$ realization is defined by CNF $f_i$.
The players make decision as if they play for \ruleset{Quantum Phantom-Move QBF} with the classical start for logic function $\bigvee f_i$.
\end{proof}

\begin{theorem}
For quantum flavors $D$ and $C'$, superposition width $2$,
determining the outcome class for
\ruleset{Quantum Literal Selector QBF} with a classical start
is \cclass{co-NP} complete.
    \label{thm:selector}
\end{theorem}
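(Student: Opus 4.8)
The plan is to show that the terminal clause-then-literal phase of \ruleset{Literal Selector QBF} strips away exactly one level of quantifier alternation relative to \ruleset{Phantom-Move QBF} (\cref{thm:PhantomMove}), leaving a single universally-quantified block over one player's variables and hence a \cclass{co-NP} decision problem. First I would pin down optimal quantum play, reusing two structural facts from the proof of \cref{thm:PhantomMove}: (i) no realization can collapse during the variable-assignment phase, since the prescribed order sets the same variable in every realization, so each assignment move is legal in all realizations and the superposition width only grows; and (ii) whichever player performs the last \emph{collapsing} move effectively resolves their own block of variables \emph{after} seeing the opponent's commitments. The new ingredient is that the game does not end at the assignment phase: \emph{Player False} first names a clause and then \emph{Player True} makes the final collapsing move by naming a satisfied literal, which is legal precisely in those realizations where that literal is \texttt{true}. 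I would argue that this makes a pure quantum policy weakly dominant for \emph{True} (play each of True's variables as $\langle \texttt{true}\ |\ \texttt{false}\rangle$, deferring every True-choice to the literal move) and a pure classical policy dominant for \emph{False}, so the game value is governed by False's single committed assignment together with the adversarial clause challenge.

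Second, I would read off membership in \cclass{co-NP}. The game is polynomially short (game-tree height $O(n)$), so \cref{Theo:Upperbound} already supplies a \cclass{PSPACE} ceiling; the task is to sharpen it using the canonical strategies above. \emph{False} wins if and only if there is a classical assignment to False's variables together with a clause that \emph{True} cannot defend under the induced superposition, and this pair is a polynomial certificate verifiable in polynomial time. Thus ``\emph{False} wins'' lies in \cclass{NP} and the outcome problem lies in \cclass{co-NP}.

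Third, for \cclass{co-NP}-hardness I would reduce from a $\Pi_1$-complete problem (the complement of \cclass{SAT}), engineering an instance whose \emph{False}-variables carry the universal block and whose clause gadgets render some clause undefendable exactly when False's committed assignment satisfies the source formula. The main obstacle, and the real crux of the construction, is that \emph{True} can trivially defend any clause mentioning one of True's superposed variables; the reduction must therefore route the decisive challenge through clauses controlled solely by False's committed values, while still forcing a \emph{global} (un)satisfiability condition to surface in a \emph{single} local clause challenge. This is the delicate, Schaefer-style bookkeeping that carries the weight of the argument, and it is where one must verify that True's final free collapse cannot be abused to escape a genuinely falsified clause.

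Finally, I would lift the bound from flavor $D$ to $C'$. Since the constructed instance is built to have the ``properly shaped'' game tree required by \cref{StrongFlavor}---each decisive classical move is paired with a quantum move leaving an equivalent set of winning continuations, and the post-assignment region is a ``quantumness-doesn't-matter'' zone by \cref{mainDoesn'tMatter}---the same lower bound transfers to $C'$, matching the \cclass{co-NP} upper bound and completing the proof. I expect the hardness gadget of the third step to be the principal difficulty; the membership and flavor-transfer steps should follow the established templates.
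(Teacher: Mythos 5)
Your overall strategy matches the paper's: True plays all-quantum (every clause touching a quantum-assigned True variable is defendable at the literal step, since some realization makes that literal true), False plays all-classical, membership follows from a polynomial certificate, and hardness comes from \textsc{Tautology}. The one place you go astray is in declaring the hardness gadget ``the principal difficulty'' requiring ``delicate, Schaefer-style bookkeeping.'' No such construction is needed: once you have observed that clauses containing a True variable are always defendable and can be discarded, the game is \emph{literally} the \textsc{Tautology} problem on the remaining clauses --- False wins iff there is an assignment of False's variables under which some clause over only False's variables is all-false, and a CNF is falsified by an assignment exactly when a single clause is, so the ``local clause challenge'' already captures global unsatisfiability with no routing required. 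The paper's reduction from \textsc{Tautology} is correspondingly trivial: make every variable of the source formula a False variable and pad with dummy True variables and tautological dummy clauses to preserve the alternation format. Your worry that True's final collapse could ``escape a genuinely falsified clause'' also dissolves: a clause with no True variables and all False literals classically false is false in every realization, so True has no legal literal to name. The flavor-$C'$ transfer you describe is fine and consistent with how the paper handles it.
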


\begin{proof}
Since Player True wins so long as the variable in the selected clause has a quantum variable assignment, and can't be collapsed before then, they will always make a quantum move, and Player False will never make a quantum variable assignment.
So we can, WLOG, remove every clause that has a True's variable in it, which will always be satisfied.
Now, False  wins if and only if there exists an assignment to make a clause unsatisfied, and thus make the SAT evaluate to false. This is is exactly \textsc{Tautology}, the canonical co-NP-complete problem. Reducing between the two is trivial, since it involves doing nothing one way, and creating dummy clauses and variables for Player True when reducing from \textsc{Tautology} to \ruleset{Quantum Literal Selector QBF}.
\end{proof}

\begin{theorem}
\label{thm:clauseselector}
For quantum flavors $D$, $C'$, and $C$, and superposition width $2$,
determining the outcome class for
    \ruleset{Quantum Clause Selector QBF} with a classical start is \cclass{NP} complete.
\end{theorem}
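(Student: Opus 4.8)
The plan is to show that, dually to the \ruleset{Quantum Literal Selector QBF} analysis of \cref{thm:selector}, the quantum dynamics let one player ``commit'' unconditionally and so collapse the QBF alternation down to a single existential quantifier. Here the committing player is \emph{False}: I claim False's optimal strategy is to play every one of its variables as the quantum move $\langle \texttt{true}\ |\ \texttt{false}\rangle$ while True plays only classical assignments. Let $\hat{f}$ denote the CNF obtained from $f$ by deleting every \emph{False}-literal (each occurrence of some $F_i$ or $\neg F_i$) from every clause, so that $\hat{f}$ is a formula over True's variables alone. The heart of the proof is the claim that the first player (True) wins \ruleset{Quantum Clause Selector QBF} if and only if $\hat{f}$ is satisfiable; since satisfiability of a CNF is the canonical \cclass{NP}-complete problem, this delivers membership and hardness simultaneously.

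To establish the characterization I would argue both directions as explicit strategies. If $\hat{f}$ is satisfiable, fix a satisfying assignment $\tau^{\ast}$ of True's variables and have True play $\tau^{\ast}$ classically, move by move. Since $\tau^{\ast}$ satisfies $\hat{f}$, every clause of $f$ contains a True-literal that $\tau^{\ast}$ makes \texttt{true}, and such a literal is true in \emph{every} realization regardless of what False does (True-literals do not depend on False's variables). Hence after the variable phase no clause is unsatisfied in any realization, False has no clause to select, and False loses. Crucially, all of True's moves here are classical assignments that collapse no realization, so they are \emph{safe} in the sense of \cref{Defi:SafeRespectful}, making this direction valid in flavors $C$, $C'$, and $D$ alike. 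Conversely, if $\hat{f}$ is unsatisfiable, have False play all its variables quantumly; the resulting superposition ranges over all assignments $\phi$ of False's variables against whatever choices True has made. For any True-part $\tau$ occurring in a realization, unsatisfiability of $\hat{f}$ supplies a clause $c$ all of whose True-literals are false under $\tau$; choosing the $\phi$ that falsifies the False-literals of $c$ yields a realization in which $c$ is wholly unsatisfied. False then selects $c$ and wins.

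For the lower bound I would give the immediate reduction from \textsc{SAT}: given a CNF $g(\vec{x})$, build the \ruleset{Clause Selector QBF} instance whose formula is $g$ with \emph{all} variables assigned to True (adding one dummy False variable occurring in no clause to fix parity, which leaves $\hat{f} = g$). With no genuine False variables there are no quantum moves to make and the game is literally classical \textsc{SAT}: True wins iff $g$ is satisfiable, and since no realization ever collapses the reduction is correct in every flavor. Combined with the characterization, ``does True win?'' is exactly ``is $\hat{f}$ satisfiable?'', placing the problem in \cclass{NP} and proving it \cclass{NP}-complete for flavor $D$.

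The main obstacle is the extension to flavors $C'$ and $C$, and it is localized entirely in False's winning move. In flavor $D$ False may freely select an unsatisfied clause $c$ even though this collapses the realizations in which $c$ is satisfied; but under $C'$ that collapsing move is forbidden when one of the collapsed realizations is non-terminal (it still has some \emph{other} unsatisfied clause), and under $C$ it is forbidden if it collapses anything at all. I would handle this exactly as the all-flavor results are obtained for \ruleset{Quantum Avoid True}: verify that the reduced game tree carries the paired-move symmetry required by \cref{StrongFlavor}, so that each committing classical assignment can be matched by an equivalent quantum move preserving the same family of winning continuations and the $D$-analysis transfers, and then invoke \cref{mainDoesn'tMatter} once the position reaches the ``quantumness doesn't matter'' endgame where a single winner is forced. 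Confirming that False's clause-selection survives the safe/respectful restriction -- or that it can be replaced by a quantum ``double-clause'' gadget analogous to the \textsc{stop}-vertex trick used after \cref{fig:QGeographyPolyWideHardGadget} -- is the delicate step the formal proof must pin down.
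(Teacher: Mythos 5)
Your proposal follows essentially the same route as the paper's proof: argue that False's optimal play is all-quantum, delete False's literals from every clause to obtain a formula $\hat{f}$ over True's variables alone, and identify True's winnability with satisfiability of $\hat{f}$, which gives membership and hardness at once. Your write-up is more explicit about the two directions of the characterization than the paper's, but the decomposition is the same.

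There is one concrete gap in your ``$\hat{f}$ unsatisfiable $\Rightarrow$ False wins'' direction. If a clause $c$ of $f$ contains some False variable in both polarities ($F_i$ and $\neg F_i$), then $c$ is satisfied in \emph{every} realization no matter how False plays, so False can never select it; yet your $\hat{f}$ retains the True-only residue of $c$, which may be the sole reason $\hat{f}$ is unsatisfiable. In that case the assignment ``$\phi$ that falsifies the False-literals of $c$'' you invoke does not exist, and the characterization as stated is false. The paper forecloses exactly this by assuming at the outset, WLOG, that no clause contains a False variable both negated and unnegated; you need the same preprocessing (delete such vacuous clauses before forming $\hat{f}$) for your claim to hold. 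A second, smaller point: in the \ruleset{QBF} (ordered) family the players alternate $T_1, F_1, T_2, F_2, \ldots$, so your SAT-hardness instance needs one dummy False variable per True variable rather than a single parity dummy; this is cosmetic. On the flavors $C$ and $C'$, you correctly isolate the only collapsing move in the game---False's final clause selection---as the delicate step and leave it as a sketch; be aware that the paper's own proof does not address this either, so completing that argument (e.g.\ by replacing the classical selection with a quantum selection of two clauses each unsatisfied in some realization, or by the symmetry lemma you cite) is genuinely outstanding work rather than something you can defer to the source.
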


\begin{proof}
This closely follows the proof of \ruleset{Quantum Literal Selector QBF}.
We assume WLOG that that no clauses have a false variable appear both negated and unnegated (since those cases are trivially a win for True). Note that false just needs a single realization to have a clause with all false literals. So they will always make a quantum move, since any classical play they could make could also be reached by playing quantum and choosing the same clause in the end. And since any clause with quantum literals can assumed to be false, we can WLOG remove all of false's variables, and True can win iff there exists a satisfying assignment for the rest of the variables. This is exactly a CNF problem.
\end{proof}

For rulesets $A$ and $B$, both selector variants are trivially in linear time and constant time, respectively, since both players are forced to make quantum moves until the end, upon which any two clauses that isn't a tautology works for \ruleset{Quantum Clause Selector QBF}, and for \ruleset{Quantum Literal Selector QBF}, any two literals will suffice, so it is always a True win.

\begin{theorem}
\label{thm:kyles}
For flavor $D$ and superposition width $2$,
   \ruleset{Quamtum TKO QBF} with a classical start is a \cclass{PSPACE}-complete game.
\end{theorem}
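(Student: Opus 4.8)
The plan is to prove the two inclusions of \cclass{PSPACE}-completeness with very different effort. Membership is immediate: every \ruleset{QBF} instance has a game tree of height bounded by the number of variables, so it is polynomially-short, and hence by \cref{Theo:Upperbound} its quantum lift lies in \cclass{PSPACE}, \ruleset{Quantum TKO QBF} included. All the work is in the hardness direction, where I would reduce from the classical, fully-alternating \ruleset{QBF} game (taking the quantifier order and parity to match the \ruleset{TKO QBF} termination convention), which is \cclass{PSPACE}-complete. Because a classical instance is literally a \ruleset{Quantum TKO QBF} instance with a classical start, it suffices to show that the quantum lift has the same winner as the classical game; equivalently, that \ruleset{TKO QBF} does \emph{not} suffer the quantum collapse exhibited for \ruleset{Phantom-Move QBF} in \cref{thm:PhantomMove}.

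The conceptual crux is the instant ``technical-knockout'' termination. In \ruleset{Phantom-Move QBF} the collapse to $\Sigma_2$/$\Pi_2$ happens precisely because no realization becomes terminal until all variables are assigned: the deferring player can play the quantum move $\langle v=\texttt{true}\ |\ v=\texttt{false}\rangle$ on each of their turns, hold every branch open, and only commit at the very end, effectively reordering the quantifiers into a single block. The plan is to show that this deferral is worthless under \ruleset{TKO QBF}, where a victory condition is tested after \emph{every} move. I would formalize the winner $P$'s play as an induction on the number of unassigned variables, maintaining the invariant that, after each of $P$'s moves, $P$ holds a single classical line under which every surviving realization is a position still won by $P$ in the classical sub-game. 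Against a classical reply this follows directly from $P$'s classical \ruleset{QBF} strategy. The force of the mercy rule is that a deferral by the opponent can never be carried all the way to a full assignment: once $P$ is able to complete their win condition, the game terminates and the opponent is denied the final ``collapse'' move on which the \ruleset{Phantom-Move} argument depends, and any realization the opponent has already lost is one in which their prescribed next assignment is illegal, so it collapses out rather than helping them. I would also record that \cref{mainDoesn'tMatter} resolves the tail of the game, since once the outcome of every surviving realization is fixed the quantumness no longer matters.

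The hard part will be the case where the opponent answers with a genuine quantum move $\langle v=\texttt{true}\ |\ v=\texttt{false}\rangle$ at a point where $P$'s classical winning strategy prescribes \emph{different} continuations against the two values of $v$ --- exactly the branch-dependence that drives the collapse in \cref{thm:PhantomMove}. Here I expect to lean entirely on the instant termination: I would argue that after such a deferral at least one of the two new realizations is one in which $P$ can reach their victory condition on the very next move, so that the termination rule either ends the game in $P$'s favour or forces the superposition back to a single live realization in which the inductive hypothesis applies, preventing the opponent from ever accumulating the full product of open branches that a \ruleset{Phantom-Move}-style collapse would require. Making this ``recover after every deferral'' step precise is the main obstacle, and it is sensitive both to the exact normal-play move definitions encoding the knockout (the paper's recurring warning that quantum outcomes depend on explicit move descriptors) and to the Case-Even/Case-Odd parity, which determines who is forced to move into the already-decided position; I would handle the two parities separately and verify that in each the player who wins the classical alternation is the one who can always strike first under the knockout rule.
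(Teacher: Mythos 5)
Your membership argument is the same as the paper's: \ruleset{TKO QBF} is polynomially-short, so \cref{Theo:Upperbound} puts its quantum lift in \cclass{PSPACE}. The hardness direction, however, has a genuine gap at exactly the point you flag as ``the main obstacle.'' You propose the identity reduction: take a classical \ruleset{TKO QBF} instance and argue that its quantum lift has the same winner, i.e.\ that quantumness doesn't matter for this ruleset. You never establish the crucial step --- that after the opponent plays $\langle v=\texttt{true}\ |\ v=\texttt{false}\rangle$ at a branch point where the classical winner's strategy genuinely depends on $v$, the winner can ``recover'' via the knockout rule. The claim that ``at least one of the two new realizations is one in which $P$ can reach their victory condition on the very next move'' is asserted, not proved, and there is no reason it should hold for an arbitrary CNF: the winner's victory condition (a true literal in every clause, or an all-false clause) typically cannot be completed until deep into the assignment, so the deferring opponent can keep both branches alive for many rounds, and the surviving superposition need not reduce to a single live realization. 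This is precisely the failure mode that produces the collapse in \cref{thm:PhantomMove}, and nothing in your argument rules it out for \ruleset{TKO QBF} on an unmodified instance.

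The paper does not attempt the identity reduction. Instead it reduces from classical \ruleset{TKO QBF} to a \emph{modified} instance: it re-indexes the variables, interleaves fresh dummy variables for both players, and adds gadget clauses such as $(F_i \vee T_{i+1})$, $(T_i \vee \neg F_{i+1} \vee T_{i+2})$, $(\neg T_i \vee F_{i+1} \vee \neg T_{i+2})$, and parity clauses of the form $(F_{i+2}\vee\neg F_{i+2})$. These gadgets are what make a quantum move punishable: whenever a player superposes a classical variable, the opponent has a follow-up assignment on a dummy variable that either forces a collapse or manufactures an all-false clause, and the bulk of the paper's proof is a case analysis verifying this for every way the quantum move can later collapse. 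In other words, the paper's proof exists in its gadgeted form precisely because the statement your plan rests on --- that quantum moves cannot help the classical loser in a raw \ruleset{TKO QBF} instance --- is not known to be true. To repair your argument you would either need to prove that statement (which would be a stronger result than the theorem) or introduce a gadget construction along the paper's lines.
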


\begin{proof}
By our complexity characterization of Section
\ref{Sec:PSpaceUpperBound},
\ruleset{Quantum TKO QBF} is a \cclass{PSPACE}-solvable game.

We will show hardness of the quantum game by reducing from its classical TKO variant.
Consider any given \ruleset{TKO QBF} instance $Z'$, in which for clarity of presentation we assume that True's variables are labeled $T_1, T_9, T_{17}, \dots, T_{8n + 1}$ with False's variables labeled $F_4, F_{12}, F_{20}, \dots, F_{8n - 4}$, where the order of play is:
$$T_1, F_4, T_9, F_{12}, \dots, F_{8n - 4}, T_{8n + 1}.$$

Then, we reduce $Z'$ to another  \ruleset{TKO QBF} instance $Z$, and analyze the complexity for determining the outcome class for its quantum lift $Z^{\QuantumLift(D,2)}$.
In $Z$, we will create variables such that the order of play is:
$$T_1, F_2, T_3, F_4, \dots, F_{8n}, T_{8n + 1}, F_{8n + 2}.$$
For each classic False's variable $F_i$, we create clauses
$\dots \wedge (F_i \vee T_{i+1}) \wedge (F_{i+2} \vee \neg F_{i + 2}) \wedge \dots$.  For each classic true variable $T_i$, we create clauses
$\dots \wedge (T_i \vee \neg F_{i+1} \vee T_{i+2}) \wedge (\neg T_i \vee F_{i+1} \neg \vee T_{i + 2}) \wedge (F_{i + 1} \vee T_{i + 4} \vee F_{8n + 2}) \wedge (\neg F_{i+1} \vee \neg T_{i + 4} \vee F_{8n + 2}) \wedge (F_{i + 3} \vee \neg F_{i + 3}) \wedge \dots$.

First, note that if Player False makes a quantum move on any of the classic False's variables $F_i$, then Player True  can set $T_{i + 1}$ to false to collapse False's move to true. As such, if the QBF evaluates classically to true, then False can't make a quantum move, and thus True wins.

Now we will show that if Player False is winning the classic QBF, then False will win this quantum QBF. Notice that if True makes any classical move on classical variable $T_i$, they will not be hurt by the additional clauses, as no matter what false does, they can satisfy those clauses by an appropriate response.

If Player True makes a quantum move on any of the classic true variables $T_i$,
then Player
False has a winning strategy, by setting $F_{i + 1}$ to false

If $T_{i+2}$ is set to quantum, then $F_{i+1}$ can never collapse until the end, even if $T_i$ collapses. As such, no matter what $T_{i+4}$ is set to, there exists a possibility that no all clauses are satisfied, and by setting $F_{n+1}$ to false, all the possibilities where Player True had satisfied all clauses collapse, and thus one of the extra clauses must be all false, giving Player False the win.

If $T_{i+2}$ is set to true, then there are 3 possibilities. The first $T_i$ never collapses, which follows the same outcome as above, giving Player False the win.
The second is that it collapses to \texttt{false}. In this case, $F_{i+1}$ never collapses, and thus the old argument still holds. The last is that it collapses to \texttt{true}. This causes $F_{i+1}$ to collapse to true. If $T_{i + 4}$ was set to \texttt{true} as well, then Player False will win as there is one clause with only falses and a future False's variable in it. If it was set to quantum, then the old argument applies where eventually Player False will win when setting $F_{8n+2}$. If it was set to \texttt{false}, then Player False  can just play as if it was set to \texttt{true}, since if it collapses to \texttt{false}, Player False wins.

If $T_{i+2}$ is set to \texttt{false}, then the same general arguments hold. If $T_i$ never collapses, then we follow the old argument to win. If it collapses to \texttt{true}, then $F_{i+1}$ never collapses, and the old argument holds. If it collapses to \texttt{false}, then $F_{i+1}$ collapses to \texttt{false}. If $T_{i+4}$ is set to \texttt{false}, then there is an unsatisfiable clause for Player True, 
causing the player to lose. If it was set to quantum, than the old argument still holds. If it was set to \texttt{false}, then Player False  can just play as if $T_i$ was set to \texttt{false}, since if it collapses to \texttt{true}, Player False wins.

The reduction is polynomial space, since we are just quadrupling the number of variables and adding a constant number of clauses for each variable.
\end{proof}

Notice that with quantum flavor $A$ and $B$, the winner is determined by parity if a player doesn't lose in all realizations. The False Player always has a move if and only if \textsc{Tautology} problem on the CNF evaluates to false, and the True Player always has a move if and only if the CNF evaluates to true. Thus, if the CNF ends with a True Player assignment, the game is \cclass{NP}-complete, and if the CNF ends with a False Player assignment, it is \cclass{Co-NP}-complete.

\begin{theorem}
With quantum flavor $D$ and superposition width $2$, \ruleset{Quantum KO QBF}
is a \cclass{PSPACE}-complete game.
    \label{thm:brb}
\end{theorem}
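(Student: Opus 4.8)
The plan is to establish both a \cclass{PSPACE} upper bound and \cclass{PSPACE}-hardness, paralleling the structure of the proof of Theorem~\ref{thm:kyles} for \ruleset{Quantum TKO QBF}. For the upper bound, observe that \ruleset{KO QBF} is a \emph{polynomially-short} game: each turn assigns one previously-free variable (together with the at-most-one terminating KO move), so the height of its game tree is bounded by the number of variables and hence is polynomial in the descriptive size. Quantum moves may inflate the number of realizations exponentially but never increase game-tree height, so by the \cclass{PSPACE} barrier for polynomially-short games (Theorem~\ref{Theo:Upperbound}), the quantum lift $Z^{\QuantumLift(D,2)}$ is \cclass{PSPACE}-solvable.

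For hardness I would reduce from classical \ruleset{KO QBF}, which is \cclass{PSPACE}-complete because all classical QBF variants are polynomial-time interreducible. The idea is to reuse the interleaving-and-punishment gadget from the \ruleset{Quantum TKO QBF} reduction: interleave \emph{True}'s and \emph{False}'s variables so that consecutive turns genuinely alternate the two players, and for each original variable append a constant number of clauses engineered so that a \emph{quantum} assignment on that variable is self-defeating. Concretely, the gadget should guarantee (i) that any width-$2$ move by the losing player produces, in some surviving realization, a clause configuration the winner can immediately collapse against or exploit to reach their own victory condition, and (ii) that the winner never needs superposition at all, so their classical \ruleset{KO QBF} winning strategy transfers verbatim. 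With these two properties, optimal quantum play reproduces the classical outcome, and the quantum instance has the same winner as the classical one.

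The main obstacle, and the one genuinely new feature relative to \ruleset{TKO}, is the \emph{optional}, player-chosen nature of the game-ending KO blow and how the victory condition is read across a superposition. In \ruleset{TKO} the game halts the instant any realization attains a win condition; in \ruleset{KO} the player on move \emph{decides} whether to deliver the blow once their own condition is met, and this choice interacts with collapses. I would therefore have to verify two invariants making \ruleset{KO} behave, for the purposes of the reduction, exactly like \ruleset{TKO}: (a) whenever the winner has secured their condition in every surviving realization they may legally invoke KO, so they are never forced to defer into a position where the loser gains a move by parity; and (b) the punishment clauses prevent the loser from ever reaching a surviving realization in which \emph{their} win condition holds, so the loser can never legitimately invoke the blow and deferral buys them nothing. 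Checking that every quantum deviation by the loser falls into case (i) above—rather than creating some realization where, say, a not-yet-collapsed superposed literal lets the loser claim an all-false clause—is where the delicate case analysis lives, mirroring but slightly extending the collapse bookkeeping carried out for \ruleset{Quantum TKO QBF}. Once these invariants are confirmed, the classical-to-quantum correspondence closes and \cclass{PSPACE}-completeness follows.
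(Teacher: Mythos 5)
Your overall architecture matches the paper's: the upper bound is obtained exactly as you describe (the game is polynomially-short, quantum lifting preserves game-tree height, and Theorem~\ref{Theo:Upperbound} applies), and the hardness direction in the paper is likewise a reduction from classical \ruleset{KO QBF} that interleaves the two players' variables and appends punishment clauses so that any quantum assignment by the classical loser can be exploited while the classical winner plays purely classically. However, your proposal stops at specifying the \emph{properties} the gadget should have rather than exhibiting one, and for this theorem the gadget is the entire content of the hardness proof. The paper's construction is quite specific: every original clause is augmented with $\vee\, T_{8n} \vee F_{8n+1}$, each True variable $T_i$ receives the pair $(T_i \vee T_{i+2})$, $(\neg T_i \vee \neg T_{i+2})$ (so a width-$2$ superposition on $T_i$ leaves a surviving realization falsifying one of them, handing False an immediate KO), and each False variable $F_i$ receives $(F_i \vee F_{i+2} \vee \neg T_{8n})$, $(\neg F_i \vee \neg F_{i+2} \vee \neg T_{8n})$, which lets False "echo" assignments and forces True to commit to $T_{8n}$ last; the case analysis on how each of these interacts with collapses is where the work lies. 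Asserting that such clauses "should" exist does not discharge this.

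More substantively, your invariant (a) misreads how the KO move behaves on a superposition, in a way that would derail the verification. In flavor $D$ a classical move is feasible as soon as it is legal in \emph{at least one} realization, and delivering it collapses the realizations where it is not; so the winner may invoke KO once their victory condition holds in \emph{some} surviving realization, not in all of them. The paper's argument for the True-wins case depends on this: False plays quantum assignments, True responds as if each collapsed to \texttt{true}, and at the end there merely \emph{exists} a realization in which every clause is satisfied---that single realization licenses True's game-ending move. Requiring the condition in every surviving realization, as your (a) does, is a condition True generally cannot achieve against quantum play by False, so a proof organized around that invariant would get stuck. Your invariant (b) is the correct and essential half: the gadgets must ensure the loser never owns even one surviving realization witnessing their win condition, since one would suffice for them to KO.
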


\begin{proof}
Again, by our complexity characterization of Section
\ref{Sec:PSpaceUpperBound},
\ruleset{Quantum KO QBF} is a \cclass{PSPACE}-solvable game.

For hardness of the quantum game, we will reduce from the classical
\ruleset{KO QBF}.
Consider any given \ruleset{KO QBF} instance $Z'$.
For clarity of presentation we assume that True's variables are labeled  $T_1, T_9, T_{17}, \dots, T_{8n - 7}$ and False's variables are labeled $F_5, F_{13}, F_{21}, \dots, F_{8n - 3}$, where the order of play on the variables is:
$$T_1, F_5, T_9, F_{13}, \dots, T_{8n - 7}, F_{8n - 3}.$$

Then, we reduce $Z'$ to another  \ruleset{KO QBF} instance $Z$, and analyze the complexity for determining the outcome class for its quantum lift $Z^{\QuantumLift(D,2)}$.
In $Z$, we will create variables such that the order of play is:
$$T_1, F_2, T_3, F_4, \dots, F_{8n - 1}, T_{8n}, F_{8n+1}.$$
First, in all of the existing clauses, we add  an additional $\vee T_{8n} \vee F_{8n + 1}$. We then create three  new ``gadget'' clauses for each True's variable $T_i$:
$$\dots \wedge (T_i \vee T_{i+2}) \wedge (\neg T_i \vee \neg T_{i+2}) \wedge (F_{i+1} \vee \neg F_{i+1}) \wedge \dots$$
Then, for each False's variable $F_i$, we add the following three ``gadget'' clauses:
$$\dots \wedge (F_i \vee F_{i+2} \vee \neg T_{8n}) \wedge (\neg F_i \vee \neg F_{i+2} \vee \neg T_{8n}) \wedge (T_{i + 1} \vee \neg T_{i+1}) \wedge \dots.$$
The gadget clauses are designed to satisfy that the logic value of the formulae for $Z$ preserves
the logic value of the formulae for $Z'$.

First, suppose
$Z'$ has value \texttt{false}, so $Z$ also has value \texttt{false}.
Note that if Payer True makes a quantum move for any of the classic variables $T_i$, then no matter what they assign $T_{i+2}$ to, there exists a realization where either $(T_i \vee T_{i+2})$ or $(\neg T_i \vee \neg T_{i+2}$ evaluates to \texttt{false}, so Player False can
win with a KO move.
Player True will not be able to deliver
a KO move to win 
because there will exist clauses that have not yet been satisfied.

False will simply play their classic strategy for classic variable $F_i$ and echo that assignment for $F_{i+2}$.
If True plays purely classically until $T_{8n}$, then the classic variables of one of the classic clauses will evaluate to \texttt{false} when they select $T_{8n}$.
So, they will need to set it to \texttt{true} to avoid having that clause evaluate to \texttt{false} the next turn.
However, since Player False is echoing their choices into $F_{i+1}$, at least one clause of the form $(F_i \vee F_{i+2} \vee \neg T_{8n})$ currently has $F_i \vee F_{i+2} = 0$, and $T_{8n}$ variable will be set to 1, which will allow Player False to make the KO move on their next turn.
If Player True makes a quantum move for $T_{8n}$, then False can still deliver the KO move
because there exists a realization with all \texttt{false}.

Now suppose that $Z'$ has value \texttt{true},
so $Z$ also has value \texttt{true}.
True's strategy is to play their classical strategy on the old variables, and satisfy the new clauses with only True's variables in them by choosing appropriately.
If there is any point where Player False  makes a quantum move, True plays as if False selected \texttt{true}
for the variable.
Notice that Player False is unable to KO
until Player True sets $T_{8n}$, since that is in every clause that could possibly evaluate to \texttt{false}.
And then when Player True sets $T_{8n}$, they can set it to \texttt{false} and press the button, satisfying all of the new clauses, and there exists a realization where all of the classical clauses are \texttt{true} (when Player False chose \texttt{true}
for the quantum move).

The reduction is in polynomial space, as all we do is create three more variables for each existing variable, and three new clauses for each existing variable, so the reduction is linear in the QBF input size.
\end{proof}

\begin{theorem}
For quantum flavor $D$ and superposition width $2$, \ruleset{Quantum Haymaker QBF}
is \cclass{PSPACE}-complete game.
    \label{thm:mxc}
\end{theorem}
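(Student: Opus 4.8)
The plan is to follow the template already used for Theorems \ref{thm:kyles} and \ref{thm:brb}. First I would dispatch the upper bound: by the complexity characterization of Section \ref{Sec:PSpaceUpperBound}, \ruleset{Haymaker QBF} is polynomially-short (its game tree has height linear in the number of variables), so its quantum lift \ruleset{Quantum Haymaker QBF} is \cclass{PSPACE}-solvable. All the real work is in hardness, where I would reduce from the classically \cclass{PSPACE}-complete \ruleset{Haymaker QBF}, reusing the gadget-clause machinery of the \ruleset{Quantum KO QBF} proof.

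Concretely, given a classical \ruleset{Haymaker QBF} instance $Z'$ with an interleaved order of play $T_1, F_5, T_9, \dots$, I would relabel it into an instance $Z$ whose order of play densely alternates the two players, $T_1, F_2, T_3, F_4, \dots$, with analogous trailing variables, padding each original variable with a companion variable. For each original True-variable $T_i$ I would add the paired gadget clauses $(T_i \vee T_{i+2})$, $(\neg T_i \vee \neg T_{i+2})$ and a filler clause, and symmetrically for each original False-variable $F_i$ I would add $(F_i \vee F_{i+2} \vee \neg T_{\text{last}})$, $(\neg F_i \vee \neg F_{i+2} \vee \neg T_{\text{last}})$ with a filler, tuned exactly so that a quantum assignment $\langle \texttt{true}\ |\ \texttt{false}\rangle$ on an interior classical variable forces, in one surviving realization, a gadget clause the mover cannot satisfy and hence hands the opposing win condition to the opponent. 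As in the KO argument, this means neither player can profitably open a superposition on an interior classical variable, so optimal play is purely classical there and faithfully reproduces the classical \ruleset{Haymaker QBF} game on the original variables; thus $Z^{\QuantumLift(D,2)}$ is won by True if and only if the underlying formula is classically a True win.

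The feature that distinguishes Haymaker from KO — and the hard part — is that the game-ending blow is a \emph{standalone} move rather than being fused with the final variable assignment, which inserts one extra turn for the opponent. I would handle this by observing that both win conditions are monotone (assignments being permanent, once every clause carries a true literal, or once some clause is wholly false, that condition persists) and mutually exclusive. Hence, once the intended winner reaches their win condition, the opponent's single intervening move can neither revoke it nor establish the opponent's own, so the winner simply delivers the blow on their next turn. The reduction must therefore be padded so the parity of the play order guarantees the intended winner both reaches their condition and is then granted a subsequent turn to press; I would calibrate the trailing variables and the shared tail literals $\vee\, T_{\text{last}} \vee F_{\text{last}}$ appended to the original clauses to secure this. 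The most delicate case, exactly as in Theorem \ref{thm:brb}, is a quantum move on the final variable: I expect such a move to leave a surviving realization witnessing the loser's win condition, so the loser can still press; the crux will be verifying that this witness persists across the extra separated turn that Haymaker introduces, which is where the argument differs from the KO case and must be checked carefully.
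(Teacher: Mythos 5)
Your proposal matches the paper's proof: the upper bound via the polynomially-short/\cclass{PSPACE} barrier of Section \ref{Sec:PSpaceUpperBound}, hardness by reusing the \ruleset{Quantum KO QBF} reduction verbatim, and the same key observations that a quantum move on an interior $T_i$ lets False deliver the standalone blow on the following turn while the classical True winner, after setting the last variable to \texttt{false}, retains a surviving realization and gets the extra turn to press. Your monotonicity remark is just a slightly more explicit packaging of the paper's terse handling of the separated blow, so this is essentially the same argument.
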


\begin{proof}
Like all of other variants of \ruleset{Quantum QBF}, \ruleset{Quantum Haymaker QBF} is still a polynomial-space solvable game.

For hardness analysis, the reduction is exactly the same, and the proof follows the same guidelines as for \ruleset{Quantum KO QBF}. Notice that if Player True makes a quantum move on $T_i$, no matter what they select for $T_{i+2}$,
Player False can 
make the game ending blow on their next turn.
If False makes a quantum move, False can't
make the game stopping move
until after $F_{8n + 1}$ has been set.
But if Player True player was the classical winner, setting $T_{8n}$ to \texttt{false}, and was playing optimally to one of False's moves, True can make the game-ending blow after that to win the quantum game.
\end{proof}

We now turn our attention to
\ruleset{Partition-Free QBF} and their quantumized complexity.
In contrast to standard \ruleset{QBF} games,
 the partition-free version relaxes the order requirement on setting the variables.
As we discussed before, we will refer to this family of QBF-based games as the \ruleset{QSAT}
family.
Classically, all variants of \ruleset{QSAT} are \cclass{PSPACE}-complete games.
Below, we consider the complexity
\ruleset{QSAT} games in the quantum setting.
Most importantly, for our proof of quantum leap in ruleset{Nim}'s complexity , we prove the following theorem.

\begin{theorem}[Quantum Collapses of Partition-Free QBF]
For quantum flavors $D$ and $C'$ and superposition width $2$,
\ruleset{Phantom Move QSAT} with a classical start
is a $\Sigma _2$-complete game in Case Even
and is a $\Pi _2$-complete game in Case Odd.
    \label{thm:partitionFreePhantom}
\end{theorem}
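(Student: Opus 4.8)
The plan is to pin down the exact game value of \ruleset{Quantum Phantom-Move QSAT} and show it coincides with a single two-quantifier predicate, which simultaneously yields membership in the claimed class and a clean reduction for hardness. Write True's variables as $T_1,\ldots,T_m$ and False's as $F_1,\ldots,F_m$ in Case Even (with True holding one extra variable $T_{m+1}$ in Case Odd), and let $f$ be the CNF. The crucial structural fact to isolate first is that, unlike \ruleset{Avoid True}, nothing in \ruleset{Phantom-Move QSAT} forbids setting a free variable mid-game, so no realization ever collapses during the main play; moreover each player only ever assigns their own variables, so a classical assignment by one player and a width-$2$ quantum assignment $\langle \texttt{true}\ |\ \texttt{false}\rangle$ by the other never conflict. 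Consequently the terminal superposition depends only on which of each player's variables were made quantum, not on the order of play.

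From this I would extract the two ``Winning Freedom'' observations used elsewhere (Observations \ref{TrueStrategy} and \ref{FalseStrategy}). For \emph{True's freedom}: if $\exists\vec t\,\forall\vec f\ f(\vec t,\vec f)=\texttt{true}$, let $\vec t^\ast$ be a witness and have True simply play the classical assignment $\vec t^\ast$ in any order. These classical moves are feasible and collapse nothing---hence also \emph{respectful}, so they are legal in flavor $C'$ as well as $D$---and they fix $\vec t=\vec t^\ast$ in every surviving realization. At the end every realization satisfies $f$, so the phantom move is unavailable to the last player when that player is False (Case Even), and True wins. For \emph{False's freedom}: if $\forall\vec t\,\exists\vec f\ f(\vec t,\vec f)=\texttt{false}$, have False assign every $F_i$ as $\langle \texttt{true}\ |\ \texttt{false}\rangle$ in any order. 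Then whatever True does, within each surviving True-assignment all $2^m$ False-extensions are present, so some realization makes $f$ false; since False plays last in Case Even, False takes the phantom move and wins. The symmetric statements with the roles of True/False swapped give Case Odd, where True is the phantom player.

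These two strategies prove the exact characterization: in Case Even, True wins iff $\exists\vec t\,\forall\vec f\ f(\vec t,\vec f)$, and in Case Odd, True wins iff $\forall\vec f\,\exists\vec t\ f(\vec t,\vec f)$ (equivalently, False wins iff $\exists\vec f\,\forall\vec t\ \neg f$). Membership is then immediate, since the value is literally a $\Sigma_2$ (resp.\ $\Pi_2$) predicate over the classical input, decided by guessing one player's committed assignment and checking the other's universally. For hardness I would reduce from the canonical $\Sigma_2$-\textsc{Sat} instance $\exists\vec x\,\forall\vec y\ \phi(\vec x,\vec y)$ in Case Even (and $\Pi_2$-\textsc{Sat} in Case Odd): take True's variables to be $\vec x$, False's to be $\vec y$, and the game CNF to be $\phi$, padding with dummy variables---each appearing only in a tautological clause $(v\vee\neg v)$---to equalize the two lists and force the intended parity, exactly as in the ordered case of Theorem \ref{thm:PhantomMove}. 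The characterization makes True the quantum winner precisely when the quantified formula is true, so the reduction is parsimonious and both directions follow.

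The main obstacle I anticipate is not the strategy construction but verifying the two domination claims against an \emph{adversarial} opponent who also plays quantum: I must confirm that True's quantum moves cannot rescue False's phantom in the ``True wins'' direction (they only add realizations, all of which still satisfy $f$ because True's committed values are shared across the whole superposition), and, dually, that False's interference cannot spoil True's committed win. The only genuinely delicate points are (i) checking that a committed classical assignment is \emph{respectful}, and hence legal under flavor $C'$, which holds because it collapses nothing, and (ii) getting the Case Even/Odd parity and padding right so that the phantom move lands on the intended player; both are routine once the no-mid-game-collapse fact is established.
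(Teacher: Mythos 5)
Your proposal is correct and follows essentially the same route as the paper: the paper's proof also rests on exactly your two ``winning freedom'' claims (its Observations \ref{TrueStrategy} and \ref{FalseStrategy}, that the non-phantom player can win with a fixed classical assignment and the phantom player can win by making every variable $\langle \texttt{true}\ |\ \texttt{false}\rangle$), followed by the same trivial two-way reduction to $\Sigma_2$-SAT and $\Pi_2$-SAT. Your write-up is somewhat more explicit about the no-collapse fact and the flavor $C'$ legality, but the underlying argument is the paper's.
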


\begin{proof}
We will call the player that makes the phantom move the PM Player and the other player as the NM Player, for short.

Then, the theorem follows from the following observations:

\begin{observation}\label{FalseStrategy}
If the PM Player has a winning strategy, then arbitrarily assigning each variable to $\langle \texttt{true}\ \mid\ \texttt{false}\rangle$ is also a winning strategy.
\end{observation}
\begin{proof}
Suppose for the sake of contradiction that the PM player has a winning strategy $S_1$ but the only quantum strategy $S_2$ isn't a winning strategy.
Then, that means that NM Player has a sequence of moves such that in all realizations,
the PM player hasn't fulfilled their winning condition, as otherwise they could move into the phantom move.
But then, that exact sequence of moves is a winning sequence of moves against $S_1$, and any possible deviations of it, $S_1$ isn't a winning strategy.
\end{proof}

\begin{observation}\label{TrueStrategy}
If the NM payer has a winning strategy, then they have a winning strategy of selecting classic moves independent of what the
PM player does.
\end{observation}
\begin{proof}
If the NM Player has a winning strategy against PM player, then they must also have one against the PM Player strategy of selecting just quantum assignments as mentioned in \cref{FalseStrategy}.
But, regardless of what this strategy is, the fact that it wins means that it must win against all realizations of this strategy, since otherwise the PM Player could collapse to that realization on the phantom move, then win. And clearly the NM Player can do this with only classic moves, as any realization of a quantum strategy must also not have any winning realizations for the NM Player either.
\end{proof}

From here, we can form a trivial reduction to and from $\Sigma_2$-SAT and $\Pi_2$-SAT for instances where False gets the phantom move and True gets the phantom move, respectively.
\end{proof}

The proof can natually be extended to handle any superposition width.
It can be extended to poly-wide quantuam starts.

\begin{corollary}[Quantum Collapses of Partition-Free \ruleset{QBF}: Poly-Wide Quantum Starts] \label{thm:Poly-WidePhantomMove}
For quantum flavors $C'$ and $D$, and superposition width $2$,
determining the out come class of \ruleset{Quantum Phantom-Move QSAT} with a poly-wide quantum start is $\sum _2$-complete in Case Even , and is $\Pi _2$-complete in Case Odd.
\end{corollary}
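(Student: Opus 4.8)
The plan is to reduce the poly-wide quantum start to the classical-start case already settled in \cref{thm:partitionFreePhantom}, in direct analogy with the companion \cref{cor:Poly-WidePhantomMove} for the \ruleset{QBF} family. Write the querying superposition as $\langle f_1\ |\ f_2\ |\ \cdots\ |\ f_s\rangle$, where the $i$-th realization is the classical \ruleset{Phantom-Move QSAT} position governed by CNF $f_i$ over the common, initially-unassigned variable set $\{T_1,\dots,T_n\}\cup\{F_1,\dots,F_n\}$. The structural engine is the pair of observations from the proof of \cref{thm:partitionFreePhantom}: an optimal phantom-move (PM) player may assume the uniform quantum strategy of assigning each of their own variables as $\langle \texttt{true}\ |\ \texttt{false}\rangle$ in any order (\ref{FalseStrategy}), while an optimal non-move (NM) player may restrict to classical moves that ignore the opponent's choices (\ref{TrueStrategy}). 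The first step is to verify that both observations survive the passage from a single formula to a shared-variable superposition: because all realizations carry the same initially-free variables, every classical assignment is feasible in every realization and no realization collapses during the variable-setting phase, so the arguments of Observations~\ref{FalseStrategy} and~\ref{TrueStrategy} transfer essentially verbatim.

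Granting this, I would play the two prescribed strategies against one another and read off the game value. Consider Case Even, where the False player is the phantom-move player, plays uniform-quantum, while True fixes a single classical assignment $\vec t$. Since no realization collapses, the effect of False's uniform-quantum play is to keep every combined assignment $(\vec t,\vec f)$ over all $i\in[s]$ and $\vec f\in\{\texttt{true},\texttt{false}\}^n$ simultaneously live; False can legally take the final (phantom) move---and thereby win---iff some realization is already falsified, i.e. iff $\exists i\,\exists \vec f:\ \neg f_i(\vec t,\vec f)$. Hence True wins the superposition exactly when $\exists \vec t\,\forall \vec f:\ \bigwedge_{i=1}^{s} f_i(\vec t,\vec f)$, a single $\Sigma_2$ sentence whose quantifier-free matrix is the conjunction of the $f_i$ (itself a CNF of polynomial size, so this is literally a classical \ruleset{Phantom-Move QSAT} start on $\bigwedge_i f_i$). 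This places the problem in $\Sigma_2$. The symmetric analysis for Case Odd, with True as the phantom-move player, gives that True wins iff $\forall \vec f\,\exists \vec t:\ \bigvee_{i=1}^{s} f_i(\vec t,\vec f)$, a $\Pi_2$ sentence with a polynomial-size disjunctive matrix, placing that case in $\Pi_2$. Hardness in both cases is inherited immediately from \cref{thm:partitionFreePhantom}, since a classical start is precisely the special case $s=1$; combining the bounds yields $\Sigma_2$-completeness in Case Even and $\Pi_2$-completeness in Case Odd, for flavors $C'$ and $D$.

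The main obstacle is making the extension of Observations~\ref{FalseStrategy} and~\ref{TrueStrategy} rigorous for a \emph{genuinely arbitrary} poly-wide start rather than the clean shared-variable, all-free superposition above. In full generality the starting realizations could disagree on which variables are already set, so a uniform quantum move could now collapse a realization that the phantom-move player would have preferred to keep alive; one must argue either that the decision problem normalizes to the shared-variable form without loss of generality, or that the PM player's uniform strategy stays optimal even when collapses are possible (intuitively, collapsing a realization can only delete a winning opportunity for the NM player, never for the PM player, so it never hurts the PM player to retain realizations). A secondary, more cosmetic point is that in Case Odd the matrix $\bigvee_i f_i$ is not literally a CNF, so I would state membership through the $\Pi_2$ quantified-Boolean sentence directly rather than through a single CNF instance; alternatively one introduces $O(s)$ Tseitin selector variables to recover a CNF, together with a short argument that the auxiliary variables leave the outcome class unchanged.
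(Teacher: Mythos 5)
Your proposal follows the same core route as the paper's: invoke the two strategy observations from \cref{thm:partitionFreePhantom} (the phantom-move player may play uniform quantum assignments, the other player may play classically), then read the poly-wide start off as a classical start on a single combined formula, with hardness inherited from the $s=1$ case. The paper's argument is only a two-sentence sketch, and yours is actually the more careful of the two in three respects. First, the paper asserts that the players act as if playing a classical start on $\bigvee_i f_i$ in both cases, whereas your analysis correctly shows that the combined matrix depends on who holds the phantom move: the phantom player needs only \emph{one} live realization satisfying their win condition, so the $\Sigma_2$ case yields $\exists\,\forall\,\bigwedge_i f_i$ and the $\Pi_2$ case yields $\forall\,\exists\,\bigvee_i f_i$; a single $\bigvee_i f_i$ cannot serve both. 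Second, you note that the disjunctive matrix is not a CNF and so not literally a \ruleset{Phantom-Move QSAT} instance without Tseitin selectors --- a point the paper's sketch ignores even though its own $\bigvee_i f_i$ has this defect. Third, you explicitly flag that for a genuinely arbitrary poly-wide start the realizations may disagree on which variables are already assigned, so uniform quantum play can collapse realizations and the observations do not transfer verbatim; the paper leaves exactly this hole (its closing remark that ``the proof is a little bit more complicated \ldots to deal with the case that players can make quantum moves involving more than one variables'' gestures at unresolved details without supplying them). Neither your write-up nor the paper's closes that last gap, so you are not missing anything the paper provides; if you finish the normalization-to-shared-free-variables argument you sketch, your version would be strictly more complete than the published one.
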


\begin{proof}
Given a poly-wide quantum position $\mathbb{B}$ in which the $i^{th}$ realization is defined by CNF $f_i$.
The players make decision as if they play for \ruleset{Quantum Phantom-Move QSAT} with the classical start for logic function $\bigvee f_i$.

However, the proof is a little bit more complicated than that of \ruleset{Quantum Phantom-Move QBF} to deal with the case that players can make quantum moves involving more than one variables.
\end{proof}

\begin{theorem}
\ruleset{Quantum Clause Selector QSAT}
and
\ruleset{Quantum Literal Selector QSAT}
are respectively \cclass{NP}-complete and
\cclass{co-NP}-complete, under flavors $D$ and $C'$.
\end{theorem}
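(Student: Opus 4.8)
The plan is to mirror the proofs of the \ruleset{QBF} selector theorems (\cref{thm:clauseselector,thm:selector}), exploiting the fact that neither of those arguments used the prescribed move order: each rested only on one player committing to all-quantum play and the other to all-classical play. The new ingredient for the partition-free family is to confirm that this survives the relaxation in which players set their variables in arbitrary order and may even form quantum moves that superpose assignments to \emph{different} variables (the very feature that complicated \cref{thm:partitionFreePhantom}). The guiding principle I would use is to exhibit, for each of the two possible winners, a single order-independent strategy that defeats \emph{arbitrary} opponent play; the winner is then pinned down by a purely classical property of the final formula, so the free order and the exotic quantum moves become irrelevant.

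For \ruleset{Quantum Clause Selector QSAT} I would show that Player True wins iff the CNF obtained by deleting every False-literal from every clause (the \emph{True-projected} formula) is satisfiable, which gives \cclass{NP}-membership and, by taking all variables to belong to True (adding dummies only to fix parity), \cclass{NP}-hardness from \textsc{Sat}. The two directions use dual commitments. If the True-projected formula is satisfied by a classical assignment $\alpha$, True simply sets each of its own variables classically according to $\alpha$ in any order; since True's block is disjoint from False's, these moves are legal in every realization and collapse nothing, so after all variables are assigned every clause still holds a True-literal equal to \texttt{true} in all realizations, no clause is ever all-\texttt{false}, and False can never fire the clause-selection move. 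Conversely, if it is unsatisfiable, False commits to setting each of its variables by the single-variable quantum move $\langle \texttt{true}\ \mid\ \texttt{false}\rangle$; each such move merely doubles the current superposition without collapse, so alongside any True-assignment $\beta$ realized in the final position, all $2^{|F|}$ False-extensions are present. Unsatisfiability yields, for that $\beta$, a clause whose True-literals are all \texttt{false}, and the matching False-extension makes that clause all-\texttt{false} in some realization, so False fires the selector and wins.

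For \ruleset{Quantum Literal Selector QSAT} I would argue the dual statement, that True wins iff the \emph{False-projected} formula (every clause containing a True-literal deleted) is a tautology, giving \cclass{co-NP}-membership and \cclass{co-NP}-hardness by reduction from \textsc{Tautology} (all variables to False, with one dummy True variable for parity). Here True commits to all-quantum play on its own variables: any clause containing a True-literal $\ell$ is then defendable, because some realization makes $\ell$ equal to \texttt{true} and selecting $\ell$ is feasible in flavor $D$ (legal in at least one realization), ending the game in True's favor. The clauses surviving this defense contain only False-literals and are fixed by a single classical False-assignment, so False can expose an all-\texttt{false} surviving clause iff some False-assignment falsifies the reduced formula, i.e.\ iff it is not a tautology.

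I expect the main obstacle to be justifying robustness against the partition-free game's larger move space rather than the core reductions, which are essentially those of the \ruleset{QBF} versions. The point to nail down is that the committing player in each case uses only single-variable moves on its own disjoint variable block, so it triggers no collapses and its guarantees hold against \emph{any} opponent behaviour, including multi-variable superpositions and arbitrary interleavings; consequently only the product structure of the final superposition matters, exactly as in the ordered case. The remaining care concerns flavor $C'$: one must check that each committed move is respectful---setting a still-free variable collapses no realization, hence is safe and a fortiori respectful---and that the terminal selector moves obey the $C'$ restriction on the final layer, which parallels the verification already carried out for \cref{thm:clauseselector,thm:selector}.
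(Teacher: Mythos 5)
The paper states this theorem with no accompanying proof, so there is nothing to compare your argument against directly; the intended argument is evidently the partition-free analogue of the proofs of \cref{thm:selector,thm:clauseselector}, and that is what you supply. Your characterizations (True wins \ruleset{Quantum Clause Selector QSAT} iff the True-projection is satisfiable; True wins \ruleset{Quantum Literal Selector QSAT} iff the False-projection is a tautology) coincide with the targets of the paper's reductions in the ordered \ruleset{QBF} setting, and your two committed strategies---the satisfying player ignoring the opponent and playing classically on its own disjoint variable block, the refuting player doubling the superposition with single-variable $\langle \texttt{true}\ |\ \texttt{false}\rangle$ moves---are order-independent, which is exactly what the partition-free relaxation requires. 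This is a correct completion, at the same level of rigor as the paper's own selector proofs (including the admittedly light treatment of flavor $C'$, which the paper also does not verify in detail).

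One point needs patching. In the clause-selector case, a clause containing some False variable both negated and unnegated can never be made all-\texttt{false}, yet your True-projection turns it into a (possibly empty) clause that may render the projected formula unsatisfiable, so your stated criterion misclassifies such instances. The paper's proof of \cref{thm:clauseselector} disposes of exactly this with an explicit ``WLOG no clause has a False variable appearing both negated and unnegated''; you should either import that normalization or define the projection to first delete clauses that are tautological in the opponent's variables. (The literal-selector side is unaffected, since a kept clause that is tautological in False's variables is harmless to the tautology test.) Relatedly, in the \textsc{Sat}-hardness reduction where all formula variables belong to True, state explicitly that the dummy variables given to False occur in no clause, so the True-projection is the original CNF and the game remains well-formed without affecting the outcome.
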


\subsection{The Depth of Quantum Collapse}
\label{Sec:Depth}

We now have seen instances of classically \cclass{PSPACE}-complete games becoming complete in the first or second level of the polynomial hierarchy. This leads us to a natural question of whether there exists games that
collapse to any level of the polynomial hierarchy. \footnote{Theorem \ref{Theo:PSPACE-Intermediate} is also motivated by celebrated computational-complexity characterizations of Lander's  \cclass{NP}-intermediate problems \cite{Lander} and Ko's intricate, meticulously separation of levels of
polynomial-time hierarchy \cite{KoPHSeparation}.}
For all levels other than 0, we have an answer:

\begin{theorem}[\cclass{PSPACE}-Intermediate Quantum Games]
\label{Theo:PSPACE-Intermediate}
For any complexity class $\Sigma_k$, $\Pi_k$, or both $\Sigma_k$ and $\Pi_k$, there exists a classically \cclass{PSPACE}-complete game, that in $Z^{Q(D, w)}$, is complete in that class in ruleset D.
\end{theorem}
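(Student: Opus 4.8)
The plan is to build, for each fixed $k$, a single \ruleset{QBF}-style ruleset $R_k$ that is classically \cclass{PSPACE}-complete but whose flavor-$D$ quantum lift collapses to exactly level $k$ of the polynomial hierarchy. The design principle is to \emph{interleave} the two opposing mechanisms isolated earlier in this section: the KO/TKO-style early-termination gadgets of Theorems \ref{thm:kyles} and \ref{thm:brb}, which \emph{force} a player to commit classically (so each such block survives quantization as a genuine quantifier alternation), and the phantom-move deferral mechanism of Theorem \ref{thm:PhantomMove}, which lets a player keep a whole block of variables in superposition and collapse optimally only at the end---merging an arbitrary number of trailing alternations down to a single $\exists^{*}\forall^{*}$ pair worth two levels. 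The base case $k=1$ is already supplied by the selector variants (Theorems \ref{thm:selector} and \ref{thm:clauseselector}), which are classically \cclass{PSPACE}-complete yet quantumly \cclass{NP}- and \cclass{co-NP}-complete, i.e.\ $\Sigma_1$ and $\Pi_1$; so the new work is the construction for $k\ge 2$.

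Concretely, I would take an arbitrary QBF instance $\exists X_1\,\forall X_2\cdots Q_N X_N\,\phi$ played in the prescribed variable order, but govern the play by the parameter $k$: the first $k-2$ quantifier blocks carry KO-style forcing clauses so that any quantum move by the responsible player is immediately punishable (exactly as in the proof of Theorem \ref{thm:brb}), while every move from block $k-1$ onward is governed by the phantom-move rule. Classically all of these gadgets are the standard, polynomial-time-equivalent QBF transformations, so $R_k$ on an $N$-alternation instance is just \cclass{PSPACE}-complete \ruleset{QBF}. Quantumly, the forcing gadgets guarantee that the first $k-2$ blocks are played honestly (contributing $k-2$ real alternations), whereas the phantom tail lets the two players defer and commit their remaining choices so that, by the argument of Theorem \ref{thm:PhantomMove}, the entire tail collapses to one $\exists^{*}\forall^{*}$ block worth two levels, for a total of $(k-2)+2=k$. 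The leading quantifier together with the Case-Even/Case-Odd parity then selects $\Sigma_k$ or $\Pi_k$, so the single ruleset $R_k$ realizes both classes through its two starting parities, covering all three targets of the statement.

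The two directions then proceed as follows. For $\Sigma_k$-hardness (resp.\ $\Pi_k$) I reduce from $\Sigma_k$-\textsc{Sat}: given $\exists X_1\,\forall X_2\cdots Q_k X_k\,\phi$, instantiate $R_k$ with $X_1,\dots,X_{k-2}$ as the forced blocks and $X_{k-1},X_k$ as the phantom tail, and verify via the invariants of Theorems \ref{thm:PhantomMove} and \ref{thm:brb}, together with the terminal ``quantumness doesn't matter'' analysis of \cref{AvoidTrueTermination} and \cref{mainDoesn'tMatter}, that the quantum game value equals the truth value of the sentence. For the matching $\Sigma_k$ upper bound I would show that optimal flavor-$D$ play admits a bounded-alternation description: the forcing gadgets reduce each of the first $k-2$ rounds to a single classical commitment, and the phantom tail is decided by a $\Sigma_2$ (resp.\ $\Pi_2$) predicate over a fixed classical prefix, so ``the next player wins $R_k[\mathbb{B}]$'' is expressible as a $\Sigma_k$ formula; this also respects the overall \cclass{PSPACE} ceiling of Theorem \ref{Theo:Upperbound}, since $R_k$ is polynomially short.

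The main obstacle will be making the two gadget families \emph{coexist cleanly} inside one formula and proving the collapse is \emph{exactly} level $k$. On the upper-bound side I must argue tightly that, against an optimally forcing opponent, the deferring player in the tail gains nothing from interleaving and nothing from playing any forced block quantumly, so that no extra alternation leaks in; on the lower-bound side I must ensure the forcing clauses for blocks $1,\dots,k-2$ never hand the tail-controlling player a premature KO, nor let the phantom collapse undercut a forced alternation. Pinning down this interface---verifying that a quantum move in a forced block is always punished before the phantom mechanism can rescue it, and that the phantom tail behaves like the clean two-level gadget of Theorem \ref{thm:PhantomMove} irrespective of the committed prefix---is the delicate and genuinely new part of the argument.
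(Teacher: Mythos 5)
There is a genuine gap, and it sits exactly where the paper's construction does its real work: producing \emph{block-level} alternation in a game where players commit one variable per turn. For $\Sigma_k$-hardness with $k\ge 3$, the first $k-2$ quantifiers must each range over a polynomial-size block of variables, and each block must be fully committed before the opposing player commits any variable of the next block. Your forcing device (KO-style clauses that punish any quantum move) compels \emph{immediate} classical commitment in turn order, and this is incompatible with block structure. If each of your first $k-2$ ``blocks'' is a single variable, the resulting decision problem is $\exists x_1\forall x_2\cdots Q_{k-2}x_{k-2}\,\exists\vec{Y}\,\forall\vec{Z}\,\phi$ with single-variable leading quantifiers; for fixed $k$ this is polynomial-time reducible to a Boolean combination of $2^{k-2}$ many $\Sigma_2$ instances and is therefore not $\Sigma_k$-hard (for $k\ge 3$) unless the hierarchy collapses. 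If instead each forced block contains polynomially many variables, then under the standard turn alternation the two players' commitments interleave variable by variable, which reconstructs unbounded single-variable alternation and gives \cclass{PSPACE}-hardness rather than the intended collapse to level $k$. Either way the construction misses the target, and the proposal never supplies the device that would let one player lock in an entire block while the other waits.

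The paper's proof of \cref{Theo:PSPACE-Intermediate} is built precisely around that missing device. In \ruleset{QuantumLevel($\ell$, $C$)} the variables are partitioned into $\Theta(\ell)$ groups of polynomial size, and after each group's assignment phase there is an ``echo'' phase of otherwise meaningless moves. During the assignment phase quantum deferral is harmless and indeed optimal, so a player's entire group is held in superposition; during the echo phase any quantum move hands the opponent the right to fix your assignment, so commitment is forced there. The net effect is that each group collapses to a single quantifier block committed at its echo boundary, the boundaries preserve the alternation \emph{between} groups, and $\ell$ groups yield exactly $\Sigma_\ell$ or $\Pi_\ell$. In other words, the forcing must act only at block boundaries while \emph{permitting} deferral inside a block --- the opposite of a gadget that punishes every quantum move. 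Your phantom-move analysis of the two-level tail and your observation that the selector variants settle $k=1$ are fine, but to repair the argument for the forced prefix you would need to add an echo-like (or dummy-move) mechanism and re-prove that it neither lets the deferring player smuggle an alternation across a boundary nor lets a within-block superposition trigger a premature termination; at that point you have essentially rebuilt the paper's gadget.
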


\begin{proof}
We create games called \ruleset{QuantumLevel($\ell$, $C$)}, where $\ell$ and $C$ are replaced with different values.
The value of $\ell$ is the level of hierarchy that we want the quantum complexity to be, and $C$ is one of $\Sigma$, $\Pi$, and $\Sigma$ and $\Pi$.
Then, this game will have classic complexity \cclass{PSPACE}-complete and quantumized complexity under $Z^{Q(D, w)}$
 of $C_{\ell}$.
We will first be focusing on only the cases where $C$ is $\Sigma$ or $\Pi$.

The game of \ruleset{QuantumLevel($\ell$, $C$)} is played over a CNF with variables labeled $T_1, T_2, T_3, \dots T_n$ and $F_1, F_2, \dots F_n$, where $\floor{\frac{\ell}{2}}$ divides $n$. If $\ell$ is odd, there are an additional $x = \frac{n}{\floor{\frac{\ell}{2}}}$ variables. If $C = \Sigma$ then they are labeled $T_{n+1}, T_{n+2}, \dots T_{n + x}$. If $C = \Pi$, then they are labeled $F_{n+1}, F_{n+2}, \dots F_{n + x}$. Each of the $T$ variables and each of the $F$ variables are further divided into groups of size $x$. We will thus call $T_1, T_2, \dots T_{x}$ as $G_{T1}$, $T_{1 _+ x}, T_{2 + x}, \dots T_{2x}$ as $G_{T2}$, and so on, and the same for $G_{Fi}$.

Classically, the game is played as follows:

\begin{enumerate}
\item Both players alternate turn assigning the variables in $G_{i}$, in order of their index.
\item Upon all variables in the group being assigned, we find the \textit{starting player} of the group. This is True if $C = \Sigma$, and is otherwise False.
\item The starting player echoes the assignment of the first variable in the group that they assigned.
The
other player then echoes it as well. Play like this continues until the same player echoes the last assignment in the group.
\item For the last assignment, the same player echoes it, then the other player echoes it, then the same player echoes it again, after which the other player then begins to echo their variables, and the process repeats for their variables.
\item Repeat step 1, now for the variables in $G_{i + 1}$
\item if there is a group in the end for one player with no variables for the other player, then the player with the variables plays and the other player just echoes their assignment.
\item Once all variables are assigned, the player's who turn it is may move if and only if they have achieved their win condition in all clauses.
\end{enumerate}

It is not difficult to see classically that these games are PSPACE-complete. They are simply normal
\ruleset{QBF} problems with additional moves where players will always be able to move, making no changes at all to the assignment. Thus a trivial two way reduction between the game exists.

For a quantum game, we can reduce to and from the SAT canonical for the desired complexity class. Note for this CNF, that the optimal strategy for the first phase is to just take a quantum move of both variable assignments, as one can assign them later, and there is no benefit for assigning them earlier. Then, if one chooses a quantum move during phase 1 or 2, the opponent gets to make an assignment for you, so it is always in the players favor to make a classical assignment here. But then, when done for all variables, this means that this is exactly a problem on the $\ell$ level of the hierarchy, and thus there is a trivial reduction between the $\Sigma_{\ell}$ and $\Pi_{\ell}$ SAT problems.

Finally, we briefly note that for \ruleset{QuantumLevel($\ell, \Sigma$ and $\Pi$)}, we simply make the ruleset contain all games in \ruleset{QuantumLevel($\ell, \Sigma$)} and \ruleset{QuantumLevel($\ell, \Pi$)}. Since the games in the first $\Sigma_{\ell}$-complete canonically and for the second are $\Pi_{\ell}$-complete canonically, this game is now $\Sigma_{\ell}$ and $\Pi_{\ell}$ complete, completing the proof.
\end{proof}

Note that we can very slightly modify this game to get the completeness results for any flavor. Just simply make any label of True or False have a pair. Whenever one has a truth assignment as an option, they can express this with two different labels. Clearly no hardness results changes in D. By \cref{WeakFlavor}, the hardness results hold in the rest of the variants as well.


\subsection{Natural \cclass{PSPACE}-Complete Games}
\label{Sec:PSPACEComplete}
\ruleset{Quantum Geography}: \ruleset{Geography} is a game played on a di-graph, where players take turns moving a token from one vertex to an adjacent one, and it is illegal to move to a vertex already traversed. Like the rest of the games discussed, because it is normal play, the game ends when a player has no remaining adjacent vertices to traverse to.
Early, we considered \ruleset{Undirected Geography}, a special case of \ruleset{Geography}, in which the graph is undirected.
We have extended its classical polynomial-time solvability to \ruleset{Quantum Undirected Geography} with a classical start, and have
also showed that its quantumized complexity with a poly-wide start is \cclass{PSPACE}-complete.


Classically,
\ruleset{Geography} is a \cclass{PSPACE}-complete
combinatorial game.
We now prove that \ruleset{Quantum Geography}
remains \cclass{PSPACE}-complete.

\begin{theorem}
 For flavor $D$, \ruleset{Quantum Geography} remains \cclass{PSPACE}-complete (independent of superposition width).
    \label{thm:geography}
\end{theorem}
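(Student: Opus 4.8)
The plan is to establish the two bounds separately. For membership, note that \ruleset{Geography} is \emph{polynomially-short}: every move advances the token to a previously-unvisited vertex, so the height of its game tree is at most $|V|$. Hence by Theorem~\ref{Theo:Upperbound}, \ruleset{Quantum Geography} with flavor $D$ lies in \cclass{PSPACE} for every superposition width. For hardness I would reduce from classical \emph{directed} \ruleset{Geography}, which is \cclass{PSPACE}-complete~\cite{LichtensteinSipser:1980}, and design a gadget that exploits the locality of token movement so that quantum moves can never change the game's outcome.

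The gadget: replace every directed edge $(a,b)$ of the input graph $G$ by a directed path $a \to m^1_{ab} \to m^2_{ab} \to b$ through two fresh subdivision vertices. Call the vertices of $G$ \emph{real} and the new ones \emph{internal}. Two properties make this useful. First, each internal vertex has out-degree exactly $1$, so a quantum (superposition) move can only ever be \emph{initiated} from a real vertex, whose out-neighbors are precisely the distinct internal vertices $\{m^1_{ab}\}$. Second, the replacement path has odd length, so crossing a gadget uses three moves and preserves turn-parity; I would check that classical \ruleset{Geography} on the subdivided graph $G'$ has the same winner as on $G$. The only discrepancy is that in $G'$ a player may ``start toward'' an already-visited head $b$ (entering $m^1_{ab}$ is always legal since it is fresh); but such a move strands the mover two steps later at $m^2_{ab}$ facing the visited $b$, so it always leads to a next-player win for the opponent, i.e.\ it is a move into an \outcomeClass{N}-position and can never convert a losing position into a winning one. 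Adding such dominated options therefore leaves every outcome class unchanged.

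The heart of the proof is a \emph{collapse argument} showing that quantum moves never help the mover. I would let \emph{Hero} be the classical winner of $G'$ and exhibit a strategy maintaining the invariant: \emph{after each of Hero's moves the position is a single realization that is a classical Hero-win with Villain to move.} When Villain plays a classical move, Hero replies with the classical winning move, preserving the invariant. When Villain plays a quantum move it must be issued from a real vertex $a$, producing a superposition $\langle m^1_{ab_1}\,|\,\cdots\,|\,m^1_{ab_k}\rangle$ over distinct internal vertices. Hero then \emph{collapses} it by making the classical move to $m^2_{ab_1}$: this is legal in the realization at $m^1_{ab_1}$ and, because every other realization sits at a different internal vertex whose sole out-edge leads elsewhere, it is \texttt{NULL} in all of them, so they drop out. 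Since ``Villain commits to edge $(a,b_i)$'' is a Hero-win for every $i$ in the classical game, the resulting single-realization position (token at $m^2_{ab_1}$, Villain to move) is again a Hero-win, restoring the invariant. Crucially this collapse works verbatim for any $k$, which is exactly what yields \emph{independence of the superposition width}. Because the game tree has height $O(|V|)$ and Hero always has a move maintaining a won single-realization position, Villain eventually faces a terminal single realization and loses; thus Hero wins \ruleset{Quantum Geography} on $G'$, and the quantum outcome equals the classical one.

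The main obstacle I anticipate is rigorously ruling out that a quantum move could ever benefit the mover, rather than merely showing that Hero can neutralize one. The delicate points are (i) confirming that superpositions can persist for only a single ply---Hero collapses on the very next move, so Villain can never ``grow'' an entanglement and the invariant keeps Hero at classical positions whenever Hero chooses---and (ii) handling the divergence of visited-sets across realizations, in particular quantum components aimed at already-visited heads, so that the collapsing move's legality pattern (legal in exactly one realization, \texttt{NULL} in the rest) genuinely holds. Once the invariant and these legality facts are verified, width-independence and the \cclass{PSPACE}-completeness conclusion follow immediately.
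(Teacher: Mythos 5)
Your proposal is correct and follows essentially the same route as the paper: the identical gadget (subdividing each directed edge $(a,b)$ into a three-edge path through two fresh internal vertices) combined with the observation that any width-$k$ quantum move must be issued from a real vertex and can be collapsed by the opponent at the first internal layer, so that quantum moves only hand the choice of edge to the opponent; membership likewise comes from Theorem~\ref{Theo:Upperbound} via polynomial shortness. Your write-up is somewhat more explicit than the paper's (the single-realization invariant, the parity check, and the treatment of moves toward already-visited heads), but the gadget and the key collapse argument coincide.
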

\begin{proof}
We have a very simple reduction from classic 
\ruleset{Geography}.
We replace each edge in the \ruleset{Geography} graph as shown in \cref{fig:geographyproof} with a path through two new vertices.

Now, if a player ever makes a quantum move from a classical move, e.g. from $A$ to the super position of $AB_1$ and $AC_1$, then the opponent can immediately collapse to either $AB_2$ or $AC_2$, effectively choosing which of $B$ and $C$ will be moved to.
Thus, making a quantum move only gives the next player the power to choose your move and will never give a classically-losing player a winning quantum strategy.

In the next section, we will prove a result implying that \ruleset{Quantum Geography} remains \cclass{PSPACE}-solvable.
\end{proof}

\begin{figure}[h]
\begin{center}
\begin{tikzpicture}[node distance = 2cm]
    \tikzstyle{vertex}=[circle,thick,draw = black,minimum size=6mm]
    \node[vertex] (A)  {$A$};
    \node[vertex] at (3, 0) (B) {$B$};
    \node[vertex] at (4, 0) (A2) {$A$};
    \node[vertex] (AB1) [right of = A2] {$AB_1$};
    \node[vertex] (AB2) [right of = AB1] {$AB_2$};
    \node[vertex] (B2) [right of = AB2] {$B$};

    \path[-{Latex[length=3mm]}]
        (A) edge node [below, yshift=-.75cm] {\ruleset{Geography} Edge} (B)
        (A2) edge node {} (AB1)
        (AB1) edge node [below, yshift = -.75cm]{Resulting \ruleset{Quantum Geography} Gadget} (AB2)
        (AB2) edge node {} (B2);
\end{tikzpicture}
\end{center}
\caption{\cclass{PSPACE} Reduction to \ruleset{Quantum Geography} is a simple transformation on the edges.}
\label{fig:geographyproof}
\end{figure}
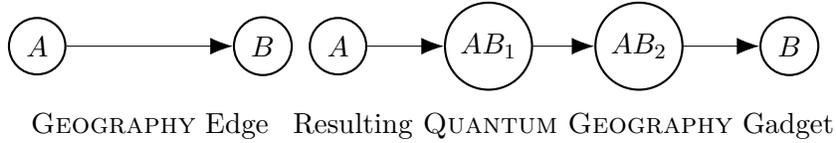

Unfortunately, we can't simply apply one of our lemmas to extend this hardness result to the other flavors. So, instead, we craft a more careful reduction, inspired by this one, that can work with only quantum moves selected.

\begin{lemma}
\label{lem:DAGGepgraphy}
\ruleset{Quantum Directed Geography} is PSPACE-complete in all flavors.
\end{lemma}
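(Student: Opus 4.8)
The plan is to establish the two bounds separately. For the upper bound, I would observe that \ruleset{Directed Geography} is \emph{polynomially short}: the game tree has height at most the number of vertices, since every move permanently consumes a vertex. Hence \cref{Theo:Upperbound} applies to every flavor $\phi\in\{A,B,C,C',D\}$ and places \ruleset{Quantum Directed Geography} in \cclass{PSPACE}. For hardness I would reduce from classical \ruleset{Directed Geography} (\cclass{PSPACE}-complete by Lichtenstein--Sipser). For flavor $D$ the subdivision gadget of \cref{thm:geography} already works: replacing each arc $(a,b)$ by the path $a\to ab_1\to ab_2\to b$ makes any quantum branching at a decision vertex collapsible by the opponent's \emph{classical} move $ab_1\to ab_2$, so a quantum move never helps the classically losing player. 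The obstacle is that flavors $A$ and $B$ forbid (or deprioritize) exactly this collapsing classical move, while $C,C'$ restrict it to safe/respectful moves.

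The fix I would use is the twin-vertex trick the paper already employs when porting the poly-wide reduction to all flavors (replacing the classical move to $\textsc{stop}$ by a quantum move to two copies): make every ``collapse'' move itself a width-$2$ quantum move. Concretely, for each arc $(a,b)$ I would give $a$ two twin entry vertices $ab_1,ab_1'$, give $ab_1$ a private pair of successors $\{ab_2,ab_2'\}$ and $ab_1'$ a disjoint private pair, with all four leading to the shared real vertex $b$. Honest traversal of the single logical arc $(a,b)$ is then the three-ply sequence $a\to\langle ab_1\mid ab_1'\rangle\to\langle ab_2\mid ab_2'\rangle\to\langle b\mid b\rangle$, where the final step re-purifies to the single classical position $b$ because every twin path funnels into the same vertex $b$; the three-ply length is odd, matching the single original arc and so preserving turn alternation. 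Because every gadget vertex has out-degree at least $2$, a quantum move is always available, which is what flavors $A$ and $B$ require.

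For correctness I would maintain the invariant that after each completed arc traversal all surviving realizations have visited exactly the same set of \emph{original} vertices, so that the quantum game on the gadget graph faithfully simulates classical \ruleset{Directed Geography}. The key point is that a \emph{dishonest} quantum move mixing two distinct arcs, say $\langle ab_1\mid ac_1\rangle$, is collapsible by the opponent: the private pair $\langle ab_2\mid ab_2'\rangle$ is feasible only from $ab_1$, so playing it kills the $ac_1$ realization and commits the traversal to $b$ (and symmetrically for any width, since one private-pair move annihilates every realization outside the chosen arc). Thus quantum branching hands the \emph{opponent} the choice of continuation and never rescues the classical loser, while the classical winner simply plays honest traversals and private-pair collapses; the terminal ``stuck'' condition transfers because every gadget vertex leads forward into some real vertex whose fate is dictated by the original game. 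This yields the same winner in every flavor.

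The part I expect to be hardest is precisely this flavor-$A$/$B$ faithfulness argument: verifying that the opponent can \emph{always} realize the intended collapse using only a quantum move (never needing a classical move), that no player can get spuriously trapped inside a gadget in a way with no counterpart in the original game, and that the ``same set of visited original vertices'' invariant survives every dishonest superposition the losing player might attempt. The parity bookkeeping across the twin layers and the safe/respectful restrictions of flavors $C,C'$ add further case analysis, but they reduce to the same collapse mechanism once the invariant is in place.
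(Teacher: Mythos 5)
Your overall route matches the paper's: \cclass{PSPACE} membership because the game is polynomially short (so \cref{Theo:Upperbound} applies), and hardness by reducing from classical \ruleset{Geography}, replacing each arc by a gadget of parallel twin paths so that a width-$2$ quantum move is always available, honest traversal preserves turn parity, and any dishonest superposition mixing two arcs can be collapsed by the opponent via a move private to one branch. The paper's gadget introduces six vertices per arc: two parallel two-vertex paths out of $A$, plus a two-vertex tail $AB_{c_1}, AB_{c_2}$ hanging off the second layer.

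There is, however, a concrete failure in your gadget for flavor $A$ at the last hop of an arc. After the honest plies $a\to\langle ab_1\mid ab_1'\rangle\to\langle ab_2\mid ab_2'\rangle$, the only outgoing target from either surviving realization is the single real vertex $b$. In the paper's formalization of \ruleset{Geography} a move is identified by its target vertex, so ``move to $b$'' is one classical move (feasible, indeed safe, in both realizations); there is no pair of \emph{distinct} moves from which to form an eligible superposition, and $\langle b\mid b\rangle$ is not a legal quantum move since the definition requires distinct $\sigma_i$. In flavor $A$ the player to move therefore has no legal move at that layer and loses spuriously inside the gadget, breaking the simulation (flavors $B$ through $D$ survive, since they permit the classical move to $b$ there). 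Your claim that the last step ``re-purifies to the single classical position $b$'' is also not quite right even under an edge-labelled move alphabet: the two realizations have different visited sets, so $\filter$ does not merge them --- though that part is harmless because the realizations are strategically equivalent. The paper's tail vertices are exactly the repair you are missing: from the second layer there are two distinct targets, $B$ and $AB_{c_1}$, so the move into $B$ can always be realized as the quantum move $\langle B\mid AB_{c_1}\rangle$, and the tail simultaneously punishes a player forced into the gadget when $B$ is unavailable. You would need an analogous second exit target at that layer before the flavor-$A$ case goes through.
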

\begin{proof}
PSPACE membership is obvious because the game is polynomially short.

For hardness, we reduce from classical \ruleset{Geography}: For each directed edge $(A, B)$, introduce 5 vertices: $AB_{a_1}$, $AB_{a_2}$, $AB_{b_1}$, $AB_{b_2}$, $AB_{c_1}$, and $AB_{c_2}$. Let there be directed edges $(A, AB_{a_1})$, $(A, AB_{a_2})$, $(AB_{a_1}, AB_{a_2})$, $(AB_{b_1}, AB_{b_2})$, $(AB_{a_2}, B)$, $(AB_{a_2}$, $AB_{c_1})$, $(AB_{b_2}$, $AB_{c_1})$ and $(AB_{c_1}, AB_{c_2})$. Notice that now, one may always take these two edges for any directed edge from $A$. Then, the other player must either move classically or quantumly to $AB_{a_2}$ and/or $AB_{b_1}$. If $B$ doesn't exist, the player either loses, or moves to $AB_{c_1}$, after which, the other player would move to $AB_{c_2}$ and win. If $B$ does exist, the player can make a quantum move to $B$ and $AB_{c_2}$, after which the player may collapse with their own quantum move.
\end{proof}

\noindent\ruleset{Quantum Node Kayles}: \ruleset{Node Kayles} is a game where players alternate turns placing tokens on vertices of a given graph.
A player is only able to place tokens on vertex if that vertex does not already contain a token and is not adjacent to any vertex with a token.
As such, a player is unable to move when the tokens form an {\em maximal}
independent set.
This game is classically \cclass{PSPACE}-complete, and we prove that its quantum
extension is as well.

\begin{theorem}
For flavor $D$,
    \ruleset{Quantum Node Kayles} remains a \cclass{PSPACE}-complete game (independent of superpositon width).
    \label{thm:nodeKayles}
\end{theorem}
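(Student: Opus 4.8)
The plan is to establish the two bounds separately, reusing the general machinery already developed. For the upper bound, observe that \ruleset{Node Kayles} is polynomially short: each move permanently occupies a vertex, so every game tree---classical or quantum---has height at most $|V|$. Hence Theorem \ref{Theo:Upperbound} immediately gives that \ruleset{Quantum Node Kayles} with flavor $D$ lies in \cclass{PSPACE}, independent of superposition width. The entire content of the theorem is therefore in the hardness direction.

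For hardness I would reduce from classical \ruleset{Node Kayles}, which is \cclass{PSPACE}-complete \cite{DBLP:journals/jcss/Schaefer78}, following the same philosophy as the \ruleset{Quantum Geography} reduction of \cref{thm:geography}: build a modified graph $G'$ from the original graph $G$ so that quantum moves are never profitable for the classically-losing player. Concretely, I would replace each vertex $v$ of $G$ by a small \emph{selection gadget} whose role is to force any ``claim'' of $v$ to pass through an intermediate vertex, in such a way that a single response by the opponent can both \emph{collapse} unwanted realizations and constitute a legitimate reply---exactly as moving to $AB_2$ does in the \ruleset{Geography} gadget. The intended correspondence is that a classical play in $G'$ simulates a classical play of \ruleset{Node Kayles} on $G$, and whenever a player commits to a superposition of selections, the opponent picks one realization to keep (collapsing the rest with a single progress move) and answers as if the mover had classically selected that one vertex. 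Since handing the opponent the choice of your own move can never help you, the classically-winning player (the \emph{hero}) can win $G'$ using only classical moves, responding to each of the \emph{villain}'s quantum moves by collapse-and-reply; this preserves the outcome class and yields \cclass{PSPACE}-hardness.

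Two features must be built into the gadget simultaneously. First, for the neutralization argument to go through for arbitrary superposition width, the collapse response must remain available when the villain superposes selections lying in several different gadgets at once; the hard case is precisely a wide superposition spanning many gadgets, where a single vertex must be legal in exactly the realization the hero wants to keep and illegal in all the others. Second, to later extend the result to the remaining flavors via \cref{WeakFlavor}, I would make the gadget \emph{symmetric}: each selection move should come with a twin move inducing an isomorphic subtree, so that the symmetry hypothesis of \cref{WeakFlavor} holds at every position. I expect the main obstacle to be the joint design of this gadget---engineering the adjacencies so that (i) the hero always has a single move that both prunes the unwanted realizations of a width-$w$ superposition and advances the simulated classical game, (ii) the villain gains nothing by superposing within a gadget (by symmetry) or across gadgets (by collapse), and (iii) the parity and the simulated classical strategy are not disturbed by the extra gadget vertices. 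Verifying that these constraints can be met at once, and proving the resulting correspondence rigorously, is where the real work lies; the upper bound and the high-level ``opponent-chooses-your-move'' principle are routine by comparison.
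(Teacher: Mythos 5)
Your upper bound is fine and matches the paper: \ruleset{Node Kayles} is polynomially short, so Theorem \ref{Theo:Upperbound} puts its quantum lift in \cclass{PSPACE}. The problem is the hardness half, which is where the entire content of the theorem lives, and there your proposal is a plan rather than a proof: the selection gadget that everything hinges on is never constructed. You correctly identify the constraints it must satisfy and then state that verifying they can be met simultaneously ``is where the real work lies'' --- but that work is the theorem. As written, there is a genuine gap at the core of the argument.

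Moreover, there is reason to doubt that the route you chose (reduce from classical \ruleset{Node Kayles}, imitate the \ruleset{Quantum Geography} edge gadget) can be carried out. The Geography gadget works because the game is local: all moves issue from the single token position, so after a quantum move the opponent's reply is automatically drawn from the superposed branch points, and subdividing the edge hands the replier the choice of branch for free. In \ruleset{Node Kayles} there is no such locality. After the villain plays $\langle u\ |\ v\rangle$ with $u$ and $v$ in distant gadgets, the hero's single reply vertex $z$ must simultaneously (i) be non-adjacent to exactly the realizations the hero wants to discard and adjacent to (hence illegal only in) the right ones, and (ii) be the correct classical reply in the kept realization of the simulated game on $G$. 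These are two independent constraints on one vertex, and for width-$w$ superpositions spanning many gadgets they multiply. The paper sidesteps this by not reducing from \ruleset{Node Kayles} at all: it reduces from \ruleset{QBF}, arranging the variable vertices into levels and adding global punisher vertices $y_{ij}$ adjacent to everything except level $j$. Any deviation from the prescribed order of play --- classical or quantum --- is then answered by a single $y$-move that collapses every deviating realization and leaves the deviator with no moves, because the punisher dominates the whole graph outside one level. That global, level-indexed structure is exactly what your local per-vertex gadget lacks, and it is not clear how to recover it when the source problem is an arbitrary \ruleset{Node Kayles} graph with no level decomposition. (Your symmetry requirement for the other flavors is also not needed for this theorem, which is stated for flavor $D$ only; the paper handles the other flavors separately in Corollary \ref{NodeKaylesAllFlavors} by duplicating vertices.)
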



\begin{figure}[h]
\begin{center}
\begin{tikzpicture}[node distance = 1.5cm]
    \tikzstyle{vertex}=[circle,thick,draw = black, fill = white, minimum size=6mm]
    \tikzstyle{group} = [rectangle, rounded corners, line width=.5mm, draw = black, fill = gray!30]

    \node[vertex] (x1) at (0,0) {$x_1$};
    \node[vertex] (x1b) [right of = x1] {$\overline{x_1}$};

    \node[vertex] (y14) at (7, 0) {$y_{1, 4}$};
    \node[vertex] (y13) [right of = y14] {$y_{1, 3}$};
    \node[vertex] (y12) [right of = y13] {$y_{1, 2}$};

    \node[vertex] (x2T) at (0,-3) {$x_{2T}$};
    \node[vertex] (x2F) [right of = x2T] {$x_{2F}$};
    \node[vertex] (x2Tb) [right of = x2F] {$\overline{x_{2T}}$};
    \node[vertex] (x2Fb) [right of = x2Tb] {$\overline{x_{2F}}$};

    \node[vertex] (y24) at (7, -3) {$y_{2, 4}$};
    \node[vertex] (y23) [right of = y24] {$y_{2, 3}$};

    \node[vertex] (x3T) at (0,-6) {$x_{3T}$};
    \node[vertex] (x3F) [right of = x3T] {$x_{3F}$};
    \node[vertex] (x3Tb) [right of = x3F] {$\overline{x_{3T}}$};
    \node[vertex] (x3Fb) [right of = x3Tb] {$\overline{x_{3F}}$};


    \node[vertex] (c1) at (0,-9) {$C_1$};
    \node[vertex] (c2) [right of = c1] {$C_2$};
    \node[vertex] (c3) [right of = c2] {$C_3$};

    \begin{scope}[on background layer]
        \node[fit={(x1) (x1b) (y12) (y13) (y14)}, group] (Level1x) {};
        \node[fit={(x2T) (x2F) (x2Tb) (x2Fb) (y23) (y24)}, group] (Level2x) {};
        \node[fit = {(x3Fb)}] (Level3xConnector) {};
        \node[fit={(x3T) (x3F) (x3Tb) (x3Fb)}, group] (Level3x) {};
        \node[fit={(c1) (c2) (c3)}, group] (Clauses) {};
    \end{scope}

    \node (Label1) at (-3,  0) {Level 1};
    \node (Label2) at (-3, -3) {Level 2};
    \node (Label3) at (-3, -6) {Level 3};
    \node (LabelC) at (-3, -9) {Level 4};

    \node (ClauseConnectLeft)  at (7, -8) {};
    \node (ClauseConnectRight) at (7.25, -7.88) {};
    \node (Level3ConnectAbove) at (9.43, -4.2) {};
    \node (Level3ConnectBelow) at (9.30, -4.5) {};

    \draw[-] (c2) to [out=-45,in=-135] (6,-9) to [in=-50] (x2Fb);

    \path[-]
        (x1) edge [] node [] {} (x2T)
        (x1) edge [] node [] {} (x2Tb)
        (x1b) edge [] node [] {} (x2F)
        (x1b) edge [] node [] {} (x2Fb)
        (x2T) edge [] node [] {} (x3T)
        (x2T) edge [] node [] {} (x3Tb)
        (x2F) edge [] node [] {} (x3T)
        (x2F) edge [] node [] {} (x3Tb)
        (x2Tb) edge [] node [] {} (x3F)
        (x2Tb) edge [] node [] {} (x3Fb)
        (x2Fb) edge [] node [] {} (x3F)
        (x2Fb) edge [] node [] {} (x3Fb)
        (c1) edge [bend left=45] node [] {} (x1)
        (c1) edge [bend left=45] node [] {} (x2T)
        (c1) edge [bend left=60] node [] {} (x2F)
        (c1) edge [] node [] {} (x3Tb)
        (c1) edge [] node [] {} (x3Fb)
        (c2) edge [] node [] {} (x2Tb)
        (c2) edge [] node [] {} (x3T)
        (c2) edge [] node [] {} (x3F)
        (c3) edge [] node [] {} (x3T)
        (c3) edge [] node [] {} (x3F)
        (c3) edge [] node [] {} (x3Tb)
        (c3) edge [] node [] {} (x3Fb)
        (y14) edge [] node [] {} (Level2x)
        (y14) edge [bend left=40] node [] {} (Level3ConnectBelow)
        (y13) edge [] node [] {} (Level2x)
        (y13) edge [bend left=50] node [] {} (ClauseConnectLeft)
        (y12) edge [bend left=10] node [] {} (Level3ConnectBelow)
        (y12) edge [bend left=39] node [] {} (ClauseConnectLeft)
        (y24) edge [] node [] {} (Level3x)
        (y23) edge [bend left=40] node [] {} (ClauseConnectLeft)
        ;
    \path[-]
        (Clauses) [line width=0.5mm] edge [bend right=10] node [] {}(ClauseConnectRight)
        (Level3xConnector) [line width=0.4mm] edge [bend right=41] node [] {} (Level3ConnectAbove)
        ;
\end{tikzpicture}
\end{center}
\caption{Example of the \cclass{PSPACE} reduction from \ruleset{QBF} to \ruleset{Quantum Node Kayles}.  QBF: $\exists x_1: \forall x_2: \exists x_3: (x_1 \vee x_2 \vee \overline{x_3}) \wedge (\overline{x_2} \vee x_3) \wedge (x_3 \vee \overline{x_3})$}
\label{fig:kayles}
\end{figure}
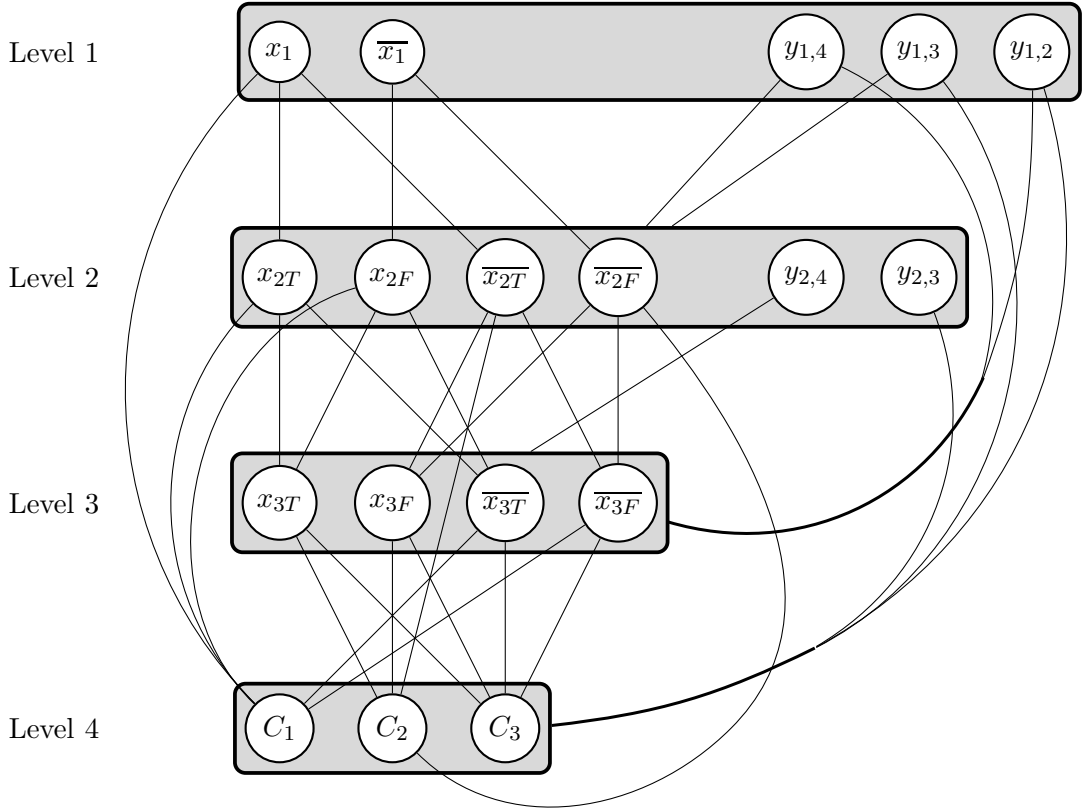

Our proof of this theorem follows a similar plan to the original proof of the \cclass{PSPACE}-hardness of \ruleset{Node Kayles}\cite{DBLP:journals/jcss/Schaefer78}.

\begin{proof}
We reduce from instances of the classic \ruleset{QBF} problem where true assigns both the first and last variable.
This assumption is without loss of generality, since if a instance ended with \texttt{false}, we can just give Player True a variable $x_{n+1}$ and add an extra clause $(x_{n+1} \vee \bar{x}_{n+1})$.

To create the reduction seen in \cref{fig:kayles}, we do the following:

\begin{enumerate}
  \item For $x_1$, create two vertices $x_1$ and $\bar{x}_1$.
  \item For all other variables $x_i$, create four vertices $x_{iT}$, $x_{iF}$, $\bar{x}_{iT}$, and $\bar{x}_{iF}$. We refer to the group of these as $X_i$
  \item For each original variable $x_i$, create vertices $y_{ij}$ for all $i < j \leq n + 1$
  \item Connect all vertices in each $X_{i}$ to each other vertex in $X_i$ and to each $y_{ij}$. We will call the clique this forms "level $i$".
  \item Create a vertex for each clause $C_i$ and add edges between all clause vertices.  We consider all clause-vertices to be at level $n+1$.
  \item Connect each $x_i$ to $x_{(i + 1)T}$ $\bar{x}_{(i + 1)T}$. Connect each $\bar{x}_i$ to $x_{(i + 1)F}$ and $\bar{x}_{(i + 1)F}$
  \item Connect each $x_i$ to every clause $C_j$ where $x_i$ appears unnegated, and connect each $\bar{x}_i$ to each clause $C_j$ where $x_i$ appears negated.
  \item Connect each $y_{ij}$ to every other vertex in the graph except the vertices on level $j$.
\end{enumerate}

This game of quantum node Kayles is winnable by the first player if and only if the original \ruleset{QBF} is a true instance.

First, we will show that this is true classically.
We refer to the first player as ``True''  and the second player as ``False''.
We prescribe the following order of play: on turn $i$, the current player will play classically on a vertex representing variable $x_i$.
When all of these variables are selected, False will then attempt to play on a remaining clause vertex. If they can, then False will win, as each level of vertices has had a vertex played in it, and each level is a clique.
If False can't play on a clause vertex, then they are all covered, so no vertices remain in the graph and True wins. Clearly if both players play on the variable vertices optimally, then False can chose a clause vertex at the end if and only if they could in the \ruleset{QBF}.

It remains to be shown that if this prescribed play is violated by the classically losing player, they will lose.
Assume the order is observed until level $i$, then it is on turn $i+1$ it is violated.
For each of these violating plays, the classically-winning player has a winning response:

\begin{itemize}
\item If they classically played a vertex on level $j$, then the other player simply plays on $y_{(i+ 1)j}$ and wins.
\item If they classically played on $y_{(i + 1)j}$, then the other player simply needs to play on any vertex on level $j$ to win.
\item If the player made a quantum move that includes a vertex in a later level $j$, the other can play $y_{(i + 1)j}$, which will collapse all realizations where they didn't play on level $j$. In those realizations, there are no moves left for the other player.
\item For any $y_{(i+ 1)j}$ chosen, the other player can continue playing normally, since those will collapse immediately upon making the next move.
\item Since play has been classical up to turn $i$, the previous $x_i$ selection leaves only two options open for for their quantum move. This quantum move will collapse immediately after the other player makes their next move.
We will say they select in a way such that it collapses to \texttt{true}, and make the move after the correct response for a
logic assignment by the previous player.
\end{itemize}
\end{proof}

\begin{corollary}\label{NodeKaylesAllFlavors}
\ruleset{Quantum Node Kayles} is \cclass{PSPACE}-complete in all flavors
\end{corollary}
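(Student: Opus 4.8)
The plan is to establish the two directions separately: membership in \cclass{PSPACE} for every flavor, and \cclass{PSPACE}-hardness for every flavor. Membership is immediate from the structural upper bound: \ruleset{Node Kayles} is polynomially-short (its game tree has height at most $|V|$, since each move occupies at least one previously-unoccupied vertex), so by Theorem \ref{Theo:Upperbound} its quantum lift lies in \cclass{PSPACE} regardless of flavor, the flavor merely restricting which moves are legal within the same bounded-height framework. Hardness in flavor $D$ is already in hand from Theorem \ref{thm:nodeKayles}. It therefore remains to transport the flavor-$D$ hardness to flavors $A$, $B$, $C$, and $C'$, for which I would invoke Lemma \ref{WeakFlavor}.

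To apply Lemma \ref{WeakFlavor} I first modify the reduction of Theorem \ref{thm:nodeKayles} by \emph{doubling} every vertex. Concretely, replace each vertex $v$ of the constructed graph by a \emph{true-twin} pair $\{v, v'\}$: add the edge $(v,v')$, and for every original edge $(u,v)$ include all four edges $(u,v)$, $(u,v')$, $(u',v)$, $(u',v')$. This makes the closed neighborhoods equal, $N[v]=N[v']$, so playing on $v$ and playing on $v'$ occupy-and-block exactly the same set of vertices and hence lead to \emph{identical} resulting positions. The doubled game is thus isomorphic, move-for-move, to the original flavor-$D$ instance (each original move now has two interchangeable representatives), so its flavor-$D$ outcome is unchanged and it remains \cclass{PSPACE}-hard.

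The twin structure is preserved along every play: since $N[v]=N[v']$, whenever one of $v,v'$ becomes occupied or blocked so does the other, so in any reachable (possibly entangled) position each surviving pair is still a true-twin pair of the remaining graph. Consequently, at every position and for every classical move on a vertex $v$ there is a partner classical move on $v'$ that produces the same, hence perfectly symmetric, game subtree---exactly the hypothesis of Lemma \ref{WeakFlavor}. The lemma then yields a reduction from the flavor-$D$ doubled game to the same game under any other flavor: the quantum move $\langle v \mid v' \rangle$ purifies to a single realization equal to the classical move on $v$, so it faithfully simulates a classical move even when classical moves are restricted (flavors $B$, $C$, $C'$) or forbidden (flavor $A$), the two chosen moves being distinct as required. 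Combined with the membership argument, this gives \cclass{PSPACE}-completeness in all flavors.

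The main obstacle I anticipate is verifying the recursive invariant cleanly: one must confirm that the true-twin relation is maintained under the \ruleset{Node Kayles} move rule at every reachable position, and that doubling creates no spurious moves and does not change which vertices are blocked. Once this invariant is in place the reduction is mechanical; the delicate conceptual point is simply that \emph{closed}-neighborhood equality is the correct twin notion for \ruleset{Node Kayles}, as opposed to open-neighborhood (false) twins, which would leave $v'$ playable after $v$ is taken and thereby destroy the required symmetry.
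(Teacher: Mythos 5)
Your proposal is correct and follows essentially the same route as the paper: membership from the polynomially-short upper bound (Theorem \ref{Theo:Upperbound}), and hardness by adding a true-twin $v'$ for each vertex $v$ (same neighborhood plus the edge $(v,v')$) so that Lemma \ref{WeakFlavor} transfers the flavor-$D$ hardness of Theorem \ref{thm:nodeKayles} to all other flavors. Your write-up is more careful than the paper's about verifying that the twin invariant persists along play and about why closed-neighborhood twins are the right notion, but the underlying argument is the same.
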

\begin{proof}
Its in \cclass{PSPACE} because the game is polynomially short.

We can preserve the complexity by creating a new vertex $v$ for each vertex $u$ where $v$ shares all of the same edges as $u$, along with an edge $(u, v)$, as a selection of any $u$ and $v$ achieves exactly the same result, and is thus equivalent. Then, by the symmetry property this game has, and by \cref{WeakFlavor}, it is \cclass{PSPACE}-hard in all flavors.
\end{proof}

\noindent\ruleset{Quantum Bigraph Node Kayles}:
\ruleset{Bigraph Node Kayles} is a variant of \ruleset{Node Kayles} in which nodes are partitioned into red nodes and blue nodes, where the blue player can only play on blue vertices and the red player can only play on red vertices.
Classically, \ruleset{Bigraph Node Kayles} is \cclass{PSPACE}-complete.
Below we prove that it remains \cclass{PSPACE}-complete in the quantum setting.

\begin{theorem}
For all flavors,
   \ruleset{Quantum BiGraph Node Kayles} is \cclass{PSPACE}-complete (independent of
   superposition width).
    \label{thm:bigraphNK}
\end{theorem}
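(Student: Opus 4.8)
The plan is to prove membership and hardness separately and then lift the hardness across flavors. For membership, I would observe that \ruleset{BiGraph Node Kayles} is \emph{polynomially short}: every move permanently occupies a vertex and no vertex is ever freed, so any line of play terminates within $|V|$ moves, and quantum moves do not increase game-tree height. Hence by \cref{Theo:Upperbound} the quantum lift lies in \cclass{PSPACE} for every flavor and every superposition width, so it remains only to prove hardness.

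For hardness in flavor $D$ I would adapt the reduction of \cref{thm:nodeKayles} for \ruleset{Quantum Node Kayles}, replacing its impartial structure with a two-coloring dictated by variable ownership. Starting from a classical \ruleset{QBF} instance in which Player True assigns the first and last variables, I would build the same leveled graph used there — a clique $X_i$ of variable-gadget vertices per level, a clique of clause vertices at the top level, and the interference vertices $y_{ij}$ adjacent to every vertex except those on level $j$ — but now color each variable gadget with the color of the player who owns the corresponding variable (blue for True's existential variables, red for False's universal variables) and color each clause vertex red so that only Player False may seize it at the end. Because levels alternate colors exactly as the quantifiers alternate, the color constraint by itself forces each player to move inside their own variable gadgets, so the prescribed classical line (one token per level, in index order) encodes the \ruleset{QBF} alternation and Player False can occupy a clause vertex iff the formula is falsifiable, reproducing the classical analysis.

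The step I expect to be the main obstacle is showing that the color restriction does not deprive the \emph{winning} player of the punishment responses that neutralize quantum deviations. In the impartial proof, whenever the losing player plays ahead to level $j$, plays a $y_{ij}$, or makes a quantum move touching a later level, ``the other player'' answers with the matching $y_{ij}$ or level-$j$ vertex; in the bigraph setting that answer must carry the opponent's color. I would resolve this by duplicating each interference vertex into a blue copy and a red copy having identical neighborhoods and joined by a single edge, so that whichever player is punishing always has a same-colored response; placing it collapses every realization in which the deviator failed to commit to level $j$, and since the twins are mutually adjacent at most one is ever used, so the collapse-and-terminal bookkeeping of \cref{thm:nodeKayles} transfers verbatim. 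This duplication simultaneously installs the \emph{symmetry} property: every classical move $x$ acquires a twin move $y$ whose subtree is isomorphic with identical labels. I would then invoke \cref{WeakFlavor}, exactly as in \cref{NodeKaylesAllFlavors}, to convert the flavor-$D$ winning strategy into a purely quantum one and thereby obtain \cclass{PSPACE}-hardness in flavors $A$, $B$, $C$, and $C'$ as well, all at superposition width $2$; since no part of the argument depends on the width, this establishes \cclass{PSPACE}-completeness of \ruleset{Quantum BiGraph Node Kayles} for all flavors independent of superposition width.
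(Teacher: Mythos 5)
Your overall plan coincides with the paper's: membership via polynomial shortness and \cref{Theo:Upperbound}, hardness by two-coloring the graph of \cref{thm:nodeKayles} according to variable ownership, and a lift to the other flavors via \cref{WeakFlavor}. The divergence --- and the problem --- is in how you handle the interference vertices $y_{ij}$. Your worry that ``the color restriction deprives the winning player of the punishment responses'' is a misdiagnosis: the punisher of a deviation at turn $i$ is always the \emph{same} player, namely the opponent of whoever moves at turn $i$, since only the classically-losing player ever deviates and the responder is fixed by turn parity. So a single copy of $y_{ij}$, colored with that punisher's color, always suffices; this is exactly what the paper does, and it has the additional virtue of \emph{deleting} the deviator's option of playing $y_{ij}$ themselves. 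Your blue/red twin construction instead \emph{reintroduces} that option: a blue deviator at turn $i+1$ may now play $y^{B}_{(i+1)j}$ for a blue level $j$, after which the only unblocked vertices lie on level $j$. The punishment you import from \cref{thm:nodeKayles} for this case --- ``play any vertex on level $j$'' --- is then illegal for the red punisher, so the bookkeeping does not ``transfer verbatim'' as you claim. (One can still rescue the reduction with a two-step punishment through a red twin $y^{R}_{j,j+1}$ on level $j$, but that is a new case analysis you would have to supply, and it is wholly avoided by the paper's single-colored-copy choice.)

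A second gap is in the flavor lift. \cref{WeakFlavor} requires that \emph{every} classical move have a twin with an isomorphic, identically-labelled subtree; in \cref{NodeKaylesAllFlavors} this is arranged by twinning \emph{every} vertex of the graph. You twin only the interference vertices, yet assert that this ``installs the symmetry property.'' It does not: the literal vertices $x_{iT}$ and $x_{iF}$ have different neighborhoods on level $i-1$ (and $x_1$, $\bar{x}_1$ represent different truth values), and the clause vertices are pairwise non-isomorphic, so classical moves on those vertices have no symmetric partners. To invoke \cref{WeakFlavor} you must duplicate all vertices (each twin inheriting its original's color and joined to it by an edge), exactly as in the impartial corollary.
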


\begin{proof}
We use the proof from regular \ruleset{Quantum Node Kayles} and color the vertices based on who should play on the vertex if the player is following the rules. Since the legal moves are still possible, players have less cheating options, and if they do cheat, the legal response is still available, the reduction still holds.
See Figure \ref{fig:bigraphNodeKayles} for illustration.
\end{proof}

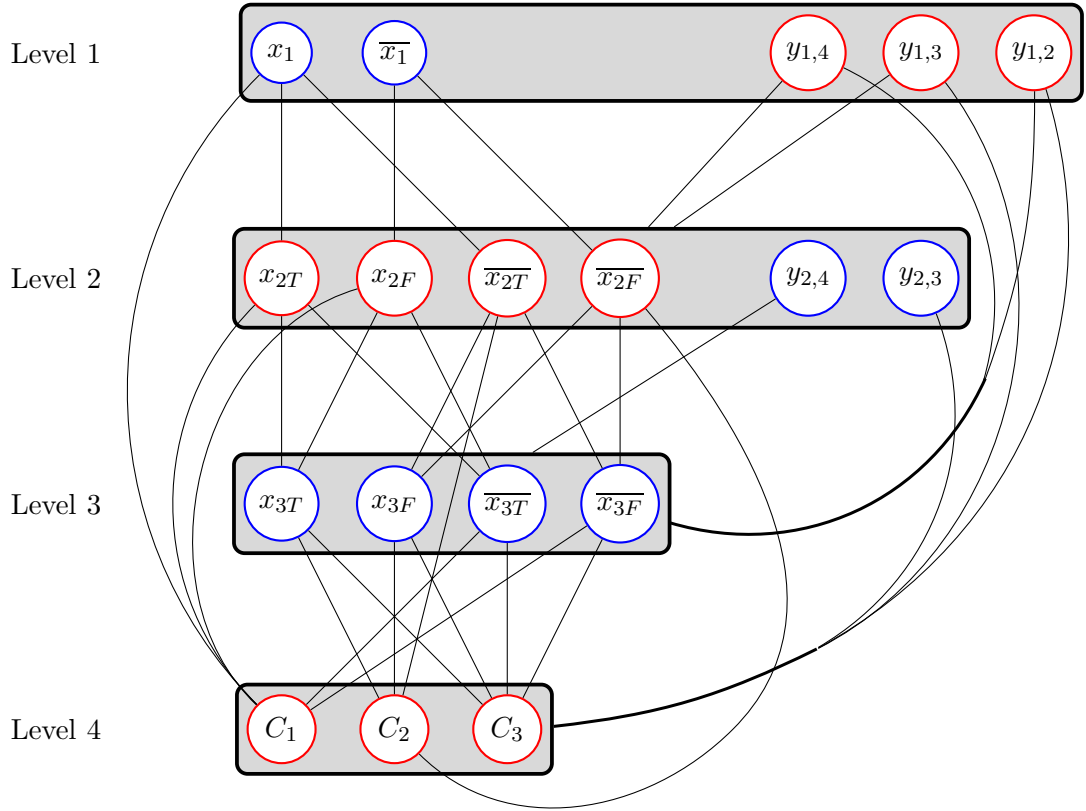
\begin{figure}[h]
\begin{center}
\begin{tikzpicture}[node distance = 1.5cm]
    \tikzstyle{vertex}=[circle,thick,draw = black, fill = white, minimum size=6mm]
    \tikzstyle{blue-vertex}=[circle,thick,draw = blue, fill = white, minimum size=6mm]
    \tikzstyle{red-vertex}=[circle,thick,draw = red, fill = white, minimum size=6mm]
    \tikzstyle{group} = [rectangle, rounded corners, line width=.5mm, draw = black, fill = gray!30]
    \tikzstyle{blue-group} = [rectangle, rounded corners, line width=.5mm, draw = black, fill = gray!30]
    \tikzstyle{red-group} = [rectangle, rounded corners, line width=.5mm, draw = black, fill = gray!30]

    \node[blue-vertex] (x1) at (0,0) {$x_1$};
    \node[blue-vertex] (x1b) [right of = x1] {$\overline{x_1}$};

    \node[red-vertex] (y14) at (7, 0) {$y_{1, 4}$};
    \node[red-vertex] (y13) [right of = y14] {$y_{1, 3}$};
    \node[red-vertex] (y12) [right of = y13] {$y_{1, 2}$};

    \node[red-vertex] (x2T) at (0,-3) {$x_{2T}$};
    \node[red-vertex] (x2F) [right of = x2T] {$x_{2F}$};
    \node[red-vertex] (x2Tb) [right of = x2F] {$\overline{x_{2T}}$};
    \node[red-vertex] (x2Fb) [right of = x2Tb] {$\overline{x_{2F}}$};

    \node[blue-vertex] (y24) at (7, -3) {$y_{2, 4}$};
    \node[blue-vertex] (y23) [right of = y24] {$y_{2, 3}$};

    \node[blue-vertex] (x3T) at (0,-6) {$x_{3T}$};
    \node[blue-vertex] (x3F) [right of = x3T] {$x_{3F}$};
    \node[blue-vertex] (x3Tb) [right of = x3F] {$\overline{x_{3T}}$};
    \node[blue-vertex] (x3Fb) [right of = x3Tb] {$\overline{x_{3F}}$};


    \node[red-vertex] (c1) at (0,-9) {$C_1$};
    \node[red-vertex] (c2) [right of = c1] {$C_2$};
    \node[red-vertex] (c3) [right of = c2] {$C_3$};

    \begin{scope}[on background layer]
        \node[fit={(x1) (x1b) (y12) (y13) (y14)}, blue-group] (Level1x) {};
        \node[fit={(x2T) (x2F) (x2Tb) (x2Fb) (y23) (y24)}, red-group] (Level2x) {};
        \node[fit = {(x3Fb)}] (Level3xConnector) {};
        \node[fit={(x3T) (x3F) (x3Tb) (x3Fb)}, blue-group] (Level3x) {};
        \node[fit={(c1) (c2) (c3)}, red-group] (Clauses) {};
    \end{scope}

    \node (Label1) at (-3,  0) {Level 1};
    \node (Label2) at (-3, -3) {Level 2};
    \node (Label3) at (-3, -6) {Level 3};
    \node (LabelC) at (-3, -9) {Level 4};

    \node (ClauseConnectLeft)  at (7, -8) {};
    \node (ClauseConnectRight) at (7.25, -7.88) {};
    \node (Level3ConnectAbove) at (9.43, -4.2) {};
    \node (Level3ConnectBelow) at (9.30, -4.5) {};

    \draw[-] (c2) to [out=-45,in=-135] (6,-9) to [in=-50] (x2Fb);

    \path[-]
        (x1) edge [] node [] {} (x2T)
        (x1) edge [] node [] {} (x2Tb)
        (x1b) edge [] node [] {} (x2F)
        (x1b) edge [] node [] {} (x2Fb)
        (x2T) edge [] node [] {} (x3T)
        (x2T) edge [] node [] {} (x3Tb)
        (x2F) edge [] node [] {} (x3T)
        (x2F) edge [] node [] {} (x3Tb)
        (x2Tb) edge [] node [] {} (x3F)
        (x2Tb) edge [] node [] {} (x3Fb)
        (x2Fb) edge [] node [] {} (x3F)
        (x2Fb) edge [] node [] {} (x3Fb)
        (c1) edge [bend left=45] node [] {} (x1)
        (c1) edge [bend left=45] node [] {} (x2T)
        (c1) edge [bend left=60] node [] {} (x2F)
        (c1) edge [] node [] {} (x3Tb)
        (c1) edge [] node [] {} (x3Fb)
        (c2) edge [] node [] {} (x2Tb)
        (c2) edge [] node [] {} (x3T)
        (c2) edge [] node [] {} (x3F)
        (c3) edge [] node [] {} (x3T)
        (c3) edge [] node [] {} (x3F)
        (c3) edge [] node [] {} (x3Tb)
        (c3) edge [] node [] {} (x3Fb)
        (y14) edge [] node [] {} (Level2x)
        (y14) edge [bend left=40] node [] {} (Level3ConnectBelow)
        (y13) edge [] node [] {} (Level2x)
        (y13) edge [bend left=50] node [] {} (ClauseConnectLeft)
        (y12) edge [bend left=10] node [] {} (Level3ConnectBelow)
        (y12) edge [bend left=39] node [] {} (ClauseConnectLeft)
        (y24) edge [] node [] {} (Level3x)
        (y23) edge [bend left=40] node [] {} (ClauseConnectLeft)
        ;
    \path[-]
        (Clauses) [line width=0.5mm] edge [bend right=10] node [] {}(ClauseConnectRight)
        (Level3xConnector) [line width=0.4mm] edge [bend right=41] node [] {} (Level3ConnectAbove)
        ;
\end{tikzpicture}
\end{center}
\caption{Example of the \cclass{PSPACE} reduction from QBF to \ruleset{Q-BiGraphNodeKayles}, using the same formula as in \cref{fig:kayles}.  Here Blue is going first as the True player.}
\label{fig:bigraphNodeKayles}
\end{figure}

\noindent \ruleset{Quantum Snort}:
\ruleset{Snort} is a game where one player is a blue player, and the other is a red player. Players alternate placing tokens of their color onto vertices of a given graph.
Players can't place tokens on vertices adjacent to vertices with a token of the opponent's color.
\ruleset{Snort} is classically \cclass{PSPACE}-complete.
We prove \ruleset{Quantum Snort} remains
\cclass{PSPACE}-complete.

\begin{theorem}
For flavors $D$ and $C'$,
  \ruleset{Quantum Snort} is \cclass{PSPACE}-complete (independent of superposition width).
    \label{thm:snort}
\end{theorem}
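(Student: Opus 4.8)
The plan is to follow the same two-part template used for \ruleset{Quantum Geography} (\cref{thm:geography}) and \ruleset{Quantum Node Kayles} (\cref{thm:nodeKayles}): first obtain \cclass{PSPACE}-membership from the polynomial height of the game tree, then prove \cclass{PSPACE}-hardness by a locality-based reduction from \emph{classical} \ruleset{Snort} that neutralizes any benefit of quantum moves. For membership, observe that every move in \ruleset{Snort} permanently occupies a previously empty vertex and no vertex is ever freed, so the game tree has height at most $|V|$. Thus \ruleset{Snort} is polynomially short, and \cref{Theo:Upperbound} immediately yields that \ruleset{Quantum Snort} is \cclass{PSPACE}-solvable for every flavor and any superposition width.

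For hardness I would reduce from a classical \ruleset{Snort} instance $G$ and exploit \ruleset{Snort}'s locality, namely that the only constraint on placing a token is the colors of the \emph{immediate} neighbors. The governing idea, exactly as in \cref{thm:geography}, is to arrange that any superposition a player hands off can be immediately \emph{collapsed} by the opponent so that it behaves like a single classical placement \emph{of the opponent's choosing}; since this can only help the responder, neither player ever gains by playing a superposition and the quantum game has the same winner as the classical one. Concretely, I would attach to each original Blue-playable vertex $v$ a private ``collapse trap'': a Red auxiliary vertex $r_v$ adjacent to $v$ but otherwise isolated (and symmetrically a Blue trap for each Red-playable vertex). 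If Blue makes a quantum placement superposing $v$ with some undesired branch $w$, Red responds on the trap $r_w$: this placement is illegal in the realization where $w$ carries a Blue token (Red cannot sit next to Blue) and legal wherever $w$ is empty, so exactly the $w$-branch collapses while the trap occupies only its own private vertex and blocks no genuine \ruleset{Snort} move.

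Establishing correctness then amounts to the invariant that, against any quantum play by the classical loser, the classical winner can keep the position collapsed to a single realization lying on a winning classical line of $G$ -- the same bookkeeping-style argument as in the \ruleset{Quantum Node Kayles} proof, where the auxiliary $y_{ij}$-vertices play the role of our traps. I expect the main obstacle to be the asymmetry of \ruleset{Snort}'s rule: a player may place next to their \emph{own} color but not the opponent's, so a realization can be collapsed only by placing an \emph{opponent}-colored token next to a superposed vertex, and the traps must be designed so that the surviving realization is always favorable to the punisher, cannot leak an extra move to the loser, and cannot be forced prematurely. Handling a quantum move of width larger than two -- where a single trap kills only one branch while the opponent moves in between -- requires either a gadget whose single trap is adjacent to all undesired branches at once, or a per-turn collapse argument that maintains the favorable invariant; verifying one of these is part of the work.

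The second, flavor-sensitive obstacle is $C'$. A respectful classical move may not collapse a \emph{non}-terminal realization, yet the trap move above does exactly that, so (as with \cref{thm:geography}, which holds for $D$ only) the plain construction is a $D$-only argument. To recover $C'$ I would make the punishment itself a quantum move -- replacing each trap vertex $r_v$ by a pair so that Red collapses the undesired branch via an \emph{eligible} quantum move rather than a forbidden classical one -- or alternatively double every vertex so that each classical move has a symmetric quantum twin and invoke \cref{WeakFlavor} or \cref{StrongFlavor} to lift the $D$-hardness to the remaining flavors. Checking that one of these devices survives the \ruleset{Snort} adjacency constraints, and that the resulting collapses respect the $C'$ restriction while preserving the classical outcome, is where I expect the real difficulty to lie.
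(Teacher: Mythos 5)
Your membership argument is fine, but the hardness half has a genuine gap, and it sits exactly where you place the ``collapse traps.'' An auxiliary vertex $r_v$ adjacent only to $v$ is not an inert punishment device: it is a legal \ruleset{Snort} move for Red in every position where $v$ does not carry a Blue token. Adding one such free move per vertex (on each side) changes the classical game on the augmented graph --- these are tempo moves that can flip the outcome by parity --- so the augmented instance no longer has the same classical winner as $G$, and the whole reduction from \emph{classical} \ruleset{Snort} loses its anchor. Worse, even granting the classical correspondence, the punishment itself costs the punisher a full turn that corresponds to no move of $G$: in the \ruleset{Quantum Geography} gadget the collapsing move also advances the token, and in the \ruleset{Quantum Node Kayles} proof the $y_{ij}$ vertices are adjacent to essentially the whole graph so that playing one ends the game; your isolated trap does neither, so after Red burns a turn on $r_w$, Blue is a tempo ahead inside $G$ and the invariant ``the classical winner keeps the position on a winning classical line'' is not maintained. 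The width-$w$ and flavor-$C'$ issues you flag are real as well, but they are downstream of this more basic problem.

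The paper avoids all of this by not reducing from classical \ruleset{Snort} at all. It first establishes that \ruleset{Quantum Bigraph Node Kayles} is \cclass{PSPACE}-complete in all flavors (\cref{thm:bigraphNK}, via \cref{thm:nodeKayles} and \cref{WeakFlavor}), and then gives a two-line \emph{quantum-to-quantum} reduction to \ruleset{Quantum Snort}: add one new vertex carrying a Red token adjacent to all Red-assigned vertices and one carrying a Blue token adjacent to all Blue-assigned vertices. The pre-placed tokens enforce the bipartition (each player is locked out of the other's vertices by the adjacency rule), no new legal moves are introduced, and since the source game is already quantum and hard in every flavor, no ``quantumness doesn't matter'' argument is needed. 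If you want to salvage your route, you would need traps that are never profitable free moves and whose use does not cost a tempo --- at which point you are essentially rebuilding the $y_{ij}$ machinery --- so reducing from the already-quantum \ruleset{Bigraph Node Kayles} is the cleaner path.
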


\begin{proof}
We reduce from \ruleset{Bigraph Node Kayles}.
We simply create an extra vertex connected to all red vertices and place a red token on it, and then create a vertex with a blue token on it connected to all blue vertices,
\end{proof}

\begin{figure}[h]
\begin{center}
\begin{tikzpicture}[node distance = 1.5cm]
    \tikzstyle{vertex}=[circle,thick,draw = black, fill = white, minimum size=6mm]
    \tikzstyle{blue-vertex}=[circle,thick,draw = blue, fill = white, minimum size=6mm]
    \tikzstyle{red-vertex}=[circle,thick,draw = red, fill = white, minimum size=6mm]
    \tikzstyle{group} = [rectangle, rounded corners, line width=.5mm, draw = black, fill = gray!30]
    \tikzstyle{blue-group} = [rectangle, rounded corners, line width=.5mm, draw = black, fill = gray!30]
    \tikzstyle{red-group} = [rectangle, rounded corners, line width=.5mm, draw = black, fill = gray!30]

    \node[blue-vertex] (a) at (0,0) {$a$};
    \node[red-vertex] (b) [right of = a] {$b$};
    \node[red-vertex] (c) [below left of = a] {$c$};
    \node[blue-vertex] (e) [below right of = a] {$e$};
    \node[red-vertex] (d) [below of = a] {$d$};
    \node[blue-vertex] (h) [below right of = d] {$h$};
    \node[red-vertex] (f) [above right of = h] {$f$};
    \node[blue-vertex] (g) [left of = h] {$g$};

    \path[-]
        (a) edge [] node [] {} (b)
        (a) edge [] node [] {} (c)
        (a) edge [] node [] {} (e)
        (b) edge [] node [] {} (e)
        (c) edge [] node [] {} (d)
        (c) edge [] node [] {} (g)
        (d) edge [] node [] {} (e)
        (d) edge [] node [] {} (g)
        (d) edge [] node [] {} (h)
        (e) edge [] node [] {} (f)
        (f) edge [] node [] {} (h)
        ;

    \node[vertex] (a2) at (6,0) {$a$};
    \node[vertex] (b2) [right of = a2] {$b$};
    \node[vertex] (c2) [below of = b2] {$c$};
    \node[vertex] (d2) [below of = c2] {$d$};
    \node[vertex] (e2) [below of = a2] {$e$};
    \node[vertex] (f2) [below of = d2] {$f$};
    \node[vertex] (g2) [below of = e2] {$g$};
    \node[vertex] (h2) [below of = g2] {$h$};
    \node[vertex, fill = blue] (blue) at (5,-2) {};
    \node[vertex, fill = red] (red) at (8.5,-2) {};

    \path[-]
        (a2) edge [] node [] {} (b2)
        (a2) edge [] node [] {} (c2)
        (a2) edge [] node [] {} (e2)
        (b2) edge [] node [] {} (e2)
        (c2) edge [] node [] {} (d2)
        (c2) edge [] node [] {} (g2)
        (d2) edge [] node [] {} (e2)
        (d2) edge [] node [] {} (g2)
        (d2) edge [] node [] {} (h2)
        (e2) edge [] node [] {} (f2)
        (f2) edge [] node [] {} (h2)
        (blue) edge [] node [] {} (a2)
        (blue) edge [] node [] {} (e2)
        (blue) edge [] node [] {} (g2)
        (blue) edge [] node [] {} (h2)
        (red) edge [] node [] {} (b2)
        (red) edge [] node [] {} (c2)
        (red) edge [] node [] {} (d2)
        (red) edge [] node [] {} (f2)
        ;
\end{tikzpicture}
\end{center}
\caption{Graph for \ruleset{Q-BiGraphNodeKayles}, followed by the result of the reduction to \ruleset{Q-Snort} on that graph.}
\label{fig:bigraphNodeKayles-2}
\end{figure}
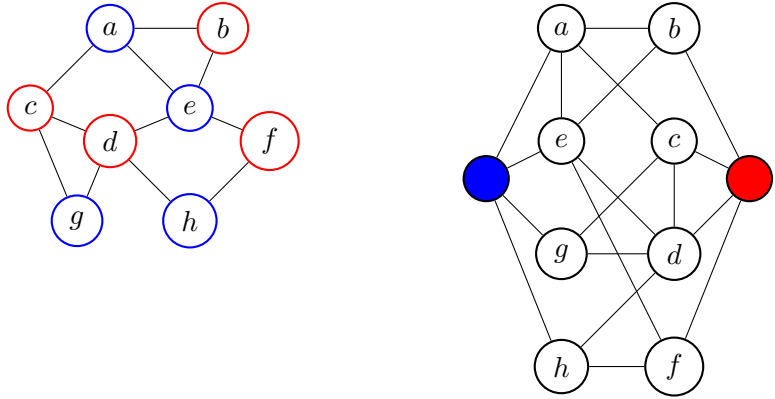


\section{A Complexity Barrier of Quantum Leap}\label{Sec:PSpaceUpperBound}

While the introduction of quantum moves
  potentially increases the complexity of
  combinatorial games,
  we show in this section that essentially all \cclass{PSPACE}-solvable combinatorial games
  remain in \cclass{PSPACE}
   with strengthened quantum power.
We prove that for a large family of combinatorial games---including many of our favorite \cclass{PSPACE}-complete ones such as \ruleset{QSAT}, \ruleset{Hex}, \ruleset{Geography}, \ruleset{Atropos}, \ruleset{Avoid true}, \ruleset{Node Kayles}, {\em  etc}, {\em etc}---quantum extensions (of moves and superpositions) does
 not elevate the complexity of the game to beyond \cclass{PSPACE}.
Particularly, this general upper bound applies to
  any \cclass{PSPACE}-solvable combinatorial games
  satisfying the following properties:

\begin{definition}[Polynomially-Short Games]
A combinatorial game is called a {\em polynomially-short game} if for any of its instances,
the height of its game tree is polynomial in the instance's descriptive size.
\end{definition}

For example, \ruleset{Geography} is a polynomially-short
game.
Each \ruleset{Geography} instance is defined
 by a graph $G=(V,E)$, whose number of nodes,
  $|V|$,
  characterizes the instance's descriptive size
  (because the number of (directed) edges $|E|$ can be at most $|V|^2$).
Note that \ruleset{Geography}
 is also a polynomially-dense game because it has only $|V|$ possible moves.
Because no node of $G$ can be selected twice in the game, the height of the game tree
  of \ruleset{Geograph} on $G$ is at most $|V|$.
In contrast, \ruleset{Nim}---although polynomial-time solvable---is not a polynomially-short game, because for almost all \ruleset{Nim} instances,
the height of their game trees are exponential in their description sizes.
Note also that \ruleset{Nim} is also a super-polynomially dense game.

We can use standard technique, with an algorithm  generating and evaluating the game trees, to show that all polynomially-short games are \cclass{PSPACE}-solvable.

In this section, we prove the following theorem, which establishes a sharp complexity-theoretical ceiling for quantum leap in the complexity of these well-studied \cclass{PSPACE}-complete combinatorial games.

\begin{theorem}[Quantumized Polynomially-Short Games is in \cclass{PSPACE}]
\label{Theo:Upperbound}
For any polynomially-short game $Z$, quantum flavor $\phi$, and superposition width $w$ polynomially in the descriptive complexity of $Z$,  $Z^{\QuantumLift(\phi,w)}$
is \cclass{PSPACE}-solvable.
\end{theorem}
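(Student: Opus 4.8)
The plan is to give a recursive (depth-first) evaluation of the quantum game tree, arguing that it runs in polynomial space because the tree height is polynomially bounded and each quantum position, despite possibly having exponentially many realizations, admits a polynomial-size succinct representation. The key structural fact I would exploit is the observation made in the paper's introduction: quantum lift does not increase the height of the game tree. A superposition move or position never extends a play beyond what the underlying classical ruleset allows, because every realization collapses on a move illegal in that realization, and every surviving realization has advanced along a legal classical play. So for a polynomially-short game $Z$ with height bound $p(n_Z)$, every play of $Z^{\QuantumLift(\phi,w)}$ has length at most $p(n_Z)$.

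**First I would** fix the succinct representation of a reachable quantum position. By the Taxonomy definitions in Section~\ref{Sec:Taxonomy}, any position reachable from a classical (or poly-wide) start after $T$ moves is $\mathbb{B} = BFT(\mathbb{B}_0, m_1\circ\cdots\circ m_T)$, and by the cited Proposition on polynomial description size, this has a poly-size encoding even when the number of realizations is exponential. Crucially, I would show that the recursion never needs to materialize all realizations explicitly: by Property~\ref{prop:DescriptiveComplexity}, each move $\sigma$ and the move functions $\rho_L,\rho_R$ are polynomial-time computable, and the operators $\otimes$, $\oplus$, and $\filter$ from Section~\ref{Sec:QuantumLift} can each be applied \emph{symbolically} to the $BFT$ encoding—appending a move to the move sequence—rather than by enumerating realizations. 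To decide legality of a candidate move (whether it is feasible in at least one realization, and which realizations it collapses) and to test the terminal condition (all realizations terminal for the current player), I would need a subroutine that answers ``is there a surviving realization in which this classical play is legal / nonterminal?'' This is the one place the exponential width intrudes.

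**The main obstacle** is exactly this legality/terminality oracle over an implicitly represented, exponentially wide superposition. The plan is to show that each such query is itself decidable in \cclass{PSPACE} (in fact, typically by another bounded search over realizations of the same poly-short structure), so that nesting it inside the poly-depth recursion keeps everything in \cclass{PSPACE}. Concretely, a single realization of $\mathbb{B}_0$ followed by a prefix of the move sequence $m_1\circ\cdots\circ m_T$ corresponds to one classical play; checking whether some realization survives and admits a given move amounts to searching over the (polynomially many, since $s$ and $w$ are polynomial per step and the height is $p(n_Z)$) branching choices that define a surviving realization. Because each realization is specified by a choice of one classical move out of each quantum move along the sequence, a surviving realization is described by a string of length $O(T\log w)$, which is polynomial; the verifier that a given such string yields a live, legal realization runs in polynomial time using $\rho$. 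Hence the legality query is in \cclass{NP}\,$\subseteq$\,\cclass{PSPACE}, and its complement (terminality: \emph{no} surviving realization admits any move) is in \cclass{coNP}\,$\subseteq$\,\cclass{PSPACE}.

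**Finally I would** assemble the pieces via the standard reuse-of-space argument. The recursion computes the outcome class of $\mathbb{B}$ by iterating over the (at most $w\cdot|\Sigma|$ effectively distinct, and computably enumerable) available moves—classical moves and width-$\le w$ superpositions—forming the child position $\rho_h^{\QuantumLift(\phi,w)}(\mathbb{B},\cdot)$ symbolically as a longer $BFT$ expression, and recursing. The recursion stack has depth $O(p(n_Z))$; each frame stores a poly-size $BFT$ encoding plus bookkeeping for the current move index; and the legality/terminality tests at each frame run in \cclass{PSPACE} and release their space before the next sibling is examined. Since $\cclass{PSPACE}$ is closed under polynomially-bounded recursion with polynomially-bounded per-frame space, and the quantum flavor $\phi$ only restricts which classical moves are \emph{available} (a condition—safe, respectful, or ``no eligible quantum move''—that is itself one of the same poly-checkable queries per Definition~\ref{Defi:SafeRespectful}), the whole procedure runs in \cclass{PSPACE}. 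This establishes $Z^{\QuantumLift(\phi,w)}\in\cclass{PSPACE}$ for every flavor $\phi$ and poly-bounded width $w$, completing the proof.
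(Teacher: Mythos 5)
Your proposal is correct and follows essentially the same route as the paper: the height-preservation observation, the succinct move-sequence encoding of superpositions, and the realization-as-branching-choice-string search (which is exactly the paper's ``QP-tree'' DFS, phrased as an \cclass{NP}/\cclass{coNP} query instead of an explicit polynomial-space traversal) all match, as does the outer depth-first, space-reusing recursion over the polynomially tall game tree. The only blemish is the count ``at most $w\cdot|\Sigma|$'' available moves---there are up to roughly $|\Sigma|^{w}$ superpositions of width at most $w$, which is exponential---but since each such move has a polynomial-size description and they can be enumerated in polynomial space (the paper uses a depth-$w$ nested loop over the generative functions $\mu_L,\mu_R$, with an extra realization-enumeration loop when the generators are position-sensitive), this slip does not affect the space bound.
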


\subsection{Quantum Game Trees Are No Taller}

Before proving this theorem,
  let's start with some high-level
   intuition of the proof.
To establish the \cclass{PSPACE}-solvability---like in the classical case---we need to evaluate the game tree of the quantumized games in polynomial space.
A key observation, in this context, is that quantum extension does not increase the height game tree: 

\begin{proposition}[Height of Game Trees]\label{prop:QuantumHeight}
For any game $Z$, flavor $\phi$, and
  superposition width $w$,
the game tree for $Z^{\QuantumLift(\phi,w)}$ can't be taller than the game tree for $Z$.
\end{proposition}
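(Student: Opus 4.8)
The plan is to exhibit a single integer-valued potential on quantum positions that starts at the classical tree height, never goes negative, and strictly decreases with every move regardless of flavor; the height bound then follows immediately. First I would recall the classical height function: for a classical position $b$, let $\mathrm{ht}(b)$ be the length (in moves) of the longest path of feasible classical moves starting at $b$, with $\mathrm{ht}(b)=0$ exactly when $b$ is terminal. The height of the game tree for $Z$ is then $\mathrm{ht}(b_0)$. The one elementary fact I need about this function is a strict-descent property: if $\sigma\in\Sigma$ is feasible at $b$ and $b'=\rho_h(b,\sigma)\neq NULL$, then $\mathrm{ht}(b')\le \mathrm{ht}(b)-1$. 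This is immediate, since prepending the move $\sigma$ to any classical play from $b'$ yields a classical play from $b$ that is one move longer, so $\mathrm{ht}(b)\ge 1+\mathrm{ht}(b')$.

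Next I would define the potential on quantum (and classical) positions by
\[
\Phi\bigl(\langle b_1\ |\ \dots\ |\ b_s\rangle\bigr) := \max_{i\in[s]}\ \mathrm{ht}(b_i),
\qquad \Phi(b):=\mathrm{ht}(b)\ \text{for }b\in B.
\]
Since $b_0$ is a classical start, $\Phi(b_0)=\mathrm{ht}(b_0)$ equals the height of the game tree for $Z$. I would also note that $\Phi(\mathbb{B})=0$ forces every realization of $\mathbb{B}$ to be terminal in $Z$, hence $\mathbb{B}$ admits no feasible classical or quantum move in any flavor and is a leaf of the quantum game tree; so $\Phi\ge 0$ is maintained throughout play.

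The core step is to show that any single move in $Z^{\QuantumLift(\phi,w)}$ applied to a non-leaf position $\mathbb{B}=\langle b_1\ |\ \dots\ |\ b_s\rangle$ produces a position $\mathbb{B}'$ with $\Phi(\mathbb{B}')\le\Phi(\mathbb{B})-1$. The key observation, read off directly from the definitions of $\otimes$, $\oplus$ and $\filter$, is that \emph{every} realization surviving in $\mathbb{B}'$ is of the form $\rho_h(b_i,\sigma)$ for some realization $b_i$ of $\mathbb{B}$ and some feasible classical move $\sigma$: for a classical move this is clear from $\mathbb{B}'=\filter(\sigma\otimes\mathbb{B})$, and for a quantum move $\mathbb{M}=\langle\sigma_1\ |\ \dots\ |\ \sigma_{w'}\rangle$ it follows because $\mathbb{B}'=\filter\bigl((\mathbb{B}\otimes\sigma_1)\oplus\cdots\oplus(\mathbb{B}\otimes\sigma_{w'})\bigr)$ collects exactly the non-$NULL$ values $\rho_h(b_i,\sigma_j)$. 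Applying the strict-descent property to each such realization gives $\mathrm{ht}\bigl(\rho_h(b_i,\sigma)\bigr)\le\mathrm{ht}(b_i)-1\le\Phi(\mathbb{B})-1$, and taking the maximum over surviving realizations yields $\Phi(\mathbb{B}')\le\Phi(\mathbb{B})-1$. Crucially, this argument only uses that each allowed move is a classical or quantum move of $Z^{\QuantumLift}$, so it holds verbatim for every flavor $\phi$ (which only ever \emph{restricts} the set of legal classical moves) and for every width $w$.

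Finally I would conclude: along any root-to-leaf path in the game tree of $Z^{\QuantumLift(\phi,w)}$, the potential starts at $\Phi(b_0)=\mathrm{ht}(b_0)$, drops by at least one per move, and stays nonnegative, so the number of moves on that path is at most $\mathrm{ht}(b_0)$. Hence the height of the quantum game tree is at most that of the classical tree for $Z$. I expect the only place demanding care to be the bookkeeping in the core step—tracing a realization of $\mathbb{B}'$ back through $\filter$, $\oplus$, and $\otimes$ to confirm it is genuinely a classical option of some realization of $\mathbb{B}$—but this is routine once the operators are unfolded, and the potential-function framing makes the flavor- and width-independence transparent.
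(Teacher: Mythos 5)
Your proof is correct and rests on the same key observation as the paper's: every realization surviving a (classical or quantum) move is a classical option of some realization of the previous position, so realizations advance through the classical game tree exactly one level per turn. The paper phrases this top-down (each realization at depth $t$ of the quantum tree appears at level $t$ of the classical tree), while you dualize it into the strictly decreasing potential $\max_i \mathrm{ht}(b_i)$ measured from the leaves; these are the same argument, so no further comparison is needed.
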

\begin{proof}
Every position associated with the game tree for $Z^{\QuantumLift(\phi,w)}$
  is a superposition of realizations, each of which is defined
  by the interaction between the initial game position (which is classical) and
  a sequence of classical moves.
Furthermore, the length of the sequence is equal to the length
  of the sequence of
  quantum moves in $Z^{\QuantumLift(\phi,w)}$
   that defines the superposition.
Therefore, each realization must appear as a position in the game tree
  for $Z$ at the same level.
Thus, the game tree for $Z^{\QuantumLift(\phi,w)}$ can not be taller
  than the game tree for $Z$.
In fact, for flavor $D$ and $C'$, $C'$, and $B$,
  quantum lift preserves the height of the game tree.
For flavor $A$, quantum lift may make the game tree shorter.
\end{proof}

Proposition \ref{prop:QuantumHeight}
 is a crucial observation to our algorithmic proof.
It implies that we {\em can still} apply the DFS-based framework---just like in the classical case---to evaluate the game trees,
 {\em provided} that we can effectively evaluate the
  outcome at each node in the game trees.
Thus, one of the main technical barriers that we need to overcome is to evaluate, in polynomial space, these superpositions, which potentially could have exponential number of realizations.


\subsection{{\rm\mbox{\ruleset{Quantum Node Kayles}}} in \cclass{PSPACE}}

For clarity of presentation,
  we first focus our proof of Theorem \ref{Theo:Upperbound}
  on the quantum extension of a concrete \cclass{PSPACE}-complete game, namely
  \ruleset{Node Kayles}, in flavor $D$, and with
  superposition width $2$.
This proof is in fact quite general, directly extendable other quantum flavors, to cover all {\em polynomially-dense} \cclass{PSPACE}-complete games including \ruleset{QSAT}, \ruleset{Hex}, \ruleset{Geography}, \ruleset{Atropos}, \ruleset{Avoid true}, {\em  etc}, {\em etc}.
We then extend our algorithmic construction to complete the proof for Theorem \ref{Theo:Upperbound}.

\begin{theorem}[Solvability of \ruleset{Quantum Node Kayles}] \label{Theo:QNK}
\ruleset{Quantum Node Kayles} with flavor $D$ and superposition width $2$ is polynomial-space solvable.
\end{theorem}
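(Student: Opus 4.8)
The plan is to run the same depth-first, negamax-style evaluation of the game tree used for the classical \cclass{PSPACE} upper bound, and to argue that every step of that recursion is carryable in polynomial space even though a quantum position may contain exponentially many realizations. Two structural facts make this feasible. First, by \cref{prop:QuantumHeight} the game tree for $Z^{\QuantumLift(D,2)}$ is no taller than the tree of classical \ruleset{Node Kayles}, whose height is at most $|V|$ because each vertex can be occupied at most once; hence the recursion depth is bounded by $|V|$. Second, every node the search visits is reached from the classical starting position $b_0$ by a sequence of at most $|V|$ moves, so a node can be stored succinctly as the pair $(b_0,\, m_1\circ\cdots\circ m_t)$ with $t\le |V|$ (this is exactly the polynomial-description-size property for reachable positions). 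The key design decision is that I would \emph{never} materialize the superposition; the succinct representation is the only object kept on the stack.

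The heart of the argument is a polynomial-space subroutine that enumerates the surviving realizations of a succinctly-given position one at a time. A surviving realization of $BFT(b_0, m_1\circ\cdots\circ m_t)$ is precisely a choice of one branch at each width-$2$ quantum move in the sequence such that replaying the induced sequence of classical placements on $b_0$ keeps every placement legal (a placement collapses a branch exactly when its vertex is already occupied or adjacent to an occupied vertex). Since the sequence contains at most $|V|$ quantum moves, there are at most $2^{|V|}$ such branch strings; I would iterate over them with a $|V|$-bit counter, and for each string replay the moves while maintaining only the current occupancy set, an $O(|V|)$-bit object, discarding any string that becomes illegal. This streams the surviving realizations in polynomial space and exponential time, which is harmless for \cclass{PSPACE}.

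Using this subroutine I can decide the two predicates the negamax needs. For \emph{terminality}, the current player has no feasible move at $\mathbb{B}$ iff no surviving realization admits a legal vertex placement, a single existential sweep over realizations. For \emph{feasibility of a candidate move}, there are only $O(|V|)$ classical placements and $O(|V|^2)$ width-$2$ quantum moves to consider: a classical placement on $v$ is feasible iff some surviving realization can legally place $v$, and (by the eligibility definition) $\langle u\mid v\rangle$ is feasible iff some surviving realization can place $u$ \textbf{and} some, possibly different, surviving realization can place $v$ — again existential sweeps. For each feasible move I recurse on the child, whose succinct representation is obtained by merely appending the move; since \ruleset{Node Kayles} is impartial, $\mathbb{B}\in\outcomeClass{N}$ iff some feasible move leads to a child in $\outcomeClass{P}$.

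For the space accounting, the recursion stack holds at most $|V|$ frames, each storing one move of the growing sequence together with a loop index over the $O(|V|^2)$ candidate moves, and the realization-enumeration oracle uses $O(|V|)$ working space and is invoked sequentially so its space is reused; the total is polynomial, yielding \cclass{PSPACE}-solvability. I expect the genuinely non-classical difficulty, and the crux of the proof, to be exactly the potential exponential number of realizations: the whole argument rests on the observation that every question the evaluator asks about a superposition is an $\exists$- or $\forall$-statement quantified over realizations, each of which is individually describable and checkable in polynomial space, so such questions are decidable by brute-force enumeration without ever storing the superposition explicitly. The remaining routine work is to confirm that this reasoning uses nothing specific to \ruleset{Node Kayles} beyond polynomial move-density and polynomial tree height, which is what the subsequent extension to \cref{Theo:Upperbound} will exploit.
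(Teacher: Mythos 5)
Your proposal is correct and takes essentially the same approach as the paper: an outer depth-first evaluation of the polynomially-tall game tree combined with an inner polynomial-space, one-realization-at-a-time enumeration of the superposition to answer the existential queries for terminality and move feasibility (your $|V|$-bit branch-string counter is the paper's ``QP-tree'' depth-first traversal in different clothing). Your handling of the eligibility of a width-$2$ quantum move (each component feasible in some, possibly different, surviving realization) also matches the paper's definition, so there is no gap.
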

\begin{proof}
Suppose $Z$ is defined by a graph $G = (V,E)$ with $V=[n]$ and $|E|= m$.
Consider the game tree of $Z^{\QuantumLift(D,2)}$
  consisting either classic moves (i.e., selecting a node from $V$)
 or quantum moves (i.e., selecting a superposition of two nodes from $V$).
Like its classical counterpart---\ruleset{Node Kayles}---this game tree has height at most $n$ (Proposition \ref{prop:QuantumHeight}).
Furthermore,  each internal node in the game tree has at most $O(n^2)$ children
because there are
at most $n$ possible classical moves and $n(n-1)/2$ possible quantum moves.
However, unlike in \ruleset{Node Kayles}, some nodes in the game tree---either internal or at the leaf level---could be a superposition of
 possibly exponential number of realizations,
 because the number of realizations is usually exponential in
 the number of quantum moves leading to that superposition.
Thus, we cannot simply apply the traditional \cclass{PSPACE}
  postorder-traversal based (DFS) procedure to evaluate this game tree.

In our polynomial-space algorithm
  for \ruleset{Quantum Node Kayles},
 we will use an "inner" tree-traversal for analyzing the superposition inside the traditional postorder-traversal based DFS evaluation of the game tree.
We first prove the following:
 \begin{itemize}
 \item {\bf Identifying Leaves}: We can, in polynomial space, determine if a node in the game tree for \ruleset{Quantum Node Kayles} is a leaf node or an internal node.
\item {\bf Outcomes at Leaves}: We can, in polynomial space, evaluate the outcome at each leaf in the game tree for  \ruleset{Quantum Node Kayles}, even if its superposition has exponential number of realizations.
 \item {\bf Outcomes at Internal Nodes}: We can, in polynomial space,  evaluate the outcome class of each internal node in the game tree, given the outcome classes of its children, (in the context of recursive evaluation of the game tree for \ruleset{Quantum Node Kayles}, as in the standard \cclass{PSPACE} argument), even if its superposition has exponential number of realizations.
\end{itemize}

Let's first focus on the task of identifying leaves in the game tree for $Z^{\QuantumLift(D,2)}$.
Our approach to outcome evaluations of super-polynomial superposition at both leaves and internal nodes is similar.
We will address them afterwards.

The key observation is the following:
Although some superposition associated with the game tree for $Z^{\QuantumLift(D,2)}$ can have exponential number of realizations,  each superposition's realizations can be organized by another tree of
$O(n)$ depth, which we will refer to as a {\em QP-tree} (short for ``quantum position tree''),
  in order to differentiate them from the game tree itself.

Like its corresponding superposition in the game tree for $Z^{\QuantumLift(D,2)}$,
 each QP-tree is formulated by the sequence of ``classical'' and ``quantum'' moves
 defining the superposition.
Each move in the sequence grows the QP-tree by one level.
At level $0$ is the root, corresponding to the starting position of $Z^{\QuantumLift(D,2)}$.
For $t=1$ onward, suppose $m_t$ is the $t^{th}$ move in the sequence.
Then, $m_t$ defines level $t$ positions from positions at level $(t-1)$ in the QP-tree.
Consider a QP-tree node with position $b$ at level
$(t-1)$, there are two cases depending on
whether $m_t$ is a classical or quantum move.
\begin{enumerate}
\item If $m_t = \sigma_t$ is classical and feasible for $b$, then $m_t$ creates a single child in the QP-tree, with position $\rho(b,\sigma_t)$.
\item If $m_t = [\sigma_{t,1}|\sigma_{t,2}]$
is a quantum move, then
(a) if both $\sigma_{t,1}$ and $\sigma_{t,2}$ are feasible for $b$, then $m_t$ creates
two children in the QP-tree,
 with positions $\rho(b,\sigma_{t,1})$ and $\rho(b,\sigma_{t,2})$, respectively;
(b) if one of them, say $\sigma_{t,2}$ is not feasible for $b$, then $m_t$ creates
 one child in the QP-tree, with position $\rho(b,\sigma_{t,1})$;
 (c) if both $\sigma_{t,1}$ and $\sigma_{t,2}$ are infeasible, then no children is created.
\end{enumerate}
Based on this construction, the height of
  the QP-tree is bounded by
  the length---let's denote it by $T$ for now---of the sequence of moves  defining it.
The superposition is NULL if there is the QP-tree is empty at level $T$; otherwise, the
superposition consists of all distinct positions at level $T$ of the QP-tree.\footnote{The QP-trees for other quantum flavors are slightly more subtle, and we will discuss them later in the section.}


Let's now return to the game tree for $Z^{\QuantumLift(D,2)}$,
in which, a node $v$ is a leaf if its associated superposition has no feasible classical or quantum move.
For quantum flavor $D$, if any realization in $v$'s associated superposition has a feasible move (either classical or quantum), then $v$ is an internal
node in the game tree.
Thus, to determine if  $v$'s superposition contains any
  realization with a feasible move, we apply an postorder-traversal based DFS procedure to evaluate its QP-tree.
Suppose there are $T$ moves in the sequence defining $v$.
Then, we use the DSF procedure to test if any of its position at level $T$ has a feasible move.\footnote{In fact, for flavor $D$, it is sufficient to simply test whether any  level-$T$ position has a feasible classical move. The reason is the following:  In flavor D, for any position, the existence of a feasible quantum move implies the existence of a feasible classical move.}
Upon encountering the first feasible move at level $T$, we can conclude that $v$ is not a leaf in the game tree.
If the procedure terminates without detecting any feasible move out of level $T$, then we conclude that $v$ is a leaf in the game tree.
The DFS procedure for analyzing QT-trees can be implemented in space polynomial in $T$.

Because \ruleset{Node Kayles} is impartial, by Proposition \ref{prop:Impartial},
\ruleset{Quantum Node Kayles} is also impartial.
Thus, each superposition in the game tree can only
  be in one of the two outcome classes:
  \outcomeClass{N} or \outcomeClass{P}.
Because the superposition associated with every leaf of the game tree is in \outcomeClass{P}, in polynomial space, we can both identify whether or not a node in the game tree for \ruleset{Quantum Node Kayles} is a leaf, and determine the outcome class if the node is a leaf.

Because this theorem only considers quantum moves with superposition width 2, each node in the game tree has only polynomial of possible feasible moves.
Thus, we can in polynomial space determine the outcome class of each superposition in the game tree, and particularly the outcome class of the starting position (associated with the root of the game tree) by running the post-order-traversal DFS procedure with our polynomial-space QP-tree evaluation algorithm. Therefore, \ruleset{Quantum Node Kayles} with flavor $D$ and superposition width $2$ is polynomial-space solvable.
\end{proof}

\subsection{Quantumized Complexity Polynomially-Short Games}

In this subsection, we extend the proof of Theorem \ref{Theo:QNK} from \ruleset{Quantum Node Kayles} (with superposition width $2$ in flavor $D$) to quantum lift of all polynomially-short games (with polynomial supervision width in any  quantum flavor).

\begin{proof}(of Theorem \ref{Theo:Upperbound})
We will address the following basic issues to extend our \cclass{PSPACE}-proof to general polynomially-short games:
\begin{itemize}

\item [I] {\bf Going beyond constant superposition width}: Our proof for Theorem \ref{Theo:QNK} can be directed extended from 2 to any constant supervision width,
because the only property that it uses is the fact that every node in the game tree has polynomial number of children.
However, for quantum moves with polynomial superposition, some nodes in the game tree may have exponentially number of children.
We need to show that they still can be processed in polynomial space.

\item [II] {\bf Going beyond polynomially-dense games}: Like many popular combinatorial games, \ruleset{Node Kayles} is polynomially dense, i.e., the alphabet of moves in \ruleset{Node Kayles}---i.e., the selection of a node in the instance-defining graph---is of polynomial size in the descriptive complexity.
However, in general, combinatorial games could be super-polynomially dense.

\item [III] {\bf Going beyond quantum flavor $D$}: Other quantum flavors require more subtle decision on the feasibility of classical moves that may impact the decision on whether or not a node in the game tree is a leaf.

\item [IV] {\bf Going beyond impartial games}: \ruleset{Node Kayles} is an impartial game, which simplifies the logic of outcome classes.
In general, combinatorial games could be partisan.
\end{itemize}

Of the four, the main technical matters are the  exponential explosion and the complex structure introduced by the first two issues.
So we will focus on them first.

Let $\cclass{EXP}(\cclass{POLY})$ be the family of functions that can be expressed by the natural exponential of a polynomial function.
For example,
  $e^{3n^3+2n^2+1}$ is such a function.

We will use the following basic facts to show
  that the number of nodes in the game trees
  for quantum polynomially-short games
  is not hopelessly large.

\begin{proposition}\label{prop:ExpPoly}
$\forall$ polynomial $h$ and $f\in \cclass{EXP}(\cclass{POLY})$,
$f^h \in \cclass{EXP}(\cclass{POLY})$.
\end{proposition}

\begin{proposition}\label{prop:ExpTExp}
$\forall$  $f, g\in \cclass{EXP}(\cclass{POLY})$,
$f\times g \in \cclass{EXP}(\cclass{POLY})$.
\end{proposition}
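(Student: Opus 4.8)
The plan is to unwind the definition of $\cclass{EXP}(\cclass{POLY})$ and then invoke the additive law of exponents together with closure of polynomials under addition. By the definition stated just above the proposition, membership $f\in \cclass{EXP}(\cclass{POLY})$ means precisely that there exists a polynomial $p$ (in the input-size parameter $n$) such that $f(n)=e^{p(n)}$; likewise $g\in \cclass{EXP}(\cclass{POLY})$ yields a polynomial $q$ with $g(n)=e^{q(n)}$. So the very first step is to fix witnessing polynomials $p$ and $q$ for $f$ and $g$, reducing the statement to a purely algebraic identity about exponentials of polynomials.

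Next I would compute the product directly. Using $e^{a}\cdot e^{b}=e^{a+b}$, we have
\begin{equation}
(f\times g)(n)=f(n)\cdot g(n)=e^{p(n)}\cdot e^{q(n)}=e^{p(n)+q(n)}.
\end{equation}
The only remaining point is to observe that $p+q$ is again a polynomial, since the sum of two polynomials is a polynomial (the degree of $p+q$ is at most $\max(\deg p,\deg q)$, and its coefficients are the coordinatewise sums of those of $p$ and $q$). Taking $r:=p+q$ as the witnessing polynomial, the identity above reads $(f\times g)(n)=e^{r(n)}$, which is exactly the assertion that $f\times g\in \cclass{EXP}(\cclass{POLY})$.

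There is no genuine obstacle here: the result is an immediate consequence of the defining normal form $e^{(\cdot)}$ and the closure of the polynomials under addition, so the entire argument is a one-line exponent manipulation once the definition is expanded. The same template, incidentally, also dispatches Proposition~\ref{prop:ExpPoly}: writing $f=e^{p}$ gives $f^{h}=e^{h\cdot p}$, and $h\cdot p$ is a polynomial because polynomials are closed under multiplication, so $f^{h}\in \cclass{EXP}(\cclass{POLY})$. These two closure facts are exactly what is needed to bound the total number of nodes in the quantum game tree, where one multiplies a per-level branching count in $\cclass{EXP}(\cclass{POLY})$ against itself over polynomially many levels (Proposition~\ref{prop:ExpPoly}) and combines several such bounds (Proposition~\ref{prop:ExpTExp}) to keep the aggregate count within $\cclass{EXP}(\cclass{POLY})$, hence within the reach of a polynomial-space traversal.
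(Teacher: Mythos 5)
Your argument is correct and is exactly the intended justification: the paper states this proposition as a basic fact without writing out a proof, and your one-line computation $e^{p}\cdot e^{q}=e^{p+q}$ together with closure of polynomials under addition is precisely what it takes for granted. Nothing further is needed.
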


\begin{proposition}[Size of Game Trees]
For any polynomially-short game instance $Z$,
  the number of positions in its game tree is upper bounded by a function in $\cclass{EXP}(\cclass{POLY})$ in terms of its descriptive complexity.
\end{proposition}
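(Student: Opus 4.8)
The plan is to bound the total number of game-tree nodes by $(\text{branching factor})^{(\text{height}+1)}$ and then invoke the two preceding propositions. First I would use Property~\ref{prop:DescriptiveComplexity}(1): every move $\sigma\in\Sigma$ relevant to the instance admits a binary encoding of length at most $p(n_Z)$ for some fixed polynomial $p$. Hence there are at most $2^{p(n_Z)}$ distinct moves, so the branching factor of the game tree---the number of feasible moves available at any reachable position---is bounded by $b := 2^{p(n_Z)}$, which lies in $\cclass{EXP}(\cclass{POLY})$. This single step handles polynomially-dense and super-polynomially-dense games uniformly: even when the move alphabet cannot be explicitly enumerated, the generative framework of Property~\ref{prop:DescriptiveComplexity}(3.SP) still guarantees a polynomial-length descriptor for each move, so the count $2^{p(n_Z)}$ applies, and feasibility (decided by the polynomial-time $\rho_L,\rho_R$) can only shrink this count, never enlarge it.

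Next I would invoke polynomial-shortness: by hypothesis the height $h$ of the game tree is bounded by a polynomial, say $h \le q(n_Z)$. Since each game-tree node at level $t$ is indexed by a length-$t$ sequence of feasible moves (recall distinct move-sequences count as distinct nodes even if they reach the same position), there are at most $b^t$ nodes at level $t$. Summing over $0\le t\le h$ gives a total of at most $\sum_{t=0}^{h} b^t \le b^{\,h+1}$ when $b\ge 2$, and at most $h+1$ in the degenerate case $b\le 1$; in either case the total is at most $b^{\,q(n_Z)+1}$.

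Finally, treating $q(n_Z)+1$ as a fixed polynomial exponent and $b\in\cclass{EXP}(\cclass{POLY})$ as the base, Proposition~\ref{prop:ExpPoly} yields $b^{\,q(n_Z)+1}\in\cclass{EXP}(\cclass{POLY})$, which is exactly the claimed bound. (Proposition~\ref{prop:ExpTExp} is not strictly required here, but it would be used if one preferred to bound the total as $(\text{nodes per level})\times(\text{number of levels})$ rather than collapsing the geometric sum to $b^{\,h+1}$.) The individual steps are routine; the only point demanding care---and the mild obstacle I would flag---is the branching-factor bound for super-polynomially-dense games, where one must appeal to the polynomial-length \emph{encodability} of moves rather than to any explicit enumeration of $\Sigma$.
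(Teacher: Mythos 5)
Your proof is correct and follows essentially the same route as the paper's: bound the total node count by $(\text{fan-out})^{\text{height}}$, observe that the fan-out lies in $\cclass{EXP}(\cclass{POLY})$ while the height is polynomial by polynomial-shortness, and conclude via the proposition that $f^h \in \cclass{EXP}(\cclass{POLY})$. The only difference is that you spell out the fan-out bound via the polynomial-length binary encoding of moves, a detail the paper leaves implicit.
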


Note that the number of nodes in the game tree  for $Z$ is upper bounded by $O(fan\mbox{\rm -}out^{height})$,
 where $fan\mbox{\rm -}out$ and $height$, respectively, are the game tree's fan-out and height.
The proposition then follows from Proposition \ref{prop:ExpPoly} because that the game tree  has height bounded by a polynomial and  fan-out bounded by a function in \cclass{EXP}(\cclass{POLY}) in its descriptive complexity.



\begin{proposition}
For any polynomially-short game instance $Z$, for any quantum flavor $\phi\in \{A,B,C,C',D\}$, and polynomial
superposition width $w$,
  the number of realizations in each superposition associating with the game tree
  for $Z^{\QuantumLift(\phi,w)}$
  is upper bounded by a function in $\cclass{EXP}(\cclass{POLY})$ in terms of its descriptive complexity. Moreover, realizations in the superposition can be organized according to a tree structure with fan-out and height, polynomial in $Z$'s descriptive complexity.
\end{proposition}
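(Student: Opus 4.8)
The plan is to bound the number of realizations by tracking how each move in the defining move-sequence can multiply the realization count, and to combine this with the polynomial height guaranteed by the polynomially-short assumption. Concretely, every superposition $\mathbb{B}$ associated with a node of the game tree for $Z^{\QuantumLift(\phi,w)}$ is obtained from the \emph{classical} start $b_0$ by a sequence of moves $m_1\circ\cdots\circ m_T$, where $T$ equals the depth of that node. By Proposition~\ref{prop:QuantumHeight}, this depth never exceeds the height of the classical game tree for $Z$, which by the polynomially-short hypothesis is bounded by a polynomial $h(n_Z)$ in the descriptive complexity $n_Z$. So the first step is simply to record that $T\le h(n_Z)$.

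Second, I would reuse the QP-tree construction from the proof of Theorem~\ref{Theo:QNK} to organize the realizations of $\mathbb{B}$. Level $0$ is the single classical start, and applying $m_t$ to a level-$(t-1)$ node produces its children: a feasible classical sub-move contributes at most one realization, an infeasible sub-move contributes none, and a width-$w'$ quantum move ($2\le w'\le w$) contributes at most $w'\le w$. Hence every QP-tree node has fan-out at most $w$ and the tree has height exactly $T$. The realizations of $\mathbb{B}$ are precisely the \emph{distinct} positions appearing at level $T$ (after $\filter$ removes $NULL$s and duplicates), so their number is at most the number of level-$T$ nodes, which is at most $w^{T}$.

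Third, I would bound $w^{T}$ inside $\cclass{EXP}(\cclass{POLY})$. Since $x\le e^{x}$ for all $x\ge 0$ and $w$ is polynomial in $n_Z$, we have $w\le e^{w}$, so $w$ is bounded above by a function in $\cclass{EXP}(\cclass{POLY})$; because $T\le h(n_Z)$ is polynomial, Proposition~\ref{prop:ExpPoly} then yields $w^{T}\in\cclass{EXP}(\cclass{POLY})$. Together with the fan-out bound $w$ and height bound $T$ coming from the QP-tree, this establishes both assertions for flavor $D$.

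Finally, I would argue flavor-independence. The flavors $A,B,C,C'$ differ from $D$ only in \emph{which} classical moves are legal—whether a move must be safe, respectful, or is forbidden when an eligible quantum move exists—and none of these restrictions ever increases the number of children a single move can spawn from a QP-tree node, which stays at most $w$. The main obstacle I anticipate is exactly this flavor-by-flavor bookkeeping: I must verify that the QP-tree still faithfully enumerates the realizations under each flavor's collapse rules, in particular that a classical move collapses rather than duplicates realizations and that the restricted legality conditions only prune children. Once that is checked, the uniform fan-out bound $w$ and height bound $T$ survive, and the $w^{T}$ counting bound with Proposition~\ref{prop:ExpPoly} closes the argument for every flavor.
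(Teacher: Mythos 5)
Your proposal is correct and follows essentially the same route as the paper: bound the move-sequence length by the polynomial game-tree height via Proposition~\ref{prop:QuantumHeight}, organize the realizations in a QP-tree of fan-out at most $w$ as in the proof of Theorem~\ref{Theo:QNK}, and conclude the $w^{h}$ bound lies in $\cclass{EXP}(\cclass{POLY})$. Your added detail on invoking Proposition~\ref{prop:ExpPoly} explicitly and on checking that the flavors $A,B,C,C'$ only prune QP-tree children is a harmless elaboration of what the paper leaves implicit.
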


To see this proposition, we first note that by
Proposition \ref{prop:QuantumHeight} that the height of
game tree for $Z^{\QuantumLift(\phi,w)}$ is upper bounded by a polynomial $h$ in terms of $Z$'s' descriptive complexity.
Thus, each superposition is defined by a sequence of at most $h$ quantum/classical moves, and hence, the number of realizations in each superposition is upper bounded by $w^h$.
The tree structure, of height $h$ as well,  for organizing the realizations in the superposition can be constructed in the similar manner as the one in the proof for Theorem \ref{Theo:QNK}.

\begin{proposition}\label{prop:QTreeSize}
For any polynomially-short game instance $Z$,
for any quantum flavor $\phi\in \{A,B,C,C',D\}$,
and polynomial superposition width $w$,
  the number of positions in the game tree
  for $Z^{\QuantumLift(\phi,w)}$
  is upper bounded by a function in $\cclass{EXP}(\cclass{POLY})$ in terms of its descriptive complexity.
\end{proposition}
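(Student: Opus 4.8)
The plan is to bound the total number of positions in the game tree for $Z^{\QuantumLift(\phi,w)}$ by the standard estimate $O\big((\text{fan-out})^{\text{height}}\big)$, and then to verify that both the fan-out and the height keep this quantity inside $\cclass{EXP}(\cclass{POLY})$. First I would invoke Proposition \ref{prop:QuantumHeight}: since $Z$ is polynomially-short, its game tree has height bounded by some polynomial $h$ in $Z$'s descriptive complexity $n_Z$, and the game tree for $Z^{\QuantumLift(\phi,w)}$ is no taller, so it too has height at most $h$.

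The heart of the argument is bounding the fan-out, i.e., the number of feasible moves---classical or quantum---out of any single quantum position in the tree. By part 1 of Property \ref{prop:DescriptiveComplexity}, every classical move $\sigma\in\Sigma$ is encoded by a binary string of length polynomial in $n_Z$, so $|\Sigma| \le 2^{\mathrm{poly}(n_Z)}$, a function in $\cclass{EXP}(\cclass{POLY})$. A quantum move is a superposition of at least $2$ and at most $w$ distinct classical moves, hence the number of possible quantum moves is at most $\sum_{j=2}^{w}\binom{|\Sigma|}{j} \le w\,|\Sigma|^{w}$. Because $w$ is polynomial in $n_Z$ and $|\Sigma|\in\cclass{EXP}(\cclass{POLY})$, Proposition \ref{prop:ExpPoly} gives $|\Sigma|^{w}\in\cclass{EXP}(\cclass{POLY})$, and multiplying by the polynomial factor $w$ keeps the bound in $\cclass{EXP}(\cclass{POLY})$ as well. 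Consequently the fan-out at each node is at most $|\Sigma| + w\,|\Sigma|^{w}$, which lies in $\cclass{EXP}(\cclass{POLY})$.

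Finally I would combine the two estimates: the number of positions in the game tree is at most $\sum_{t=0}^{h}(\text{fan-out})^{t} = O\big((\text{fan-out})^{h}\big)$. Since the fan-out is a function $f\in\cclass{EXP}(\cclass{POLY})$ and $h$ is a polynomial, Proposition \ref{prop:ExpPoly} yields $f^{h}\in\cclass{EXP}(\cclass{POLY})$, completing the bound. This parallels the classical Size of Game Trees proposition, the quantum version differing only in that the move alphabet is enlarged to include superpositions; I would also note that the argument is flavor-independent, since for every $\phi\in\{A,B,C,C',D\}$ the available moves are drawn from $\Sigma\cup\Sigma^{\QuantumLift(w)}$ (flavor $A$ merely dropping $\Sigma$), so no flavor increases the fan-out beyond the estimate above.

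The step I expect to be the main obstacle---really the only place requiring care rather than bookkeeping---is the fan-out count, because the quantum-move alphabet $\Sigma^{\QuantumLift(w)}$ can be genuinely larger than $\Sigma$ for super-polynomially-dense games such as \ruleset{Nim}, and one must confirm that forming $w$-element subsets of an already $\cclass{EXP}(\cclass{POLY})$-sized classical alphabet does not escape $\cclass{EXP}(\cclass{POLY})$. This is exactly what the closure property in Proposition \ref{prop:ExpPoly} (raising an $\cclass{EXP}(\cclass{POLY})$ function to a polynomial power) secures, so once that closure is in hand the remaining steps are routine.
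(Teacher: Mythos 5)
Your argument is correct and follows essentially the same route as the paper's: bound the height by a polynomial via Proposition \ref{prop:QuantumHeight}, bound the fan-out by (roughly) the number of classical moves raised to the power $w$, observe that this stays in $\cclass{EXP}(\cclass{POLY})$, and close with Proposition \ref{prop:ExpPoly} applied to $(\text{fan-out})^{\text{height}}$. Your version merely spells out the binomial count of quantum moves and the flavor-independence more explicitly than the paper does.
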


To see this proposition, we first note that by
Proposition \ref{prop:QuantumHeight} that the height of
game tree for $Z^{\QuantumLift(\phi,w)}$ is upper bounded by a polynomial $h$ in terms of $Z$'s' descriptive complexity.
The fan-out of this game tree
is upper bounded by the number of possible moves to the power of $w$, which can be upper bounded  by a function in $\cclass{EXP}(\cclass{POLY})$.
Therefore, following Proposition \ref{prop:ExpPoly},   the number of nodes in the game tree for $Z^{\QuantumLift(\phi,w)}$ is also upper bounded by a function in $\cclass{EXP}(\cclass{POLY})$ in terms of
$Z$'s descriptive complexity.

Proposition \ref{prop:QTreeSize}
  immediately implies that
$Z^{\QuantumLift(\phi,w)}$ is solvable in \cclass{EXPTIME} in terms of game's descriptive complexity.
We now extend Theorem \ref{Theo:QNK} to show that $Z$'s game tree can be evaluated in polynomial space.
In contrast to the game tree in Theorem \ref{Theo:QNK},
the game tree for $Z^{\QuantumLift(\phi,w)}$ potentially have exponential fan-out.

Fortunately, we can enumerate all quantum moves of superposition width $w$ in space (although not in time) polynomial in both $w$ and the descriptive complexity of $Z$.
Recall for succinctly-defined game, by Property \ref{prop:DescriptiveComplexity} (3.SP),
we can use the polynomial-time generative function
$\mu_L: B\times |\Sigma|\rightarrow \Sigma$ and $\mu_R: B\times |\Sigma| \rightarrow \Sigma$ to enumerate all classical moves at a given position in polynomial space.

We first consider the simpler case when $\mu_L$ an $\mu_R$ are independent of the position, (i.e. $\forall b, b'\in B$,
$\mu_L(b,:) = \mu_L(b',:)$ and $\mu_R(b,:) = \mu_R(b',:)$).
In this case, we can simply enumerate all potential quantum moves with superposition width by running a nested loop of depth $w$.
This nested loop  needs only polynomial space (i.e. $w$ times the space needed to enumerate all classical moves.)

For the general case when $\mu_L$ an $\mu_R$ are potentially {\em position sensitive} (i.e., players have local view of the moves depending on the position), we need a slightly more sophisticated enumeration method at  quantum positions with superpolynomial number of realizations.
Suppose at a node in the game tree for $Z^{\QuantumLift(\phi,w)}$ with quantum position
$\mathbb{B}$ defined by a sequence
   $m_T \circ ...\circ m_1$ classical/quantum moves,
where $T$ is polynomial in the descriptive complexity of $Z$ and each $m_t$ ($t\in [T]$) is a classical move
(denoted by $m_t = \sigma_t$) or a quantum move of superposition width up to $w$ (denoted by $m_t = \mathbb{M}_t$).
We can use the following procedure to enumerate all classical moves compatible with realizations in $\mathbb{B} = \rho^{\QuantumLift(\phi,w)}(b_0,m_T \circ ...\circ m_1)$:
\begin{itemize}
\item ({\em Outer Loop}): Apply the polynomial-space post-order-traversal DFS procedure to the QP-tree for $\mathbb{B}$ to enumerate realizations in $\mathbb{B}$;
\item ({\em Inner Loop}): For each realization, we apply the generative function $\mu_L$ an $\mu_R$ to enumerate all its compatible classical moves.
\end{itemize}
Then, in order to enumerate all potential quantum
 moves of superposition width $w$ at position $\mathbb{B}$, we can run a nested loop of depth $w$ with the above procedure.
This QP-tree based method has one more advantage.
For each potential quantum move generated, using $\rho_L$ and $\rho_R$, we can immediately certify whether or not it is feasible for the quantum position $\mathbb{B}$ (i.e., it is a superposition of classical moves, each of which is feasible for some realization in $\mathbb{B}$.)
Thus, we can also determine in polynomial space whether or not $\mathbb{B}$ has no feasible quantum move available to support decisions in quantum flavors $A$ and  $B$.
In addition, for each potential classical move generated,
we can use a post-order-traversal DFS procedure to the QP-tree for $\mathbb{B}$ to determine if it is a safe of respectful classical move for $\mathbb{B}$ (as defined in Definition \ref{Defi:SafeRespectful}).
Thus, we can support the decisions in quantum flavors $C$ and $C'$ in polynomial space.

Thus, we can address issues I-III---by going beyond constant superposition width, polynomially-dense games and , quantum flavor $D$---and extend the polynomial-space solution of Theorem \ref{Theo:QNK} to general polynomially-short games.

The outcome classes for partizan games are more refined than for impartial games.
To go beyond impartial games in our evaluation of the
game tree for $Z^{\QuantumLift(\phi,w)}$, at each node, we sequentially run two threads, one with $L$ as the current player and one with $R$ as the current player, to provide the complete outcome classification for the position.
\end{proof}






\section{Extensions, Conjectures, and Open Questions}

\label{Sec:FinalRemarks}

This has been a fun research project.
We have greatly enjoyed exploring the field of quantum combinatorial games.
The beauty of quantum games---succinctly in representation, rich in structures, explosive in complexity, dazzling for visualization, and sophisticated for strategical reasoning---
has draw us to play concrete games full of subtleties
and to characterize abstract properties pertinent to complexity consequence.

In this research, we have investigated several basic questions in quantum combinatorial game theory:
Mathematically, we have studied how quantum moves impact s the outcome of combinatorial games.
Computationally, we have shown that quantum moves can  cause significant leaps as well as collapses to games' complexity.
Structurally, we have shed new light on the class of polynomially-short games by showing while quantum moves can make game positions (potentially exponentially) wider, they could not make game trees deeper.
Thus, all combinatorial games in this class---including many classical \cclass{PSPACE}-complete games---remain polynomial-space solvable in the quantum setting.

We think that quantum combinatorial game theory
is an wonderful field for continuing research.
It encourages imaginative thinking and has a broad  computational and mathematical connection.
During our research, we have identified far more questions than we could answer.
We have often been humbled by our inability to characterize seemingly simple quantum games
such as \ruleset{Quantum Trilean Nim} with a classical start in which each pile initially has at most two stones (i.e., three values: 0, 1, 2).
In this conclusion section, we will share some open questions and conjectures
uncovered by this research.
We will also discuss some potential
extensions to quantum combinatorial games
in order to make them more attractive for human play and manageable for practical implementation.


\subsection{The Complexity of \mbox{\rm \ruleset{Quantum Nim}}}

As one of the very first quantum games that we have intensively studied, the complexity of \ruleset{Quantum Nim} remains elusive for precise characterization.
While \ruleset{Nim} is polynomial time solvable, it is not a polynomially-short game.
Thus, the  \cclass{PSPACE}-barrier, as established in Theorem \ref{Theo:Upperbound} for quantumized polynomially-short games, does not directly apply to \ruleset{Quantum Nim}, even with a classical start.
On the other hand, while we are able to show that there is a quantum leap of \ruleset{Nim}'s complexity over \cclass{NP},
the dependency of quantum leap on the ``degree of quantumness'' in game's starting position
as well  s
the magnitude of the quantum leap, measured either by  levels of polynomial-time hierarchy or
by its degree above \cclass{PSPACE}, remains unknown.
To put it simply, the complexity of \ruleset{Quantum Nim} is widely open.

We start with a concrete conjecture:

\begin{conjecture}[Quantum Avoid True]
Determining the outcome class of \ruleset{Quantum Avoid True} with a classical start is  \cclass{PSPACE}-complete. Consequently, determining the outcome class of \ruleset{Quantum Nim} with a poly-wide quantum start is \cclass{PSPACE}-hard.
\end{conjecture}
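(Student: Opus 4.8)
The membership direction and the ``consequently'' clause are the routine parts. \ruleset{Avoid True} is polynomially-short --- every move permanently fixes one variable, so each game-tree path has length at most the number of ground variables --- and thus Theorem \ref{Theo:Upperbound} already places \ruleset{Quantum Avoid True} in \cclass{PSPACE}. For the second sentence, Theorem \ref{QATQNIM} turns any \ruleset{Quantum Avoid True} position, and in particular a classical start (the case of a single realization), into an outcome-equivalent poly-wide \ruleset{Quantum Nim} position in polynomial time; hence \cclass{PSPACE}-hardness of the former transfers verbatim to poly-wide \ruleset{Quantum Nim}. So the whole conjecture reduces to proving that \ruleset{Quantum Avoid True} with a classical start is \cclass{PSPACE}-hard.

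The plan is to reduce from a \emph{non-collapsing} \cclass{PSPACE}-complete quantum \ruleset{QBF} variant --- concretely \ruleset{Quantum TKO QBF}, \ruleset{Quantum KO QBF}, or \ruleset{Quantum Haymaker QBF}, each shown \cclass{PSPACE}-complete in Theorems \ref{thm:kyles}, \ref{thm:brb}, and \ref{thm:mxc}. The existing reduction of Theorem \ref{thm:avoidTrue} cannot suffice, because it routes through \ruleset{Quantum Phantom-Move QSAT}, whose ``winning freedom'' (Observations \ref{FalseStrategy} and \ref{TrueStrategy}) collapses it to the second level of the hierarchy. The mercy-rule variants escape this collapse precisely because a player can win \emph{mid-game}, manufacturing genuine threats that force the opponent to respond turn-by-turn. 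First I would lift Schaefer's clause architecture (the TV, FV, and QBF clauses) so that the ordered $\exists/\forall$ alternation is preserved, and then append gadget clauses that simulate the ``game-ending blow'': short clauses that become forcibly completable exactly when the appropriate player has met their win condition, so that the \ruleset{Avoid True} player who should lose is left with no feasible move at that instant.

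The hard part --- and the reason the statement is still only a conjecture --- is taming the losing player's quantum interference. Since \ruleset{Avoid True} is impartial, both players share one move alphabet, so the player who is \emph{supposed} to lose can inject a spurious superposition $\langle \texttt{true}\ \mid\ \texttt{false}\rangle$ on a variable they do not ``own,'' spawning an extra realization solely to disrupt the intended alternation; this is exactly the mechanism behind the freedom-driven collapse of \ruleset{Quantum Phantom-Move QSAT}. The technical heart of the reduction must therefore make every out-of-turn superposition \emph{self-defeating}: each gadget has to be engineered so that any realization created by an out-of-turn quantum move contains a clause the honest player can immediately complete with a single collapsing classical move, deleting the offending branch and restoring the invariant that all surviving realizations agree with the simulated ordered play. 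Proving that this invariant survives arbitrary quantum interference, and that the position degrades into a ``quantumness-doesn't-matter'' endgame (via Lemma \ref{mainDoesn'tMatter}) once a win condition is forced, is where the delicate, case-heavy analysis concentrates, and I expect it to be the principal obstacle.

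A secondary difficulty is parity bookkeeping: the mercy-rule source games terminate at a data-dependent moment rather than after every variable is set, whereas the Avoid-True Destiny lemma that governs the endgame is stated in terms of clause-size parity over the remaining free variables. Reconciling the early-termination semantics of TKO/KO/Haymaker with this parity invariant will likely require padding variables of the $F_{iG}$ flavor from Schaefer's construction, chosen so that the forced-win moment lands on a clean parity boundary. If both the self-defeating-gadget property and this parity reconciliation can be pushed through, the reduction delivers \cclass{PSPACE}-hardness, which together with the membership above yields \cclass{PSPACE}-completeness of \ruleset{Quantum Avoid True} with a classical start.
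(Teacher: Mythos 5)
This statement is a \emph{conjecture} that the paper explicitly leaves open --- the authors state in Section \ref{Sec:Nim} that ``whether or not \ruleset{Quantum Avoid True} is \cclass{PSPACE}-complete remains an outstanding open question'' --- so there is no proof in the paper to compare against. Your handling of the two routine pieces is correct and matches what the paper's results actually deliver: membership follows from Theorem \ref{Theo:Upperbound} because \ruleset{Avoid True} is polynomially short, and the ``consequently'' clause follows from Theorem \ref{QATQNIM}, since a classical start is a one-wide position whose encoding is a poly-wide superposition of \ruleset{Boolean Nim} realizations (one per active clause). To that extent you have correctly isolated the single claim that carries all the content: \cclass{PSPACE}-hardness of \ruleset{Quantum Avoid True} from a classical position.

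That claim, however, is not established by your proposal, and the gap is genuine rather than cosmetic. Everything after ``the plan is to reduce from a non-collapsing \ruleset{Quantum QBF} variant'' is a research program, not an argument: no gadget is actually constructed, the ``self-defeating'' property of out-of-turn superpositions is asserted as a design goal rather than verified, and you yourself flag the invariant-preservation and parity-reconciliation steps as the expected principal obstacles. Your diagnosis of \emph{why} the problem is hard is accurate --- impartiality gives the classically losing player the full move alphabet, so they can spawn realizations with $\langle \texttt{true}\ \mid\ \texttt{false}\rangle$ on variables they do not own, which is exactly the freedom that collapses \ruleset{Quantum Phantom-Move QSAT} to $\Sigma_2$ and limits the paper's Theorem \ref{thm:avoidTrue} to $\Sigma_2$-hardness. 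But a correct diagnosis of the obstruction is not a proof that it can be overcome; in particular, it is not even clear that the TKO/KO/Haymaker mercy-rule mechanics survive translation into a positive-CNF, set-to-true-only game, since the ``game-ending blow'' in those variants is a distinguished partizan move with no obvious impartial analogue. As written, the proposal proves the second sentence of the conjecture conditional on the first, and leaves the first sentence exactly as open as the paper does.
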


The following two open questions concern about the worst-case complexity upper bound for
\ruleset{Quantum Nim}.

\begin{openquestion}[\ruleset{Quantum Nim}: Algorithmic]
Is \ruleset{Quantum Nim} with:
$$
\left\{\begin{array}{l}
\mbox{\rm a classical start}\\
\mbox{\rm a poly-wide quantum start}\\
\mbox{poly-moves after poly-wide start}
\end{array}
\right.
$$
\cclass{PSPACE}-solvable? If it is not, then are they all \cclass{EXPTIME}-solvable?
\end{openquestion}

By Theorem \ref{Theo:Upperbound}, we know these games are \cclass{EXPSPACE}-solvable.

\begin{openquestion}[\ruleset{Quantum Nim}: Complexity]
Is \ruleset{Quantum Nim} (with poly-moves after a poly-wide start) an \cclass{EXPTIME}-hard game? Can it be \cclass{EXPSPACE}-complete?
\end{openquestion}

By Theorem \ref{theo:QNim}, we proved that \ruleset{Quantum Boolean Nim} with a poly-wide quantum start is $\Sigma_2$-hard.
Also, by a parity argument, we know \ruleset{Quantum Boolean Nim} with a classical start is polynomial-time solvable.
Because \ruleset{Trilean Nim}---in fact, \ruleset{Unary Nim} in which the number of stones in each pile is specified by a unary representation---is  polynomially-short,
by Theorem \ref{Theo:Upperbound}, \ruleset{Quantum Trilean Nim} in particular and \ruleset{Quantum Unary Nim} in general are \cclass{PSPACE}-solvable games.

\begin{openquestion}[\ruleset{Quantum Trilean Nim} and \ruleset{Quantum Unary Nim}]
Is \ruleset{Quantum Trilean Nim} with a classical start polynomial-time solvable?
Is \ruleset{Quantum Unary Nim} with a classical start polynomial-time solvable?
Are various \ruleset{Quantum Substraction} games with a classical start polynomial-time solvable?
Or are the \cclass{PSPACE}-complete?
\end{openquestion}

\subsection{Quantumized Complexity of \mbox{\rm \ruleset{Undirected Geography}}}

\ruleset{Quantum Undirected Geography} is the concrete game we spent most of our time on during this research.
As computer scientists, we are drawn to its graph-theoretical representation and deep connection with the beautiful matching theory.
While we have presented a sharper characterization on its quantumized complexity than \ruleset{Nim}, several basic questions remain open regarding its dependency on the degree-of-quantumness in games' starting positions.

\begin{openquestion}[Reachable Poly-Wide Quantum Geography Positions]
Is \ruleset{Quantum Undirected Geography} with
a ``classical-position reachable''  poly-wide quantum start polynomial-time solvable or
\cclass{PSPACE}-complete?
\end{openquestion}

There are several variants of this open question, including:
\begin{itemize}
\item Is \ruleset{Quantum Undirected Geography} with
log-moves after a classical start polynomial-time solvable or
\cclass{PSPACE}-complete?
\item Is \ruleset{Quantum Undirected Geography} with constant-moves after a classical start polynomial-time solvable or
\cclass{PSPACE}-complete?
\end{itemize}
Answering these questions will provide a more detailed picture for \ruleset{Quantum Undirected Geography}.
So far, we have shown that \ruleset{Quantum Undirected Geography} with a classical start is polynomial-time solvable, with a poly-wide quantum start is \cclass{PSPACE}-complete, and with poly-moves after a classical start \cclass{PSPACE}-complete.

\subsection{Practically-Inspired Frameworks for Quantum Games}

\label{Sec:Practice}

One of the basic challenges in practical implementation of quantum combinatorial games  is that quantum games---as currently defined---may have superposition with superpolynomial (i.e., exponential) number of realizations.
Mathematically, such superposition can still have polynomial-sized description (i.e., by poly-moves after a classical start or poly-moves after a poly-wide quantum start).
But to be a desirable real-world game, more visually-pleasant game boards, with complete and direct
information,  are preferred.

In this subsection, we discuss some more
for incorporating quantum elements
in classical combinatorial games and
 mathematical/algorithmic questions
that these frameworks inspire.

\subsubsection{QCGs with Bounded Quantum Dimensions}

A natural approach to curb quantum explosion is to bound the number of realizations that a ``quantum board'' can have.

\begin{definition}[Bounded-Dimensional Quantum Games]
Suppose $Z^{\QuantumLift(\phi,w)}$ is a
quantum game extended from a classical game  $Z$, with quantum flavor $\phi \in \{A,B,C,C',D\}$, and superposition width $w$.

For any integer $s\geq 0$,
let $Z^{Dim[s]\QuantumLift(\phi,w)}$ denote the variation of $Z^{\QuantumLift(\phi,w)}$, in which a quantum move becomes infeasible if it results in  more than $s$ (none NULL) realizations.
\end{definition}

In other words, $Z^{Dim[s]\QuantumLift(\phi,w)}$ starts at the classical position as defined by  $Z$, and is played according to the quantum rule of $Z^{\QuantumLift(\phi,w)}$ with one exception:
Quantum moves are only allowed if they don't introduce too wide quantum board.
Thus,  all game positions are at most $s$-wide
during the play of $Z^{Dim[s]\QuantumLift(\phi,w)}$.

Many of our open questions extended to quantum game with bounded dimensions.
For example:
\begin{openquestion}
Can constant-dimensional \ruleset{Quantum Nim} be \cclass{NP}-hard to play optimally?
\end{openquestion}


\subsubsection{Demi-QCGs}
Demi-quantum combinatorial games (Demi-QCGs)
  lives in the classical world
  with an initial quantum endowment:
It starts with a quantum board consisting of a superposition of multiple realizations.
During the game, only classical moves are allowed to play on the quantum board.
Therefore, as these games progress,
  the number of realizations in
   the quantum board can only shrink.
 Formally:

 \begin{definition}[Demi-Quantum Games]
For any combinatorial game instance
$Z = (B,\Sigma,\rho_L,\rho_R,b_0)$,
each of its quantum board
$\mathbb{B} = [b_1|...|b_s]$ (i.e., superposition of distinct positions $b_1,..., b_s\in B$)
defines a {\em demi-quantum} game,
$Z^{demi\QuantumLift}[\mathbb{B}]= (B^{\QuantumLift},\Sigma, \rho^{demi\QuantumLift}_L,\rho^{demi\QuantumLift}_R,\mathbb{B})$, with starting quantum board $\mathbb{B}$, and game rules:
$\rho^{demi\QuantumLift}_h(\mathbb{B}',\sigma)
:= \filter(\sigma\otimes\mathbb{B}'), \forall \sigma \in \Sigma, \mathbb{B}'\in B^{\QuantumLift}$.
\end{definition}

We say the demi-quantum game $Z^{demi\QuantumLift}[\mathbb{B}]$
  is $s$ {\em wide}, where $s$ denote the number of realizations in $\mathbb{B}$.
If $s$ is a constant, then
  we call $Z^{demi\QuantumLift}[\mathbb{B}]$      a {\em constant-wide} demi-quantum game.
If $s$ is a polynomial in the descriptive complexity of $Z$, then  we call $Z^{demi\QuantumLift}[\mathbb{B}]$ a {\em poly-wide} demi-quantum game.

For example, in Section \ref{Sec:QATToQNim}, 
  we have built an instance
   of \ruleset{Demi-Quantum Nim} from \ruleset{Avoid True}:
The initial quantum board consists
  of a superposition of \ruleset{Nim} piles reduced from an instance of \cclass{PSPACE}-complete \ruleset{Avoid True}.
We have shown in the proof of Theorem~\ref{QATQNIM} that if from then on, only classical moves are allowed---as in demi-quantum games---then there is a structure-preserving relation between the
resulting \ruleset{Demi-Quantum Nim} and the original \ruleset{Avoid True}.
Therefore, we have already established:

\begin{theorem}
Poly-wide \ruleset{Demi-Quantum Nim} is \cclass{PSPACE}-complete.
\end{theorem}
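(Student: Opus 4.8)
The plan is to obtain both directions of completeness by specializing the encoding from the proof of Theorem \ref{QATQNIM} to the case where, after the start, only classical moves are ever played. For hardness I would reduce from classical \ruleset{Avoid True}, which is \cclass{PSPACE}-complete by Schaefer \cite{DBLP:journals/jcss/Schaefer78}. Given an \ruleset{Avoid True} instance $Z$ with CNF $F$ and all variables free --- a width-one position of the lift $Z^{\QuantumLift(D,2)}$ --- I apply the QATQNIM construction to obtain the poly-wide \ruleset{Boolean Nim} superposition $\mathbb{B}'$ carrying one realization per clause of $F$. Crucially, I would \emph{not} invoke the outcome equality of Theorem \ref{QATQNIM} itself, since that equates the full quantum lifts (in which \ruleset{Nim} may also make quantum moves); instead I use only the move-by-move structural invariant proved there, restricted to classical coupled moves.

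I then show that the \ruleset{Demi-Quantum Nim} game started at $\mathbb{B}'$ is isomorphic, as a game tree, to the classical \ruleset{Avoid True} game on $F$. Reducing pile $i$ from $1$ to $0$ corresponds to setting $X_i$ to \texttt{true}; this classical move is legal in the superposition iff some surviving realization has pile $i = 1$, i.e. iff some active clause omits $X_i$, which is exactly the feasibility condition for setting $X_i$ in \ruleset{Avoid True}; and the realizations surviving the move are precisely the clauses that remain active (those not containing $X_i$). The three invariant cases of the QATQNIM proof (a move on a $1$-pile, on a $0$-pile outside the clause, on a $0$-pile inside it) each match a legal-or-collapsing \ruleset{Avoid True} event, and the terminal positions correspond (all realizations all-zero iff every free variable lies in every active clause). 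Hence the outcome classes coincide and poly-wide \ruleset{Demi-Quantum Nim} is \cclass{PSPACE}-hard.

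For membership I first observe that the reduction outputs \ruleset{Boolean Nim} superpositions, in which each pile can be reduced at most once, so the game tree has height at most the number of piles; it is polynomially short, and the evaluation procedure underlying Theorem \ref{Theo:Upperbound}, run from the given starting superposition, places it in \cclass{PSPACE}. To cover arbitrary poly-wide \ruleset{Demi-Quantum Nim} I would add a value-compression argument: because only classical moves are played, the width never exceeds the starting $s$, and the legality and collapse pattern of a move on a pile depend only on the relative order of that pile's values across the $\le s$ surviving realizations; once a pile becomes uniform across survivors it can no longer collapse any of them and decouples as an ordinary \ruleset{Nim} heap (using impartiality, Proposition \ref{prop:Impartial}). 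One may therefore replace all pile magnitudes by polynomially-bounded ranks without changing the outcome, again yielding a polynomially short game and hence \cclass{PSPACE} membership.

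The main obstacle is the faithfulness of the classical-only restriction on both sides: I must check that no spurious legality is introduced --- that a realization which has already collapsed never re-enables a move, and that every legal classical \ruleset{Avoid True} move maps to a legal classical \ruleset{Nim} move and conversely --- which is precisely what re-reading the QATQNIM invariant under the classical restriction delivers. The more delicate point is the general-\ruleset{Nim} membership: justifying that the entangled piles are exhausted in polynomially many moves (via the monotone decrease of the total number of distinct pile-values among survivors) while the uniform piles are handled as independent Sprague-Grundy components. I would argue this carefully rather than appeal to a game-tree-height bound, which fails for \ruleset{Nim} with large piles.
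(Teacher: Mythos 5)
Your hardness direction is exactly the paper's argument: the paper derives this theorem by observing that the invariant in the proof of Theorem~\ref{QATQNIM}, restricted to classical coupled moves, gives a structure-preserving correspondence between classical \ruleset{Avoid True} and the \ruleset{Demi-Quantum Nim} game started at the poly-wide \ruleset{Boolean Nim} superposition $\mathbb{B}'$. That part of your proposal is correct and is the same route; membership for those images is immediate since \ruleset{Boolean Nim} is polynomially short.

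The gap is in your value-compression lemma for general poly-wide \ruleset{Demi-Quantum Nim}, and it is not a presentational issue: the lemma is false. You are right that \emph{which} realizations a move collapses depends only on the relative order of the pile's values across survivors, but the \emph{residual} position after a collapsing move depends on the actual gaps, and ranks destroy that information. Concretely, take $s=2$ and two piles. The position $\langle (3,1)\ |\ (1,3)\rangle$ is in $\outcomeClass{N}$: the classical move $(-2,0)$ is infeasible in the second realization (so it collapses) and leaves the single realization $(1,1)$, a $\outcomeClass{P}$-position of classical \ruleset{Nim}. Its rank compression is $\langle (2,1)\ |\ (1,2)\rangle$, which is in $\outcomeClass{P}$: the non-collapsing moves lead to $\langle (1,1)\ |\ (0,2)\rangle$ and $\langle (2,0)\ |\ (1,1)\rangle$, each of which has a winning reply to $(0,0)$, and the collapsing moves land on $(0,1)$ or $(1,0)$, which are in $\outcomeClass{N}$. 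The point is that from values $\{3,1\}$ a player can simultaneously collapse the low realization \emph{and} choose where the surviving value lands (at $1$ rather than $0$), whereas from $\{2,1\}$ there is no such slack; so two positions with the same order type have different outcomes. Consequently your claim that general poly-wide \ruleset{Demi-Quantum Nim} is polynomially short after compression, and hence in \cclass{PSPACE}, is unsupported; the \cclass{PSPACE} upper bound you actually have covers only the polynomially-short (e.g.\ Boolean or unary) instances, which is all the completeness statement needs for the reduction's image but not for arbitrary binary-encoded poly-wide starts.
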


\begin{conjecture}
Constant-wide \ruleset{Demi-Quantum Nim} is \cclass{NP}-hard to play optimally.
\end{conjecture}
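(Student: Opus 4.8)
The plan is to prove \cclass{NP}-hardness by a reduction from a single existential block SAT problem, say \textsc{3-SAT} (equivalently $\exists\vec{x}\,\phi(\vec{x})$), producing an $s$-wide \ruleset{Demi-Quantum Nim} position for some fixed constant $s$ whose outcome class is \outcomeClass{N} if and only if $\phi$ is satisfiable. The guiding intuition is that bounding the number of realizations by a constant should bound the \emph{effective} alternation: with poly-many realizations the encoding of Theorem \ref{QATQNIM} faithfully reproduces the full alternation of \cclass{PSPACE}-complete \ruleset{Avoid True} (one realization per active clause), whereas a constant number of realizations should only be able to host a constant number of independent collapse-challenges, which I would like to collapse down to a single $\exists$-quantifier over an assignment chosen by the first player.

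First I would build a \emph{variable-commitment phase}. I would start from a constant-wide superposition of \ruleset{Nim} positions sharing a common large block of piles (an ``assignment register'') together with a constant number of ``witness'' realizations that differ only in a few control piles. Because piles that are identical across all realizations behave like a single classical pile (a reduction to a value below all realizations' values applies simultaneously in each), the first player can use the shared register to encode a truth assignment $\vec{x}$ one variable at a time; meanwhile the control piles of the witness realizations would be tuned so that any ``off-script'' move — one not corresponding to a legal partial assignment — forces the superposition to collapse to a single classical realization that is, by a nim-sum/parity gadget, losing for the deviator. This defensive device is exactly the kind used in the paper's \ruleset{Quantum KO QBF} and \ruleset{Node Kayles} reductions (Theorems \ref{thm:brb} and \ref{thm:nodeKayles}), where duplicate variables and forced responses punish deviations.

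Next comes the \emph{clause-verification phase}. Here I would encode the clauses of $\phi$ into the pile values of the constantly many surviving realizations via a weighted/positional arithmetic encoding, arranged so that the classical \ruleset{Nim} position reached after a complete assignment is a \outcomeClass{P}-position in a given realization exactly when the clauses tracked by that realization are all satisfied by $\vec{x}$. Since after commitment only classical \ruleset{Nim} over a constant number of realizations remains, each realization's outcome is polynomial-time computable by the Sprague--Grundy characterization, and the whole superposition sits in \outcomeClass{P} iff every surviving realization does — matching the requirement that the opponent has no collapsing escape and giving the desired ``satisfiable $\Leftrightarrow$ first player wins'' equivalence. (Only the lower bound is needed for the conjecture, so a matching membership argument can be left aside.)

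The hard part will be controlling the interaction between realizations when their number is a fixed constant. The central difficulty is already visible in the $s=2$ case: any move touching a pile on which the two realizations differ instantly collapses the game to a single classical, polynomial-time-solvable position, so all sustained multi-realization play is confined to piles the realizations share. A successful proof must therefore keep enough realizations alive to run all clause checks while simultaneously forcing the first player to commit to one and only one assignment and denying the (impartial-game) adversary any premature-collapse or off-script sabotage. Balancing robustness of the superposition against rigidity of the forced assignment with only constantly many realizations is precisely what makes the one-realization-per-clause strategy of Theorem \ref{QATQNIM} inapplicable, and I expect the crux to be a genuinely new gadget — most plausibly one that packs several logical checks into a single realization using large pile values — rather than any direct specialization of the poly-wide construction.
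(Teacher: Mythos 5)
First, a point of comparison that matters: the paper offers no proof of this statement. It is posed as an open conjecture in Section \ref{Sec:FinalRemarks}, and the only result actually established there for \ruleset{Demi-Quantum Nim} is the poly-wide case (which falls out of the encoding of Theorem \ref{QATQNIM}). So there is no paper proof to measure you against; the question is only whether your proposal closes the gap on its own, and it does not. You concede that the crux is ``a genuinely new gadget'' that you do not supply; that missing gadget is the entire content of the conjecture, so what you have is a research plan, not a proof.

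Beyond the missing construction, two principles you lean on are false. First, the outcome of a superposition is not the conjunction of the outcomes of its realizations: a demi-quantum position can be in $\outcomeClass{N}$ even when every realization is individually in $\outcomeClass{P}$, because moves interact across realizations (a move legal in only some realizations collapses the others, and the game ends only when no move is feasible in \emph{any} realization). The paper's Theorem \ref{QATQNIM} exploits exactly this: a superposition of \ruleset{Boolean Nim} positions, each trivially solvable, encodes \ruleset{Quantum Avoid True}. So your clause-verification step --- ``the whole superposition sits in $\outcomeClass{P}$ iff every surviving realization does'' --- cannot be used as stated, and it is the step on which your ``satisfiable $\Leftrightarrow$ first player wins'' equivalence rests. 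Second, your description of the $s=2$ difficulty misstates the mechanics: a \ruleset{Nim} move is a reduction by some amount $k$ and is legal in a realization iff the targeted pile holds at least $k$ pebbles there, so a move on a pile whose values differ across realizations need not collapse anything (reduce by less than the minimum value), and when it does collapse it removes only the realizations where the pile is too small --- not ``instantly collapse the game to a single classical position.'' Finally, because the game is impartial, your variable-commitment phase must survive the opponent also writing into the shared assignment register; the punishment device you import from the partizan \ruleset{KO QBF} and \ruleset{Node Kayles} reductions is not specified in a way that transfers. The conjecture remains open.
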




If $\mathbb{B}$ is reachable from a classical position, then we call
$Z^{demi\QuantumLift}[\mathbb{B}]$
 a {\em classically reachable} demi-quantum game.

\begin{conjecture}
Reachable poly-wide \ruleset{demi-Quantum Nim} remains \cclass{PSPACE}-hard to play optimally.
\end{conjecture}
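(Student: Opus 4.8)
The plan is to combine two techniques already developed in the paper: the \ruleset{Avoid True}-to-\ruleset{Nim} encoding of Theorem~\ref{QATQNIM}, which underlies the \cclass{PSPACE}-completeness of poly-wide \ruleset{Demi-Quantum Nim}, and the ``reachable position'' construction used for \ruleset{Quantum Undirected Geography} with poly-moves after a classical start. Recall that the encoding represents an \ruleset{Avoid True} position by a superposition with one \ruleset{Boolean Nim} realization per active clause, where pile $i$ is $1$ exactly when variable $X_i$ is still free and absent from the clause; from then on only classical moves are needed to simulate the logical alternation. The whole task therefore reduces to showing that this particular $k$-wide superposition (for $k$ the number of clauses)---or a quantum-symmetric equivalent---can be reached from a single classical \ruleset{Nim} position by a polynomial-length sequence of quantum moves, while every intermediate board and the final board stays poly-wide.

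First I would augment the instance with a bank of auxiliary ``scaffold'' piles whose sole purpose is to drive branching during the reaching phase, leaving the shared ``variable'' piles untouched until the simulation begins. Processing the clauses one at a time, at each step I would make a width-two quantum move on a fresh scaffold pile to spin off a new realization, then use a move on a variable pile that is already $0$ in the intended clause pattern to collapse the unwanted branch via \filter. Chaining these ``branch-then-collapse'' steps is the \ruleset{Nim} analogue of the \textsc{stop}-vertex mechanism: a quantum move opens two possibilities and an immediately following (in)feasible classical move prunes one of them, so that the surviving realizations are exactly the $k$ clause encodings, plus possibly realizations that are coordinate-wise dominated by some clause encoding. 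The second verification step is that the reached board is quantum-symmetric to the encoding of Theorem~\ref{QATQNIM}, so that classical (demi-quantum) play from it simulates \ruleset{Avoid True} and inherits \cclass{PSPACE}-hardness.

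The main obstacle is reconciling ``reachable from a classical start'' with ``poly-wide'', which is exactly the point where \ruleset{Nim} is harder than \ruleset{Undirected Geography}. In the graph game the branching gadgets are local to each edge, so redundant realizations can be tolerated and the hardness theorem only asks for poly-\emph{moves}. Here the variable piles must be \emph{shared} across all realizations---this is precisely what lets a single move ``assign $X_i$ everywhere at once'' and is indispensable to the \ruleset{Avoid True} simulation---yet the reaching phase must imprint a \emph{different} zero-pattern in each realization. Because one \ruleset{Nim} move alters a single pile in every surviving realization simultaneously, producing $k$ distinct multi-zero patterns without disturbing already-completed patterns, and without letting the width blow up exponentially, is delicate. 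I expect the crux to be proving a domination invariant: that at every point of the reaching phase, and throughout the subsequent classical play (using the monotonicity that piles only decrease), every spurious realization has support contained in that of some surviving clause encoding, so it grants no extra moves and collapses in tandem with that encoding. Establishing this invariant under \ruleset{Nim}'s subtraction dynamics---and in particular ruling out spurious realizations whose support strictly exceeds every clause encoding's---is the step I would expect to consume most of the work, and a clean scaffold design that forces all branches to collapse down to a poly-wide core is what the proof hinges on.
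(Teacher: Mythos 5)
You should first be aware that the paper does not prove this statement: it appears in Section~\ref{Sec:FinalRemarks} explicitly as a \emph{conjecture}, immediately after the theorem that poly-wide \ruleset{Demi-Quantum Nim} is \cclass{PSPACE}-complete, and the authors leave reachability from a classical start open (they cannot even resolve the $2$-wide reachable case). So there is no ``paper proof'' to match your argument against, and your submission is, by your own framing, a plan of attack rather than a proof: the construction of the scaffold piles, the branch-then-collapse schedule, and above all the ``domination invariant'' are named but never carried out. As it stands, the conjecture remains unestablished by your proposal.

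Beyond the admitted incompleteness, there is a concrete obstruction to the invariant you propose, at least in the \ruleset{Boolean Nim} setting that Theorem~\ref{QATQNIM} actually uses. In \ruleset{Boolean Nim} every feasible classical move zeroes exactly one pile of value $1$, so after $T$ moves from a classical all-ones start \emph{every} surviving realization has exactly $T$ zeros; equivalently, all realizations have supports (sets of nonzero piles) of the same cardinality. Hence one realization's support can be contained in another's only if the two are equal, which means (i) the target clause encodings, whose zero-sets have different sizes for clauses of different lengths, are not literally reachable, and (ii) no genuinely spurious realization can be ``dominated'' in the sense of having its support contained in a core realization's support --- the mechanism that makes redundant realizations harmless in the \ruleset{Quantum Undirected Geography} reachability proof. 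Adding scaffold piles does not escape this, since the total zero-count is still fixed at $T$ and containment of supports still forces equality. To make your plan viable you would have to leave \ruleset{Boolean Nim} during the reaching phase (using piles of larger value and subtractions of varying magnitude so that zero-counts can diverge across realizations), and then prove a replacement for the domination invariant tailored to those dynamics; that is precisely the unsolved core of the conjecture, and your proposal does not supply it.
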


\begin{openquestion}
Is reachable 2-wide \ruleset{demi-Quantum Nim} polynomial-time solvable?
Can reachable constant-width \ruleset{demi-Quantum Nim} be \cclass{NP}-hard to play optimally?
\end{openquestion}

\subsubsection{Budgeted-QCGs}

In budgeted-quantum combinatorial games (budgeted-QCGs), players compete in a quantum combinatorial game with a ``quantum budget."

\begin{definition}[Games with Quantum Budgets]
Suppose $Z^{\QuantumLift(\phi,w)}$ is a
quantum game extended from a game instance $Z$, with quantum flavor $\phi \in \{A,B,C,C',D\}$, and superposition width $w$.
For any integer $q\geq 0$,
let $Z^{symBudgeted[q]\QuantumLift(\phi,w)}$ denote the budgeted quantum game of  $Z^{\QuantumLift(\phi,w)}$ in which both players can make at most $q$ quantum moves in total.
For any integers $q_L, q_R\geq 0$,
let $Z^{asymBudgeted[q_L,q_R]\QuantumLift(\phi,w)}$ denote the budgeted quantum game of $Z^{\QuantumLift(\phi,w)}$ in which
player $L$ and player $R$, respectively, can make at most $q_L$ and $q_R$ quantum
moves.
\end{definition}

Note that
$Z^{symBudgeted[q]\QuantumLift(\phi,w)}$
 and $Z^{asymBudgeted[q_L,q_R]\QuantumLift(\phi,w)}$ has a basic difference:
If $Z$ is an impartial game, then the former remains impartial, while, even with $q_L = q_R$, the latter is no longer impartial.
The number of realizations in any quantum board encountered in playing $Z^{symBudgeted[q]\QuantumLift(\phi,w)}$
($Z^{asymBudgeted[q_L,q_R]\QuantumLift(\phi,w)}$)
is at most $w^q$ ($w^{q_L+q_R}$).

Unlike demi-QCGs, budgeted-QCGs allows players to strategically
define their desirable quantum board.
After both players ran out of quantum moves, they play the remaining games as a demi-quantum game.
But one thing is for sure:
The resulting
demi-quantum game is reachable from a classical starting position.

\begin{conjecture}[\ruleset{Nim} with A Single Quantum Move]
For any instance $Z$ of \cclass{Nim}, for any flavor $\phi\in\{A,B,C,C',D\}$,
$Z^{symBudgeted[1]\QuantumLift(\phi,2)}$ is polynomial-time solvable.
The same is conjectured for $Z^{asymBudgeted[1,0]\QuantumLift(\phi,2)}$
and for
$Z^{asymBudgeted[0,1]\QuantumLift(\phi,2)}$.
\end{conjecture}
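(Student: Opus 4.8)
The plan is to reduce all three budgeted variants to a single structural question about two-wide \emph{demi-quantum} \ruleset{Nim} (in the sense of \cref{Sec:Practice}). The key observation is that, in any of the variants, once the unique quantum move has been spent, both players are confined to classical moves, and a classical move on a superposition can only collapse or preserve realizations, never create new ones. Hence after the quantum move the remaining play is a demi-quantum \ruleset{Nim} game whose width never exceeds $2$, and the entire question collapses onto one object: the outcome class of a superposition $\langle a \mid b\rangle$ of two classical \ruleset{Nim} positions under classical-only play. First I would establish a polynomial-time-computable characterization of this outcome that refines Bouton's $\bigoplus$-rule.

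Concretely, I conjecture (and the plan is to prove) the following \emph{Central Lemma}: writing $\bigoplus a$ for the nim-value of $a$, the position $\langle a\mid b\rangle$ is in $\outcomeClass{P}$ exactly when either $a\ge b$ coordinatewise and $\bigoplus a = 0$, or $b\ge a$ and $\bigoplus b = 0$, or $a$ and $b$ are incomparable and $\bigoplus a = \bigoplus b = 0$; moreover its nim-value is the dominating realization's value in the domination regime and $\max(\bigoplus a, \bigoplus b)$ in the incomparable regime. I would prove this by strong induction on $\sum_i (a_i + b_i)$, separating the piles into \emph{aligned} coordinates ($a_i = b_i$), on which every move acts classically and collapses nothing, and \emph{split} coordinates ($a_i\ne b_i$), on which a move either collapses the smaller realization (escaping to classical \ruleset{Nim}) or equalizes that coordinate. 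This case analysis is where I expect the main obstacle to lie: it must simultaneously exhibit, from every position \emph{not} of the stated $\outcomeClass{P}$-form, a move into a $\outcomeClass{P}$-position, and verify that every move out of a $\outcomeClass{P}$-form position lands in $\outcomeClass{N}$, and it must reconcile the delicate fact that in the domination regime the outcome tracks the dominating realization's nim-value even when the dominated realization has the \emph{smaller} nim-value, while in the incomparable regime both nim-values must vanish.

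Granting the Central Lemma, the only consequence I actually need is the weaker ``escape'' fact that every demi-quantum $\outcomeClass{P}$-position has at least one realization with vanishing nim-value. I would then finish by induction on game length, showing that a classical position $p$ with the quantum token still unspent is in $\outcomeClass{N}$ if and only if classical \ruleset{Nim} from $p$ is in $\outcomeClass{N}$: the options at $p$ are the classical successors (token unspent) together with the demi-quantum positions $\langle p_1\mid p_2\rangle$ reachable by spending the token on two distinct moves; if an $\outcomeClass{N}$ verdict comes from a classical option it is a classical $\outcomeClass{P}$-successor by induction, and if it comes from a demi-quantum option in $\outcomeClass{P}$ the escape fact again forces a realization with $\bigoplus = 0$, i.e. a classical $\outcomeClass{P}$-successor, while conversely any such successor is already a winning classical move. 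Thus spending the token never turns a classical loss into a win for the mover, so the outcome equals the classical \ruleset{Nim} outcome and is decidable in polynomial time. This settles the impartial symmetric variant $Z^{symBudgeted[1]\QuantumLift(D,2)}$ and, since the token-holder gains nothing and the opponent has no token either way, also the partizan variants $Z^{asymBudgeted[1,0]\QuantumLift(D,2)}$ and $Z^{asymBudgeted[0,1]\QuantumLift(D,2)}$. For the remaining flavors, flavor $A$ is degenerate (the forced quantum move exhausts both budgets, leaving the opponent immediately stuck, so the mover wins iff two distinct feasible moves exist), and flavors $B$, $C$, $C'$ follow by re-deriving the escape fact under their respective classical-move restrictions and invoking \cref{mainDoesn'tMatter} together with the flavor-transfer lemmas~\ref{WeakFlavor} and~\ref{StrongFlavor}.
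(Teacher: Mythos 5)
First, a point of comparison: the paper does not prove this statement at all --- it appears only as a conjecture in the section on budgeted quantum games --- so there is no proof of record to measure yours against. What can be checked is whether your proposed argument is sound, and it is not: your Central Lemma, the ``escape fact'' you derive from it, and your final conclusion that the budgeted outcome coincides with the classical \ruleset{Nim} outcome are all false, already in flavor $D$. The counterexample is essentially the paper's own \cref{fig:nim22-intro}. Consider the two-wide superposition $\langle (1,2)\mid (2,1)\rangle$ under classical-only play. Its realizations are incomparable and both have nim-value $1\oplus 2 = 3\neq 0$, so your Central Lemma places it in $\outcomeClass{N}$ and your escape fact demands a realization of nim-value $0$. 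But its only classical options are $(0,1)$, $(1,0)$, $\langle (0,2)\mid (1,1)\rangle$, and $\langle (1,1)\mid (2,0)\rangle$, and each is in $\outcomeClass{N}$: the first two classically, and the latter two via the collapsing moves $(0,-2)$ and $(-2,0)$ respectively, each of which lands directly on $(0,0)$. Hence $\langle (1,2)\mid (2,1)\rangle$ is a demi-quantum $\outcomeClass{P}$-position whose realizations are incomparable and both have nonzero nim-value.

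This one example breaks your closing induction. From the classical position $(2,2)$, which is in $\outcomeClass{P}$ for classical \ruleset{Nim}, the player to move in $Z^{symBudgeted[1]\QuantumLift(D,2)}$ can spend the lone quantum token on $\langle (-1,0)\mid (0,-1)\rangle$, reaching exactly $\langle (1,2)\mid (2,1)\rangle$ with the entire budget exhausted; by the computation above the opponent, now confined to classical moves, is lost. So $(2,2)$ is in $\outcomeClass{N}$ in the budgeted game, and spending the token \emph{does} convert a classical loss into a win --- the opposite of what your induction asserts. (The same happens in the $[1,0]$ and $[0,1]$ variants when the token-holder moves first.) The conjecture itself may well still hold, since polynomial-time solvability does not require agreement with the classical outcome; but a proof must supply a correct characterization of width-$2$ demi-quantum \ruleset{Nim}, and the example shows this cannot be read off from the realizations' nim-values together with their coordinatewise order alone. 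Your reduction of the budgeted problem to classical play on width-$2$ superpositions is sound and is a reasonable first step, but the combinatorial heart of the problem remains open.
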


\begin{openquestion}[Fixed-Parameter Complexity]
Can \ruleset{Budgeted-Quantum Nim} with parameters  $q$ ($q_L,q_R$) and $w$ be solvable in time $N^{O(poly(w^q))}$, where $N$ denotes the descriptive complexity of the underlying \ruleset{Nim} instance, from which the budgeted quantum game is extended?
\end{openquestion}

\subsection{The Landscape of Quantumized Complexity}

Shaped by the celebrated Schaefer's dichotomy theorem in complexity theory \cite{SchaeferDichotomy},
We entered the field of quantum combinatorial games
expecting a more {\em dichotomy outlook}---between \cclass{P} and \cclass{PSPACE} and between \cclass{NP} and \cclass{PSPACE}---over the quantumized complexity landscape.
We have since been pleasantly surprised by our findings along the way:
The evidences that we have collected---illustrated in Theorem \ref{Theo:PSPACE-Intermediate} in particular---have highlighted regions of quantum games with intermediate intractable complexity between \cclass{NP} and \cclass{PSPACE}.
We believe that we only have a brief glimpse of the quantumized complexity landscape so far.

Among the many basic complexity-theoretical questions about quantum games, we would like to have sharper mathematical characterization on:

\begin{itemize}
\item the power of large superposition width of quantum moves;

\item the dependency of  game's structure/complexity on the degree-of-quantumness in starting positions and reachability from classical positions;

\item the connection between quantumized complexity and the height of game trees, and

\item in particular, the impact of degree-of-succinctness of game's representation---which is often the source of superpolynomial game-tree height---to both classical and quantumized complexity.
\end{itemize}

At the current stage, we only have a limited understanding of quantum moves with polynomial superposition width.
We know by  Theorem \ref{Theo:Upperbound} that increasing superposition width from 2 to polynomial does not cause quantumized complexity to leap over \cclass{PSPACE} for polynomially-short games.
However, our current proof of polynomial-time solvability of \ruleset{Quantum Undirected Geography} with a classical start
is tailored to superposition width $2$.
So, it is still possible---although we conjecture otherwise---larger superposition width may bring out intractability.

Succinct representations of games and optimization problems have been
at the center of complexity studies, as they fundamentally impact the measure of input size.
For example, the number of bits in
the binary representation of an integer is
logarithmic in the magnitude of the number.
Thus, binary representation, as oppose to
unary representation, is far more succinct for numbers.
Consider the fundamental problem {\sc Factoring}.
While it is {\em pseudo}-polynomial time solvable---i.e., if numbers are given in
the Unary Numeral System, then the method that we learned at high school has time complexity polynomial in the input size---it remains an outstanding open problem whether or not {\sc Factoring} can be solved in time polynomial in the size of its succinct binary representation.
In combinatorial games, for example, while \ruleset{Unary Nim} is a polynomially-short game measured by its input size, the height of game tree of \ruleset{Nim} where the numbers of stones in piles are specified by binary representation could be exponential in the input size.

The succinctness contributes to the possibility  that \ruleset{Quantum Nim} may have complexity outside \cclass{PSPACE}.
Other games involves graphs \& numbers  are also good candidate for beyond \cclass{PSPACE} quantumized complexity.
For instance, consider the following \ruleset{MaxCut} game:
An instance (position) is defined by a weighted undirected graph $G=(V,E,w)$ and a {\em partition} $(S, V\setminus S)$ of $V$,
where $w: E \rightarrow \mathbb{Z}$ assigns weights to edges.
In this impartial game,
  two players takes turns to move a vertex across the cut.
A move is {\em feasible} if the new partition has total {\em cut weight} larger than that of the current partition.
With a potential-function-based argument, one can show that this game has to terminate in finite---but possibly exponential in the input size---steps.
In fact, \ruleset{MaxCut} is a form of
succinctly-represented instance of \ruleset{Geography}:
Consider the {\em Partition Graph} $H$ of $G=(V,E,w)$, where vertices of $H$ are $V$'s partitions, and there is a directed edge in $H$ from partition $(S,V\setminus S)$ to $(S',V\setminus S)$ if
$S$ and $S'$ differ by one vertex, and the total cut weight of $S'$ is larger than that of $S$.
It is well-known that $H$ is a DAG.
Thus, playing \ruleset{MaxCut} on $G$ is the same as playing \ruleset{Geography} on $H$, which is an exponentially-sized DAG given in succinct representation by $G$.

\begin{openquestion}
Is \ruleset{Quantum MaxCut} with a classical start \cclass{PSPACE}-solvable?
\end{openquestion}

We can ask the same question for similar succinctly-represented games based on polynomial local search (PLS) of other combinatorial substructures such as flows, spanning trees, matchings, linear programs, {\em etc., etc}.

` 


\subsection{Science for Fun}

More than a decade ago when we designed the \cclass{PSPACE}-complete combinatorial game \ruleset{Atropos} \cite{Burke2008,10.5555/1714301}, we implemented our game.
At the time, we were drawn to its colorful yet elegant two-dimensional game positions.
\ruleset{Atropos} is an impartial game, played on the Sperner's triangle and is based on the beautiful, celebrated Sperner's Lemma \cite{SPE28}.
Like \ruleset{Hex} \cite{NashHex,Gale:1979},
\ruleset{Atropos} is also built on the mathematical foundation of Brouwer fixed-point theorem \cite{Brouwer}.
We shared our implementation with our colleagues and students at Boston University.
The game soon was incorporated in the class projects for computer science department's AI class.
Many students who built AI search programs for playing \ruleset{Atropos} in their class projects acknowledged the educational value of the game beyond the AI class.
They told us that they enjoyed their learning of Sperner's lemma (the discrete Brouwer's fixed-point theorem instrumental to computational understanding \cite{PAP94,DGPJournal,NashJACM} of Nash equilibrium in game theory \cite{NAS50} and Arrow-Debreu's market equilibrium in economics \cite{AD}).\footnote{It has always been a special memory for us that Herb Scarf called us in 2011 to load our \ruleset{Atropos} program onto his laptop, and then demonstrated it to Kenneth Arrow during his 90th birthday celebration at Jerusalem.}
Our own experience with \ruleset{Atropos} contributed to our great admiration for Allan Goff for creating \ruleset{Quantum Tic-Tac-Toe} as an educational tool for learning quantum mechanics~\cite{goff2006quantum}.

It is our desire to
 set up an on-line suite of elegant quantum games for us and more people to explore,
both as educational tools to learn quantum combinatorial game theory, and as subjects for potential projects in AI search.
We anticipate that practical implementation
of quantum games
 and visualization of superposition on a computer will be also be a wonderful source for interesting, and possibly challenging,
 research problems intersecting geometry, design, and optimization.
We are looking forward to identifying more basic research problems from our on-going ``Science for Fun'' project.

\bibliographystyle{plain} 

\end{singlespace}

\end{document}


\section{Complexity of Concrete Quantum Games}
\label{Sec:Concrete}

Our presentation
in the body of the paper have been centered on
a general perspective rather than specific games, aiming to highlight basic questions in
quantum combinatorial game theory.
Here, for readers who are interested in concrete quantum games derived from classical ones, we summarize the complexity characterization obtained so far
in this paper.
The table below may serve as a starting basis for further advancements in computational understanding of quantum combinatorial game theory.

\begin{center}
\begin{tabular}{ |p{3cm}|c|p{2.3cm}|c|c| }
    \hline
    Ruleset & Hardness & Reduced From & Contained In & Section \\
    \hline \hline
    \ruleset{Q-Undirected Geography}
        & \cclass{P}
        &
        & \cclass{P}
        & \cref{thm:undirectedGeography} \\
        \hline
    \ruleset{Q-Selector}
        & \cclass{co-NP}
        & 3-SAT
        & \cclass{co-NP}
        & \cref{thm:selector} \\
        \hline
    \ruleset{Q-Partition-Free-Selector}
        & \cclass{NP}
        & 3-SAT & \cclass{NP}
        & TODO \\
        \hline
    \ruleset{Q-Partition-Free-Phantom}
        & $\Sigma_2 \cup \Pi_2$
        & $\Sigma_2$-SAT and $\Pi_2$-SAT
        & $\Sigma_2 \cup \Pi_2$
        & \cref{thm:partitionFreePhantom} \\
        \hline
    \ruleset{Phantom-Move}
        & $\Sigma_2 \cup \Pi_2$
        & $\Sigma_2$-SAT and $\Pi_2$-SAT
        & $\Sigma_2 \cup \Pi_2$
        &  \cref{thm:PhantomMove}  \\
        \hline
    \ruleset{Q-Avoid-True}
        & $\Sigma_2 \cup \Pi_2$
        & \ruleset{Phantom-Move}
        & \cclass{PSPACE}
        & \cref{thm:avoidTrue} \\
        \hline
    \ruleset{QQBF-Kyle}
        & \cclass{PSPACE}
        & QBF
        & \cclass{PSPACE}
        & \cref{thm:kyles} \\
        \hline
    \ruleset{QBRB}
        & \cclass{PSPACE}
        & QBF
        & \cclass{PSPACE}
        & \cref{thm:brb} \\
        \hline
    \ruleset{QMXC}
        & \cclass{PSPACE}
        & ???
        & \cclass{PSPACE}
        & \cref{thm:mxc} \\
        \hline
    \ruleset{Q-Geography}
        & \cclass{PSPACE}
        & \ruleset{Geography}
        & \cclass{PSPACE}
        & \cref{thm:geography}\\
        \hline
    \ruleset{Q-Node Kayles}
        & \cclass{PSPACE}
        & \ruleset{QBF}
        & \cclass{PSPACE}
        & \cref{thm:nodeKayles} \\
        \hline
    \ruleset{Q-BiGraph Node Kayles}
        & \cclass{PSPACE}
        & \ruleset{QBF}
        & \cclass{PSPACE}
        & \cref{thm:bigraphNK} \\
        \hline
    \ruleset{Q-Snort}
        & \cclass{PSPACE}
        & \ruleset{Q-BiGraph Node Kayles}
        & \cclass{PSPACE}
        & \cref{thm:snort} \\
        \hline
    \ruleset{Q-By-Player-Anywhere-Avoid-False}
        & \cclass{PSPACE}
        & \ruleset{Q-SNORT}
        & \cclass{PSPACE}
        & TODO \\
        \hline
\end{tabular}
\end{center}